\definecolor{modelica-model}{HTML}{1E7A19}
\definecolor{modelica-bool}{HTML}{0F7877}
\definecolor{modelica-eq-ite}{HTML}{7F007E}
\definecolor{modelica-der}{HTML}{0000FC}
\newcommand{\shadow}[1]{\mathsf{Sh\!}\left(#1\right)}
\newcommand{\imporder}{\mathfrak{o}}
\newcommand{\magorder}[1]{[\![#1]\!]}
\definecolor{darkred}{rgb}{0.6,0,0}
\definecolor{darkorange}{rgb}{0.7,0.2,0}
\definecolor{darkgreen}{rgb}{0,0.3,0}
\definecolor{lightblue}{rgb}{0.3,0.3,1}
\definecolor{mygreen}{rgb}{0.0, 0.42, 0.24}
\definecolor{myblue}{rgb}{0.16, 0.32, 0.75}
\definecolor{mychoc}{rgb}{0.48, 0.25, 0.0}
\tikzstyle{box}=[rectangle, draw, minimum width=10mm, minimum height=7mm]
\newcommand{\tikzbox}{\node[box]}
\tikzstyle{vertex}=[circle, draw, inner sep=0pt, minimum size=22pt]
\newcommand{\vertex}{\node[vertex]}
\tikzset{arrowedge/.style args={#1}{postaction=decorate,
decoration={markings,mark=at position #1 with {\arrow[very thick,scale=1.5]{>}}}}}
\newcommand{\bexp}{b}
\newcommand{\flow}{\varphi}
\newcommand{\tick}{\operatorname{Tick}}
\newcommand{\sigmamethod}{$\Sigma$-method}
\newcommand{\proj}[2]{\mathbf{proj}_{#1}\!\left(#2\right)}
\newcommand{\rref}[2][]{\prettyref{#2}}
\newtheorem{theorem}{Theorem}
\newtheorem{lemma}[theorem]{Lemma}
\newtheorem{notation}[theorem]{Notations}
\newtheorem{example}[theorem]{Example}
\newtheorem{problem}[theorem]{Problem}
\newtheorem{question}[theorem]{Question}
\newtheorem{definition}[theorem]{Definition}
\newtheorem{corollary}[theorem]{Corollary}
\newtheorem{strategy}[theorem]{Strategy}
\newtheorem{ccomment}[theorem]{Comment}
\newtheorem{assumption}{Assumption}
\newtheorem{principle}{Principle}
\newenvironment{execscheme}[1][htb]{%
    \floatname{algorithm}{Execution Scheme}% Update algorithm name
        \begin{algorithm}[#1]%
}{\end{algorithm}}
\newenvironment{proof}{\paragraph{\it Proof}}{\eproof \\

}
\def\Jacobian{\mathbf{J}}
\def\ssuccessful{complete}
\def\bfc{\mathbf{c}}
\def\bfd{\mathbf{d}}
\def\bfD{\mathbf{D}}
\def\bff{\mathbf{f}}
\def\bfB{\mathbf{B}}
\def\bfH{\mathbf{H}}
\def\bfK{\mathbf{K}}
\def\bbV{\mathbb{V}}
\def\path{\pi}
\def\DM{{\rm DM}}
\def\straight{k}
\newcommand{\prog}[1]{\mathtt{#1}}
\newcommand{\compl}[1]{#1^{\sf i}}
\newcommand{\impuls}[1]{#1^{\sf i}}
\newcommand{\nonimpuls}[1]{#1}
\newcommand{\wemph}[1]{\mbox{{\color{white}$#1$}}}
\newcommand{\remph}[1]{\mbox{{\color{red}$#1$}}}
\newcommand{\bemph}[1]{\mbox{{\color{blue}$#1$}}}
\newcommand{\blemph}[1]{\mbox{{\color{black}$#1$}}}
\newcommand{\llong}{long}
\newcommand{\transient}{transient}
\newcommand{\up}[1]{#1^\uparrow}
\newcommand{\consistency}[1]{\overline{#1}}
\newcommand{\parconsistency}[1]{{\!\!}\wemph{(}\overline{#1}\wemph{)}{\!\!}}
\renewcommand{\dot}[1]{#1'}
\renewcommand{\ddot}[1]{#1''}
\newcommand{\tension}{{\lambda}}
\newcommand{\myparagraph}[1]{\smallskip\noindent\textit{{#1.}}}
\newcommand{\bea}{\begin{array}}
\newcommand{\eea}{\end{array}}
\newcommand{\beq}{\begin{eqnarray}}
\newcommand{\eeq}{\end{eqnarray}}
\newcommand{\beqq}{\begin{eqnarray*}}
\newcommand{\eeqq}{\end{eqnarray*}}
\newcommand{\status}{\sigma}
\newcommand{\ttimes}{}
\newcommand{\ra}{\rightarrow}
\newcommand{\la}{\leftarrow}
\newcommand{\degree}[2]{d^o_{#1}(#2)}
\newcommand{\cJ}{\mathcal{J}}
\newcommand{\val}{\nu}
\newcommand{\vval}[1]{\val_{#1}}
\newcommand{\dom}{D}
\newcommand{\ScottVar}{S-variable}
\newcommand{\ScottVars}{\ScottVar s}
\newcommand{\cC}{\mathcal{C}}
\newcommand{\cG}{\mathcal{G}}
\newcommand{\success}{b}
\newcommand{\eqq}{e}
\newcommand{\Eqqs}{E}
\newcommand{\block}{\beta}
\newcommand{\DAE}{{{\rm DAE}}}
\newcommand{\previous}[1]{#1^-}
\newcommand{\preset}[1]{{^{\bullet\!}{#1}}}
\newcommand{\wpostset}[1]{#1^\circ}
\newcommand{\postset}[1]{#1^\bullet}
\newcommand{\ppreset}[2]{^{\bullet#1}#2}
\newcommand{\ppostset}[2]{#2^{\bullet#1}}
\newcommand{\pprime}[2]{#2^{\prime#1}}
\newcommand{\nst}[1]{{^{\star{\!\!}}#1}}
\newcommand{\eproof}{\hfill$\Box$}
\newcommand{\vsmall}{\partial}
\newcommand{\st}[1]{{\mathsf{st}\!\left(#1\right)}}
\newcommand{\regularU}[2]{U_{\!#1\!}\left(#2\right)}
\newcommand{\regularV}[2]{V_{\!#1\!}\left(#2\right)}
\newcommand{\eqdef}{=_{{\rm def}}}
\newcommand{\mode}{\mu}
\newcommand{\Modes}{M}
\newcommand{\guard}{\gamma}
\newcommand{\Guards}{\Gamma}
\newcommand{\nstime}{\mathfrak{t}}
\newcommand{\nstr}{\nst{\,\bR}}
\newcommand{\nstR}{\nst{\,\bR}}
\newcommand{\nstn}{\nst{\,\bN}}
\newcommand{\nstN}{\nst{\,\bN}}
\newcommand{\continuation}[1]{\mathsf{Continuations}\left(#1\right)}
\newcommand{\squared}{\mathfrak{E}}
\newcommand{\underapprox}{\mathfrak{U}}
\newcommand{\overapprox}{\mathfrak{O}}
\newcommand{\undef}{\mbox{\sc u}}
\newcommand{\irrelevant}{\mbox{\sc i}}
\newcommand{\dt}{dt}
\newcommand{\du}{du}
\newcommand{\eend}{{\sf end}}
\newcommand{\cF}{{\cal F}}
\newcommand{\cA}{{\cal A}}
\newcommand{\cM}{{\cal M}}
\newcommand{\cN}{{\cal N}}
\newcommand{\cE}{\mathbb{E}}
\newcommand{\cT}{{\cal T}}
\newcommand{\cS}{{\cal S}}
\newcommand{\when}{\prog{if}}
\newcommand{\doo}{\prog{then}}
\newcommand{\mdAE}{mdAE}
\newcommand{\deltaAE}{$\vsmall\!${A\!E}}
\newcommand{\mDAE}{mDAE}
\newcommand{\bE}{\mathbb{E}}
\newcommand{\bR}{\mathbb{R}}
\newcommand{\bT}{\mathbb{T}}
\newcommand{\bfx}{\mathbf{x}}
\newcommand{\bfa}{\mathbf{a}}
\newcommand{\bN}{\mathbb{N}}
\newcommand{\bX}{\mathbb{X}}
\newcommand{\bZ}{\mathbb{Z}}
\newcommand{\bQ}{\mathbb{Q}}
\newcommand{\ttt}{\mbox{\sc t}}
\newcommand{\fff}{\mbox{\sc f}}
\newcommand{\aand}{\mathtt{and}}
\newcommand{\system}{S}
\newcommand{\ifequation}{{\tt if}-equation}
\newcommand{\whenequation}{{\tt when}-equation}
\newcommand{\deronde}{\partial}
\newcommand{\head}{{\sf h}}
\newcommand{\tail}{{\sf t}}
\newcommand{\Vars}{\mathcal{X}}
\newcommand{\atomicact}[1]{\mathsf{#1}}
\newcommand{\asgnt}[1]{{#1}}
\newcommand{\asgnf}[1]{\overline{#1}}
\newcommand{\asgnd}[1]{\asgnf{#1}}
\newcommand{\asgnw}[1]{\sharp{#1}}
\newcommand{\nxt}[1]{\postset{#1}}
\newcommand{\rnd}[1]{#1^{\deronde}}
\newcommand{\nan}{\mbox{NaN}}
\newcommand{\mlast}[1]{#1^-}
\newcommand{\REMOVE}[1]{}
\newcommand{\redundent}[1]{\Delta\gets{#1}}
\newcommand{\enbld}[2]{\mathit{Enab}({#2})}
\newcommand{\disbld}[2]{\mathit{Disab}({#2})}
\newcommand{\ndef}[1]{\mathit{Undef}({#1})}
\newcommand{\cnfa}{$\asgnt{\omega_1},\asgnt{\omega_2}$}
\newcommand{\cnfb}{$\begin{array}{c}\asgnf{\gamma},\asgnt{\omega_1},\asgnt{\omega_2}, \\
    \asgnd{e_3},\asgnd{e_4}\end{array}$}
\newcommand{\cnfc}{$\begin{array}{c}\asgnf{\gamma},\asgnt{\omega_1},\asgnt{\omega_2}, \\
    \asgnt{\tau_1},\asgnt{\tau_2},\asgnt{\nxt{\omega_1}},\asgnt{\nxt{\omega_2}}, \\
    \asgnt{\rnd{e_1}},\asgnt{\rnd{e_2}},\asgnd{e_3}, \\
    \asgnd{e_4},\asgnt{e_5},\asgnt{e_6}\end{array}$}
\newcommand{\cnfd}{$\begin{array}{c}\asgnt{\gamma},\asgnt{\omega_1},\asgnt{\omega_2}, \\
    \asgnd{e_5},\asgnd{e_6} \end{array}$}
\newcommand{\cnfe}{$\begin{array}{c}\asgnt{\gamma},\asgnt{\omega_1},\asgnt{\omega_2}, \\
    \asgnt{\tau_1},\asgnt{\tau_2},\asgnt{\nxt{\omega_1}},\asgnt{\nxt{\omega_2}}, \\
    \asgnt{\rnd{e_1}},\asgnt{\rnd{e_2}},\asgnt{\nxt{e_3}}, \\
    \asgnt{e_4},\asgnd{e_5},\asgnd{e_6} \end{array}$}
\newcommand{\cnff}{$\asgnt{\omega_1},\asgnt{\omega_2},\asgnw{e_3}$}
\newcommand{\cnfg}{$\begin{array}{c}\asgnf{\gamma},\asgnt{\omega_1},\asgnt{\omega_2}, \\
    \asgnd{e_3},\asgnd{e_4}\end{array}$}
\newcommand{\cnfh}{$\begin{array}{c}\asgnf{\gamma},\asgnt{\omega_1},\asgnt{\omega_2}, \\
    \asgnt{\tau_1},\asgnt{\tau_2},\asgnt{\nxt{\omega_1}},\asgnt{\nxt{\omega_2}}, \\
    \asgnt{\rnd{e_1}},\asgnt{\rnd{e_2}},\asgnd{e_3}, \\
    \asgnd{e_4},\asgnt{e_5},\asgnt{e_6}\end{array}$}
\newcommand{\cnfi}{$\begin{array}{c}\asgnt{\gamma},\asgnt{\omega_1},\asgnt{\omega_2}, \\
    \asgnt{e_3},\asgnd{e_5},\asgnd{e_6} \end{array}$}
\newcommand{\cnfj}{$\begin{array}{c}\asgnt{\gamma},\asgnt{\omega_1},\asgnt{\omega_2}, \\
    \asgnt{\tau_1},\asgnt{\tau_2},\asgnt{\nxt{\omega_1}},\asgnt{\nxt{\omega_2}}, \\
    \asgnt{\rnd{e_1}},\asgnt{\rnd{e_2}},\asgnt{e_3},\asgnt{\nxt{e_3}}, \\
    \asgnt{e_4},\asgnd{e_5},\asgnd{e_6} \end{array}$}
\newcommand{\lbla}{$\asgnf{\gamma};\asgnd{e_3};\asgnd{e_4}$}
\newcommand{\lblb}{$\asgnt{\gamma};\asgnd{e_5};\asgnd{e_6}$}
\newcommand{\lblc}{$\begin{array}{c}\asgnt{e_5};\asgnt{e_6}; \\ \asgnt{\rnd{e_1}};\asgnt{\rnd{e_2}}\end{array}$}
\newcommand{\lbld}{$\asgnt{\rnd{e_1}}+\asgnt{\rnd{e_2}}+\asgnt{\nxt{e_3}}+\asgnt{e_4}$}
\newcommand{\lble}{$\asgnf{\gamma};\asgnd{e_3};\asgnd{e_4}$}
\newcommand{\lblf}{$\asgnt{\gamma};\asgnd{e_5};\asgnd{e_6};\redundent{e_3}$}
\newcommand{\lblg}{$\asgnt{e_5};\asgnt{e_6};\asgnt{\rnd{e_1}};\asgnt{\rnd{e_2}}$}
\newcommand{\lblh}{$\begin{array}{c}\asgnt{\rnd{e_1}}+\asgnt{\rnd{e_2}}+ \\ \asgnt{\nxt{e_3}}+\asgnt{e_4}\end{array}$}
\newcommand{\gencodea}{\begin{array}{l}\tau_1 = 0; \; \tau_2 = 0; \\ \omega_1' = a_1(\omega_1) + b_1(\omega_1) \tau_1; \\ \omega_2' = a_2(\omega_2) + b_2(\omega_2) \tau_2\end{array}}
\newcommand{\gencodeb}{\begin{array}{l}\tau_1 = \nan; \; \tau_2 = \nan; \\ \omega_1^+ = \frac{b_2(\omega_2^-) \mlast{\omega_1} + b_1(\omega_1^-) \mlast{\omega_2}}{b_1(\omega_1^-)+b_2(\omega_2^-)}; \\ \omega_2^+ = \omega_1^+\end{array}}
\newcommand{\gencodec}{\begin{array}{l}\tau_1 = (a_2(\omega_2) - a_1 (\omega_1)) / (b_1(\omega_1) + b_2(\omega_2)); \tau_2 = - \tau_1; \\ \omega_1' = a_1(\omega_1) + b_1(\omega_1) \tau_1; \; \omega_2' = a_2(\omega_2) + b_2(\omega_2) \tau_2;  \\ \mbox{\textbf{constraint}} \; \omega_1 - \omega_2 = 0\end{array}}
\newcommand{\gencoded}{\begin{array}{l}\tau_1 = 0; \; \tau_2 = 0; \\ \omega_1^+ = \mlast{\omega_1}; \\ \omega_2^+ = \mlast{\omega_2}\end{array}}
\newcommand{\ttrue}{\mathbf{T}}
\newcommand{\ffalse}{\mathbf{F}}
\begin{abstract} 
\RRabstract{Modern modeling languages for general physical systems, such as Modelica, Amesim, or Simscape, rely on Differential Algebraic Equations (DAEs), i.e., constraints of the form $f(\dot{x},x,u)=0$. This drastically facilitates modeling from first principles of the physics, as well as the reuse of models. In this paper, we develop the mathematical theory needed to establish the development of compilers and tools for DAE-based physical modeling languages on solid mathematical bases.

Unlike Ordinary Differential Equations (ODEs, of the form $\dot{x}=g(x,u)$), DAEs exhibit subtle issues because of the notion of \emph{differentiation index} and related \emph{latent equations}---ODEs are DAEs of index zero, for which no latent equation needs to be considered. Prior to generating execution code and calling solvers, the compilation of such languages requires a nontrivial \emph{structural analysis} step that reduces the differentiation index to a level acceptable by DAE solvers.
  
The models supported by tools of the Modelica class involve multiple modes, with mode-dependent DAE-based dynamics and state-dependent mode switching. However, multimode DAEs are much more difficult to handle than DAEs, especially because of the events of mode change. Unfortunately, the large literature devoted to the mathematical analysis of DAEs does not cover the multimode case, typically saying nothing about mode changes.  This lack of foundations causes numerous difficulties to the existing modeling tools. Some models are well handled, others are not, with no clear boundary between the two classes.  

In this paper, we develop a comprehensive mathematical approach supporting compilation and code generation for this class of languages. Its core is the \emph{structural analysis of multimode DAE systems}. As a byproduct of this structural analysis, we propose sound criteria for accepting or rejecting multimode models. 
Our mathematical development relies on \emph{nonstandard analysis}, which allows us to cast hybrid system dynamics to discrete-time dynamics with infinitesimal step size, thus providing a uniform framework for handling both continuous dynamics and mode change events.}

\RRresume{Les langages modernes de mod\'elisation de syst\`emes physiques, tels que Modelica, Amesim, ou Simscape, s'appuient sur des \'Equations Alg\'ebro-Diff\'erentielles (DAE), qui sont des contraintes de la forme $f(\dot{x},x,u)=0$. Cette approche rend naturelle la mod\'elisation directe \`a partir des principes de la physique, ainsi que la réutilisation de mod\`eles. Dans cet article, nous d\'eveloppons les bases math\'ematiques qui fondent ces langages.

Par rapport aux \'Equations Diff\'erentielles Ordinaires (ODE, de la forme $\dot{x}=g(x,u)$), les DAE sont plus compliqu\'ees, en raison des notions d'\emph{index} et d'\emph{\'equations latentes}---les ODE sont des DAE d'index z\'ero, sans \'equations latentes. Avant toute g\'en\'eration de code, une \emph{analyse structurelle} des mod\`eles est requise, afin de ramener l'index \`a des valeurs acceptables pour les solveurs de DAE (deux ou trois maximum).

Les mod\`eles accept\'es par les langages de la classe Modelica sont des DAE \emph{multi-modes}, avec une DAE diff\'erente dans chaque mode, tandis que les changements de mode sont provoqu\'es par des conditions portant sur l'\'etat et/ou le temps. Les DAE multi-mode sont beaucoup plus difficiles \`a traiter que les DAE (mono-mode). La difficult\'e principale est le traitement des changements de mode et des conditions de red\'emarrage associ\'ees. Ce point n'est pas abord\'e dans la litt\'erature math\'ematique sur les DAE multi-mode: rien n'est dit sur comment les changements de mode doivent \^etre trait\'es. Cette situation est source de probl\`emes importants pour tous les outils existants. Certains mod\`eles sont correctement trait\'es, d'autres non, sans qu'on sache bien pourquoi---et les cas probl\'ematiques ne sont pas pathologiques, mais se rencontrent naturellement. 

Dans ce travail, nous d\'eveloppons une approche coh\'erente et compl\`ete pour la compilation des DAE multi-mode. Au c{\oe}ur de celle-ci se trouve une \emph{analyse structurelle multi-mode,} qui couvre \`a la fois les dynamiques dans les divers modes, et les changements de mode. Notre approche traite des mod\`eles pr\'esentant des valeurs impulsives pour certaines variables, lors des changements de mode. Notre approche permet d'accepter ou de rejeter des mod\`eles sur des crit\`eres bien fond\'es (mod\`ele sur- ou sous-d\'etermin\'e). Ces travaux eussent \'et\'e impossibles sans le recours \`a l'\emph{analyse non-standard.} Gr\^ace \`a elle, nous pouvons r\'einterpr\'eter les dynamiques en temps continu comme du temps discret \`a pas infinit\'esimal, ce qui fournit une vision uniforme de toute la trajectoire---en mode continu ou en changement de mode.
}

\RRmotcle{analyse structurelle, \'equations algébro-diff\'erentielles
  (DAE), syst\`emes multi-mode, mod\`eles \`a structure variable, analyse non-standard}
\RRkeyword{structural analysis, differential-algebraic equations
  (DAE), multi-mode systems, variable-structure models, nonstandard analysis}
	\RRprojet{Hycomes}  % cas d'un seul projet
%\RRprojets{Truc and Bidule}
\RCRennes
\RRNo{9334}
%\end{abstract}
\begin{document}
\makeRR
\clearpage
\tableofcontents
\clearpage
\section{Introduction}\label{sec:intro}
\subsection{Overall motivations}
\emph{Multimode DAE systems} constitute the mathematical framework supporting the  physical modeling of systems possessing different \emph{modes}. Each mode exhibits a  different dynamics, captured by \emph{Differential Algebraic Equation}s (DAEs). {Multimode DAE systems} are the underlying framework of `object-oriented' modeling languages such as  \href{https://www.modelica.org/documents/ModelicaSpec33.pdf}{Modelica}, \href{https://www.plm.automation.siemens.com/global/en/products/simcenter/simcenter-amesim.html}{Amesim}, or \href{https://mathworks.com/products/simscape/}{Simscape}. Corresponding models can be represented as systems of guarded equations of the form
\beq
%\bea{rl}
\mbox{if }  \guard_j(x_i \mbox{'s and derivatives}) 
\mbox{ then }  f_j(x_i \mbox{'s and derivatives}) = 0
%\eea
\label{ow8ugthoerui}
\eeq
where $x_i,i=1,\dots,n$ denote the system variables, and, for $j=1,\dots,m$, $\guard_j(\dots)$ is a Boolean combination of predicates guarding the differential or algebraic equation $f_j(\dots)=0$. The meaning is that, if $\guard_j$ has the value $\ttt$ (the constant true), then equation $f_j(\dots)=0$ has to hold; otherwise, it is discarded. In particular, when all the predicates have the value $\ttt$, one obtains a \emph{single-mode} DAE, that is, a classical DAE defined by the set of equations 
\beq
f_j(x_i \mbox{'s and derivatives}) = 0
\label{eopiruthywhy}
\eeq
where $i{=}1,\dots,n$ and $j{=}1,\dots,m$.
When all $f_j$'s have the special form $x_j'{-}g_j(x_1,\dotsc,x_n)$, one recovers the Ordinary Differential Equations (ODE)  $x_j'{=}g_j(x_1,\dotsc,x_n).$   DAEs are a strict generalization of ODEs, where the so-called \emph{state variables} $x_1,\dotsc,x_n$ are implicitly related to their time derivatives $x'_1,\dotsc,x'_n$. This modeling framework is fully compositional, since systems of systems of equations of the form (\ref{ow8ugthoerui}) are just systems of equations of the form (\ref{ow8ugthoerui}), with no restriction.

DAE systems (having a single mode) are well understood. With comparison to ODE systems, a new difficulty arises with the notion of \emph{differentiation index}, introduced in the late 1980's \cite{Petzold82,CampbellGear1995}. For simplicity, in this introduction, we discuss it for the particular case in which System~(\ref{eopiruthywhy}) involves all derivatives $\dot{x_i}$ but no higher-order derivative. The problem addressed by this notion of differentiation index relates to the Jacobian matrix $\Jacobian$ associated to System (\ref{eopiruthywhy}), defined by  $\Jacobian_{ij}=\partial{f_j}/\partial{\dot{x_i}}$. If this Jacobian matrix is invertible (requiring, in particular, $m{=}n$), then all the derivatives $\dot{x_i}$ are uniquely determined as functions of the ${x_i}$, so that System (\ref{eopiruthywhy}) is equivalent to an ODE system.
%: an ODE solver could be used to simulate it, although this can lead to numerical drifting from the algebraic constraints specified in the original DAE model. 
If, however, matrix $\Jacobian$ is singular, then System~(\ref{eopiruthywhy}) as such is no longer equivalent to an ODE system. This situation can occur, even for a square DAE system possessing a unique solution for any given initial conditions. This situation makes it difficult to design DAE solvers that work for any kind of DAE.

 Now, $f_j{=}0$ in (\ref{eopiruthywhy}) implies $\frac{d}{dt}f_j{=}0$, revealing that such \emph{latent equations} come implicitly with System (\ref{eopiruthywhy}). Adding latent equations to the DAE system does not change the system solutions, but it can bring additional constraints on highest-order derivatives of the original system states---on the other hand, this can introduce ``spurious'' derivatives (of higher order than in the original system), which are to be eliminated. Adding more latent equations while eliminating spurious derivatives eventually leads to highest-order derivatives of the original system states being uniquely determined as functions of the lower order state derivatives. The \emph{differentiation index} \cite{Petzold82,CampbellGear1995} is the smallest integer $k$ such that it is enough to differentiate with respect to time every equation $f_j{=}0$ up to order at most $k$ for the above situation to occur.
 
  DAE solvers exist for DAE of index $1$, $2$, or $3$~\cite{MattssonSoderlin1993}. An efficient and popular approach, called the \emph{dummy derivatives} method~\cite{MattssonSoderlin1993}, consists in reducing the index of the DAE system to $1$, producing in addition a form in which the algebraic constraints are all preserved by the solver---this no longer holds if the index is reduced to zero, leading to an ODE. \emph{Structural analysis} methods have been proposed~\cite{pantelides,Pryce01} which perform index reduction by only exploiting the bipartite graph associated to the different equations and variables of the system. Structural analyses scale up much better and provide valid results outside exceptional values for the coefficients arising in the system equations~\cite{CampbellGear1995}; structural analysis with dummy derivatives is implemented in most Modelica tools. Efforts are still ongoing to improve the efficiency and range of applicability of these methods, but one can say that DAE systems are now reasonably well understood.

In contrast, the handling of mode changes is much less mature. No structural analysis exists for mode changes. Due to discontinuities in system trajectories, the notion of differentiation index does not apply to multimode DAE systems. The notion of solutions of such systems is not even well understood, except for some subclasses of models, such as \emph{semi-linear systems}~\cite{DBLP:series/lncs/BenvenisteCEGOP19}. Still, modeling languages exist that support multimode DAE systems, e.g., \href{https://www.modelica.org/documents/ModelicaSpec33.pdf}{Modelica}, \href{https://standards.ieee.org/findstds/standard/1076.1-2007.html}{VHDL-AMS}, and the proprietary languages \href{https://www.plm.automation.siemens.com/global/en/products/simcenter/simcenter-amesim.html}{Amesim} and \href{https://mathworks.com/products/simscape/}{Simscape}. However, the lack of mathematical understanding of mode changes can result  in a spurious handling of some physically meaningful models. Indeed, with the exception of the recent work~\cite{DBLP:series/lncs/BenvenisteCEGOP19}, the class of ``safe'' models (well supported by the considered tool) is never characterized, thus leading to a ``try-and-see'' methodology. This situation motivates our work.

Establishing the mathematical foundations for compilers\footnote{By \emph{compilation}, we mean here the suite of symbolic analyses and transformations that must be performed prior to generating simulation code.} of multimode DAE systems raises non-classical difficulties. To substantiate this claim, let us compare the following three subject matters:
\begin{enumerate}
	\item \label{leriughoi} Developing 
%	an algorithm for the 
	formal verification for a given class of hybrid systems \cite{alur-et-al93,Platzer2018};
	\item \label{lsiuhpuih} Developing existence/uniqueness results and discretization schemes, for a given class of (single-mode or multimode) DAE systems \cite{Ascher1998};
	% in numerical analysis;
	\item \label{leguihpiug} Developing the mathematical foundations for multiphysics/multimode DAE systems modeling languages (our focus).
\end{enumerate}
For subject~\ref{leriughoi}, restrictions on the class are stated, under which the proposed algorithms are proved correct, and their complexity is analyzed. Similarly, for subject~\ref{lsiuhpuih}, assumptions are stated on which discretization schemes are proved correct, and convergence rates can be given. In both cases, assumptions are formulated and it is the responsibility of the user to check their validity when performing verification or applying discretization schemes.

In contrast, for subject~\ref{leguihpiug}, one cannot expect a compiler to check the existence/uniqueness of solutions of a given multimode DAE system. Worse, we cannot expect the user to check this as part of his/her model design activity. In fact, the compiler must handle \emph{any} submitted model, and it has the responsibility for accepting or rejecting a model on the sole basis of syntactic or symbolic (but never numerical) analyses. This is a demanding task that we claim is not well addressed today.

\subsection{Some illustration examples}
\label{reuyerkyluiy}
To illustrate the above discussion, we now review two specific examples of multimode DAE systems.

\subsubsection{An ideal clutch}
\label{lireutyliuel}
This clutch is depicted in \rref{fig:clutch}. It is a simple, idealized clutch involving two rotating shafts where no motor or brake are connected. More precisely, we assume that this system is closed, with no interaction other than explicitly specified.

\begin{figure}[h]
  \centering
    \includegraphics[height=2.5cm]{clutch.png}
  \caption{
  An ideal clutch with two shafts.}
 \label{fig:clutch}
\end{figure}

We provide hereafter in~(\ref{sys:coupledshafts}) a model for it, complying with the general form~(\ref{ow8ugthoerui}). The dynamics of each shaft $i$ is modeled by \mbox{$\omega_i'=f_i(\omega_i,\tau_i)$} for some, yet unspecified, function $f_i$, where $\omega_i$ is the angular velocity, $\tau_i$ is the torque applied to shaft $i$, and $\omega_i'$ denotes the time derivative of $\omega_i$. Depending on the value of the input Boolean variable $\gamma$, the clutch is either engaged ($\gamma = \ttt$, the constant ``true'') or released ($\gamma = \fff$, the constant ``false''). When the clutch is released, the two shafts rotate freely: no torque is applied to them ($\tau_i=0$).  When the clutch is engaged, it ensures a perfect join between the two shafts, forcing them to have the same angular velocity ($\omega_1-\omega_2=0$) and opposite torques ($\tau_1+\tau_2=0$). Here is the model:
\begin{equation}
\left\{
\bea{rllcc}
&&\omega'_1=f_1(\omega_1,\tau_1) &&(\eqq_1) \\
&&\omega'_2=f_2(\omega_2,\tau_2) &&(\eqq_2) \\
\when\;\guard&\doo&\omega_1-\omega_2=0 &&(\eqq_{3}) \\
&\aand&\tau_1+\tau_2=0 &&(\eqq_{4}) \\
\when\;\prog{not}\;\guard&\doo&\tau_1=0 &&(\eqq_{5}) \\
&\aand&\tau_2=0 &&(\eqq_{6}) \\
\eea
\right.
\label{sys:coupledshafts}
\end{equation}
When $\guard=\ttt$, equations $(\eqq_3,\eqq_4)$ are active and equations $(\eqq_5,\eqq_6)$ are disabled, and vice-versa when $\guard=\fff$. If the clutch is initially released, then, at the instant of contact, the relative speed of the two rotating shafts jumps to zero; as a consequence, an impulse is expected on the torques. In addition, the model yields an ODE system when the clutch is released, and a DAE system of index $1$ when the clutch is engaged, due to equation $(\eqq_3)$ being active in this mode. 

Note that the condition $\tau_1+\tau_2=0$ is an invariant of the system in both modes, which reflects the preservation of the angular momentum due to the assumption that the system is closed. Model (\ref{sys:coupledshafts}) is one possible specification. One could have instead put $\tau_1+\tau_2=0$ outside the scope of any guard, thus making explicit that this is an invariant---this is actually the form adopted for the Modelica model of \rref{fig:clutch_modelica}; then, mode $\guard=\fff$ only states $\tau_1=0$. Both models are equivalent. It turns out that our approach yields the same generated code for both source codes. 

\paragraph{The clutch in Modelica}
%We show in \rref{fig:clutch_modelica} the Modelica program for the ideal clutch.
\begin{figure}[ht]
\footnotesize
\begin{tabular}{cc}
%\begin{verbatim}
\begin{minipage}{6cm}\tt
{\color{modelica-model}{model}} ClutchBasic \\
\wemph{to}  {\color{modelica-model}{parameter}} Real w01=1;\\
\wemph{to}   {\color{modelica-model}{parameter}} Real w02=1.5;\\
\wemph{to}   {\color{modelica-model}{parameter}} Real j1=1;\\
\wemph{to}   {\color{modelica-model}{parameter}} Real j2=2;\\
\wemph{to}   {\color{modelica-model}{parameter}} Real k1=0.01;\\
\wemph{to}   {\color{modelica-model}{parameter}} Real k2=0.0125;\\
\wemph{to}   {\color{modelica-model}{parameter}} Real t1=5;\\
\wemph{to}   {\color{modelica-model}{parameter}} Real t2=7;\\
\wemph{to}   Real t(start=0, fixed={\color{modelica-bool}{true}});\\
\wemph{to}   Boolean g(start={\color{modelica-bool}{false}});\\
\wemph{to}   Real w1(start = w01, fixed={\color{modelica-bool}{true}});\\
\wemph{to}   Real w2(start= w02, fixed={\color{modelica-bool}{true}});\\
\wemph{to}   Real f1;  Real f2;
%equation
%  der(t) = 1;
%  g = (t >= t1) and (t <= t2);
%  j1*der(w1) = -k1*w1 + f1;
%  j2*der(w2) = -k2*w2 + f2;
%  0 = if g then w1-w2 else f1;
%  f1 + f2 = 0;
%end ClutchBasic; 
\end{minipage} & \begin{minipage}{6cm}\tt
%model ClutchBasic
%  parameter Real w01=1;
%  parameter Real w02=1.5;
%  parameter Real j1=1;
%  parameter Real j2=2;
%  parameter Real k1=0.01;
%  parameter Real k2=0.0125;
%  parameter Real t1=5;
%  parameter Real t2=7;
%  Real t(start=0, fixed=true);
%  Boolean g(start=false);
%  Real w1(start = w01, fixed=true);
%  Real w2(start = w02, fixed=true);
%  Real f1;
%  Real f2;
{\color{modelica-eq-ite}equation} \\
\wemph{to}   {\color{modelica-der}der}(t) = 1;\\
\wemph{to}   g = (t >= t1) {\color{modelica-bool}and} (t <= t2);\\
\wemph{to}   j1*{\color{modelica-der}der}(w1) = -k1*w1 + f1;\\
\wemph{to}   j2*{\color{modelica-der}der}(w2) = -k2*w2 + f2;\\
\wemph{to}   0 = {\color{modelica-eq-ite}if} g {\color{modelica-eq-ite}then} w1-w2 {\color{modelica-eq-ite}else} f1;\\
\wemph{to}   f1 + f2 = 0;\\
{\color{modelica-model}{end}} ClutchBasic; 
\end{minipage}
%\end{verbatim} 
\end{tabular}
\caption{{Modelica code for the idealized clutch.}}
\label{fig:clutch_modelica}
\end{figure}

%Models involving impulsive behaviors, or a variable state structure for the different modes, are  accepted by Modelica 3.3 but not well handled by its execution engine. This is exemplified by the code shown in \rref{fig:clutch_modelica}. The simulation of this code stops when the clutch gets engaged, with a wrong diagnosis, see \ref{sec:dymola} for details. 

\begin{figure}[ht]
  \centering\includegraphics[width=13.5cm]{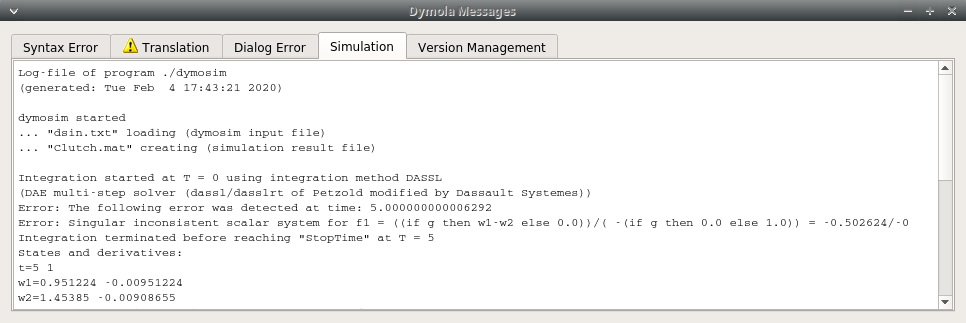}
  \caption{Division by zero exception occuring when simulating the Ideal Clutch Modelica model.} \label{fig:clutch:exception}
\end{figure}

\begin{figure}[ht]
  \centering\includegraphics[width=8cm]{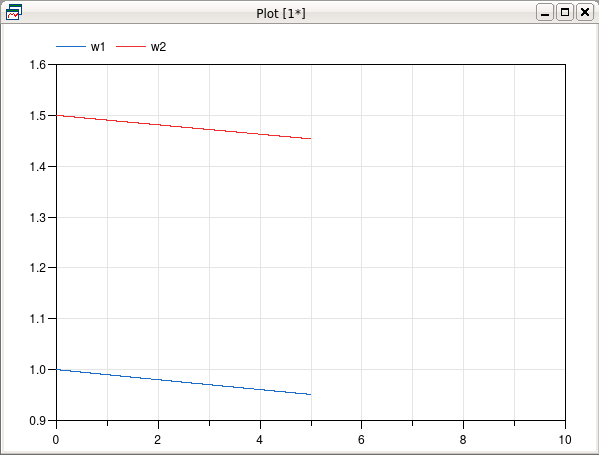}
  \caption{Trajectory of the Ideal Clutch Modelica model: it stops at $t=5s$, when the exception occurs.} \label{fig:clutch:trajexc}
\end{figure}

Figure~\ref{fig:clutch_modelica} details the Modelica model of the
Ideal Clutch system, with two shafts interconnected by an idealized
clutch. It is a faithful translation in the Modelica language of the
mDAE~(\ref{sys:coupledshafts}), with the exception that the two
differential equations have been linearized. Also, the trajectory of the input guard $\gamma$ (here called \verb!g!) has been fully defined: it takes the value $\ttt$ between instants $t_1$ and $t_2$, and $\fff$ elsewhere. The model is deemed to be
structurally nonsingular by the two Modelica tools we had the
opportunity to test: OpenModelica and Dymola. However, none of these
tools generates correct simulation code from this model.
%As amatter of fact, 
Indeed, simulations fail precisely at the instant when the clutch
switches from the uncoupled mode (\verb|g=false|) to the coupled one
(\verb|g=true|). This is evidenced by a division by zero exception,
as shown in Figures~\ref{fig:clutch:exception} and~\ref{fig:clutch:trajexc}.

The root cause of this exception is that none of these tools
performs a multimode structural analysis. Instead, the
structure of the model is assumed to be invariant, and the structural
analysis implemented in these tools is a Dummy Derivatives method,
which is proved to be correct only on single-mode DAE systems; it is
possibly correct on multimode systems, but under the stringent assumption that the
model structure (including, but not limited to, its differentiation index) is independent of the mode. The consequence is that the
structural analysis methods implemented in these tools do not detect
that the differentiation index jumps from $0$ to $1$ when the shafts
are coupled, and that the structure is not invariant. 
%Indeed, the simulation code generated by these Modelica tools is based on the incorrect assumption that this is an index zero model.  
The division by
zero results from the pivoting of a linear system of equations that
becomes singular when \verb|g| becomes equal to \verb|true|.

\paragraph{The clutch in Mathematica}

\begin{figure}[ht]
  \centering\includegraphics[width=10cm]{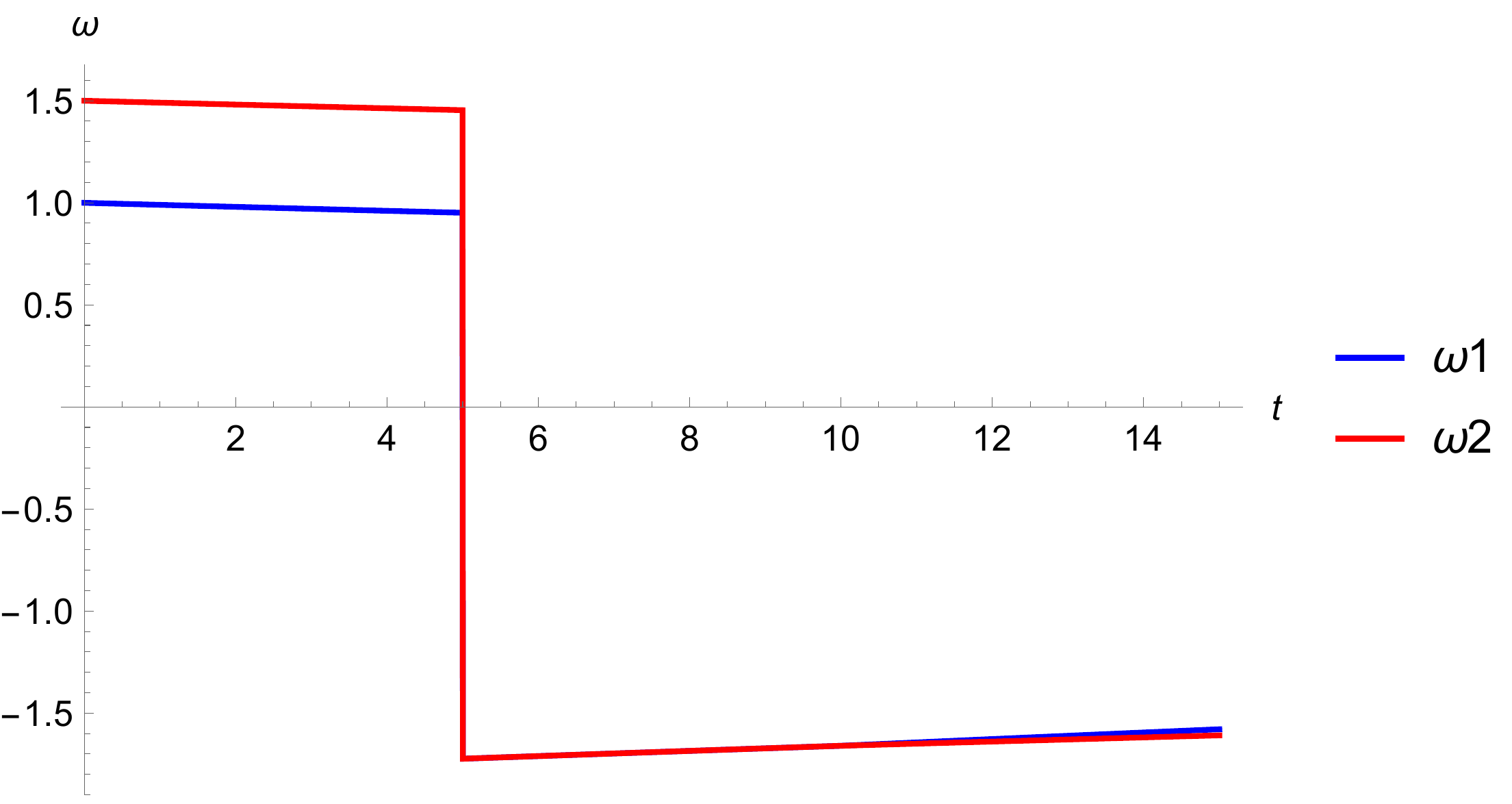}
  \caption{Solution of the Ideal Clutch model, computed by Mathematica.} \label{fig:clutch:mathematica}
\end{figure}

DAE systems can be solved with the Mathematica ``Advanced Hybrid \& DAE''
toolbox.\footnote{\url{https://www.wolfram.com/mathematica/new-in-9/advanced-hybrid-and-differential-algebraic-equations/}} This
library is also advertised as supporting multimode DAE
systems. Figure~\ref{fig:clutch:mathematica} shows the solution of the
Ideal Clutch system, computed by Mathematica. It can be seen that the
angular momentum of the mechanical system is not preserved, in contradiction with the physics.
%equation $(e_4)$ of mDAE~(\ref{sys:coupledshafts}). Indeed, this equation can not hold at the instant of discontinuity. 
Also, a small change in the velocities before the change results in totally different values for restart, although the system is not chaotic by itself.
%), and is not a solution of the system.
A likely explanation is that Mathematica regards the mode switching as a consistent initialization problem, with one degree of freedom. Indeed, any choice of angular velocities satisfying the constraint $\omega_1 = \omega_2$ is a consistent initial state for the DAE in the engaged mode.

The conclusion of these two experiments is that, clearly, some fundamental study is needed to ensure a correct handling of events of mode change by the Modelica tools. Smoothing the `\texttt{if then else}' equation could help solving the above problem, but requires a delicate and definitely non-modular tuning, as it depends on the different time scales arising in the system. We believe that, as the tools reputedly support multimode DAE models, they should handle them correctly.

\paragraph{Our objective}
For this example, \emph{our objective is to be able to compile \rref{sys:coupledshafts} and produce correct simulations for it, without any need for smoothing the mode change}. This objective is addressed in \rref{sec:simpleclutch}.

\subsubsection{A Cup-and-Ball game}
\label{leriugfehliu}
\begin{figure}[h]
\centerline{\includegraphics[width=4cm]{CupAndBall.jpg}}
\caption[yres]{ The Cup-and-Ball game.}
\label{fig:cupandball}
\end{figure}
We sketch here a multimode extension of the popular example of the pendulum in Cartesian coordinates~\cite{pantelides}, namely the planar Cup-and-Ball game, whose 3D version is illustrated by \rref{fig:cupandball}. A ball, modeled by a point mass, is attached to one end of a rope, while the other end of the rope is fixed, to the origin of the plane in the model. The ball is subject to the unilateral constraint set by the rope, but moves freely while the distance between the ball and the origin is less than its actual length. Again, we assume that the system is closed and subject to no interaction other than specified. Using Cartesian coordinates $x$ and $y$, the equations of the planar Cup-and-Ball game are the following:
\beq
\left\{\bea{lll}
 0= \ddot{x}+{\tension}x && (\eqq_1) \\
 0= \ddot{y}+{\tension}y+g  && (\eqq_2) \\
0\leq{L^2}{-}(x^2{+}y^2)    && (\kappa_1) \\
0\leq\tension   && (\kappa_2) \\
0=\left[{L^2}{-}(x^2{+}y^2)\right]\times\tension   && (\kappa_3) 
\eea\right.
\label{reiutyekrtyk}
\eeq
where the unknowns (also called \emph{dependent variables}) are the position $(x,y)$ of the ball in Cartesian coordinates and the rope tension $\tension$.

The subsystem $(\kappa_1,\kappa_2,\kappa_3)$ expresses that the distance of the ball from the origin is less than or equal to $L$, the tension is nonnegative, and one cannot have a distance less than $L$ and a nonzero tension at the same time.
This is known as a \emph{complementarity condition}, written as
~\mbox{$
0\leq{L^2}{-}(x^2{+}y^2) \perp \tension \geq{0}
$}~
in the \emph{nonsmooth systems} literature~\cite{Acary08}, and is an adequate modeling of ideal valves, diodes, and contact in mechanics. 
Constraints $\kappa_1$ and $\kappa_2$ are unilateral, which does not belong to our framework of guarded equations. Therefore, using a technique communicated to us by H. Elmqvist and M. Otter, we redefine the graph
of this complementarity condition as a parametric curve,
represented by the following three equations:
\[\bea{rcl}
s &=& \prog{if}\; \guard \;\prog{then} -\tension \;\prog{else} \;{L^2}{-}(x^2{+}y^2)
\\
0 &=& \prog{if}\;\guard\;\prog{then} \;{L^2}{-}(x^2{+}y^2) \; \prog{else} \; \tension
\\
\guard &=& [s\leq 0]
\eea\]
which allows us to rewrite the complementarity condition $(\kappa_1,\kappa_2,\kappa_3)$ as follows:
% We again rewrite (\ref{reiutyekrtyk}) using an auxiliary variable $s$:
\beq
\left\{\bea{rll}
& 0= \ddot{x}+{\tension}x & (\eqq_1) \\
& 0= \ddot{y}+{\tension}y+g  & (\eqq_2) \\
& \guard= [s\leq{0}]  & (\straight_0) \\
\when \; \guard\; \doo& 0={L^2}{-}(x^2{+}y^2)   & (\straight_1) \\
\prog{and}& 0=\tension+s   & (\straight_2) \\
\when \;\prog{not}\; \guard\; \doo& 0=\tension   & (\straight_3) \\
\prog{and}& 0=({L^2}{-}(x^2{+}y^2))-s   & (\straight_4) \\ 
\eea\right.
\label{riuytuikdastf}
\eeq
Unsurprinsingly (considering the reasons for the wrong handling of the clutch example), the Modelica tools also fail to handle this model correctly.

Some new issues emerge from Example~(\ref{riuytuikdastf}). First,  subsystem $(\kappa_1,\kappa_2,\kappa_3)$ of~(\ref{reiutyekrtyk}) leaves the impact law at mode change insufficiently specified: it could be fully elastic, fully inelastic, or inbetween. Second, (\ref{riuytuikdastf}) exhibits a logico-numerical fixpoint equation in $s$, which we regard as problematic. While the Jacobian matrix yields a regularity criterion for systems of smooth algebraic equations, logico-numerical systems of equations are unfriendly: there is no simple criterion for the existence and/or uniqueness of solutions. Note that a sensible modification of model (\ref{riuytuikdastf}) would consist in replacing, in $(k_0)$, $s$ by its left-limit $s^-(t)\eqdef\limsup_{u\nearrow{t}}s(u)$.\footnote{The left-limit is available in Simulink under the `state port' construct, whereas the $\prog{pre}$ keyword in Modelica serves the same purpose.}

\paragraph{Our objectives}
%For this example, \emph{our objectives are: to reject this model with a proper diagnosis (unwanted fixpoint); to propose a correction removing the fixpoint equation in $s$; to detect, then, that the impact law is underdetermined and return that information to the designer; and to handle the impulses in tension properly.} The designer can then propose an impact law, which, if correct, will validate the model. 
For this example, our objectives are \emph{to address the above difficulties at compile time, and to generate correct simulation code once structural problems are fixed.}
These objectives are addressed in Section\,\ref{loeruighlrigtuh}.
%
%\begin{figure}[ht]
%\begin{center}
%\includegraphics[width=7cm]{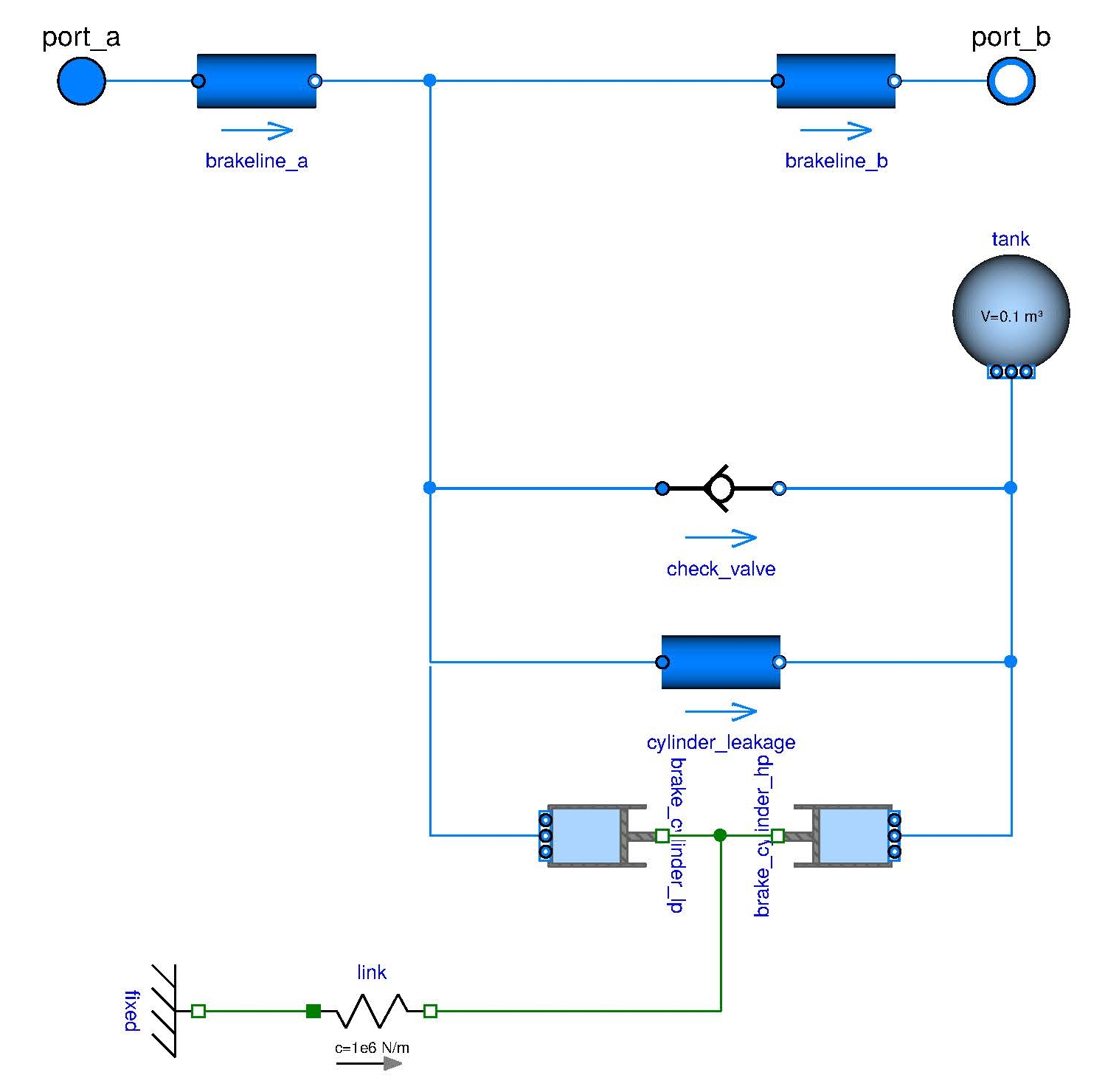}
%\end{center}
%\caption{ Westinghouse brake for one car: schematics.}
%\label{fig:railcardiagram}
%\end{figure}
\subsubsection{A Westinghouse air brake} 
\label{elriuftrheliu}
\begin{figure}[h]
%\begin{center}
\vspace*{-2mm}
\centerline{\includegraphics[height=7cm,width=9cm]{Brake.jpg}}
%\end{center}
\vspace*{-2mm}
\caption{ Westinghouse brake for one car: schematics.}
\label{fig:railcardiagram}
%\end{figure}
\medskip
%\begin{figure}[ht]
\small
%
%\centerline
{$\bea{rl}
    p_b  :& \mbox{connector pressure (a control)} \\
    p_{bn} :& \mbox{next connector pressure} \\
    p_r  :& \mbox{railcar pressure} \\
    p_t  :& \mbox{reservoir pressure} \\
    f_b  :& \mbox{mass flow brake connector} \\
    f_{bn} :& \mbox{mass flow next brake connector} f_{bn}=0\\
    f_v  :& \mbox{mass flow one way valve} \\
    f_l  :& \mbox{leakage mass flow} \\
    f_{cl} :& \mbox{mass flow low press side of cylinder} \\
    f_{ch} :& \mbox{mass flow high press side of cylinder} \\
    f_t  :& \mbox{mass flow reservoir} \\
    b   :& \mbox{brake force} \\
    x   :& \mbox{position of the piston} \\
~ &~\\
(m_1,m_2):& \mbox{Mass balance equations} \\
(f_1\!\!-\!\!f_3):& \mbox{Flow equations} \\
(r):& \mbox{Dynamics of the reservoir} \\
(p_1,p_2):& \mbox{Dynamics of the piston} \\
(l_1,l_2):& \mbox{Brake force and mech link movement} \\
(\nu_1\!\!-\!\!\nu_3):& \mbox{One way ideal valve}
\eea$\hfill$\left\{\bea{lc}
0=f_b - f_v - f_{cl} + f_l &(m_1)\\
0=f_v - f_{ch} - f_l - f_t &(m_2)\\
0=f_b - F_1.\flow(p_b-p_r) &(f_1)\\
0=F_1.\flow(p_{bn}-p_r) &(f_2)\\
0=f_l - F_2.\flow(p_t-p_r) &(f_3)\\
0=f_t - \frac{\rho.V}{P_0}.\dot{p_t} &(r)\\
0= \rho.S.(\dot{x}.p_r + (x-L).\dot{p_r}) + P_0.f_{cl} &(p_1)\\
0=\rho.S.(\dot{x}.S.p_t + x.\dot{p_t}) - P_0.f_{ch} &(p_2)\\
0=S.(p_t-p_r) - b &(l_1)\\
0= K.x-b &(l_2)\\
0\leq p_r-p_t & (\nu_1) \\
0\leq f_v & (\nu_2) \\
0=f_v\times(p_r-p_t) & (\nu_3) 
\eea\right.$}
%\caption{ 
%%Model corresponding to the s
%Schematics of \rref{fig:railcardiagram}: declaration of variables and equations; $p_b$, the control, and $f_{bn}$ determined by the boundary condition $f_{bn}{=}0$, are known.}
%\label{fig:eprgfuiosehgu}
%%\end{figure}
%%\begin{figure}[ht]
%%\small
%\medskip
%\centerline{$\left\{\bea{rlc}
%&0=f_b - f_v - f_{cl} + f_l &(m_1)\\
%&0=f_v - f_{ch} - f_l - f_t &(m_2)\\
%&0=f_b - F_1.\flow(p_b-p_r) &(f_1)\\
%&0=F_1.\flow(p_{bn}-p_r) &(f_2)\\
%&0=f_l - F_2.\flow(p_t-p_r) &(f_3)\\
%&0=f_t - \frac{\rho.V}{P_0}.\dot{p_t} &(r)\\
%&0= \rho.S.(\dot{x}.p_r + (x-L).\dot{p_r}) + P_0.f_{cl} &(p_1)\\
%&0=\rho.S.(\dot{x}.S.p_t + x.\dot{p_t}) - P_0.f_{ch} &(p_2)\\
%&0=S.(p_t-p_r) - b &(l_1)\\
%&0= K.x-b &(l_2)\\
%&0\leq p_r-p_t & (\nu_1) \\
%&0\leq f_v & (\nu_2) \\
%&0=f_v\times(p_r-p_t) & (\nu_3) 
%\eea\right.$}
\caption{ Model corresponding to the schematics of \rref{fig:railcardiagram}; $p_b$, the control, and $f_{bn}$ determined by the boundary condition $f_{bn}{=}0$, are known.}
	\label{fig:erpgfiou}
\end{figure}
The second example is a simplified model of the railway air brake patented by George Westinghouse in 1868. A railway car is equipped with the pneumatic system whose schematics is shown in \rref{fig:railcardiagram}. It has two pneumatic ports that are interconnected when several cars are coupled to form a train; in this simple example, only one car is considered, with the boundary conditions shown (\verb!control! is an input pressure). 
As for the clutch example, this program is not correctly executed by the Modelica 3.3 engine.
Using the smoothed version of a valve from the Modelica library of predefined components solves the problem. This, however, requires a careful tuning in accordance with the time scales of the whole system.

A simplified model corresponding to the schematics of \rref{fig:railcardiagram} is given in \rref{fig:erpgfiou}, where we have taken the boundary conditions into account\,---\,the equations are simplified with reference to hydraulics, but the model structure (states and index) are correct. The model of the ideal valve is given by the subsystem $(\nu_1,\nu_2,\nu_3)$. Inequalities $(\nu_1,\nu_2)$ express nonnegativity conditions satisfied by the pressure difference and the flow through the valve. Equality $(\nu_3)$ expresses that, either the valve is open (no pressure drop), or it is closed (no air pass).  This is known as a \emph{complementarity condition}, written 
\[
0\leq p_r-p_t\;\perp\;f_v\geq{0}
\]
in the \emph{nonsmooth systems} literature\,\cite{Acary08}, and is the adequate modeling of systems such as ideal valve, diode, and contact in mechanics. 
The two constraints $(\nu_1,\nu_2)$ are unilateral, which does not belong to our framework of guarded equations. Therefore, using a technique communicated to us by H. Elmqvist and M. Otter, we rewrite the complementarity condition $(\nu_1,\nu_2,\nu_3)$ by using an auxiliary variable $s$ as follows:
\beq
\left\{\bea{rlc}
&\guard=[s\leq 0] &(v_0)\\
\when \; \guard\; \doo& 0=p_r-p_t &(v_1)\\
\prog{and}& 0=f_v+s &(v_2)\\
\when \; \prog{not}\; \guard\; \doo& 0=f_v &(v_3)\\
\prog{and}& 0=p_r-p_t-s&(v_4)
\eea\right. 
\label{eroiguhopiuh}
\eeq
The system of \rref{fig:erpgfiou}, where $(\nu_1,\nu_2,\nu_3)$ is replaced by $(v_0,\dots,v_4)$ in (\ref{eroiguhopiuh}), has the form (\ref{ow8ugthoerui}).
In this model, the subsystem collecting equations $(v_0,\dots,v_4)$ constitutes a logico-numerical fixpoint equation, in that $\guard$ depends on $s$ and the equations determining $s$ are guarded by $\guard$. Given that no solver exists that reliably supports this kind of fixpoint equation, we think it should be rejected at compile time, with a proper diagnosis. Still, our model is physically meaningful. 
%What should we then try to achieve? 

\paragraph{Our objectives:}
For this example, \emph{our objectives are to be able to reject, with a proper diagnosis, the model of Figure\,$\ref{fig:erpgfiou}$ (using $(\ref{eroiguhopiuh})$ for the valve), and to correct it so that the corrected model simulates as expected}. These objectives are addressed in \rref{sec:railcar}.
%\clearpage

\subsection{Related work}
%\input{sota}
%\section{State-of-the-art and Related Work} \label{sota}
%\albert{On top of sota.tex from LNCS10000}

There exists a huge litterature devoted to the multi-physics DAEs (having a single mode); see, e.g., \cite{Brenan1996} as a basic reference. Many references related to the Modelica language can also be found on the website of the Modelica organization.\footnote{\url{https://www.modelica.org/publications}} We focus this review on works addressing multimode DAE systems with an emphasis on the compilation and code generation, for related models.

A large body of work was developed in the domain of multi-body systems with contacts where impulses can occur; see for instance~\cite{PfeifferGlocker2008,Pfeiffer2012} for an overview of this domain. The current research focuses on the handling of contacts between many bodies using time-stepping methods. Contact equations are usually expressed on the velocity or acceleration, and since the constraints on position level are not explicitly taken into account, a drift typically occurs. An exception is \cite{Schoeder2013}, where constraints on position level are enforced. It is not obvious how the specialized multimode methods for multi-body systems can be generalized to any type of multimode DAE.
For electrical circuits with idealized switches, like ideal diodes, impulses can also occur. Again, a large literature is available, concentrating mostly on specialized electrical circuits, such as piecewise-linear networks with idealized switches~\cite{Heemels2002}.
%\cite{Heemels2002,Trenn2009,Barela2016}. 
However, it is not obvious how to generalize methods in this restricted area to general multimode DAEs.

The article by Mehrmann \emph{et al.}~\cite{Mehrmann2009} contains interesting results regarding numerical techniques to detect chattering between modes. It however assumes that consistent reset values are explicitly given for each mode. Such an assumption does not hold in general, particularly for models derived from first principles of the physics; the clutch and cup-and-ball examples are good illustrations of this.

In the PhD thesis of Zimmer~\cite{Zimmer2010}, variable-structure multi-domain DAEs are defined thanks to an \emph{ad hoc} modeling language, and a runtime interpreter is used that processes the equations at runtime, when the structure and/or the index changes. Limitations of this work are that impulsive behavior is not supported and that the user has to explicity define the transfer of variable values from one mode to another, which is not practical for large models. 
%\albert{restart at mode change is not addressed}

Describing variable-structure systems with \emph{causal} state machines is discussed by Pepper \emph{et al.}~\cite{Pepper2011}. 
Dynamically changing the structural analysis at runtime is also performed by Hoeger~\cite{Hoger14,DBLP:conf/eoolt/Hoger17}. In~\cite{Hoger14}, the author proposes a dynamic execution of John Pryce's \sigmamethod~\cite{Pryce01}.

Elmqvist \emph{et al.}~\cite{MultiMode2014,VaryingIndex2015} propose a high-level description of multimode models as an extension to the synchronous Modelica 3.3 state machines, by using continuous-time state machines having continuous-time models as `states'. Besides ODEs as used in hybrid automata,  \emph{acausal DAE models with physical connectors} can also be a `state' of a state machine.
Such a state machine is mapped into a form such that the resulting equations can be processed by standard symbolic algorithms supported by Modelica tools. The major restrictions of this approach are that mode changes with impulsive behavior are not supported and that not all types of multimode systems can be handled due to the static code generation. 
%\albert{restart at mode change is not addressed}

Correct restart at mode changes is clearly a central issue in multi-mode DAE systems. Techniques from \emph{nonsmooth dynamical systems}~\cite{Acary08} address this issue by performing a warm restart at mode changes using time-stepping discretization schemes: not only the numerical scheme does not interrupt at mode changes, but these mode changes are not even detected. This is very elegant and effective. However, this relies on the handling of \emph{complementarity conditions} (see Section~\ref{leriugfehliu}), thus, it does not apply to general multimode DAE systems but only to a certain subclass. 

%\medskip

We observe that, in general, the issue of correct restart at mode changes is not addressed in the above mentioned literature. It is, however, considered in the few references to follow.

Benveniste \emph{et al.}~\cite{BenvenisteCEGOP17,benveniste:hal-01343967} considered this issue, as well as the problem of varying structure and index, from a fundamental point of view, by relying on nonstandard analysis to capture continuous-time dynamics and mode change events in a unified framework. A first structural analysis algorithm was presented in~\cite{BenvenisteCEGOP17}, by significantly modifying the original Pantelides algorithm~\cite{pantelides}. A proof-of-concept mockup named SunDAE was developed implementing this approach. This first attempt suffers from some issues: the proposed structural analysis does not boil down trivially to the Pantelides algorithm in the case of single-mode systems; it involves nondeterministic decisions, an unwanted feature for the mathematical foundation of compilers; and its mathematical study is incomplete. 

The work of Stephan Trenn is an important contribution to the subject. In his PhD thesis~\cite{TrennPhD2009} and his article~\cite{Trenn09}, he points out the difficulty in defining piecewise smooth distributions: there is no intrinsic definition for their `Dirac' part, as several definitions comply with the requirements for it.
%whose definition depends on some design choice by the mathematician. 
This indicates that distributions are not the ultimate answer to deal with impulsive variables in multimode DAE systems. Still, Trenn was able in~\cite{LiberzonT12} to define complete solutions for a class of switched DAE systems in which each mode is in \emph{quasi-linear form}: switching conditions are time-based, not state-based. A main difference with our approach is that Trenn's solution is complete for all variables and relies on piecewise smooth distributions, whereas our works on restart conditions consist in eliminating impulsive variables while computing restart values for state variables. (Beyond the existence of solutions, the work of Trenn \emph{et al.} contains additional interesting results that are not relevant to our study.)

An important step forward was done in~\cite{DBLP:series/lncs/BenvenisteCEGOP19}. The interesting subclass of multimode DAE systems that is now called \emph{semi-linear} (see~\rref{sec:mainthm}) was first identified. Semi-linear multimode DAE systems can exhibit impulsive variables at mode changes. They generalize the `semi-linear systems' proposed by Trenn in the sense that switching conditions are no longer restricted to time-based ones, instead including state-based switching conditions.  The analysis and discretization schemes proposed in~\cite{DBLP:series/lncs/BenvenisteCEGOP19} are mathematically sound. Building on this work, Martin Otter has developed the  \href{https://modiasim.github.io/ModiaMath.jl/stable/man/Overview.html}{ModiaMath} tool for semi-linear multimode DAE systems. Hence, it makes sense to compare the schemes proposed in~\cite{DBLP:series/lncs/BenvenisteCEGOP19} to the ones we develop in this paper, for general \mDAE\ systems. It turns out that our general approach coincides with the schemes proposed in~\cite{DBLP:series/lncs/BenvenisteCEGOP19} when applied to the subclass of semi-linear systems, see \rref{sec:mainthm}. Our present work thus extends and significantly improves~\cite{DBLP:series/lncs/BenvenisteCEGOP19}.
%END\input{sota}
%%%%%%%%%%%%%%%%%%%%%%%%%%%%%%%%%%%%%%%%%%%%%%%%%%%

\subsection{{Contributions}}
In this work, we develop a systematic approach that addresses, as particular cases, the objectives stated for the examples of Sections~\ref{lireutyliuel}
%, \ref{elriuftrheliu}, 
and~\ref{leriugfehliu}. Its main contributions can be detailed as follows.

\paragraph{Structural analysis of mode changes using nonstandard analysis}
As our main contribution, we extend the notion of structural analysis to mode change events. More precisely, we develop a structural analysis for multimode DAE systems that is valid at any time, that is, for both continuous dynamics and mode changes.

Doing so raises a number of difficulties. First, mode changes occur under various kinds of dynamics; there can be isolated events or finite cascades thereof, but Zeno situations can also occur, i.e., events of mode changes can accumulate and form an infinite sequence bounded by some finite instant. As a result, sliding modes can emerge out of mode changes, and the modeling and simulation tools can only identify such situations at runtime. 
%Thus, to be satisfactory, a compilation technique must be able to deal with such various classes of mode changes. It has to handle some situations and reject others on the basis of clearly defined and understadable compile time criteria. 
%
Second, impulses may occur at mode changes for certain variables, which again is most of the time discovered by tools at runtime.
Compilation, however, is a task performed prior to generating simulation code; hence, it encounters serious difficulties in both the nature of time at mode changes and the domain of variables, due to impulsive behaviors. 
%The different situations we expect to face relate to the areas of mathematical and numerical analysis.

%Are the above difficulties really relevant to structural analysis (and compilation in general)? 
The main objective of structural analysis, akin to compilation in general, is precisely to keep away from numerical considerations, by focusing only on the structure (or the syntax) of the system of equations. This is captured by the bipartite graph associated to the considered DAE system (also referred to as `incidence graph' or `incidence matrix' or `$\Sigma$-matrix' depending on the context and authors). The basic principle supporting structural analysis consists in: (1) abstracting an equation $f(x,y,z)=0$ as the bipartite graph having $f,x,y,z$ as vertices and $(f,x),(f,y),(f,z)$ as edges, and a system of equations as the union of the above bipartite graphs; and (2) exploiting the resulting graph to generate efficient simulation code. This means focusing on syntax and symbols, while abstracting away numerical aspects.
%To summarize, w
We thus wish to \emph{symbolically} manipulate both normal and impulsive values for variables, as well as both continuous time and events (in finite or infinite cascades), and possibly stranger structures for time.

We see no vehicle for supporting all of this, other than \emph{nonstandard analysis.} Nonstandard analysis was proposed by Abraham Robinson in the 1960s to allow the explicit manipulation of `infinitesimals' and `infinities' in analysis~\cite{Robinson,Cutland}. Impulsive values become normal citizens in nonstandard analysis. Continuous time and dynamics can be discretized using infinitesimal time steps, e.g., by setting $\dot{x}(t)\approx\frac{x(t+\vsmall)-x(t)}{\vsmall}$ with $\vsmall$ infinitesimal, thus providing perfect fidelity up to infinitesimal errors. This discretized time can then support the various kinds of time needed when modeling mode changes. Since techniques from structural analysis
%(including the structural index)
translate almost verbatim to discrete-time dynamics, having dynamical systems indexed by discrete (nonstandard) time in a uniform setting paves the way toward structural analysis for multimode DAE systems in their generality.
Using this approach, we are able to extend the structural analysis, from single-mode to multimode DAE systems, by encompassing both modes and mode changes.
%, the \sigmamethod\  for the structural analysis of DAE systems, proposed by John Pryce~\cite{Pryce01}.

\paragraph{Generating code for the restart at mode changes} 
The structural analysis developed hereafter works in the nonstandard analysis domain. As such, it cannot be executed on any real-life computer. A further stage is needed to produce executable code, by mapping the nonstandard execution schemes to ordinary world (standard) schemes; this is called \emph{standardization}. This allows us to produce mathematically justified schemes for the restart at mode changes.

\paragraph{Rejecting and accepting programs on a clear basis}
Our structural analysis is precise enough to properly identify models that are structurally over- or under-specified at mode change events, or models that contain a fixpoint equation involving a subset of variables and some guard. We reject such models, with a proper diagnosis explaining the rejection. In turn, mode-dependent index/state/dynamics are not reasons for rejection \emph{per se}. Our compilation technique is able to handle such cases.

\paragraph{The paper is organized as follows}
This report is an extended version of a paper under submission. With reference to this submission, it contains additional material, which the reader may skip for a first reading. In the following list of sections, we indicate in blue these {\color{blue} additional sections}.

The clutch example is informally developed in~\rref{sec:simpleclutch}; we show how to map this model to the nonstandard domain, how the structural analysis works (in particular, the handling of mode changes is emphazised), and how standardization is performed to generate actual simulation code. The Cup-and-Ball example is developed in Section~\ref{loeruighlrigtuh}, following the same guidelines; based on the case of elastic impact, it brings forth the need for considering transient modes, i.e., modes that last for zero time. {\color{blue} \rref{sec:railcar} develops the Westinghouse air brake example; we address the objectives stated in the introduction and we advocate for the interest of adding {assertions}, restricting the allowed sequences of modes in the system.} The insight gained from studying these examples leads to the approach described in Section~\ref{sec:liguholui}.

The extension of structural analysis to multimode DAE systems is then developed. In \rref{sec:mDAE}, the basics of structural analysis for algebraic systems of equations are recalled. Then, we review J.\,Pryce's \sigmamethod\ for the structural analysis of DAE systems. We extend the \sigmamethod\ to the multimode case, including the handling of mode change events, in \rref{sec:loguhiuliugh}, for the subclass of systems involving only long modes---which excludes the Cup-and-Ball example with elastic impact. Motivated by the latter case, we identify the need for handling \emph{transient} modes, in which the system spends zero time; this is addressed in \rref{sec:epwtgouihpiouh}. {\color{blue} In \rref{sec:mdDAE} we widen our scope by adding to our framework, in addition to derivatives, the left- and right-limits of a signal, which is a useful primitive operator provided by modeling languages such as Modelica or Simulink.}

\rref{sec:NSA} recalls the needed background on nonstandard analysis, mostly related to differential calculus. In \rref{sec:standardization}, we complement this background with additional results related to structural analysis and impulse analysis, and we apply this machinery to the task of standardization. 

Three important results are developed in \rref{sec:MainResults}. 

\begin{itemize}
\item In \rref{sec:otherschemes}, we prove that modifying the nonstandard expansion of the derivative (there are infinitely many possibilities) does not change the final generated code, thus showing that our approach is intrinsic. 
\item We prove in \rref{sec:mainthm} that our approach does compute correct solutions for \emph{semi-linear} multimode DAE systems, a nontrivial subclass possibly involving impulsive behaviors and containing, as a subclass, multi-body mechanics (see~\cite{DBLP:series/lncs/BenvenisteCEGOP19}). We would be happy to prove that the simulation code we generate does compute the solution of any multimode DAE system; this, however, is beyond what is doable since no notion of solution is known for multimode DAE systems in general.
\item We propose in \rref{sec:restartschemes} a numerical scheme for approximating restart values for non-impulsive variables at mode changes. This alleviates the need for performing standardization when generating code in practice; the proof of this result, however, uses standardization as a mathematical tool.
\end{itemize}

Finally, we draw in \rref{sec:RLDC2} how our approach can be implemented in a tool, and we illustrate this on the RLDC2 circuit example. This example, provided to us by Sven-Erik Mattsson, is problematic for the existing Modelica tools, as it has mode-dependent index and structure. The structural analysis of the corresponding DAE system in all its modes is successfully handled by our IsamDAE tool~\cite{Caillaud2020a}. This tool, currently under development, will support multimode DAE structural analysis following our theory. Its current version does not support impulsive variables at mode changes---the RLDC2 circuit example, however, does not exhibit this difficulty.

{\color{blue} In \rref{sec:loguhiuliugh}, we assumed that the effect of guards is  delayed by one nonstandard time shift. This is a necessary condition for being able to support both long and transient modes. In Apprendix~\ref{eriughepuih}, we propose an alternative execution scheme that is \emph{constructive} in that it can handle models in which the effect of guards is not delayed by one nonstandard time shift. In turn, only models with long modes are supported.}
%END\input{introduction_report}
%%%%%%%%%%%%%%%%%%%%%%%%%%%%%%%%%%%%%%%%%%%%%%%%%%%%%%%%%%%%%%%%%%%%%%%%%%%%%%%%%%%%%%%%%
%\input{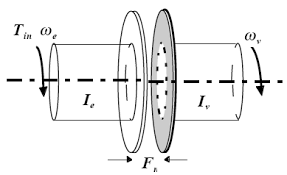}
\section{The ideal clutch}
\label{sec:simpleclutch}
In this section, we introduce our approach by discussing in detail the
ideal clutch example presented in Section\,\ref{lireutyliuel}. Its
model was given in~(\ref{sys:coupledshafts}).
We first analyze separately the model for each mode of the clutch
(\rref{sec:monomode}).  Then, we discuss the difficulties arising when
handling mode changes (\rref{sec:events}).  Finally, we propose a
global comprehensive analysis in Sections
\ref{sec:nsa}\,--\,\ref{sec:standardizeclutch}. For convenience, we
present an intuitive and informal introduction to nonstandard analysis
in \rref{sec:nsa}.

%\clearpage
\subsection{Separate Analysis of Each Mode}
\label{sec:monomode}
In the released mode, when $\gamma$ is false in
\rref{sys:coupledshafts}, the two shafts are independent and one
obtains the following two independent ODEs for $\omega_1$ and
$\omega_2$:
\begin{equation}
\begin{array}{lr}
\omega'_1=f_1(\omega_1,\tau_1) &(\eqq_1) \\
\omega'_2=f_2(\omega_2,\tau_2) &(\eqq_2)
\end{array}
\quad
\begin{array}{lr}
\tau_1=0 &(\eqq_{5}) \\
\tau_2=0 &(\eqq_{6})
\end{array}
\label{sys:daemode1}
\end{equation}
In the engaged mode, however, $\gamma$ holds true, and both the two
velocities and the two torques are algebraically related:
\begin{equation}
\begin{array}{lr}
\omega'_1=f_1(\omega_1,\tau_1) &(\eqq_1) \\
\omega'_2=f_2(\omega_2,\tau_2) &(\eqq_2)
\end{array}
\quad
\begin{array}{rr}
\omega_1-\omega_2=0 &(\eqq_{3}) \\
\tau_1+\tau_2=0 &(\eqq_{4})
\end{array}
\label{sys:daemode2}
\end{equation}
Equation $(\eqq_4)$, that relates the two torques, is not an issue by
itself, as it can be rewritten as $\tau_2=-\tau_1$ and used to eliminate
$\tau_2$. Equation $(\eqq_3)$, however, relates the two velocities
$\omega_1$ and $\omega_2$ that are otherwise subject to the ODE system
$(\eqq_1,\eqq_2)$, which makes \rref{sys:daemode2} a DAE instead of an ODE.

In \rref{sys:daemode2}, the torques $\tau_1$ and $\tau_2$ are not
differentiated: we call them \emph{algebraic variables}. The
derivatives of the velocities $\omega_1$ and $\omega_2$ appear: they are \emph{state variables}. The
$\dot{\omega_1},\dot{\omega_2},\tau_1,\tau_2$ are called the
\emph{leading variables} of \rref{sys:daemode2}.
%
%Now, we can also regard $\tau_1,\tau_2,\omega_1,\omega_2,\dot{\omega_1},\dot{\omega_2}$ as being six variables, ignoring that $\dot{\omega_1}$ is the derivative of $\omega_1$ (say that we regard the former as a \emph{dummy derivative}). 

%With this alternative view of variables, 
Now, suppose one is able to uniquely determine the {leading variables}
$\dot{\omega_1},\dot{\omega_2},\tau_1,\tau_2$ given \emph{consistent}
values for the {state variables} $\omega_1,\omega_2$, i.e., values
satisfying $(\eqq_3)$. Then, by using an ODE solver, one could perform
an integration step and update the current velocities
${\omega_1},{\omega_2}$ using the computed values for their
derivatives $\dot{\omega_1},\dot{\omega_2}$. In this case, we say that
the considered DAE is an ``extended ODE''~\cite{Pryce01}.

It turns out that this condition does not hold for \rref{sys:daemode2}
as is.  To intuitively explain the issue, we move to
discrete time by applying an explicit first-order Euler scheme with
constant step size $\delta>0$:
\begin{equation}
\begin{array}{lr}
\postset{\omega_1}=\omega_1+\delta\cdot f_1(\omega_1,\tau_1) &(\eqq_1^\delta) \\
\postset{\omega_ 2}=\omega_ 2+\delta\cdot f_ 2(\omega_ 2,\tau_ 2) &(\eqq_2^\delta)
\end{array}
\quad
\begin{array}{rr}
\omega_1-\omega_2=0 &(\eqq_{3}) \\
\tau_1+\tau_2=0 &(\eqq_{4})
\end{array}
\label{sys:expliciteulermode2}
\end{equation}
where 
$
\postset{\omega}(t)\eqdef\omega(t{+}\delta) 
$
denotes the forward time shift operator by an amount of $\delta$.
Suppose we are given consistent values $\omega_1=\omega_2$ and we wish
to use \rref{sys:expliciteulermode2}, seen as a system of algebraic
equations, to determine the value of the dependent variables (i.e., the
unknowns) $\tau_1,\tau_2,\postset{\omega_1},\postset{\omega_2}$.  This
attempt fails since we have only three equations
$\eqq_1^\delta$, $\eqq_2^\delta$  and $\eqq_4$ to determine four
unknowns $\tau_1$, $\tau_2$, $\postset{\omega_1}$ and
$\postset{\omega_2}$; indeed, the omitted equation $(\eqq_3)$ does
not involve any dependent variable.

However, since \rref{sys:expliciteulermode2} is time invariant, and
assuming that the system remains in the engaged mode for at least
$\delta$ seconds, there exists an additional \emph{latent equation} on
the set of variables, namely
\beq
\postset{\omega_1}-\postset{\omega_2}=0 && (\postset{\eqq_3})
\label{eq:eq3boulet}
\eeq
obtained by shifting $(\eqq_3)$ forward. Now,
replacing $(\eqq_3)$ with $(\postset{\eqq_3})$ in \rref{sys:expliciteulermode2} yields a system with four equations and four
dependent variables; moreover, this system is, in fact, nonsingular in a generic sense.
One can now use \rref{sys:expliciteulermode2} along with equation $(\postset{\eqq_3})$ to get an execution
scheme for the engaged mode of the clutch. This is shown in
\rref{exec:daemode2} below.
\begin{execscheme}
  \caption{\rref{sys:expliciteulermode2}$+$\rref{eq:eq3boulet}.}\label{exec:daemode2}
  \begin{algorithmic}[1]
    \Require consistent $\omega_1$ and $\omega_2$, i.e., satisfying  $(\eqq_3)$. 
    \State \texttt{Solve} $\{\eqq_1^\delta,\eqq_2^\delta,\postset{\eqq_3},\eqq_4\}$ for $(\postset{\omega_1},\postset{\omega_2},\tau_1,\tau_2)$
		\label{op:solve} \Comment{$4$ equations, $4$ unknowns} 
    \State $(\omega_1,\omega_2) \gets (\postset{\omega_1},\postset{\omega_2})$ \Comment{update the states $({\omega_1},{\omega_2})$}
    \State \texttt{Tick} \Comment{move to next discrete step}
  \end{algorithmic}
\end{execscheme}

Since the new values of the state variables satisfy
(\ref{eq:eq3boulet}) by construction, the consistency condition is met
at the next iteration step (should the system remain in the same
mode).

\begin{ccomment}[structural nonsingularity]\rm 
	\label{erlihtroiu} 
        The implicit assumption behind \rref{op:solve} in
        \rref{exec:daemode2} is that solving
        $\{\eqq_1^\delta,\eqq_2^\delta,\postset{\eqq_3},\eqq_4\}$
        always returns a unique set of values.
        In our example, this is true in a `generic' or
        `structural' sense, meaning that it holds for all but
        exceptional values for the parameters in the considered
        equations (through the unspecified functions $f_1$ and
        $f_2$). We will repeatedly use \emph{structural
          nonsingularity} (also called \emph{structural regularity})
        in the sequel. A formalization of structural reasoning and
        analysis is developed in \rref{sec:mDAE}.\eproof
\end{ccomment}
Observe that the same analysis could be applied to the original
continuous-time dynamics from \rref{sys:daemode2} by augmenting it with the following \emph{latent equation}: \beq
%\tag{$\eqq_3'$}
\dot{\omega_1}-\dot{\omega_2}=0 && (\eqq_3')
\label{eq:eq3prime}
\eeq obtained by differentiating $(\eqq_3)$; as a matter of fact, since $(\eqq_3)$ holds
at any instant, $(\eqq_3')$ follows as long as the solution is smooth
enough for the derivatives $\dot{\omega_1}$ and $\dot{\omega_2}$ to be
defined.  The resulting execution scheme is given
in~\rref{exec:daemode2c}, that parallels \rref{exec:daemode2}.
\begin{execscheme}
  \caption{\rref{sys:daemode2}$+$\rref{eq:eq3prime}.}\label{exec:daemode2c}
  \begin{algorithmic}[1]
    \Require consistent $\omega_1$ and $\omega_2$, i.e., satisfying  $(\eqq_3)$. 
    \State \texttt{Solve} $\{\eqq_1,\eqq_2,\eqq_3',\eqq_4\}$  for $(\dot{\omega_1},\dot{\omega_2},\tau_1,\tau_2)$ \label{op:solve2} \Comment{$4$ equations, $4$ unknowns} 
    \State \texttt{ODESolve} $({\omega_1},{\omega_2})$ \Comment{update the states $({\omega_1},{\omega_2})$}
    \State \texttt{Tick} \Comment{move to next discretization step}
  \end{algorithmic}
\end{execscheme}

\rref{op:solve2} is identical for the two schemes and is assumed to
give a unique solution, generically; it fails if one omits the latent
equation $(\eqq_3')$.  Then, although getting the next
values for the $\omega_1$ and $\omega_2$ is easy in \rref{exec:daemode2}, the situation changes in \rref{exec:daemode2c}: the derivatives
$(\omega_1',\omega_2')$ are first evaluated, and then an ODE solver (here denoted by \texttt{ODESolve})
is used to update the state $({\omega_1},{\omega_2})$ for the next
discretization step. Note that, when considering an exact mathematical
solution, if ${\omega_1}-{\omega_2}=0$ initially holds and
$\dot{\omega_1}-\dot{\omega_2}=0$, then the linear constraint
$(\eqq_3)$ will be satisfied at any positive time.
% for the updated values of the state variables.

The replacement of $(\eqq_3)$ by $(\dot{\eqq_3})$, which resulted in
\rref{exec:daemode2c}, is known in the literature as
\emph{index reduction}~\cite{MattssonSoderlin1993}.  It requires
adding the (smallest set of) latent equations needed for the system in \rref{op:solve2} of the execution scheme to become solvable and
deterministic.  The number of successive time differentiations
needed for obtaining all the latent equations is called the
\emph{differentiation index}~\cite{CampbellGear1995}, which we simply
refer to as the \emph{index} in the sequel.\footnote{There are indeed a
  great number of possible definitions for ``the index'' of a DAE, serving various purposes,
  see~\cite{CampbellGear1995} for instance. This, however, is not relevant to the
  present work.} 
  %In~\rref{exec:daemode2c}, differentiating $(\eqq_3)$ once was sufficient.
%If, e.g., the second derivative of the state variables were involved in the system model, then, two successive differentiations would be needed. 
Finally, observe that both execution schemes
%\ref{exec:daemode2} and \ref{exec:daemode2c}
rely on an algebraic equation system solver to
evaluate the leading variables as a function of state variables.

We conclude this part by briefly discussing the initialization
problem.  Unlike for ODE systems, initialization is far from trivial
for DAE systems. For the clutch example, if one considers
\rref{sys:daemode2} as a standalone DAE, the initialization is
performed as indicated in \rref{exec:init}.
\begin{execscheme}
  \caption{Initialization of \rref{sys:daemode2}$+$\rref{eq:eq3boulet}.}\label{exec:init}
  \begin{algorithmic}[1]
    \State $(\omega_1,\omega_2)$
    $\gets$ \texttt{Solve}$\{\eqq_3\}$  for $({\omega_1},{\omega_2})$
     \label{op:init} 
     \hfill 
     \Comment{$2$ unknowns for only $1$ equation} 
  \end{algorithmic}
\end{execscheme}

The system for solving possesses two unknowns for only one
equation, which leaves us with one degree of freedom; in mathematical terms, the set of all initial values for the $2$-tuple of variables
is a manifold of dimension 1, generically or structurally. For
example, one may freely fix the initial common rotation speed so that
($\eqq_3$) is satisfied.

\subsection{Mode Transitions}
\label{sec:events}
In an attempt to reduce the full clutch model to the analysis of the
DAE of each mode, one hopes that the handling of a mode change boils down
to applying the initialization given in \rref{exec:init}.
If one was to handle restarts at mode changes like initializations, it
would make the clutch system nondeterministic, due to the
\emph{extra} degree of freedom of \rref{exec:init}. In contrast, the
physics tells us that the state of the system is entirely determined
when the clutch gets engaged. This comforts the intuition
that resets at mode changes strongly differ from mere initializations.

However, inferring by hand the reset values for rotation
velocities when the clutch gets engaged is definitely non-trivial. Furthermore, these values depend on the whole system model, so that the task of determining them becomes complex if external components are added. \emph{It is therefore highly desirable, for this example, to
  let the compiler infer these reset values from model
  \emph{(\ref{sys:coupledshafts})}.}

This problem involves a mix of continuous-time and discrete-time
dynamics, the latter consisting in one or several computation steps
to restart the dynamics of the next mode. As a result, it will be convenient to cast everything
in discrete time thanks to nonstandard analysis. We develop this
next.
%We present next our approach to tackle such problems using nonstandard analysis.  
 
\subsection{Nonstandard Semantics}
\label{sec:nsa} {Nonstandard
  analysis}~\cite{Robinson,Lindstrom,JCSS11} extends the set $\bR$ of
real numbers into a superset $\nstR$ of \emph{hyperreals} (also
called \emph{nonstandard reals}) that includes infinite sets of infinitely
large numbers and infinitely small numbers. The key properties of
hyperreals, needed for the informal discussion of the clutch
example, are the following \cite{Lindstrom}:\footnote{The skeptical reader is gently referred to \rref{sec:NSA} for mathematical details.}

\begin{itemize}
\item There exist \emph{infinitesimals}, defined as hyperreals that are
  smaller in absolute value than any real number: an infinitesimal
  $\deronde \in \nstR$ is such that $|\deronde|<a$ for any
  positive $a\in\bR$. For $x,y$ two hyperreals, write $x\approx{y}$ if
  $x-y$ is an infinitesimal.
\item All relations, operators, and propositional formulas that are valid over
  $\bR$ are also valid over $\nstR$; for example, $\nstR$ is a
  totally ordered set. The arithmetic operations $+$, $\times$,
  etc. can be lifted to $\nstR$. We say that a hyperreal
  $x$ is \emph{finite} if there exists some standard finite positive
  real number $a$ such that $|x|<a$.
\item For every finite hyperreal $x\in\nstR$, there exists a unique
  standard real number $\st{x}\in\bR$ such that
  $\st{x} \approx x$, and $\st{x}$ is called the \emph{standard part} (or \emph{standardization}) of
  $x$. As we shall see, standardizing more complex objects, such as functions or systems of equations, requires some care.
\item Every real function lifts in a systematic way to a hyperreal
  function (regardless of its continuity properties).\footnote{The result of this lifting may not be intuitive in general, see \rref{sec:NSA} for mathematical details.} This allows us to write $f(x)$ where $f$ is a real
  function and $x$ is a nonstandard number.
\item Let $t\mapsto{x}(t), t\in\bR$ be an $\bR$-valued (standard)
  signal. Then: \beq \mbox{
\begin{minipage}{11cm}
  $x$ is continuous at instant $t\in\bR$ if and only if, for any
  infinitesimal $\deronde\in\nstR$, one has
  $x(t+\deronde)\approx{x(t)}$;
\end{minipage}
} 
\label{erlfuihliu} \\
[1mm]
  \mbox{
\begin{minipage}{11cm}
  $x$ is differentiable at instant $t \in \bR$ if and only if there
  exists $a \in \bR$ such that, for any infinitesimal
  $\deronde \in \nstR$,
  $\frac{x(t+\deronde)-x(t)}{\deronde} \approx {a}$. In this case,
  $a=\dot{x}(t)$.
\end{minipage}
} 
\label{erpgtuihpui}
\eeq
\end{itemize}
We can then consider, for a given positive infinitesimal $\deronde$, the following time index set
$\bT\subseteq\nstR$:
\beq
\bT = 0,\deronde,2\deronde,3\deronde,\dots =
     \left\{n\deronde \mid n \in \nstN \right\}
\label{non-standard-time-line}
\eeq
where $\nstN$ denotes the set of \emph{hyperintegers},
consisting of all natural integers augmented with additional infinite numbers
called \emph{nonstandard}. For convenience, the elements
of $\bT$ will be generically denoted by the symbol $\nstime$, the
explicit expression $n\vsmall$, or even simply by $t$ when the
possible confusion with (usual) real time does not matter.  The
important features of $\bT$ are:\footnote{In French, one uses to
  qualify this as: \emph{avoir le beurre et l'argent du beurre.}}
\begin{itemize}
\item Any finite positive real time $t\,{\in}\,\bR$, $t \geq 0$ is infinitesimally close to some element of $\bT$ (informally, $\bT$ covers the set $\bR_+$ of nonnegative real numbers
  and can be used to index continuous-time dynamics); and
\item $\bT$ is `discrete': every instant $n\deronde$ has a
  predecessor $(n{-}1)\deronde$ and a successor $(n{+}1)\deronde$, except 0 that has no predecessor.
\end{itemize}
%
%These facts justify the following definition:
\begin{definition}[Forward and Backward Shifts]
    \label{defn:tshifts}
    Let $x$ be a nonstandard signal indexed by $\bT$. We define the
    forward shifted signal $\postset{x}$ through
    ~$ \postset{x}(n\vsmall) \eqdef x((n{+}1)\vsmall) $ and, for
    $f(X)$ a function of the tuple $X$ of signals, we set
    $\postset{(f(X))}\eqdef f(\postset{X})$ where the forward shift
    $X\mapsto\postset{X}$ applies pointwise to all the components of
    the tuple.
    It will also be convenient to consider the \emph{backward shift}
    $\preset{x}$ defined by
    ~$ \preset{x}((n{+}1)\vsmall) \eqdef x(n\vsmall) $, implying that
    an initial value for $\preset{x}(0)$ must be provided.
\end{definition}
For example,
$
\postset{f}(x,y)(t)\eqdef{f}(\postset{x},\postset{y})(t)=f(x(t{+}\vsmall),y(t{+}\vsmall))
$.  Observe that the definition of the forward shift depends on the
infinitesimal $\vsmall$. Thanks to (\ref{erpgtuihpui}), using forward
shifts allows us to represent, up to an infinitesimal, the derivative
$\dot{x}$ of a signal by its first-order explicit Euler approximation
$\frac{1}{\vsmall}(\postset{x}-x)$.

Solutions of multimode DAE systems may, however, be non-differentiable and even non-continuous at events of mode change. To
give a meaning to $x'$ at any instant, \emph{we decide to
  {\textbf{define}} it everywhere as the nonstandard first-order Euler
  increment}
\begin{equation}
\label{eq:differenceeq}
 \dot{x} \eqdef \frac{1}{\vsmall}(\postset{x}-x) \enspace .
 \end{equation}
\begin{definition}\label{erliutwyeoiu} Substituting, in a multi-mode
  DAE system, every occurrence of $x'(t)$ by its expansion
  $(\ref{eq:differenceeq})$ yields a difference algebraic equation
  \emph{(dAE)}
  system\footnote{Throughout this paper, we consistently use, in acronyms, the letters
    ``D'' and ``d'' to refer to ``Differential'' and ``difference'',
    respectively.} that we call its \emph{nonstandard semantics}.
\end{definition}
For instance, the nonstandard semantics of \rref{sys:coupledshafts} is
the following:
\beq
\left\{
\bea{rllcc}
&&\frac{\postset{\omega_1}-\omega_1}{\vsmall}=f_1(\omega_1,\tau_1) &&(\eqq_1^\vsmall) \\
&&\frac{\postset{\omega_2}-\omega_2}{\vsmall}=f_2(\omega_2,\tau_2) &&(\eqq_2^\vsmall) \\
\when\;\guard&\doo&{\omega_1}-{\omega_2}=0 &&({\eqq_3}) \\
&\aand&\tau_1+\tau_2=0 &&(\eqq_{4}) \\
\when\;\prog{not}\;\guard&\doo&\tau_1=0 &&(\eqq_{5}) \\
&\aand&\tau_2=0 &&(\eqq_{6}) \\
\eea
\right.
\label{sys:nscoupledshaftsbroken}
\eeq
The state variables are $\omega_1$, $\omega_2$ whereas the leading
variables are now
$\tau_1$, $\tau_2$, $\postset{\omega_1}$, $\postset{\omega_2}$, in both modes
$\guard=\fff$ and $\guard=\ttt$.
%Notice that we now add the guard $\guard$ to the set of leading variables. The rationale is that $\guard$ is an input variable and is not evaluated at the previous instant (unlike the state variables $\omega_1,\omega_2$).
%
Reproducing the reasoning of \rref{sec:monomode}, one obtains, for
each mode, a discrete system very much like the explicit Euler scheme
of \rref{sec:monomode}, except that the step size is now infinitesimal
and that the variables are all nonstandard.  The added value of
\rref{sys:nscoupledshaftsbroken} with respect to the explicit Euler
scheme of \rref{sec:monomode} is twofold: first, it is exact up to
infinitesimals within each mode; second, it will allow us to
carefully analyze what happens at events of modes change.

\subsection{Nonstandard structural analysis}
\label{sec:orguhuoi}
%In this section we 
We now develop a structural analysis for the nonstandard multimode dAE
system (\ref{sys:nscoupledshaftsbroken}).
As a preliminary step, we perform the structural analysis in each mode
and combine the two dynamics with the due guards (the added latent
equation is shown in {\color{red}red}): \beq \left\{ \bea{rllcc}
  &&\frac{\postset{\omega_1}-\omega_1}{\vsmall}=f_1(\omega_1,\tau_1) &&(\eqq_1^\vsmall) \\
  &&\frac{\postset{\omega_2}-\omega_2}{\vsmall}=f_2(\omega_2,\tau_2) &&(\eqq_2^\vsmall) \\
  \when\;\guard&\doo&{\omega_1}-{\omega_2}=0 &&({\eqq_3}) \\
  &\remph{\aand}&\remph{\postset{\omega_1}-\postset{\omega_2}=0} &&\remph{(\postset{\eqq_3})} \\
  &\aand&\tau_1+\tau_2=0 &&(\eqq_{4}) \\
  \when\;\prog{not}\;\guard&\doo&\tau_1=0 &&(\eqq_{5}) \\
  &\aand&\tau_2=0 &&(\eqq_{6}) \\
  \eea \right.
\label{sys:loiguwhuip}
\eeq \rref{sys:loiguwhuip} coincides with the known structural
analysis in the interior of each of the two modes $\guard=\ttt$ and
$\guard=\fff$.  What about the instants of mode change?
%We now successively investigate the two mode changes $\guard:\ttt\ra\fff$ and $\guard:\fff\ra\ttt$.

\paragraph{Mode change $\guard:\ttt\ra\fff$}
At the considered instant, we have $\preset{\guard}=\ttt$ and
$\guard=\fff$, where $\preset{\guard}$ denotes the backward shift of
$\guard$ following \rref{defn:tshifts}. Unfolding
\rref{sys:loiguwhuip} at the two successive previous (shown in
{\color{blue}blue}) and current (shown in black) instants yields, by
taking the actual values for the guard at those instants into account:
\beq\bea{rl} \bemph{\mbox{previous}}&\left\{ \bea{lcl}
  \bemph{\frac{{\omega_1}-\preset{\omega_1}}{\vsmall}=f_1(\preset{\omega_1},\preset{\tau_1})} && \\ [1mm]
  \bemph{\frac{{\omega_2}-\preset{\omega_2}}{\vsmall}=f_2(\preset{\omega_2},\preset{\tau_2})} && \\
  \bemph{\preset{\omega_1}-\preset{\omega_2}=0} && \bemph{(\preset{\eqq_3})} \\
  \bemph{{\omega_1}-{\omega_2}=0} && \\
  \bemph{\preset{\tau_1}+\preset{\tau_2}=0} && \\ [1mm]
%\frac{\postset{\omega_1}-\omega_1}{\vsmall}=f_1(\omega_1,\tau_1) && \blemph{\mbox{current}} \\ [1mm]
%\frac{\postset{\omega_2}-\omega_2}{\vsmall}=f_2(\omega_2,\tau_2) && \\ [1mm]
%\tau_1=0 && \\
%\tau_2=0 && \\
\eea
\right.
%&~&
\\
\blemph{\mbox{current}} &\left\{
\bea{lcl}
%\bemph{\frac{{\omega_1}-\preset{\omega_1}}{\vsmall}=f_1(\preset{\omega_1},\preset{\tau_1})} && \\ [1mm]
%\bemph{\frac{{\omega_2}-\preset{\omega_2}}{\vsmall}=f_2(\preset{\omega_2},\preset{\tau_2})} && \\ 
%\bemph{\preset{\omega_1}-\preset{\omega_2}=0} && \bemph{(\preset{\eqq_3})} \\
%\bemph{{\omega_1}-{\omega_2}=0} && \\
%\bemph{\preset{\tau_1}+\preset{\tau_2}=0} &&\bemph{\mbox{previous}} \\ [1mm]
\frac{\postset{\omega_1}-\omega_1}{\vsmall}=f_1(\omega_1,\tau_1) &~~\;& \\ [1mm]
\frac{\postset{\omega_2}-\omega_2}{\vsmall}=f_2(\omega_2,\tau_2) && \\ [1mm]
\tau_1=0 && \\
\tau_2=0 && \\
\eea
\right.
\eea
\label{sys:erguihuiop}
\eeq We assume that values are given for the state variables at the
previous instant, namely $\preset{\omega_1}$ and $\preset{\omega_2}$, and we
regard \rref{sys:erguihuiop} as an algebraic system of equations with
dependent variables
$\bemph{\preset{\tau_i},{\omega_i}};{\tau_i},\postset{\omega_i}$ for
$i=1,2$, i.e., the leading variables of \rref{sys:loiguwhuip} at the
previous and current instants. Note that equation
$\bemph{(\preset{\eqq_3})}$ does not involve any dependent variable:
it is actually a \emph{fact} that is inherited from the instant
$t-2\vsmall$, where $t$ is the current instant; we can thus ignore
this equation in \rref{sys:erguihuiop}.  Seen this way, \rref{sys:erguihuiop} is structurally nonsingular
and we can solve it as two successive blocks, corresponding to the
previous instant followed by the current instant: the latter is our
desired code for the mode change. This yields the following code for
the mode change $\guard:\ttt\ra\fff$: \beq \left\{ \bea{lcc}
  \bemph{{\omega_1},{\omega_2} \mbox{ set by previous instant}} && \\ [1mm]
  \frac{\postset{\omega_1}-\omega_1}{\vsmall}=f_1(\omega_1,\tau_1) && \\ [1mm]
  \frac{\postset{\omega_2}-\omega_2}{\vsmall}=f_2(\omega_2,\tau_2) && \\ [1mm]
%\tau_1+\tau_2=0 && \\
%\when\;\prog{not}\;\guard
%\doo
\tau_1=0 && \\
%\aand
\tau_2=0 && \\
\eea
\right.
\label{sys:eprgfiuaerhfop}
\eeq The above reasoning may seem an overshoot for this mode change
$\guard:\ttt\ra\fff$, since one could simply say: let the previous
part of \rref{sys:erguihuiop} fix $\omega_1,\omega_2$ and, then, let the
current part of \rref{sys:erguihuiop} determine
${\tau_i},\postset{\omega_i}$ for $i=1,2$.
However, the same line of reasoning will allow us to address the other, more
difficult, mode change as well.

\paragraph{Mode change $\guard:\fff\ra\ttt$} At the
considered instant, we have $\preset{\guard}=\fff$ and
$\guard=\ttt$. We proceed as for the other case. Unfolding
\rref{sys:loiguwhuip} at the two successive previous (shown in
{\color{blue}blue}) and current (shown in black) instants yields, by
taking the actual values for the guard at those instants into account:
\beq\bea{rl} \bemph{\mbox{previous}}&\left\{ \bea{lcl}
  \bemph{\frac{{\omega_1}-\preset{\omega_1}}{\vsmall}=f_1(\preset{\omega_1},\preset{\tau_1})} && \bemph{(\preset{\eqq_1^\vsmall})} \\ [1mm]
  \bemph{\frac{{\omega_2}-\preset{\omega_2}}{\vsmall}=f_2(\preset{\omega_2},\preset{\tau_2})} && \bemph{(\preset{\eqq_2^\vsmall})} \\
  \bemph{\preset{\tau_1}=0} && \\
  \bemph{\preset{\tau_2}=0} && \\ [1mm]
%\frac{\postset{\omega_1}-\omega_1}{\vsmall}=f_1(\omega_1,\tau_1) && \blemph{\mbox{current}} \\ [1mm]
%\frac{\postset{\omega_2}-\omega_2}{\vsmall}=f_2(\omega_2,\tau_2) && \\ [1mm]
%\omega_1-\omega_2=0 && (\eqq_3) \\ 
%\postset{\omega_1}-\postset{\omega_2}=0 && \\ 
%\tau_1+\tau_2=0 && 
\eea
\right.
\\
\blemph{\mbox{current}}&\left\{
\bea{lcl}
%\bemph{\frac{{\omega_1}-\preset{\omega_1}}{\vsmall}=f_1(\preset{\omega_1},\preset{\tau_1})} && \bemph{(\preset{\eqq_1^\vsmall})} \\ [1mm]
%\bemph{\frac{{\omega_2}-\preset{\omega_2}}{\vsmall}=f_2(\preset{\omega_2},\preset{\tau_2})} && \bemph{(\preset{\eqq_2^\vsmall})} \\
%\bemph{\preset{\tau_1}=0} && \\
%\bemph{\preset{\tau_2}=0} &&\bemph{\mbox{previous}} \\ [1mm]
\frac{\postset{\omega_1}-\omega_1}{\vsmall}=f_1(\omega_1,\tau_1) &~~\;&  \\ [1mm]
\frac{\postset{\omega_2}-\omega_2}{\vsmall}=f_2(\omega_2,\tau_2) && \\ [1mm]
\omega_1-\omega_2=0 && (\eqq_3) \\ 
\postset{\omega_1}-\postset{\omega_2}=0 && \\ 
\tau_1+\tau_2=0 && 
\eea
\right.
\eea
\label{sys:welguihwtuilh}
\eeq
Once more, we regard \rref{sys:welguihwtuilh} as an algebraic system
of equations with dependent variables
$\bemph{\preset{\tau_i},{\omega_i}};{\tau_i},\postset{\omega_i}$ for
$i=1,2$, i.e., the leading variables of \rref{sys:loiguwhuip} at the
previous and current instants.
In contrast to the previous mode change, \rref{sys:welguihwtuilh} is
no longer structurally nonsingular since the following subsystem
possesses five equations and only four dependent variables
$\omega_1,\omega_2,\preset{\tau_1},\preset{\tau_2}$:
\beq \left\{
  \bea{lcl}
  \bemph{\frac{{\omega_1}-\preset{\omega_1}}{\vsmall}=f_1(\preset{\omega_1},\preset{\tau_1})} && \bemph{(\preset{\eqq_1^\vsmall})} \\ [1mm]
  \bemph{\frac{{\omega_2}-\preset{\omega_2}}{\vsmall}=f_2(\preset{\omega_2},\preset{\tau_2})} && \bemph{(\preset{\eqq_2^\vsmall})} \\
  \bemph{\preset{\tau_1}=0} && \\
  \bemph{\preset{\tau_2}=0} && \\
%%\frac{\postset{\omega_1}-\omega_1}{\vsmall}=f_1(\omega_1,\tau_1) && \blemph{\mbox{current}} \\ [1mm]
%\frac{\postset{\omega_2}-\omega_2}{\vsmall}=f_2(\omega_2,\tau_2) && \\ [1mm]
\omega_1-\omega_2=0 && (\eqq_3) \\ 
%\postset{\omega_1}-\postset{\omega_2}=0 && \\ 
%\tau_1+\tau_2=0 && 
\eea
\right.
\label{oreiuyoweriu}
\eeq
This subsystem exhibits a conflict. We propose to resolve it by
applying the following causality principle:
\begin{principle}[causality]
	\label{oeuryfgy}  What was done at the previous instant cannot be undone at the current instant.
\end{principle}
%\beq\mbox{
%\begin{minipage}{7cm}
%{What was done
  %at the previous instant cannot be undone at the current instant}.
%\end{minipage}
%}
%\label{oeuryfgy}
%\eeq
Applying Principle~\ref{oeuryfgy} leads
to removing, from subsystem (\ref{oreiuyoweriu}), the conflicting equation $(\eqq_3)$.
This yields the following code for the mode change $\guard:\fff\ra\ttt$:
\beq
\left\{
\bea{lcl}\bemph{{\omega_1},{\omega_2},\preset{\tau_1},\preset{\tau_2} \mbox{ set by previous instant}} && \\ [1mm]
\frac{\postset{\omega_1}-\omega_1}{\vsmall}=f_1(\omega_1,\tau_1) &&  \\ [1mm]
\frac{\postset{\omega_2}-\omega_2}{\vsmall}=f_2(\omega_2,\tau_2) && \\ [1mm]
%\omega_1-\omega_2=0 && (\eqq_3) \\ 
\postset{\omega_1}-\postset{\omega_2}=0 && \\ 
\tau_1+\tau_2=0 && 
\eea
\right.
\label{sys:segfuihpeiu}
\eeq
Note that the consistency equation $(\eqq_3):\omega_1-\omega_2=0$
has been removed from \rref{sys:segfuihpeiu}, thus modifying the original model. However, this removal occurs only at mode
change events \mbox{$\guard:\fff\ra\ttt$}. What we have done amounts
to \emph{delaying by one nonstandard instant the satisfaction of some
  of the constraints in force in the new mode} $\guard=\ttt$. Since
our time step $\vsmall$ is infinitesimal, this takes zero standard
time.

The corresponding nonstandard execution scheme is summarized in
\rref{exec:nsclutch}.  We use variable $\Delta$ to encode the
\emph{context}, that is, the set of equations known to be satisfied by the
state variables as a result of having executed the previous
instant. For instance, $\eqq_3\not\in\Delta$ corresponds to the mode
change $\guard:\fff\ra\ttt$, expressing that $(\eqq_3)$ is not
provably true if the guard was false at the previous
instant. %\rref{exec:nsclutch}.
At each tick, that is, every time the scheme moves to a new instant, the context gets updated to account for the equations satisfied by the new state.
The procedure `\texttt{Solve}' solves the (algebraic) system to
determine the values of the leading variables.
\begin{execscheme}
  \caption{for Nonstandard \rref{sys:nscoupledshaftsbroken}. Comments are written in {\color{blue}blue}.}\label{exec:nsclutch}
  \begin{algorithmic}[1]
    \Require $\omega_1$ and $\omega_2$. % i.e., satisfying  $(\eqq_3)$. 
    \If {$\gamma$ {\color{blue}(engaged mode)}} 
        \If {$\eqq_3 \notin \Delta$ {\color{blue}(mode change)}}
        \State $(\postset{\omega_1},\postset{\omega_2})$ $\gets$ \texttt{Solve} $\left\{\eqq_1^\vsmall,\eqq_2^\vsmall,\postset{\eqq_3},\eqq_4\right\}$ \label{op:reset} 
        \State \texttt{Tick}: $\Delta \gets \Delta \cup \{\eqq_3\}$
            \Else {} %{$\eqq_3 \in \Delta$}
            \State $(\tau_1,\tau_2,\postset{\omega_1},\postset{\omega_2})$ $\gets$ \texttt{Solve} $\left\{\eqq_1^\vsmall,\eqq_2^\vsmall,\postset{\eqq_3},\eqq_4\right\}$ 
            \State \texttt{Tick}: $\Delta$ unchanged
        \EndIf
     \Else { {\color{blue}(released mode)}}
     \State $(\tau_1,\tau_2,\postset{\omega_1},\postset{\omega_2})$ $\gets$ \texttt{Solve} $\left\{\eqq_1^\vsmall,\eqq_2^\vsmall,\eqq_5,\eqq_6\right\}$ \label{op:blocks}
     \State \texttt{Tick}: $\Delta \gets \Delta \setminus \{\eqq_3\}$ %\emptyset$
    \EndIf
%    \State \texttt{Tick} \Comment{Move to next step}
  \end{algorithmic}
\end{execscheme}

Observe that \rref{exec:nsclutch} would work without changes if the
guard
$\guard$ was a predicate on the state variables $\omega_1$ and $\omega_2$.

\subsection{Standardization}
\label{sec:standardizeclutch}
\rref{exec:nsclutch} cannot be run in its present form, since it
involves nonstandard reals.  To recover executable code over the real
numbers, a supplementary \emph{standardization} step is needed.
Recall that any finite nonstandard real $x$ has a unique standard part
$\st{x}\in\bR$ such that $x\approx\st{x}$.  The standardization
procedure aims at recovering the standard parts of the leading
variables from their nonstandard version.  We distinguish two cases:
continuous evolutions within each mode, assuming that the sojourn time
in each mode has (standard) positive duration, and discrete evolutions
at events of mode change.

\subsubsection{Within continuous modes}
Let $x:t\mapsto x(t)$, $t \in [s,p)$, denote the real continuous and differentiable solution in a given mode (assuming it exists).  
%By
%(\ref{erpgtuihpui}), for all $t \in [s,p)$, there exists a real number
%$\dot{x}(t)$ such that
%$$\dot{x}(t)\approx\frac{{x}(t+\vsmall)-x(t)}{\vsmall}=\frac{\postset{x}(t)-x(t)}{\vsmall}\enspace .$$
%\beq
%\mbox{
%\begin{minipage}{10cm}
	 In the sequel, we use the notation $o(\xi)$ to denote any (unspecified) nonstandard real number $\zeta$ such that $\zeta/\xi$ is infinitesimal.
%\end{minipage}
%} 
%\label{ergiojopghasxgf}
%\eeq
Using (\ref{erpgtuihpui}) and this notation, equations $(\eqq_1^\vsmall)$ and
$(\eqq_2^\vsmall)$ rewrite as $\dot{\omega_i}=f_i(\omega_i,\tau_i)+o(1)$,
which can be shown to standardize as the differential equations
$(\eqq_1)$ and $(\eqq_2)$, respectively. This suffices to recover the
dynamics (\ref{sys:daemode1}) in the released mode $\guard=\fff$.

In the engaged mode $\guard=\ttt$, however, we need to handle
$(\postset{\eqq_3}):\postset{\omega_1}-\postset{\omega_2}=0$. The
nonstandard characterization of derivatives writes
\mbox{$\postset{\omega_i}=\omega_i+\vsmall.\dot{\omega_i}+o(\vsmall)$},
where $\omega_i$ and $\dot{\omega_i}$ are both standard. Since
${\omega_1}-{\omega_2}=0$ and
$\postset{\omega_1}-\postset{\omega_2}=0$ both hold, we inherit
$\dot{\omega_1}-\dot{\omega_2}=o(1)$, which implies
$\dot{\omega_1}-\dot{\omega_2}=0$ since the $\dot{\omega_i}$ are
standard. We thus recover the dynamics (\ref{sys:daemode2}) augmented
with the latent equation (\ref{eq:eq3prime}) in the engaged mode
$\guard=\ttt$.

To summarize, within continuous modes, we perform structural analysis
as usual for DAE systems in continuous time: nothing new happens.

\subsubsection{At the instants of mode change}
\label{lwergtuiherlu}
Suppose we have an event of mode change at time $t$, meaning that
$\guard(t)\neq\guard({t}-\vsmall)$. Our aim is to use the next values
$\postset{\omega_i}(t)=\omega_i(t+\vsmall)$ to initialize the state
variables for the next mode with the value
${\omega_i^+}(t)=\postset{\omega_i}(t)$. Thus, unlike for continuous
time dynamics, the occurrence of the infinitesimal time step $\vsmall$
in the expression $\omega_i(t+\vsmall)$ is not an issue: it just
points to the supplementary instant at which restart is performed.

However, the equations defining $\postset{\omega_i}$ as functions of
the ${\omega_i}$'s involve the hyperreal $\vsmall$ \emph{in space}: standardization must be performed to get rid of it.

\paragraph{Transition $\guard:\ttt\ra\fff$} It is easy, as the new mode
has an ODE dynamics whose state variables are initialized with
corresponding exit values when leaving the previous mode.

%does not result in an overdetermined system. Therefore, there is no need to compute any resets for this transition. Indeed, in this case, the standardization of the continuous released mode is sufficient. 

\paragraph{Transition $\guard:\fff\ra\ttt$} This one is more involved. As
established in \rref{exec:nsclutch}, in order to compute the reset
values, we use the system of $4$ equations
$\{\eqq_1^\vsmall,\eqq_2^\vsmall,\postset{\eqq_3},\eqq_4\}$ to
determine the leading variables
$(\tau_1,\tau_2,\postset{\omega_1},\postset{\omega_2})$.  In
particular, from $\eqq_i^\vsmall$, we get
\begin{equation}
\label{eq:omegaboulet}
\frac{\postset{\omega_i}-\omega_i}{\vsmall} = f_i(\omega_i,\tau_i),\quad i=1,2 . 
\end{equation}
At this point, we must identify possible impulsive variables, which is the objective of the following \emph{impulse analysis.}

\myparagraph{Impulse analysis} Before engaging the clutch, we must
assume $\omega_1-\omega_2 \neq 0$. Since
$\postset{\omega_1}-\postset{\omega_2}=0$ holds,
$
\frac{(\postset{\omega_1}-\postset{\omega_2})-(\omega_1-\omega_2)}{\vsmall}
= f_1(\omega_1,\tau_1) - f_2(\omega_2,\tau_2) $ cannot be a finite
nonstandard real because, if it was, then the function
$t \mapsto \omega_1(t)-\omega_2(t)$ would be right-continuous, in contradiction with the above
assumption.  Hence, the nonstandard real
$f_1(\omega_1,\tau_1) - f_2(\omega_2,\tau_2)$ is necessarily infinite.
However, we assumed continuous functions $f_i$ and finite state
$(\omega_1,\omega_2)$.  Thus, one of the torques $\tau_i$ must be
infinite at $t$, and because of equation $(\eqq_4): \tau_1+\tau_2=0$,
both torques are in fact infinite, i.e., are \emph{impulsive}.

%\myparagraph{Computing restart values}
\myparagraph{Eliminating impulsive variables} It remains to compute
the (standard) restart values for the state variables.  To this end,
we assume that the $f_i$'s are linear in the torques, i.e., each $f_i$ has the
following form:
\beq
f_i(\omega_i,\tau_i) &=& a_i(\omega_i)+b_i(\omega_i)\tau_i \enspace ,
\label{eq:peowigtu9p}
\eeq
where $b_1$ and $b_2$ are the inverse moments of
inertia of the rotating masses and $a_1$ and $a_2$ are damping factors
divided by the corresponding moments of inertia.
This yields the following system of equations, to be solved for
$\postset{\omega_1},\postset{\omega_2},\tau_1,\tau_2$ at the instant
when $\guard$ switches from $\fff$ to $\ttt$: \beq \left\{ \bea{lcc}
  \postset{\omega_1}=\omega_1+{\vsmall}.(a_1(\omega_1)+b_1(\omega_1)\tau_1) &&(\eqq_1^\vsmall) \\
  \postset{\omega_2}=\omega_2+{\vsmall}.(a_2(\omega_2)+b_2(\omega_2)\tau_2) &&(\eqq_2^\vsmall) \\
  \postset{\omega_1}-\postset{\omega_2}=0 &&(\postset{\eqq_3}) \\
  \tau_1+\tau_2=0 &&(\eqq_{4}) \eea \right.
\label{sys:wp49guhsopiu}
\eeq It is tempting to standardize \rref{sys:wp49guhsopiu} by simply
setting $\vsmall=0$ in it. Unfortunately, doing this leaves us with a
structurally singular system, since the two torques are then involved
in only one equation.

\begin{ccomment}\rm\textbf{(standardizing functions vs. standardizing equations)}
	\label{keruif} To get a better insight into this problem, let us replace for a while the infinitesimal $\vsmall$ by a `small' (standard) time step $\delta$ and rewrite \rref{sys:wp49guhsopiu} as $F(\postset{\omega_1},\postset{\omega_2},\tau_1,\tau_2,{\omega_1},{\omega_1},\delta)=0\,,$ where $F{=}0$ stacks the four equations of (\ref{sys:wp49guhsopiu}) into one vector equation. Then, the function $\delta\mapsto{F(\postset{\omega_1},\postset{\omega_2},\tau_1,\tau_2,{\omega_1},{\omega_1},\delta)}$ is continuous in $\delta$ at $\delta{=}0$. Consequently, by (\ref{erlfuihliu}), the standardization of  $F(\postset{\omega_1},\postset{\omega_2},\tau_1,\tau_2,{\omega_1},{\omega_1},\vsmall)$, seen as a function, is obtained by setting $\vsmall{=}0$ in it. 
	
	This, however, becomes incorrect if, instead of the function $F$, we are interested in the equation $F{=}0$. More precisely, in our example, let $Y$ be the vector that collects the states $\omega_1,\omega_2$, and $X$ be the vector that collects the dependent variables $\postset{\omega_1},\postset{\omega_2},\tau_1,\tau_2$; then, the function $F$ defined by \rref{sys:wp49guhsopiu} is linear in $\delta$, i.e., it has the form $Y\mapsto{X}$, where $X$ is solution of the algebraic equation 
	$$[A-\delta{\times}B(Y)]X=C(\delta,Y),$$ 
	in which
	\beqq
	A=\left[\bea{cccc}
	1 & 0 & 0 & 0 \\
	0 & 1 & 0 & 0 \\
	1 & -1 & 0 & 0 \\
	0 & 0 & 1 & 1 
	\eea\right] 
	&;&
	B=B(Y)=\left[\bea{cccc}
	0 & 0 & b_1(\omega_1) & 0 \\
	0 & 0 & 0 & b_2(\omega_2) \\
	0 & 0 & 0 & 0 \\
	0 & 0 & 0 & 0
	\eea\right] \\
%	C=C(\delta,Y)&=&\left[\bea{c}
%	\omega_1+\delta{\times}a_1(\omega_1) \\
%	\omega_2+\delta{\times}a_2(\omega_2) \\
%	0 \\ 0
%	\eea\right]
	\eeqq
%	and matrix $B(Y)$ 
	and vector $C(\delta,Y)$ is appropriately fixed; from now on, we omit $Y$ in the writing.
For every $\delta > 0$, the matrix $A{-}\delta{\times}B$ is invertible and
we have 
$
X{=}(A{-}\delta{\times}B)^{-1}C.
$

If $A$ was nonsingular, we could simply write
$X = (I{-}\delta{\times}D)^{-1}A^{-1}C$, where
$D = A^{-1}B$. Expanding the inverse around $\delta = 0$ would yield
$(I{-}\delta{\times}D)^{-1}=I{+}\delta{\times}D{+}o(\delta)$,
implying $X = (I{+}\delta{\times}D)A^{-1}C{+}o(\delta)$. Therefore, the
solution $X(\vsmall)$ of equation
$(A{-}\vsmall{\times}B(Y))X{=}C(\vsmall,Y)$ would standardize as the
solution $X$ of the standard equation $AX=C(0,Y)$.
All of this, however, requires that $A$ be nonsingular, which is
wrong in our case.\footnote{This informal reasoning will be formalized
  in Section~\ref{ptw9ugfdll}, in the context of nonstandard
  analysis.}\eproof
\end{ccomment}
When $A$ is singular (as in our case), the solution consists in eliminating the impulsive variables from \rref{sys:wp49guhsopiu}, namely, the two torques. This is performed by symbolic pivoting:
\begin{enumerate}
	\item Using $(\eqq_{4})$ yields $-\tau_2=\tau_1\eqdef\tau$;
	\item Premultiplying the system of equations
	\[
	\left\{\bea{lcc}
\postset{\omega_1}=\omega_1+{\vsmall}.(a_1(\omega_1)+b_1(\omega_1)\tau) &&(\eqq_1^\vsmall) \\
\postset{\omega_2}=\omega_2+{\vsmall}.(a_2(\omega_2)-b_2(\omega_2)\tau) &&(\eqq_2^\vsmall) 
\eea\right.
	\] 
	by the row matrix $
	\left[
\bea{cc} b_2(\omega_2) & b_1(\omega_1)\eea
	\right]
	$
	yields the equation
	\[	b_2(\omega_2)\postset{\omega_1}+b_1(\omega_1)\postset{\omega_2}=b_2(\omega_2)({\omega_1}+\vsmall.a_1(\omega_1))
+ b_1(\omega_1)({\omega_2}+\vsmall.a_2(\omega_2)) \enspace .
\]
	\item Using in addition $(\postset{\eqq_3})$ and setting $\postset{\omega}\eqdef\postset{\omega_1}=\postset{\omega_2}$ finally yields
\end{enumerate}
\beq
\postset{\omega} = \displaystyle\frac{b_2(\omega_2) \omega_1 + b_1(\omega_1) \omega_2}{b_1(\omega_1) + b_2(\omega_2)} 
 + \vsmall\,\displaystyle\frac{a_1(\omega_1) b_2(\omega_2) + a_2(\omega_2) b_1(\omega_1)}{b_1(\omega_1) + b_2(\omega_2)}  \enspace .
\label{wrthoijoipouih}
\eeq
The symbolic-only rewriting used here actually alleviates any difficulty in standardizing equation (\ref{wrthoijoipouih}): this can now be done by just setting $\vsmall=0$ in its right-hand side. This yields, by identifying $\st{\omega_i}=\omega_i^-$ and $\st{{\postset{\omega_i}}}=\omega_i^+$: 
\beq
\omega_1^+=\omega_2^+=
\frac{b_2(\omega_2^-) \omega_1^- + b_1(\omega_1^-) \omega_2^-}{b_1(\omega_1^-) + b_2(\omega_2^-)}  ~ ,
\label{eq:owgtuiho}
\eeq
that is, the reset value for the two velocities is the weighted arithmetic mean of $\omega_1^-$ and $\omega_2^-$.
 \rref{eq:owgtuiho} provides us with the reset values for the positions in the engaged mode, which is enough to restart the simulation in this mode. 
The actual impulsive values for the torques are useless and can be discarded.
Note, however, that their sum explicitly remains equal to 0 during the instant of mode change.
As a result, the momentum is preserved by the mode change, so that the obtained solution is satisfactory from a physical point of view.

As a final observation, instead of computing the exact standard part of $\postset{\omega_i}$, one could instead 
attempt to approximate it numerically by substituting $\vsmall$ with a small (non-infinitesimal) step size 
$\delta$ in \rref{sys:wp49guhsopiu}. This alternative approach is developed in Section~\ref{sec:restartschemes}. It has the advantage of not requiring the elimination of the impulsive variables.

We thus achieved the objectives stated at the end of
Section~\ref{lireutyliuel}.
\begin{figure}[bt]
    \centering
    \includegraphics[width=0.6\textwidth]{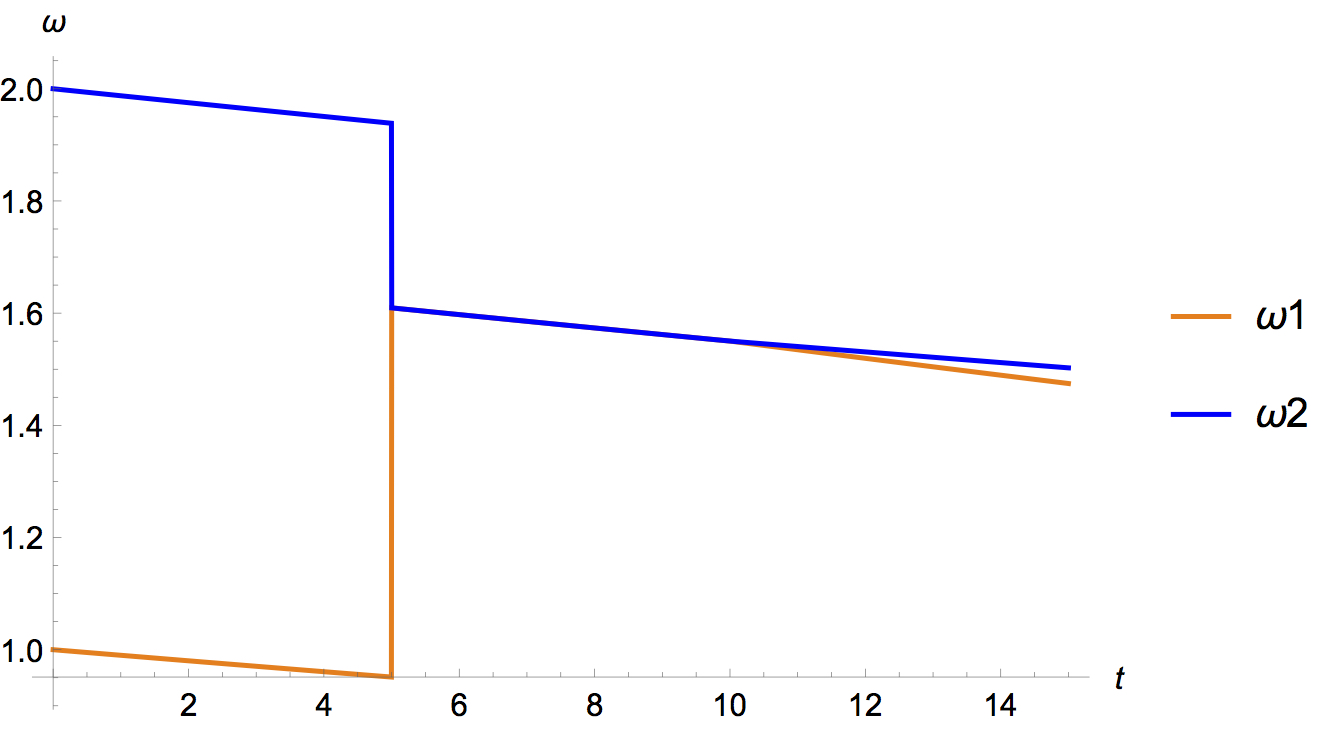}
    \caption{
%    Expected s
      \small\sf Simulation of the clutch model with resets, using our method. Mode
      change $\fff\ra\ttt$ occurs at $t=5s$ and mode change
      $\ttt\ra\fff$ occurs at $t=10s$.}
    \label{fig:clutch0}
\end{figure}
\rref{fig:clutch0} shows a simulation of the clutch model where the
resets are computed following our approach developed hereabove.  As one may
expect, the reset value sits between the two values of
$\omega_1^-$ and $\omega_2^-$ when $\guard: \fff \to \ttt$ (at $t=5$s),
and the transition is continuous at the second reset (at $t=10$s).
%END\input{clutch}
%%%%%%%%%%%%%%%%%%%%%%%%%%%%%%%%%%%%%%%%%%%%%%%%%%%%%%%%%%%%%%%%%%%%%%%%%%%%%%%%%%%%%%%%%
%\input{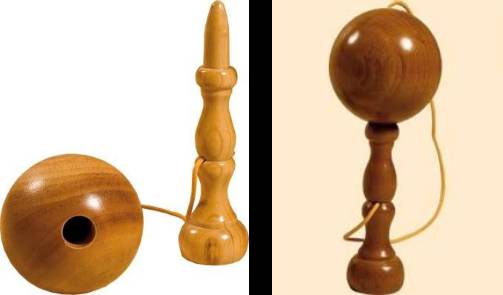}
\section{{The Cup-and-Ball example}}
\label{loeruighlrigtuh}
In this section, we study the example of Section~\ref{leriugfehliu} and address the objectives stated at the end of that section.

\subsection{Rejecting the submitted model}
We first explain why we reject the model proposed in
Section~\ref{leriugfehliu}, and how we do so. Let us consider
again System~(\ref{riuytuikdastf}).

\paragraph{Nonstandard semantics} Following \rref{sec:nsa} for the clutch, the derivatives $\dot{x},\dot{y},\ddot{x},\ddot{y}$ are replaced using the explicit first-order Euler scheme with an infinitesimal time step $\vsmall$: 
\beq
\dot{x} \gets \frac{\postset{x}-x}{\vsmall} ~~;~~
\ddot{x} \gets \frac{\ppostset{2}{x}-2\postset{x}+x}{\vsmall^2} \enspace .
\label{ergoiuwhio}
\eeq
In the sequel, when referring to model (\ref{riuytuikdastf}), substitution (\ref{ergoiuwhio}) will be implicit.
%We now proceed with the structural analysis of this model.

\paragraph{Structural analysis} $x,y,\postset{x},\postset{y}$ are the state variables, the value of which is known from previous nonstandard instants. The system has guard $\guard$. In contrast to the clutch example, $\guard$ is no longer an input, but is rather evaluated based on system variables using equation $(\straight_0)$. For both modes, the leading variables are $\ppostset{2}{x},\ppostset{2}{y},\tension$; the first two of these arise from the second derivatives $\ddot{x},\ddot{y}$ by using substitution (\ref{ergoiuwhio}). 
\begin{ccomment} \rm\textbf{(logico-numerical fixpoints)}
	\label{elritheliru} We could consider the direct solving of System (\ref{riuytuikdastf}), seen as a mixed logico-numerical system of equations with dependent variables $\ppostset{2}{x},\ppostset{2}{y},\tension,\guard$. However, this system mixes guarded numerical equations with the evaluation of a predicate. A possible solution would consist in performing a relaxation, by iteratively updating the numerical variables based on the previous value for the guards, and then re-evaluating the guard based on the updated values of the numerical variables, hoping for a fixpoint to occur. Such fixpoint equation, however, can have zero, one, several, or infinitely many solutions. No characterization exists that could serve as a basis for a (graph-based) structural analysis. We thus decided not to try to solve such mixed logico-numerical systems.\eproof
\end{ccomment}
As a consequence, we are unable to evaluate guard $\guard$, so that the mode the system is in cannot be determined. Should we give up? Not yet. 

It could be that, based on information obtained by solving the subsystem consisting of the equations that are always enabled, guard $\guard$ can be evaluated. In this model, this subsystem is $G=\{(\eqq_1),(\eqq_2)\}$, with dependent variables $\ppostset{2}{x},\ppostset{2}{y},\tension$. Unfortunately, subsystem $G$ is underdetermined: it cannot be solved, and the evaluation of guard $\guard$ is still impossible. 
We thus reject model~(\ref{riuytuikdastf}), because we are unable to evaluate guard $\guard$ on the basis of the subsystem that is enabled prior to evaluating this guard.

\begin{ccomment}\rm\textbf{(forbidding logico-numerical fixpoints based on syntax)} 
	\label{leriuhlu}	
%	In our analysis above, we tried to see if, by solving a subsystem that we knew was enabled, we could fix the value of the yet unknown guard $\guard$. Since this was not possible, the model was rejected. 
	The reader may wonder why we did not simply reject the model right from the beginning, on the sole basis of the existence of the logico-numerical fixpoint equation. The rationale is that the method discussed above allows, in principle, more models to get accepted than just the ones without such equations. This design issue will be addressed more thoroughly in Comment~\ref{guiohpiouh} of \rref{sec:oguihwpgui}.\eproof
\end{ccomment}

\subsection{Correcting the model}

To break the fixpoint equation defining $\guard$, the idea is to replace the predicate $s\leq 0$ defining the guard $\guard$ by its left-limit: ${\guard}= [s^-\leq{0}]$, where $s^-(t)\eqdef\lim_{u\nearrow{t}}s(u)$:  
\beqq
\left\{\bea{rll}
& 0= \ddot{x}+{\tension}x & (\eqq_1) \\
& 0= \ddot{y}+{\tension}y+g  & (\eqq_2) \\
& \remph{\guard= [s^-\leq{0}]}  & \remph{(\straight_0)} \\
\when \; \guard\; \doo& 0={L^2}{-}(x^2{+}y^2)   & (\straight_1) \\
\prog{and}& 0=\tension+s   & (\straight_2) \\
\when \;\prog{not}\; \guard\; \doo& 0=\tension   & (\straight_3) \\
\prog{and}& 0=({L^2}{-}(x^2{+}y^2))-s   & (\straight_4) \\ 
\eea\right.
\eeqq
We map this model to nonstandard semantics by stating that predicate $s\leq 0$ defines the value of the guard \emph{at the next nonstandard instant}---therefore, an initial condition for guard $\guard$ is required; we initialize the system in free motion mode $\guard(0)=\fff$. This yields the corrected model (\ref{loeifuhpwoui}), where the modification is highlighted in {\color{red}red}:
\beq\left\{\bea{rll}
& 0= \ddot{x}+{\tension}x & (\eqq_1) \\
& 0= \ddot{y}+{\tension}y+g  & (\eqq_2) \\
& \remph{\postset{\guard}= [s\leq{0}]; \guard(0)=\fff}  & \remph{(\straight_0)} \\
\when \; \guard\; \doo& 0={L^2}{-}(x^2{+}y^2)   & (\straight_1) \\
\prog{and}& 0=\tension+s   & (\straight_2) \\
\when \;\prog{not}\; \guard\; \doo& 0=\tension   & (\straight_3) \\
\prog{and}& 0=({L^2}{-}(x^2{+}y^2))-s   & (\straight_4) \\
\eea\right.
\label{loeifuhpwoui}
\eeq
This model is understood in the nonstandard semantics,
meaning that the derivatives are expanded using (\ref{ergoiuwhio}). Therefore, the leading variables in all modes are  $\tension,s,\ppostset{2}{x},\ppostset{2}{y}$. 

%\subsection{A first attempt}
%We begin by applying the same approach as before.

\subsection{Nonstandard structural analysis} 
\label{sec:lerighugholuih}
This section focuses on the core difficulty, namely the mode change $\guard:\fff\ra\ttt$, i.e., the event when the rope gets straight. Due to equation $(\straight_1)$, the mode $\guard=\ttt$ (where the rope is straight) requires index reduction. We thus augment model~(\ref{loeifuhpwoui}) with the two latent equations shown in {\color{red}red}:
\beq\left\{\bea{rll}
& 0= \ddot{x}+{\tension}x & (\eqq_1) \\
& 0= \ddot{y}+{\tension}y+g  & (\eqq_2) \\
& {\postset{\guard}= [s\leq{0}]; \guard(0)=\fff}  & {(\straight_0)} \\
\when \; \guard\; \doo& 0={L^2}{-}(x^2{+}y^2)   & (\straight_1) \\
\prog{and}& \remph{0={L^2}{-}\postset{(x^2{+}y^2)}}   & \remph{(\postset{\straight_1})} \\
\prog{and}& \remph{0={L^2}{-}\ppostset{2}{(x^2{+}y^2)}}   & \remph{(\ppostset{2}{\straight_1})} \\
\prog{and}& 0=\tension+s   & (\straight_2) \\
\when \;\prog{not}\; \guard\; \doo& 0=\tension   & (\straight_3) \\
\prog{and}& 0=({L^2}{-}(x^2{+}y^2))-s   & (\straight_4) \\
\eea\right.
\label{uioghrpui}
\eeq
In this model, the two equations $(\straight_1)$ and $\remph{(\postset{\straight_1})}$ are in conflict with equations from the previous two instants, shown in {\color{blue}blue} in the following subsystem:
\[
\left\{\bea{ll}
 \bemph{0= \frac{x-2\,{\preset{\!x}}+\,\ppreset{2}{\!x}}{\vsmall^2}+{\ppreset{2}{\!\tension}}\,{\ppreset{2}{\!x}}} & \bemph{(\ppreset{2}{\eqq_1})} \\
[1mm] \bemph{0= \frac{y-2\,\preset{\!y}+\,\ppreset{2}{\!y}}{\vsmall^2}+{\ppreset{2}{\!\tension}}\,{\ppreset{2}{\!y}}+g} & \bemph{(\ppreset{2}{\eqq_2})} \\
[1mm]
 \bemph{0= \frac{\postset{x}-2x+\,{\preset{\!x}}}{\vsmall^2}+{\preset{\!\tension}}\,{\preset{\!x}}} & \bemph{(\preset{\eqq_1})} \\
[1mm] \bemph{0= \frac{\postset{y}-2y+\,{\preset{\!y}}}{\vsmall^2}+{\preset{\!\tension}}\,{\preset{\!y}}+g} & \bemph{(\preset{\eqq_2})} \\
[1mm]
% {\postset{\guard}= [s\leq{0}]; \guard(0)=\fff}  & {(\straight_0)} \\
0={L^2}{-}(x^2{+}y^2)   & (\straight_1) \\ [1mm]
\remph{0={L^2}{-}\postset{(x^2{+}y^2)}}   & \remph{(\postset{\straight_1})} \\
%\remph{0={L^2}{-}\ppostset{2}{(x^2{+}y^2)}}   & \remph{(\ppostset{2}{\straight_1})} \\
%\prog{and}& 0=\tension+s   & (\straight_2) \\
%\when \;\prog{not}\; \guard\; \doo& 0=\tension   & (\straight_3) \\
%\prog{and}& 0=({L^2}{-}(x^2{+}y^2))-s   & (\straight_4) \\
\eea\right.
\]
Applying again Principle~(\ref{oeuryfgy}) of causality, we erase, in model~(\ref{uioghrpui}), equations $(\straight_1)$ and $(\postset{\straight_1})$ at the instant of mode change $\preset{\guard}{=}\fff,\guard{=}\ttt$. This yields the following system:
\beq
\mbox{at }\left[\!\!\bea{r}\preset{\guard}{=}\fff
\\
{\guard}{=}\ttt\eea\!\!\right]
:
\left\{\bea{ll}
 0= \ddot{x}+{\tension}x & (\eqq_1) \\
 0= \ddot{y}+{\tension}y+g  & (\eqq_2) \\
% {\postset{\guard}= [s\leq{0}]; \guard(0)=\fff}  & {(\straight_0)} \\
%\when \; \guard\; \doo& 
%0={L^2}{-}(x^2{+}y^2)   & (\straight_1) \\
%\prog{and}& \remph{0={L^2}{-}\postset{(x^2{+}y^2)}}   & \remph{(\postset{\straight_1})} \\
%\prog{and}& 
{0={L^2}{-}\ppostset{2}{(x^2{+}y^2)}}   & {(\ppostset{2}{\straight_1})} \\
%\prog{and}& 
0=\tension+s   & (\straight_2) \\
%\when \;\prog{not}\; \guard\; \doo& 0=\tension   & (\straight_3) \\
%\prog{and}& 0=({L^2}{-}(x^2{+}y^2))-s   & (\straight_4) \\
\eea\right.
\label{sys:louihpoui}
\eeq
It uniquely determines all the leading variables from the state variables $x,y$ and $\postset{x},\postset{y}$. In turn, equations $(\straight_1)$ and $(\postset{\straight_1})$, which were erased from this model, are not satisfied.

Also, we erase, in model~(\ref{uioghrpui}), the only equation $({\straight_1})$ at the next instant, i.e., when $\ppreset{2}{\guard}{=}\fff,\preset{\guard}{=}\ttt,\guard{=}\ttt$. This yields the following system:
\beq
\mbox{at }\left[\!\!\bea{r}\ppreset{2}{\guard}{=}\fff
\\
\preset{\guard}{=}\ttt
\\
\guard{=}\ttt\eea\!\!\right]
:
\left\{\bea{ll}
 0= \ddot{x}+{\tension}x & (\eqq_1) \\
 0= \ddot{y}+{\tension}y+g  & (\eqq_2) \\
% {\postset{\guard}= [s\leq{0}]; \guard(0)=\fff}  & {(\straight_0)} \\
%\when \; \guard\; \doo& 
%0={L^2}{-}(x^2{+}y^2)   & (\straight_1) \\
%\prog{and}& 
{0={L^2}{-}\postset{(x^2{+}y^2)}}   & {(\postset{\straight_1})} \\
%\prog{and}& 
{0={L^2}{-}\ppostset{2}{(x^2{+}y^2)}}   & {(\ppostset{2}{\straight_1})} \\
%\prog{and}& 
0=\tension+s   & (\straight_2) \\
%\when \;\prog{not}\; \guard\; \doo& 0=\tension   & (\straight_3) \\
%\prog{and}& 0=({L^2}{-}(x^2{+}y^2))-s   & (\straight_4) \\
\eea\right.
\label{sys:rgiouhpruio}
\eeq
This uniquely determines all the leading variables, given values for the state variables $x,y$ and $\postset{x},\postset{y}$. Note that $(\postset{\straight_1})$ is a consistency equation that is satisfied by the state variables $\postset{x},\postset{y}$. In turn, equation $(\straight_1)$, which was erased from this model, is not satisfied.
At subsequent instants, equation erasure is no longer needed.
This completes the nonstandard structural analysis of the mode change $\guard:\fff\ra\ttt$, i.e., when the rope gets straight.

\subsection{Standardization}
\label{sec:oguhkiu}

Code generation for restarts consists in standardizing nonstandard systems (\ref{sys:louihpoui}) and (\ref{sys:rgiouhpruio}). Remember, from our development of the clutch example, that standardizing systems of equations requires more care than standardizing numbers, due to impulsive behaviors and singularity issues that result.
The simplest method from a conceptual point of view is to eliminate impulsive variables from the restart system, as they are of no use for restarting the new mode. 

We focus on the standardization of the mode change $\guard:\fff\ra\ttt$, i.e., when the rope gets straight. Our task is to standardize systems (\ref{sys:louihpoui}) and (\ref{sys:rgiouhpruio}), by targeting discrete-time dynamics, for the two successive instants composing the restart phase. This will provide us with restart values for positions and velocities.
 %given by formulas (\ref{lreiugthperouio}).

Due to the expansion of derivatives in equations $(\eqq_1,\eqq_2,\postset{\eqq_1},\postset{\eqq_2})$, tensions $\tension$ and $\postset{\tension}$ are both impulsive, hence so are $s$ and $\postset{s}$ by $({\straight_2},\postset{\straight_2})$. We eliminate the impulsive variables by ignoring $({\straight_2},\postset{\straight_2})$, combining $(\eqq_1)$ and $(\eqq_2)$ to eliminate $\tension$, and $(\postset{\eqq_1})$ and $(\postset{\eqq_2})$ to eliminate $\postset{\tension}$. This
yields:
\beq
\mbox{at }\left[\!\!\bea{r}\preset{\guard}{=}\fff
\\
{\guard}{=}\ttt\eea\!\!\right]
&:&
\left\{\bea{l}
0=\ddot{y}x+gx-\ddot{x}y \\ 
{0={L^2}{-}\ppostset{2}{(x^2{+}y^2)}} \\
\eea\right.
\label{sys:erkuyfgeouryiyu}
\\
\mbox{at }\left[\!\!\bea{r}\ppreset{2}{\guard}{=}\fff
\\
\preset{\guard}{=}\ttt
\\
\guard{=}\ttt\eea\!\!\right]
&:&
\left\{\bea{l}
{0={\ddot{y}}{x}+g {x}-{\ddot{x}}{y}} \\ 
{0={L^2}{-}\postset{(x^2{+}y^2)}}\\ 
{0={L^2}{-}\ppostset{2}{(x^2{+}y^2)}}
\eea\right.
\label{sys:rkufygekyu}
\eeq
In \rref{sys:erkuyfgeouryiyu}, we expand second derivatives using (\ref{ergoiuwhio}). 
Note that substitution (\ref{ergoiuwhio}) implies that we can as well expand the second derivatives as functions of first derivatives:
\beq
\ddot{x}&\gets&\frac{\postset{\dot{x}}-\dot{x}}{\vsmall} \enspace .
\label{leriugeliuw}
\eeq
Since we will use \rref{sys:rkufygekyu} to evaluate restart values for the velocities,
we will use expansion (\ref{leriugeliuw}) in it. Consequently, 
\rref{sys:erkuyfgeouryiyu} has dependent variables $\ppostset{2}{x},\ppostset{2}{y}$, whereas \rref{sys:rkufygekyu} has dependent variables  $\postset{\dot{x}},\postset{\dot{y}}$. We are now ready to standardize the two systems.

\paragraph{\rref{sys:erkuyfgeouryiyu} to define restart positions} We expand second derivatives using (\ref{ergoiuwhio}):
\beq
\left\{\bea{l}
0=(\ppostset{2}{y}-2\postset{y}+y)x-(\ppostset{2}{x}-2\postset{x}+x)y+\vsmall^2 gx \\ 
{0={L^2}{-}\ppostset{2}{(x^2{+}y^2)}} \\
\eea\right.
\label{kejfygeoru}
\eeq
Regarding~(\ref{kejfygeoru}), one can safely set $\vsmall={0}$ to standardize it, since the resulting system is still structurally regular---see Comment\,\ref{keruif} regarding the clutch example. The standardization of (\ref{kejfygeoru}) is thus:
\beq
\left\{\bea{l}
0=(\ppostset{2}{y}-2\postset{y}+y)x-(\ppostset{2}{x}-2\postset{x}+x)y \\ 
{0={L^2}{-}\ppostset{2}{(x^2{+}y^2)}} \\
\eea\right.
\label{rieufgoeiu}
\eeq
In contrast, had we directly set $\vsmall=0$ in \rref{sys:louihpoui} (without eliminating impulsive variable $\tension$), we would get a structurally singular system, an incorrect standardization.

In System (\ref{rieufgoeiu}), we can interpret $x$ and $\postset{x}$ as the left-limit $x^-$ of state variable $x$ in previous mode, and $\ppostset{2}{x}$ as the restart value $x^+$ for the new mode. This yields
\beq
\left\{\bea{l}
0=({y^+}-y^-)x^--({x^+}-x^-)y^- \\ 
0={L^2}{-}{(x^2{+}y^2)^+} \\
\eea\right.
\label{leriugheku}
\eeq
which determines the restart values for positions. Note that the constraint that the rope is straight is satisfied. Furthermore, if we remember that the rope is straight at mode change (this is the mode change condition), then $0={L^2}{-}{(x^2{+}y^2)^-}$ also holds, which ensures that $x^+=x^-,y^+=y^-$ is the unique solution of (\ref{leriugheku}): positions are continuous.

\paragraph{\rref{sys:rkufygekyu} to define restart velocities} We expand second derivatives using (\ref{leriugeliuw}):
\beq
\left\{\bea{l}
0=(\postset{\dot{y}}-{\dot{y}}){x}-(\postset{\dot{x}}-{\dot{x}}){y} +\vsmall.g{x} \\ 
0={L^2}{-}\postset{(x^2{+}y^2)}\\ 
0={L^2}{-}\ppostset{2}{(x^2{+}y^2)}
\eea\right.
\label{gtliuhrliu}
\eeq
By expanding $\ppostset{2}{x}=\postset{x}+\vsmall\postset{\dot{x}}$, the right-hand side of the last equation rewrites 
\beq\bea{rcl}
{L^2}{-}\ppostset{2}{(x^2{+}y^2)} =&&  {L^2}{-}\postset{(x^2{+}y^2)}
\\ [1mm]
&\!\!\!\!+\!\!\!\!&  2\vsmall (\postset{x}\postset{\dot{x}}+\postset{y}\postset{\dot{y}})
\\ [1mm]
&\!\!\!\!+\!\!\!\!& \vsmall^2\bigl((\postset{\dot{x}})^2+(\postset{\dot{y}})^2\bigr)
\\ [1mm]
=&& {0} \mbox{ ~(by second equation of (\ref{gtliuhrliu}))}
\\ [1mm]
&\!\!\!\!+\!\!\!\!& 
2\vsmall (\postset{x}\postset{\dot{x}}+\postset{y}\postset{\dot{y}})\\ [1mm]
&\!\!\!\!+\!\!\!\!& O(\vsmall^2)
\eea
\label{ekljgkeuygku}
\eeq
Using this expansion of ${L^2}{-}\ppostset{2}{(x^2{+}y^2)}$, setting $\vsmall=0$ in (\ref{gtliuhrliu}) yields
\beq
\left\{\bea{l}
0=(\postset{\dot{y}}-{\dot{y}}){x}-(\postset{\dot{x}}-{\dot{x}}){y}  \\ 
{0=\postset{x}\postset{\dot{x}}+\postset{y}\postset{\dot{y}}}
\eea\right.
\label{sys:lwreiupowteiu}
\eeq
where the dependent variables are now $\postset{\dot{x}},\postset{\dot{y}}$, whereas other variables are state variables whose values are determined by previous time steps. Note that \rref{sys:lwreiupowteiu} is structurally regular, hence it is the correct standardization of System (\ref{gtliuhrliu}).
We are now ready to get restart values for velocities. In \rref{sys:lwreiupowteiu}, we perform the following substitutions, recalling that superscripts $^-$ and $^+$ denote left- and right-limits:
\beq
{x}=x^- \;;\; \postset{x}=x^+ \;;\;  {\dot{x}}= \dot{x}^- 
\label{lehtrdriuhpiu}
\eeq
and the restart velocity is evaluated by
\beq
\dot{x}^+=\postset{\dot{x}} ~~ . 
\label{somedumblabel}
\eeq
After similar substitutions for $y$, we get
\beq\left\{\bea{l}
0=({\dot{y}}^+-{\dot{y}}^-){x^-}-({\dot{x}}^+-{\dot{x}}^-){y^-}  \\ 
{0={x^+}{\dot{x}}^++{y^+}{\dot{y}}^+}
\eea\right.
\label{sys:kerfygouy}
\eeq
\rref{sys:kerfygouy} determines ${\dot{x}}^+$ and ${\dot{y}}^+$, which are the velocities for restart. The second equation guarantees that the velocity will be tangent to the constraint. With (\ref{leriugheku}) and (\ref{sys:kerfygouy}), we determine the restart conditions for positions and velocities. Invariants from the physics are satisfied.

Our reasoning so far produces a behavior in which the two modes (free motion and straight rope) gently alternate; the system always stays in one mode for some positive period of time before switching to the other mode. \emph{This indeed amounts to assuming that the impact is totally inelastic at mode change, an assumption that was
not explicit at all  in $(\ref{loeifuhpwoui})$.} So, what happened?

\subsection{Handling transient modes}
\label{sec:leiruioepui}

In fact, \emph{the straight rope mode was implicitly assumed to last for at least three nonstandard successive instants, since we allowed ourselves to shift $(\straight_1)$ twice}.  

Now, let us instead assume elastic impact. Hence, the cascade of mode changes $\guard:\fff\ra\ttt\ra\fff$ is possible in three successive nonstandard instants, reflecting that the straight rope mode is \emph{transient} (it is left immediately after being reached). We wish to adapt our reasoning to handle this transient mode.

Consider again model~(\ref{loeifuhpwoui}). We regard the instant of the cascade when $\guard=\ttt$ occurs as the current instant.
%Since the system just entered the mode $\status(\guard)=\ttt$, the context for initialization is $\Delta=\Phi=\emptyset$.
%
%Consider now \rref{alg:newmain} and l
We cannot add latent equations by simply shifting $(\straight_1)$, since these shifted versions are not active in the mode $\guard=\fff$. Instead, our line of reasoning goes as follows. Set
\[\bea{rcl}
\system(\ttt)&=& \{(\eqq_1),(\eqq_2),(\straight_1),(\straight_2)\} \enspace , \\
\system(\fff)&=& \{(\eqq_1),(\eqq_2),(\straight_3),(\straight_4)\} \enspace .
\eea
\]
Systems $\postset{\system}(\ttt)$ and $\postset{\system}(\fff)$ are obtained by shifting the equations constituting ${\system}(\ttt)$ and ${\system}(\fff)$; systems $\ppostset{2}{\system}(\ttt)$ and $\ppostset{2}{\system}(\fff)$ are defined similarly.

Now, the idea is to consider the \emph{differentiation array} originally proposed by Campbell and Gear~\cite{CampbellGear1995}, except that we take into account the trajectory $\ttt,\fff,\fff,\dots$ for guard $\guard$. Using shifting instead of differentiation yields the following \emph{array of shifted systems}, where size $n$ is a parameter: 
\beq
\cA_n(\system)&\eqdef&
\left[\bea{r}
\system(\ttt) \\ \postset{\system}(\fff) \\ \ppostset{2}{\system}(\fff) \\ \vdots~~~~ \\ \ppostset{n}{\system}(\fff)
\eea\right]
\label{weoriguwehroigu}
\eeq
We search for an $n$ such that $\cA_n(\system)$ provides us with the due latent equations. Thus, the question is: can we find latent equations for $\system(\ttt)$ by shifting appropriate equations from $\system(\fff)$? Unfortunately, the answer is no: although shifting twice $(\straight_4)$ in System\,(\ref{loeifuhpwoui}) produces one more equation involving the leading variables $\ppostset{2}{x},\ppostset{2}{y}$, this equation also involves the new variable $\ppostset{2}{s}$, which keeps the augmented system underdetermined; shifting other equations fails as well.

Therefore, the structural analysis rejects this model as being underdetermined at transient mode $\guard=\ttt$. The user is then asked to provide one more equation. For example, he/she could specify an impact law for the velocity $\dot{y}$ by providing the equation 
\[
(\dot{y})^+=-(1 - \alpha) {(\dot{y})^-} \enspace ,
\]
where $0\leq\alpha<1$ is a fixed damping coefficient. This is reinterpreted in the nonstandard domain as
\beq
\postset{\dot{y}}=-(1 - \alpha) {\dot{y}} \enspace ,
\label{lruifhepriuhiu}
\eeq
yielding the following refined system for use at mode $\guard{=}\ttt$ within the cascade $\guard{:}\fff{\ra}\ttt{\ra}\fff$:
\beq
\left\{\bea{ll}
0= \ddot{x}+{\tension}x & (\eqq_1) \\
0= \ddot{y}+{\tension}y+g  & (\eqq_2) \\ 
0= \postset{\dot{y}}+(1 - \alpha) {\dot{y}}  & (\tau_1) \\ 
0={L^2}{-}(x^2{+}y^2)   & (\straight_1) \\
0=\tension+s   & (\straight_2) 
\eea\right.
\label{reotpowilwefuh}
\eeq
We proceed again with the structural analysis. Variables $x,y$ are the states, so that their values are set by the previous instants. Current equation $(\straight_1)$ creates a conflict with the past. Hence, we discard it from System\,(\ref{reotpowilwefuh}), which leaves us with the following system:
\beq
\left\{\bea{ll}
0= {\ddot{x}}+{\tension}x & (\eqq_1) \\
0= \ddot{y}+{{\tension}}y+g  & (\eqq_2) \\ 
0= {\postset{\dot{y}}}+(1 - \alpha) {\dot{y}}  & (\tau_1) \\ 
0=\tension+{s}   & (\straight_2) 
\eea\right.
\label{leriutywuily}
\eeq
Model (\ref{leriutywuily}) is structurally nonsingular, recalling that $\ddot{y}$ and $\postset{\dot{y}}$ can be interchanged for the structural analysis. This refined model is therefore accepted. Standardization proceeds accordingly, with no new difficulty.

\subsection{{Consequences for the modeling language}}
\label{sec:language}
%We discuss in this section which features a practical modeling language should offer in order to support our approach to structural analysis.

%\subsection{Long modes, transient modes, and mode change events}
Through the Cup-and-Ball example, we demonstrated the need for the following user-given information: is the current mode long or transient? We argue that this information cannot be derived from inspecting the system model.
%\begin{itemize}
%	\item
 
In the clutch example, guard $\guard$ was a system input, and we implicitly assumed that each of the two modes was long. Had we composed the clutch model with a `feedback' determining $\guard$ as a predicate over system variables, this implicit assumption would have become questionable. This situation indeed occurs with the Cup-and-Ball example, where the elastic/inelastic impact assumption is a side information that cannot be deduced from the equation syntax.
%\item
%
%Compare the Cup-and-Ball example and the Westinghouse air brake example, introduced in Section\,\ref{elriuftrheliu} and detailed in \rref{sec:railcar}. Both are systems in which the guard $\guard$ is a feedback over system variables. Furthermore, this feedback takes exactly the same form for both examples---both are an encoding of a complementarity condition to capture the valve closing/opening or the rope getting straight. Still, all modes are long for the Westinghouse air brake, whereas the straight rope mode is transient for the Cup-and-Ball unless inelastic impact is assumed.
%\end{itemize}
%
 This leads to the following important observation: 
\begin{ccomment} \rm\textbf{(Long\,/\,Transient)}
	\label{elrigufehilu}
	Long\,/\,Transient is an information regarding modes, that cannot be found by inspecting the equations of the system. It must be inferred from understanding the system physics. Therefore, it has to be provided as an extra information by the model designer.\eproof
\end{ccomment}
%\beq
%\mbox{
%\begin{minipage}{6.5cm}
%\it Long vs. transient is a side information regarding modes. This information is is to be found from the system physics. It cannot be inferred from the model syntax.
%\end{minipage}
%}
%\label{elrigufehilu}
%\eeq
The natural way of addressing Comment~\ref{elrigufehilu} is to provide a different syntax for specifying long modes on the one hand, and events corresponding to transient modes on the other hand. In contrast, mode changes separating two successive long modes need not be specified. A candidate syntax is:
\beq\mbox{
\begin{tabular}{|r|r|} \hline
	 long mode & \texttt{if predicate then $S$}  \\ \hline
	 transient mode & \texttt{when predicate then $S$} \\ \hline
\end{tabular}}
\label{erliuheoui}
\eeq
The `\texttt{when}' statement of Modelica, which we reuse here, points to the event when `\texttt{predicate}' switches from false to true. For this `\texttt{when}' statement, we could furthermore restrict `\texttt{predicate}' to be a zero-crossing condition, by which a real-valued expression crosses zero from below~\cite{lucy:hscc13}.
We devote the `\texttt{if}' statement to long-lasting modes specified by `\texttt{predicate}'. 
%
%In (\ref{erliuheoui}), $S$ denotes a system of equations involving variables, their derivatives, and left- and right-limits, see (\ref{edfsrogieosgfi}) in the next section.
%In the rest of the paper we stick with guarded equations only. We do not consider the dual statements (\ref{erliuheoui}). If transient modes are to be supported, then we do it through \rref{alg:transientmain}, assuming that the long\,/\,transient information is provided as an extra.

To illustrate this, our syntax offering \ifequation{s} and \whenequation{s} allows specifying the Cup-and-Ball example with elastic impact as shown in \rref{fig:elguioehpguio}.
\begin{figure}[ht]
\[
\left\{\bea{rll}
& 0= \ddot{x}+{\tension}x & (\eqq_1) \\
& 0= \ddot{y}+{\tension}y+g  & (\eqq_2) \\
& {{\guard}= [s^-\leq{0}]; \guard(0)=\fff}  & {(\straight_0)} \\
\remph{\prog{when}\;\guard\; \doo}& \remph{\dot{y}=-\alpha{\dot{y}}^-} & \remph{(\tau_1)}
\\
%\when \; \guard\; \doo& 0={L^2}{-}(x^2{+}y^2)   & (\straight_1) \\
%\prog{and}& 0=\tension+s   & (\straight_2) \\
\when \;\prog{not}\; \guard\; \doo& 0=\tension   & (\straight_3) \\
\prog{and}& 0=({L^2}{-}(x^2{+}y^2))-s   & (\straight_4) \\
\eea\right.
\]
\caption{ Model of the Cup-and-Ball example with elastic impact specified by equation $(\tau_1)$, highlighted in {\color{red}red}. This equation replaces the two equations $(\straight_1),(\straight_2)$ specifying the straight rope mode in case of inelastic impact.
}
\label{fig:elguioehpguio}
\end{figure}
%END\input{CupAndBall}
%%%%%%%%%%%%%%%%%%%%%%%%%%%%%%%%%%%%%%%%%%%%%%%%%%%%%%%%%%%%%%%%%%%%%%%%%%%%%%%%%%%%%%%%%
%\input{railcar}
\section{The Westinghouse air brake example}
\label{sec:railcar}
In this section we develop the example of Section~\ref{elriuftrheliu} and we address the objectives stated at the end of that section. 

We reject the model proposed in Section~\ref{elriuftrheliu} for the very same reasons as for the Cup-and-Ball example, namely the existence of a logico-numerical fixpoint equation defining the guard. The corrected system is shown in \rref{fig:railcarcorrec}.
\begin{figure}[ht]
	 \centerline{
$\bea{rlc}
&0=f_b - f_v - f_{cl} + f_l &(m_1)\\
&0=f_v - f_{ch} - f_l - f_t &(m_2)\\
&0=f_b - F_1.\flow(p_b-p_r) &(f_1)\\
&0= F_1.\flow(p_{bn}-p_r) &(f_2)\\
&0=f_l - F_2.\flow(p_t-p_r) &(f_3)\\
&0=f_t - \frac{\rho.V}{P_0}.\dot{p_t} &(r)\\
&0= \rho.S.(\dot{x}.p_r + (x-L).\dot{p_r}) + P_0.f_{cl} &(p_1)\\
&0=\rho.S.(\dot{x}.S.p_t + x.\dot{p_t}) - P_0.f_{ch} &(p_2)\\
&0=S.(p_t-p_r) - b &(l_1)\\
&0= K.x-b &(l_2)\\
&\remph{\postset{\guard}=[s\leq 0]; \guard(0)=\fff} &\remph{(v_0)}\\
\when \; \guard\; \doo& 0=p_r-p_t &(v_1)\\
\prog{and}& 0=f_v+s &(v_2)\\
\when \; \prog{not}\; \guard\; \doo& 0=f_v &(v_3)\\
\prog{and}& 0=p_r-p_t-s&(v_4)
\eea 
$}
\caption{The system of of \rref{fig:erpgfiou} corrected. The correction is highlighted in {\color{red}red}.}
	\label{fig:railcarcorrec}
\end{figure}
The two systems differ in equation $\remph{(v_0)}$, where the predicate $s<0$ sets the value of guard $\guard$ for the next nonstandard instant. This models the fact that $\guard$ is evaluated based on $\preset{s}$, which is the nonstandard semantics of the left-limit of $s$. Note that an initial condition for $\guard$ is needed in this corrected version. This model is interpreted in the nonstandard semantics, meaning that substitution (\ref{ergoiuwhio}) applies.

\subsection{Nonstandard structural analysis}
The value of guard $\guard$ is known from the previous instant, together with the value for the two state variables ${x},{p_r},{p_t}$. For both modes $\guard=\ttt/\fff$, the leading variables for evaluation are the $12$ variables $\postset{x},\postset{p_r},\postset{p_t},f_b,f_v,f_{cl},f_l,f_{ch},f_t,p_{bn},b,s$, hence the active system is square in both modes. We successively explore the two long modes $\guard=\ttt/\fff$ for the guard. Then we investigate the mode changes.

\myparagraph{Long mode $\guard=\fff$} The active system is $G_{\fff}=G\cup\{(v_3),(v_4)\}$, see \rref{fig:soerigu}. 
\begin{figure}[ht]
\[
\bea{lc}
0=f_b - f_v - {f_{cl}} + f_l &(m_1)\\
0={f_v} - f_{ch} - f_l - {f_t} &(m_2)\\
0={f_b} - F_1.\flow(p_b-p_r) &(f_1)\\
0= F_1.\flow({p_{bn}}-p_r) &(f_2)\\
0={f_l} - F_2.\flow(p_t-p_r) &(f_3)\\
0={f_t} - \frac{\rho.V}{P_0}.{\dot{p_t}} &(r)\\
0= \rho.S.(\dot{x}.p_r + (x-L).{\dot{p_r}}) + P_0.f_{cl} &(p_1)\\
0=\rho.S.({\dot{x}}.S.p_t + x.\dot{p_t}) - P_0.f_{ch} &(p_2)\\
\remph{0=S.({p_t}-p_r) - b} &\remph{(l_1)}\\
\remph{0= K.x-{b}} &\remph{(l_2)}\\
0={f_v} &(v_3)\\
0=p_r-p_t-{s}&(v_4)
\eea
\]
\caption{The active system when $\guard=\fff$. We show  in {\color{red}red} a structurally singular subsystem.
}
\label{fig:soerigu}
\[
\bea{lc}
0=f_b - f_v - \remph{f_{cl}} + f_l &(m_1)\\
0={f_v} - \remph{f_{ch}} - f_l - {f_t} &(m_2)\\
0=\remph{f_b} - F_1.\flow(p_b-p_r) &(f_1)\\
0= F_1.\flow(\remph{p_{bn}}-p_r) &(f_2)\\
0=\remph{f_l} - F_2.\flow(p_t-p_r) &(f_3)\\
0=\remph{f_t} - \frac{\rho.V}{P_0}.{\dot{p_t}} &(r)\\
0= \rho.S.(\dot{x}.p_r + (x-L).\remph{\dot{p_r}}) + P_0.f_{cl} &(p_1)\\
0=\rho.S.(\remph{\dot{x}}.S.p_t + x.\dot{p_t}) - P_0.f_{ch} &(p_2)\\
0=S.(\remph{\postset{p_t}}-\postset{p_r}) - \postset{b} &(\postset{l_1})\\
0= K.\postset{x}-\remph{\postset{b}} &(\postset{l_2})\\
0=\remph{f_v} &(v_3)\\
0=p_r-p_t-\remph{s}&(v_4)\\
\mbox{consistency equations:} \\
0=S.({p_t}-p_r) - b &(l_1)\\
0= K.x-{b} &(l_2)
\eea
\]
\caption{The system of \rref{fig:soerigu} in which latent equations are found by shifting $(l_1),(l_2)$. Recall that $\dot{x}$ and $\postset{x}$ are related by (\ref{ergoiuwhio}). The assignment of a variable to each equation, highlighted in {\color{red}red}, defines a complete matching.}
\label{fig:esrsrsrhu}
\end{figure}

The subsystem $\{({l_1}),({l_2})\}$ involves two equations but only one leading variable: $b$. It is structurally singular. 
 Following the technique of index reduction used in the clutch example, we  augment $G_{\fff}$ with the latent equations $(\postset{l_1}),(\postset{l_2})$. The result is shown in \rref{fig:esrsrsrhu}.
Note that $b$ is now a state variable. 
Equations $(l_1),(l_2)$ are  consistency conditions. They are satisfied as a result of executing the previous instant. We are thus left with the system $G^\Sigma_{\fff}=G\cup\{(\postset{l_1}),(\postset{l_2})\}$. Can we check its regularity by using structural (graph based) methods? 

A \emph{complete matching} is shown in {\color{red}red} in \rref{fig:esrsrsrhu}, i.e., a one-to-one assignment of leading variables to equations. The intent is that each variable is evaluated using its assigned equation, by pivoting. For example, $f_{cl}$ is evaluated using equation $(m_1)$. The heart of structural analysis for algebraic systems of equations is that, if a complete matching exists, then the system is regular in a generic sense, i.e., outside exceptional values for the non-zero coefficients of the equations. See \rref{sec:mDAE} for a formal development. 

The system $G^\Sigma_{\fff}$ is thus structurally nonsingular. Hence, the system of \rref{fig:soerigu} was of index $1$ and adding the latent equations reduced the index to $0$. 

\myparagraph{Long mode $\guard{=}\ttt$} The active system is $G_{\ttt}=G\cup\{(v_1),(v_2)\}$, shown in \rref{fig:eslriguhu}.
\begin{figure}[ht]
\[
\bea{lc}
0=f_b - f_v - {f_{cl}} + f_l &(m_1)\\
0={f_v} - f_{ch} - f_l - f_t &(m_2)\\
0={f_b} - F_1.\flow(p_b-p_r) &(f_1)\\
0= F_1.\flow({p_{bn}}-p_r) &(f_2)\\
0={f_l} - F_2.\flow(p_t-p_r) &(f_3)\\
0={f_t} - \frac{\rho.V}{P_0}.\dot{p_t} &(r)\\
0= \rho.S.(\dot{x}.p_r + (x-L).{\dot{p_r}}) + P_0.f_{cl} &(p_1)\\
0=\rho.S.({\dot{x}}.S.p_t + x.\dot{p_t}) - P_0.f_{ch} &(p_2)\\
\remph{0=S.({p_t}-p_r) - b} &\remph{(l_1)}\\
\remph{0= K.x-{b}} &\remph{(l_2)}\\
\remph{0=p_r-p_t} &\remph{(v_1)}\\
0={f_v}+{s}&(v_2)
\eea
\]
\caption{The active system when $\guard=\ttt$. We show  in {\color{red}red} a structurally singular subsystem. Compare with \rref{fig:soerigu}.}
\label{fig:eslriguhu}
\[
\bea{lc}
0=f_b - f_v - \remph{f_{cl}} + f_l &(m_1)\\
0=\remph{f_v} - {f_{ch}} - f_l - f_t &(m_2)\\
0=\remph{f_b} - F_1.\flow(p_b-p_r) &(f_1)\\
0= F_1.\flow(\remph{p_{bn}}-p_r) &(f_2)\\
0=\remph{f_l} - F_2.\flow(p_t-p_r) &(f_3)\\
0=\remph{f_t} - \frac{\rho.V}{P_0}.\dot{p_t} &(r)\\
0= \rho.S.(\remph{\dot{x}}.p_r + (x-L).{\dot{p_r}}) + P_0.f_{cl} &(p_1)\\
0=\rho.S.({\dot{x}}.S.p_t + x.\dot{p_t}) - P_0.\remph{f_{ch}} &(p_2)\\
{0=S.(\remph{\postset{p_t}}-\postset{p_r}) - \postset{b}} &{(\postset{l_1})}\\
{0= K.{\postset{x}}-\remph{\postset{b}}} &{(\postset{l_2})}\\
{0=\remph{\postset{p_r}}-\postset{p_t}} &{(\postset{v_1})}\\
0={f_v}+\remph{s}&(v_2) \\
\mbox{consistency equations:}\\
{0=S.({p_t}-p_r) - b} &{(l_1)}\\
{0= K.x-{b}} &{(l_2)} \\
{0=p_r-p_t} &{(v_1)}
\eea
\]
\caption{Adding latent equations to the system of \rref{fig:eslriguhu}.}
\label{fig:peguiohpwiuh}
\end{figure}
The subsystem $\{(l_1),(l_2),(v_1)\}$ is structurally singular since it has $3$ equations and only one dependent variable $b$. We thus shift it forward. The result is shown in \rref{fig:peguiohpwiuh}. 

Equations $(l_1),(l_2),(v_1)$ are consistency conditions. They are satisfied as a result of executing the previous instant. We are thus left with the index reduced system $G^\Sigma_{\ttt}=G\cup\{(\postset{l_1}),(\postset{l_2}),(\postset{v_1})\}$, for which a complete matching is shown in {\color{red}red} in \rref{fig:peguiohpwiuh}. This system is structurally nonsingular. The system of \rref{fig:eslriguhu} was of index $1$. The set of latent equations differs from that of the other mode, however.

So far we have completed the study of the long modes. It remains to investigate the mode changes.

\myparagraph{Mode change $\guard:\ttt\ra\fff$} We reproduce the reasoning following \rref{sys:welguihwtuilh} for the clutch example. Joining together the systems at previous and current instant may result in a conflict between the consistency equations of the current instant and selected equations from the previous instant. This does not occur here, however: consistency equations $({l_1}),({l_2})$ in \rref{fig:esrsrsrhu} coincide with the latent equations $(\postset{l_1}),(\postset{l_2})$ associated with the previous instant in \rref{fig:peguiohpwiuh} (the set of latent equations of mode $\guard=\ttt$ contains the set of latent equations of mode $\guard=\fff$).

\myparagraph{Mode change $\guard:\fff\ra\ttt$} 
Unlike for the previous case, joining together the systems at previous and current instant actually results in a conflict between the consistency equations of the current instant and the dynamics at the previous instant. More precisely, the consistency equation $(v_1):0{=}p_r-p_t$, which holds in current mode $\guard=\ttt$, is not a consequence of executing the previous instant (in mode $\guard=\fff$). As for the clutch example (\rref{sec:orguhuoi} and particularly (\ref{oeuryfgy})), 
\beq
\mbox{
\begin{minipage}{9cm}
	 we remove, at the instant of mode change $\guard:\fff\ra\ttt$, the consistency equation $(v_1):0{=}p_r-p_t$.
\end{minipage}
}
\label{tgoierpguip}
\eeq
This completes the structural analysis.

\subsection{Standardization} 
\label{sec:eoguihptuihoi}
To derive the actual executable code we must return to the standard domain by performing standardization. The task is very much like for the clutch example and we only provide a brief summary:

\paragraph{Within long modes $\guard=\fff$ or $\ttt$:} Here, time and dynamics are continuous: We standardize $\frac{\postset{x}-x}{\vsmall}$ as the derivative $\dot{x}$. Also, we standardize equation $(v_0)$ as $\guard{=}[s^-{\leq}0]$, where $s^-$ denotes the left limit of $s$, i.e., the value of $s$ before the mode change in case a mode change occurs. Doing this removes all the occurrences of $\vsmall$ in both modes. The resulting system is standard.

\paragraph{For mode changes:} Here we target discrete time: $\postset{x}$ is interpreted as the next value for $x$ and raises no difficulty. In contrast, the occurrence of $\vsmall$ in the Euler schemes $\frac{\postset{x}-x}{\vsmall}$, where it acts in space, is now the difficulty. 

\myparagraph{Mode change $\guard:\ttt\ra\fff$}
The restart conditions are solutions of the system of \rref{fig:esrsrsrhu} in which the consistency equations are ignored. This system coincides with the nonstandard dynamics of the long mode $\guard=\fff$. However, we do not seek for a strandardization in continuous time, but rather in discrete time, which is different:
derivatives are expanded using (\ref{ergoiuwhio}) and we solve the system for the discrete time leading variables $\postset{p_t},\postset{p_r},\postset{x}$, etc. This nonstandard system is shown in \rref{fig:weopgiuhoiuh}, where the infinitesimal $\vsmall$ is highlighted in {\color{blue}blue}.
\begin{figure}[ht]
\footnotesize
\[
\bea{lc}
0=f_b - f_v - {f_{cl}} + f_l &(m_1)\\
0={f_v} - {f_{ch}} - f_l - {f_t} &(m_2)\\
0={f_b} - F_1.\flow(p_b-p_r) &(f_1)\\
0= F_1.\flow({p_{bn}}-p_r) &(f_2)\\
0={f_l} - F_2.\flow(p_t-p_r) &(f_3)\\
0=\bemph{\vsmall}.{f_t} - \frac{\rho.V}{P_0}.({\postset{p_t}-p_t}) &(r)\\
0= \rho.S.((\postset{x}-x).p_r + (x-L).({\postset{p_r}-p_r})) + \bemph{\vsmall}.P_0.f_{cl} &(p_1)\\
0=\rho.S.((\postset{x}-x).S.p_t + x.({\postset{p_t}-p_t})) - \bemph{\vsmall}.P_0.f_{ch} &(p_2)\\
0=S.({\postset{p_t}}-\postset{p_r}) - \postset{b} &(\postset{l_1})\\
0= K.\postset{x}-{\postset{b}} &(\postset{l_2})\\
0={f_v} &(v_3)\\
0=p_r-p_t-{s}&(v_4)
\eea
\]
\caption{Restart at mode change $\guard:\ttt\ra\fff$.}
\label{fig:weopgiuhoiuh}
\end{figure}

We need to standardize this system. We leave to the reader as an exercise that setting $\vsmall=0$ in this system makes it structurally nonsingular (no complete matching can be found). By Comment\,\ref{keruif} of Section\,\ref{lwergtuiherlu}, this implies that setting $\vsmall=0$ in this system does not lead to a correct standardization. 

The problem originates from equations $(r),(p_1),(p_2)$. Focus on $(r)$: the state variable $x$ is finite before and after the mode change and $\postset{x}-x\neq{0}$ possibly holds; then, $\vsmall$ infinitesimal requires that $f_t$ is possibly infinite, i.e., impulsive. The same holds for $f_{cl}$ and $f_{ch}$.

Following the approach of Section\,\ref{lwergtuiherlu}, we must eliminate, by algebraic manipulations, the possibly impulsive variables $f_t,f_{cl},f_{ch}$. This is easy since the latter enter linearly in equations $(r),(p_1),(p_2)$. 

Setting $\vsmall=0$ in the resulting system leaves us with a structurally singular system that sets the restart values for the system state variables $p_r,p_t,b,x$. This yields the desired standardization, which is the code for correct restart at this mode change. Details are omitted.

\myparagraph{Mode change $\guard:\fff\ra\ttt$}
We proceed similarly. The restart equations at mode change $\guard:\fff\ra\ttt$ are given in \rref{fig:eprogihepou}, where the infinitesimal $\vsmall$ is highlighted in {\color{blue}blue}.
\begin{figure}[ht]
\footnotesize
\[
\bea{lc}
0=f_b - f_v - {f_{cl}} + f_l &(m_1)\\
0={f_v} - {f_{ch}} - f_l - f_t &(m_2)\\
0={f_b} - F_1.\flow(p_b-p_r) &(f_1)\\
0= F_1.\flow({p_{bn}}-p_r) &(f_2)\\
0={f_l} - F_2.\flow(p_t-p_r) &(f_3)\\
0=\bemph{\vsmall}.{f_t} - \frac{\rho.V}{P_0}.(\postset{p_t}-p_t) &(r)\\
0= \rho.S.({(\postset{x}-x)}.p_r + (x-L).{(\postset{p_r}-p_r)}) + \bemph{\vsmall}.P_0.f_{cl} &(p_1)\\
0=\rho.S.((\postset{x}-x).S.p_t + x.(\postset{p_t}-p_t)) - \bemph{\vsmall}.P_0.{f_{ch}} &(p_2)\\
{0=S.({\postset{p_t}}-\postset{p_r}) - \postset{b}} &{(\postset{l_1})}\\
{0= K.{\postset{x}}-{\postset{b}}} &{(\postset{l_2})}\\
{0={\postset{p_r}}-\postset{p_t}} &{(\postset{v_1})}\\
0={f_v}+{s}&(v_2) 
\eea
\]
\caption{Restart at mode change $\guard:\fff\ra\ttt$.}
\label{fig:eprogihepou}
\end{figure}
We conclude as for the other mode change.

\begin{ccomment}\rm\textbf{(fake impulsive variables)}
	\label{kwerywiur} The reader acquainted with the physics of this system may be quite surprised by our reasoning above. We found that, for the two mode changes, algebraic variables $f_t,f_{cl}$, and $f_{ch}$ are possibly impulsive. However, the physics of the Westinghouse air brake does not tell so. There seems to be something wrong here.\eproof	
\end{ccomment}
In the next section we investigate how the right solution can be found at compile time.
	
\subsection{Assertions}
Focus on the system for restart at mode change $\guard:\fff\ra\ttt$, see \rref{fig:eprogihepou} and assume for a while that the trajectories of system variables remain continuous, which applies in particular to $s$. Now, the predicate defining the guard $\guard$ tells us that the value of $s$ at both mode changes is zero. 
As a consequence, at the instant of both mode changes $\guard:\fff\ra\ttt$ and $\ttt\ra\fff$, the two subsystems $\{(v_1),(v_2)\}$ and $\{(v_3),(v_4)\}$ of the system of \rref{fig:railcarcorrec} become identical due to the zero value for $s$. As a consequence, whatever next mode is, the consistency equations for it ($\{(v_1),(v_2)\}$ or $\{(v_3),(v_4)\}$) will be satisfied. Therefore, no  action is required at mode changes other than commuting from code of \rref{fig:esrsrsrhu} to code of \rref{fig:peguiohpwiuh} or vice-versa. 
	
	Since our compiler is only equipped with our graph based structural analysis, it will not be able to perform the above reasoning. We can, however, overcome this difficulty by giving to the user the possibility to manually refine her model by adding \emph{guarded assertions} of the form
	\beq	
	\prog{when}\;\guard\;\prog{assert}\;{\beta}(X) \label{7yuflieuh}
	\eeq
	where $\guard$ is some guard and $\beta(X)$ is a predicate in the numerical variables $X$ of the system. For example, in the Westinghouse example, the user could state the following assertion, which follows from the fact that $s=0$ holds at mode changes:
	\beq
\prog{when}\;{\guard^-}\;\prog{and}\;\prog{not}\;\guard~\prog{assert}~s=0
		\label{elrituywliruey}
	\eeq
	Assertion (\ref{elrituywliruey}) can be used by the compiler in deducing that no specific restart action is required other than commuting between the code of the two modes, which, in turn, implies that no state variable is impulsive.

\begin{ccomment}\rm\textbf{(are assertions needed for correctness?)}
Not quite. In fact, the approach followed in \rref{sec:eoguihptuihoi} for deriving the restart conditions at mode changes is still correct. It is just incomplete in that it does not allow to compute the value for the algebraic variables that were erroneously considered impulsive. This is indeed harmless. 
Still, our approach of \rref{sec:eoguihptuihoi} was an overshoot and adding a proper assertion is preferred. Also, assertions can be very useful in narrowing the domain of reachable modes, which may improve efficiency of the compilation.\eproof
	\label{seoriuho} 
\end{ccomment}
Assertions are provided in our IsamDAE tool under development~\cite{Caillaud2020a,caillaud:hal-02476541}, see also \rref{sec:RLDC2}. We will not discuss them any further in this paper, however.
%END\input{railcar}
%%%%%%%%%%%%%%%%%%%%%%%%%%%%%%%%%%%%%%%%%%%%%%%%%%%%%%%%%%%%%%%%%%%%%%%%%%%%%%%%%%%%%%%%%
%\input{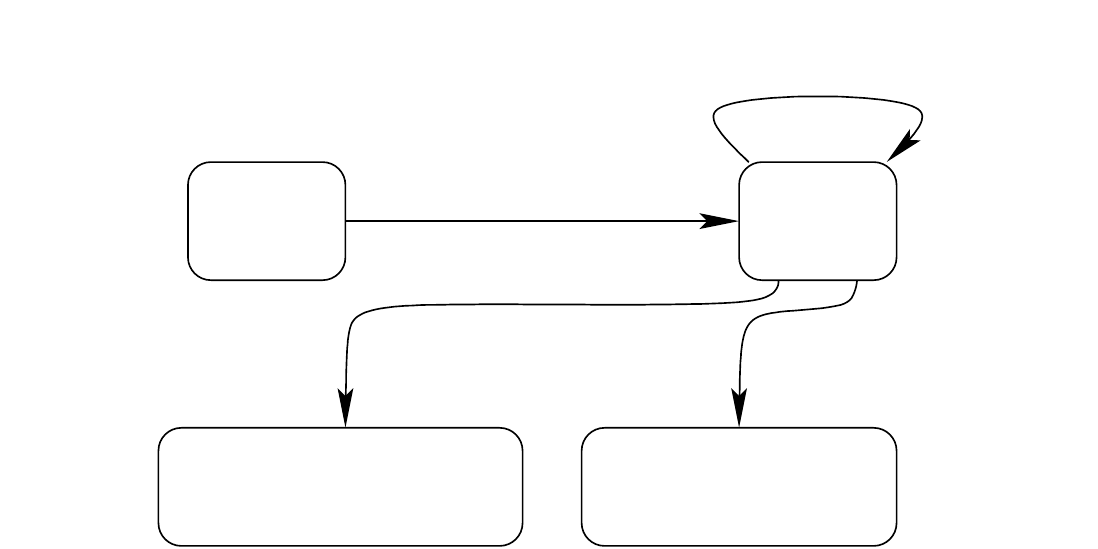}
\section{Issues and systematic approach}
\label{sec:liguholui}
From our discussion of the examples, we see that the key issue is the handling of mode changes, and particularly the reconciliation of the dynamics just before the change, and 
 the consistency conditions set by the new mode. 
How to properly address this in the original model of continuous real-time was an open question;
mapping the system dynamics to the nonstandard domain clarified this task.

In the rest of this paper, we extend our method, from the toy examples to a systematic approach for \mDAE\ system compilation. This approach, illustrated in \rref{fig:approach}, decomposes into several steps.
\begin{itemize}
\item First, the mDAE system is transformed into a system of \emph{multi-mode difference Algebraic Equations} (mdAE) using the nonstandard Euler expansion of derivatives $\dot{x}\gets\frac{\postset{x}-x}{\vsmall}$. The resulting mdAE system is the \emph{nonstandard semantics} of the original model. 
\item Second, the structural analysis of this nonstandard semantics is performed, resulting in a new mdAE system where latent equations are made explicit and conflicts at mode changes are solved. 
\item Finally, the standardization step recovers both the smooth dynamics in each mode, and the possibly
discontinuous/impulsive restart conditions at mode changes. The resulting code can then be executed with the help of a solver.
\end{itemize}
\begin{figure}[h]
\centerline{\scalebox{1}{\input{approach.pdf_t}}}
\caption[tyf]{\small\sf Our approach to the compilation of mDAE systems.}
\label{fig:approach} 
\end{figure}

The intuitions developed on the examples are formalized in the sequel of the paper. We first recall the background on structural analysis and we extend it to our context. We then focus on standardization. 

%END\input{approach}
%%%%%%%%%%%%%%%%%%%%%%%%%%%%%%%%%%%%%%%%%%%%%%%%%%%%%%%%%%%%%%%%%%%%%%%%%%%%%%%%%%%%%%%%%
%\input{StructuralBackground}
\section{Background on Structural Analysis}
\label{sec:mDAE}
We recall in this section the basics of structural analysis for algebraic systems of equations and provide in passing a few new lemmas for subsequent use.  
We then present the \sigmamethod, which is the structural analysis method we adopt for DAE systems.
\begin{notation}\rm 
	\label{lguitp} In this paper, $\bN=\{0,1,2,\dots\}$ is the set of nonnegative natural integers and $\bR_+=[0,\infty)$ is the set of nonnegative real numbers.\eproof
\end{notation}
We will sometimes need to handle variables, their valuations, and vectors thereof. To this end, the following conventions will be used (unless no confusion occurs from using more straightforward notations):
\begin{notation}\rm 
	\label{shltrguioh}  Lowercase letters ($x,y$, etc.) denote scalar real variables; capitals ($X,Y$, etc.) denote vectors of real variables; whenever convenient, we regard $X$ as a set of variables and write $x\in{X}$ to refer to an entry of $X$. 
	 We adopt similar conventions for functions ($f,g$, etc.) and vectors of functions ($F,G$, etc.).
	 A value for a variable $x$ is generically denoted by $\vval{x}$, and similarly for $X$ and $\vval{X}$.\eproof
\end{notation}

%\clearpage
\subsection{Structural analysis of algebraic equations}
\label{sec:erligtuhouih}
In this section we focus on systems of algebraic equations, that is, equations involving no time and no dynamics. 

\subsubsection{Structural nonsingularity of algebraic equations}
\label{epfuiowerhpfoui}
Consider a system of smooth algebraic equations:
\begin{equation}
f_i(y_1,\dotsc,y_k,x_1,\dotsc,x_n)=0,\quad i=1,\dotsc,m
\label{eq:F}
\end{equation}
rewritten as $F(Y,X)=0$, where $Y$ and $X$ denote the vectors
$(y_1,\dotsc,y_k)$ and $(x_1,\dotsc,x_n)$, respectively, and $F$ is
the vector $(f_1,\dotsc,f_m)$. The system has $m$ equations, $k$ input variables collected in vector $Y$, and $n$ \emph{dependent variables} (or unknowns) collected in vector $X$. Throughout this section , we assume that the $f_i$'s are all $\cC^1$ at least.

If the system is square, i.e., if $m=n$, the \emph{Implicit Function Theorem} (see,
e.g., Theorem $10.2.2$ in \cite{DieudonneEA1}) states that, if
$(\vval{Y},\vval{X})\in\bR^{k+n}$ is a value for the pair $(Y,X)$ such that
$F(\vval{Y},\vval{X})=0$ and the Jacobian matrix of $F$ with respect to $X$  evaluated at $(\vval{Y},\vval{X})$ is nonsingular, then there
exists, in an open neighborhood $U$ of $\vval{Y}$, a unique vector of functions
$G$ such that $F(v,G(v))=0$ for all $v\in{U}$.  In words,
\rref{eq:F} uniquely determines $X$ as a function of $Y$, locally
around $\vval{Y}$. Denote by $\Jacobian_{\!X}{F}$ the above mentioned Jacobian matrix. Solving $F=0$ for $X$, given a value $\vval{Y}$ for $Y$,
requires forming $\Jacobian_{\!X}{F}(\vval{Y})$ as well as inverting it.

One could avoid considering $\Jacobian_{\!X}{F}$ by focusing on  \emph{structural} nonsingularity, a weaker but much cheaper criterion previously introduced for handling sparse matrices in high performance computing. 

\subsubsection{{Structural analysis}}
\label{oeriufhyeoiu}
We begin with some background on graphs, for which a basic reference is~\cite{Berge1962}. Given a graph $\cG=(V,E)$, where $V$ and $E$ are the sets of vertices and edges, a \emph{matching} $\cM$ of $\cG$ is a set of edges of $\cG$ such that no two edges in $\cM$ are incident on a common vertex. Matching $\cM$ is called \emph{complete} if it covers all the vertices of $\cG$. 
An \emph{$\cM$-alternating path} is a path of $\cG$ whose edges are alternatively in $\cM$ and not in $\cM$---we say simply ``alternating path'' when $\cM$ is understood from the context. A vertex is said to be \emph{matched} if there is an edge in $\cM$ that is incident on the vertex, and \emph{unmatched} otherwise. 

Let $F(Y,X){=}0$ be a system of algebraic equations of the form (\ref{eq:F});  its \emph{bipartite graph} $\cG_F$ is the graph having $F{\cup}{X}$ as set of vertices and an edge $(f_i,x_j)$ if and only if  variable $x_j$ occurs in function $f_i$. 

\paragraph{Square systems} We first consider square systems, in which equations and dependent variables are in equal numbers.
\begin{definition}[structural nonsingularity]
	\label{erfuilehui} System $F(Y,X){=}0$ is called \emph{structurally nonsingular} if its bipartite graph $\cG_F$ possesses a complete matching.
\end{definition}
A structurally nonsingular system is necessarily square, i.e., with equations and variables in equal numbers.
The link to numerical regularity is formalized in the following lemma (implicitly used in~\cite{pantelides,MattssonSoderlin1993,PothenF90,Pryce01}):
\begin{lemma}
	\label{oerigtuerio} Assume that $F$ is $\cC^1$. The following properties are equivalent:
\begin{enumerate}
	\item System $F(Y,X){=}0$ is {structurally nonsingular}.
	\item \label{oweruigyweo} For every $(\vval{Y},\vval{X})$ satisfying $F(\vval{Y},\vval{X}){=}0$, the Jacobian matrix $\Jacobian_{\!X}{F}(\vval{Y},\vval{X})$ remains {generically}\,\footnote{Generically means: outside a set of values of Lebesgue measure zero.\label{erpgfuieh}} nonsingular when its nonzero coefficients vary over some neighborhood.
\end{enumerate}
\end{lemma}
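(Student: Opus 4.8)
The plan is to prove the two implications separately, exploiting the combinatorial characterization of the determinant. First I would set up the notation: write $\Jacobian_{\!X}F$ for the $n\times n$ Jacobian, whose $(i,j)$ entry is $\partial f_i/\partial x_j$; this entry is (generically) nonzero exactly when the edge $(f_i,x_j)$ belongs to the bipartite graph $\cG_F$. Recall the Leibniz expansion $\det \Jacobian_{\!X}F = \sum_{\sigma} \operatorname{sgn}(\sigma)\prod_{i} (\partial f_i/\partial x_{\sigma(i)})$, where $\sigma$ ranges over permutations of $\{1,\dots,n\}$. Each nonvanishing term of this sum corresponds exactly to a complete matching of $\cG_F$ (the matching $\{(f_i, x_{\sigma(i)}) : i=1,\dots,n\}$), and conversely every complete matching yields such a term.

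For the implication $1 \Rightarrow 2$: assume $\cG_F$ has a complete matching $\cM$, given by a permutation $\sigma_0$. Then the monomial $\prod_i (\partial f_i/\partial x_{\sigma_0(i)})$ appears in the Leibniz expansion of $\det \Jacobian_{\!X}F$ with all its factors generically nonzero. Consider $\det \Jacobian_{\!X}F$, evaluated at a point $(\vval{Y},\vval{X})$ with $F(\vval{Y},\vval{X})=0$, as a polynomial in the nonzero entries of $\Jacobian_{\!X}F$ (treating those entries as free parameters varying over a neighborhood). This polynomial is not identically zero, since the monomial attached to $\sigma_0$ cannot be cancelled by any other term (distinct permutations give distinct monomials in the entries). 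A nonzero polynomial vanishes only on a set of Lebesgue measure zero; hence $\det \Jacobian_{\!X}F \neq 0$ generically, i.e., the Jacobian is generically nonsingular, which is property \ref{oweruigyweo}.

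For the converse $2 \Rightarrow 1$: argue by contraposition. If $\cG_F$ has no complete matching, then by König's theorem (or Hall's marriage theorem) there is a subset of rows whose support spans too few columns; more directly, \emph{every} term in the Leibniz expansion contains at least one factor $\partial f_i/\partial x_{\sigma(i)}$ corresponding to a non-edge of $\cG_F$, which is identically zero. Hence $\det \Jacobian_{\!X}F \equiv 0$ as a function of the entries, so it stays singular for all perturbations of its nonzero coefficients, contradicting property \ref{oweruigyweo}. This also re-establishes, as noted after the definition, that structural nonsingularity forces the system to be square. The main obstacle, and the place I would be most careful, is the ``generic'' quantifier: one must phrase property \ref{oweruigyweo} as a statement about the nonzero entries of $\Jacobian_{\!X}F$ ranging over an open neighborhood (rather than about $(\vval{Y},\vval{X})$ alone), and invoke that the zero set of a nontrivial real-analytic (here polynomial) function has Lebesgue measure zero; the compatibility of this parametrization with the constraint $F(\vval{Y},\vval{X})=0$ should be addressed, e.g. by noting that perturbing coefficients of the $f_i$ does not destroy the existence of a solution locally, or by working directly at the level of the abstract Jacobian pattern.
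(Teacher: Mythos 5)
Your proof is correct, and it is the standard argument via the Leibniz expansion of the determinant: complete matchings of $\cG_F$ correspond bijectively to the nonvanishing monomials of $\det\Jacobian_{\!X}F$ viewed as a polynomial in the free entries, and distinct permutations yield distinct monomials, so the polynomial is nonzero iff a complete matching exists; the zero set of a nonzero polynomial is Lebesgue-null, giving $1\Rightarrow 2$, while the contrapositive of $2\Rightarrow 1$ is immediate because without a matching the determinant is identically zero. For the record, the paper does not supply a proof of this lemma --- it cites it as implicitly used in the Pantelides/Pryce/Dulmage--Mendelsohn literature --- so there is no in-paper argument to compare against; but your closing caveat about parametrization is resolved exactly as the paper does in its elaboration of Property~\ref{oweruigyweo} (display~\eqref{leiuthpieu}): Property~\ref{oweruigyweo} is phrased as a condition on the vector of nonzero Jacobian entries ranging over an open neighborhood of $\bR^K$, i.e.\ at the level of the abstract Jacobian pattern, precisely the second option you proposed, so the compatibility worry does not arise.
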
  
Let us detail Property\,{\ref{oweruigyweo}}. We can represent  graph $\cG_F$ by the $n{\times}n$-incidence matrix $M_F=(m_{ij})$, having a $1$ at $(i,j)$ if $(f_i,x_j)$ is an edge of $\cG_F$, and a $0$ otherwise. We order the entries $\{a_{ij}\mid m_{ij}=1\}$ of Jacobian $\Jacobian_{\!X}{F}(\vval{Y},\vval{X})$, for example by lexicographic order over the pair $(i,j)$. In this way, every $n{\times}n$-matrix whose pattern of nonzero coefficients is $M_F$ identifies with a unique vector of $\bR^K$, where $K$ is the number of $1$'s in $M_F$.
We denote by $\cJ_{\!X}{F}\in\bR^K$ the image of the Jacobian $\Jacobian_{\!X}{F}(\vval{Y},\vval{X})$ obtained via this correspondence. Property\,{\ref{oweruigyweo}} says: 
\beq
\mbox{
\begin{minipage}{11cm}
	 There exist an open neighborhood $U$ of $\cJ_{\!X}{F}$ in $\bR^K$, and a subset $V{\subseteq}{U}$ such that: $(i)$ the set $U\setminus{V}$ has zero Lebesgue measure, and $(ii)$ every $\cJ\in{V}$ yields a regular matrix.
\end{minipage}
} 
\label{leiuthpieu}
\eeq
In the sequel, the so defined sets $U$ and $V$ will be denoted by 
\beq
 \regularU{F}{\vval{Y},\vval{X}} &\mbox{and}& \regularV{F}{\vval{Y},\vval{X}}
 \label{lerigfui}
\eeq
or simply $U_{\!F}$ and $V_{\!F}$ when no confusion can result.

\myparagraph{Practical use of structural nonsingularity} If a square matrix is structurally singular, then it is singular. The converse is false: structural nonsingularity does not guarantee nonsingularity. Therefore, the practical use of Definition\,\ref{erfuilehui} is as follows: We first check if $F{=}0$ is structurally nonsingular. If not, then we abort searching for a solution. Otherwise, we then proceed to computing a solution, which may or may not succeed depending on the actual numerical regularity of the Jacobian matrix $\Jacobian_{\!X}F$.\eproof

Let $F(Y,X){=}0$ be structurally nonsingular and let $\cM$ be a complete matching for it. Using $\cM$, graph $\cG_F$ can be directed as follows. Edge $(f_i,x_j)$ is directed from $f_i$ to $x_j$ if $(f_i,x_j)\in\cM$, from $x_j$ to $f_i$ otherwise. Denote by ${\vec{\cG}}^\cM_F$ the resulting directed graph. The following result holds~(\cite{Duff86}, Chapter 6.10):
\begin{lemma}
	\label{jytdfeoguip} The strongly connected components of ${\vec{\cG}}^\cM_F$ are independent of $\cM$. 
\end{lemma}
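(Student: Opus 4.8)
The plan is to show that any two complete matchings $\cM$ and $\cM'$ of $\cG_F$ induce the same partition of the vertex set $F \cup X$ into strongly connected components, by reasoning about the symmetric difference $\cM \triangle \cM'$. First I would recall the standard graph-theoretic fact that, since $\cM$ and $\cM'$ are both \emph{perfect} matchings of the bipartite graph $\cG_F$ (they saturate all vertices), their symmetric difference $\cM \triangle \cM'$ decomposes into a disjoint union of vertex-disjoint even-length alternating cycles (there are no alternating paths, because every vertex is matched by both matchings, so every vertex in $\cM\triangle\cM'$ has degree exactly $0$ or $2$). This is the key combinatorial input.

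Next I would handle the case of a single alternating cycle first: suppose $\cM' $ is obtained from $\cM$ by reversing one alternating cycle $C = (v_0, v_1, \dots, v_{2\ell-1}, v_0)$, alternating between equation-vertices and variable-vertices. The claim is then that $\vec{\cG}^{\cM}_F$ and $\vec{\cG}^{\cM'}_F$ have the same strongly connected components. To see this, observe that the two directed graphs differ only on the edges incident to vertices of $C$: along $C$ itself every edge simply gets its orientation reversed, and for an edge $(f_i,x_j)$ not in $\cM\cup\cM'$ with both endpoints possibly touched by $C$, its orientation is determined by which matching it is ``against'', but since it belongs to neither $\cM$ nor $\cM'$ it keeps the orientation $x_j \to f_i$ in both. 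The crucial point is that the cycle $C$, traversed in its original orientation, is a directed cycle in $\vec{\cG}^{\cM}_F$, and traversed in the reverse orientation it is a directed cycle in $\vec{\cG}^{\cM'}_F$; in either case all vertices of $C$ lie in a single strongly connected component in both graphs. Then one argues that for any edge $e$ not on $C$, its orientation is identical in $\vec{\cG}^{\cM}_F$ and $\vec{\cG}^{\cM'}_F$. Hence reachability between two vertices $u,w \notin C$ is unchanged; and reachability through $C$ is also unchanged because entering $C$ at one vertex lets you reach every vertex of $C$ (it is strongly connected internally in both orientations) and then exit at any vertex of $C$ via the common edges. Therefore the mutual-reachability relation — i.e. the strongly connected component partition — coincides for the two graphs.

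Finally I would assemble the general case. Given arbitrary complete matchings $\cM$ and $\cM'$, write $\cM \triangle \cM'$ as a disjoint union of alternating cycles $C_1, \dots, C_r$; define intermediate matchings $\cM = \cM_0, \cM_1, \dots, \cM_r = \cM'$, where $\cM_{t}$ is obtained from $\cM_{t-1}$ by reversing the cycle $C_t$ (this is well-defined since the $C_t$ are vertex-disjoint, so reversing one does not affect the alternating structure of the others). By the single-cycle case, $\vec{\cG}^{\cM_{t-1}}_F$ and $\vec{\cG}^{\cM_t}_F$ have the same strongly connected components for each $t$, and transitivity of equality of partitions finishes the argument. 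The main obstacle I anticipate is the careful bookkeeping in the single-cycle step: one must verify precisely that edges outside $C$ retain their orientation (straightforward, since an edge's orientation depends only on whether it lies in the current matching, and outside $\cM\cup\cM'$ this status is unchanged) and that the internal strong connectivity of $C$ plus the unchanged ``bridge'' edges really does preserve all reachability relations — this is intuitively clear but needs to be stated cleanly, perhaps by noting that any directed path in one graph can be rerouted in the other by replacing each maximal sub-path lying inside $C$ with an appropriate traversal of $C$ in the other orientation.
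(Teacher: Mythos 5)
The paper does not prove this lemma itself; it cites it (Duff, \emph{Direct Methods for Sparse Matrices}, Chapter 6.10), so there is no in-paper proof to compare against. Your argument is a correct, self-contained proof, and the approach — decompose $\cM\triangle\cM'$ into vertex-disjoint alternating cycles (possible exactly because both matchings are complete, hence every vertex has $\cM\triangle\cM'$-degree $0$ or $2$), then reduce to reversing a single alternating cycle $C$ — is the standard one. The single-cycle step is also right: $\cM$-edges are oriented $f\to x$ and non-$\cM$-edges $x\to f$, so $C$ is a directed cycle in $\vec{\cG}^{\cM}_F$; after reversal every $C$-edge flips matching status, so $C$ is the same directed cycle with reversed orientation in $\vec{\cG}^{\cM'}_F$, while every edge off $C$ keeps its matching status and hence its orientation. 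The one place you flag as an obstacle — showing that the SCC partition is unchanged — can be stated most cleanly by a contraction argument rather than path surgery: $V(C)$ is a strongly connected vertex set in both digraphs (being a directed cycle in each), and the two digraphs agree on all edges not on $C$; therefore contracting $V(C)$ to a single vertex yields literally the same quotient digraph in both cases, and since contracting a strongly connected subset never splits or merges SCCs of vertices outside it (and places the whole subset in one SCC), the SCC partitions of $\vec{\cG}^{\cM}_F$ and $\vec{\cG}^{\cM'}_F$ coincide. This avoids having to reason about how a directed path weaves in and out of $C$, which is where your sketch was a bit hand-wavy. With that tightening, the proof is complete and correct.
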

Each strongly connected component defines a \emph{block} of $F$, consisting of the set of all equations that are vertices of this component. Blocks are partially ordered and we denote by $\preceq_F$ this order. Extending $\preceq_F$ to a total order (via topological sorting) yields an ordering of equations that puts the Jacobian matrix $\Jacobian_{\!X}F$ in \emph{Block Triangular Form} (BTF).

\begin{figure}[ht]
  \centering
		\centerline{
		\resizebox{\textwidth}{!}{
\begin{tabular}{cc|cc}
        $\begin{tikzpicture}[scale=.7]
        \tikzstyle{every node}=[font=\Large]
        \vertex[] (f1) at (0,7.5) []{\color{mygreen}{$f_1$}};
        \vertex[] (f2) at (0,6) []{\color{mygreen}{$f_2$}};
        \vertex[] (f3) at (0,4.5) []{\color{mygreen}{$f_3$}};
        \vertex[] (f4) at (0,3) []{\color{mygreen}{$f_4$}};
        \vertex[] (f5) at (0,1.5) []{\color{mygreen}{$f_5$}};
        \vertex[] (f6) at (0,0) []{\color{mygreen}{$f_6$}};
        \vertex[] (x1a) at (4,7.5) []{\color{red}{$x_1$}};
        \vertex[] (x2a) at (4,6) []{\color{red}{$x_2$}};
        \vertex[] (x3a) at (4,4.5) []{\color{red}{$x_3$}};
        \vertex[] (x4a) at (4,3) []{\color{red}{$x_6$}};
        \vertex[] (x5a) at (4,1.5) []{\color{red}{$x_4$}};
        \vertex[] (x6a) at (4,0) []{\color{red}{$x_5$}};
        	\path 
        	    (f1) edge[] (x2a)
        	    (f1) edge[] (x4a)
        	    (f2) edge[] (x3a)
        	    (f2) edge[] (x6a)
        	    (f3) edge[] (x4a)
        	    (f4) edge[] (x3a)
        	    (f5) edge[] (x4a)
        	    (f6) edge[] (x1a)
        	    (f6) edge[] (x5a)
        	    (f1) edge[line width = 3,color=myblue] (x1a)
        	    (f2) edge[line width = 3,color=myblue] (x2a)
        	    (f3) edge[line width = 3,color=myblue] (x3a)
        	    (f4) edge[line width = 3,color=myblue] (x4a)
        	    (f5) edge[line width = 3,color=myblue] (x5a)
        	    (f6) edge[line width = 3,color=myblue] (x6a)
        	    ;
        \end{tikzpicture}$
&
        $\begin{tikzpicture}[scale=.7]
        \tikzstyle{every node}=[font=\Large]
        \tikzbox[] (f1) at (3,3) []{\color{mygreen}{$f_1$}};
        \tikzbox[] (f2) at (5,6) []{\color{mygreen}{$f_2$}};
        \tikzbox[] (f3) at (0,6) []{\color{mygreen}{$f_3$}};
        \tikzbox[] (f4) at (0,3) []{\color{mygreen}{$f_4$}};
        \tikzbox[] (f5) at (0,0) []{\color{mygreen}{$f_5$}};
        \tikzbox[] (f6) at (5,0) []{\color{mygreen}{$f_6$}};
        	\path 
        	    (f1) edge[arrowedge=1,line width = 3] (f6)
        	    (f6) edge[arrowedge=1,line width = 3] (f2)
        	    (f2) edge[arrowedge=1,line width = 3] (f1)
        	    (f3) edge[arrowedge=1] (f2)
        	    (f3) edge[arrowedge=1,line width = 3] (f4)
        	    (f4) edge[arrowedge=1,line width = 3] (f3)
        	    (f4) edge[arrowedge=1] (f1)
        	    (f4) edge[arrowedge=1] (f5)
        	    (f5) edge[arrowedge=1] (f6)
        	    ;
        \draw[densely dotted,line width=2,color=mychoc] (4.65,3) ellipse (2.6 and 4);
        \draw[densely dotted,line width=2,color=mychoc] (0,4.5) ellipse (1.6 and 2.6);
        \draw[densely dotted,line width=2,color=mychoc] (0,0) ellipse (1.3 and 0.8);
        \end{tikzpicture}$
&
        $\begin{tikzpicture}[scale=.7]
        \tikzstyle{every node}=[font=\Large]
        \tikzbox[] (f1) at (3,3) []{\color{mygreen}{$f_1$}};
        \tikzbox[] (f2) at (5,6) []{\color{mygreen}{$f_2$}};
        \tikzbox[] (f3) at (0,6) []{\color{mygreen}{$f_3$}};
        \tikzbox[] (f4) at (0,3) []{\color{mygreen}{$f_4$}};
        \tikzbox[] (f5) at (0,0) []{\color{mygreen}{$f_5$}};
        \tikzbox[] (f6) at (5,0) []{\color{mygreen}{$f_6$}};
        	\path 
        	    (f6) edge[arrowedge=1,line width = 3] (f1)
        	    (f2) edge[arrowedge=1,line width = 3] (f6)
        	    (f1) edge[arrowedge=1,line width = 3] (f2)
        	    (f4) edge[arrowedge=1] (f2)
        	    (f4) edge[arrowedge=1,line width = 3] (f3)
        	    (f3) edge[arrowedge=1,line width = 3] (f4)
        	    (f3) edge[arrowedge=1] (f1)
        	    (f3) edge[bend right,arrowedge=0.999] (f5)
        	    (f5) edge[arrowedge=1] (f6)
        	    ;
        \draw[densely dotted,line width=2,color=mychoc] (4.65,3) ellipse (2.6 and 4);
        \draw[densely dotted,line width=2,color=mychoc] (0,4.5) ellipse (1.6 and 2.6);
        \draw[densely dotted,line width=2,color=mychoc] (0,0) ellipse (1.3 and 0.8);
        \end{tikzpicture}$
&
        $\begin{tikzpicture}[scale=.7]
        \tikzstyle{every node}=[font=\Large]
        \vertex[] (f1) at (0,7.5) []{\color{mygreen}{$f_1$}};
        \vertex[] (f2) at (0,6) []{\color{mygreen}{$f_2$}};
        \vertex[] (f3) at (0,4.5) []{\color{mygreen}{$f_3$}};
        \vertex[] (f4) at (0,3) []{\color{mygreen}{$f_4$}};
        \vertex[] (f5) at (0,1.5) []{\color{mygreen}{$f_5$}};
        \vertex[] (f6) at (0,0) []{\color{mygreen}{$f_6$}};
        \vertex[] (x1) at (4,7.5) []{\color{red}{$x_1$}};
        \vertex[] (x2) at (4,6) []{\color{red}{$x_2$}};
        \vertex[] (x3) at (4,4.5) []{\color{red}{$x_3$}};
        \vertex[] (x4) at (4,3) []{\color{red}{$x_6$}};
        \vertex[] (x5) at (4,1.5) []{\color{red}{$x_4$}};
        \vertex[] (x6) at (4,0) []{\color{red}{$x_5$}};
        	\path 
        	    (f1) edge[] (x1)
        	    (f1) edge[] (x4)
        	    (f2) edge[] (x2)
        	    (f2) edge[] (x3)
        	    (f3) edge[] (x3)
        	    (f4) edge[] (x4)
        	    (f5) edge[] (x4)
        	    (f6) edge[] (x6)
        	    (f6) edge[] (x5)
        	    (f1) edge[line width = 3,color=myblue] (x2)
        	    (f2) edge[line width = 3,color=myblue] (x6)
        	    (f3) edge[line width = 3,color=myblue] (x4)
        	    (f4) edge[line width = 3,color=myblue] (x3)
        	    (f5) edge[line width = 3,color=myblue] (x5)
        	    (f6) edge[line width = 3,color=myblue] (x1)
       	    ;
        \end{tikzpicture}$
\end{tabular}
}
		%END\input{BTF_mathias}
		}
  \caption{\small\sf Illustrating Lemma~\ref{jytdfeoguip} (we only show the projection of strongly connected components on function nodes).}
 \label{fig:BTF}
\end{figure}

This lemma is illustrated in \rref{fig:BTF}, inspired by 
\href{https://graal.ens-lyon.fr/~bucar/CR07/lecture-matching.pdf}{lecture notes by J-Y. L'Excellent and Bora U\c car, 2010}.
In this figure, a same bipartite graph $\cG$ is shown twice (left and right), with the equations sitting on the left-hand side in green, and the variables on the right-hand side in red. Two complete matchings $\cM_1$ (left) and $\cM_2$ (right) are shown in thick blue, with other edges of the bipartite graph being black. The restriction, to the equation vertices, of the two directed graphs ${\vec{\cG}}^{\cM_1}_F$ (left) and ${\vec{\cG}}^{\cM_2}_F$ (right) are shown on both sides.
Although the two directed graphs ${\vec{\cG}}^{\cM_1}_F$ and ${\vec{\cG}}^{\cM_2}_F$ differ, the resulting block structures (encircled in dashed brown) are identical.

\paragraph{Nonsquare systems}
In our development, we will also encounter nonsquare systems of algebraic equations. For general systems (with a number of variables not equal to the
 number of equations, $n{\neq}{m}$ in (\ref{eq:F})), call \emph{block} a pair $\block=(\bff,\bfx)$ collecting a subset $\bff$ of the set of $f_i$'s and a subset $\bfx$ of variables $x_j$ involved in it. 
\begin{definition}[Dulmage-Mendelsohn]
	\label{elrftuierhfperui}  For $F{=}0$ a general system of algebraic equations, the \emph{Dulmage-Mendelsohn decomposition~\cite{PothenF90}} of $\cG_F$ yields the partition of system $F{=}0$ into the following three blocks, given some matching of maximal cardinality for $\cG_F$:
\begin{itemize}

\vspace*{-2mm}

 \item 
block $\block_\overapprox$ collects the variables and equations reachable via some alternating path from some unmatched equation;

\vspace*{-2mm}

	\item 
  block $\block_\underapprox$ collects the variables and equations reachable via some alternating path from some unmatched variable;
  
\vspace*{-2mm}

 \item 
block $\block_\squared$ collects the remaining variables and equations. 
  
\vspace*{-1mm}

\end{itemize}
Blocks $\block_\overapprox,\block_\underapprox,\block_\squared$ are the $\overapprox$verdetermined, $\underapprox$nderdetermined, and $\squared$nabled parts of $F$.
\end{definition}
Statement~\ref{erpoigfehpguio} of the following lemma ensures that Definition~\ref{elrftuierhfperui} is meaningful:
\begin{lemma}
	\label{erlihpfiu} \
\begin{enumerate}
	\item \label{erpoigfehpguio} The triple $(\block_\overapprox,\block_\squared,\block_\underapprox)$ defined by the Dulmage-Mendelsohn decomposition does not depend on the particular maximal matching used for its definition. 
We thus write $(\block_\underapprox,\block_\squared,\block_\overapprox)=\DM(F)$.
\item System $F$ is structurally nonsingular if and only if the  overdetermined and underdetermined blocks $\block_\overapprox$ and $\block_\underapprox$ are both empty.
\end{enumerate}
\end{lemma}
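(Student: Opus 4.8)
Both statements are classical facts about bipartite matchings; I would prove the first statement (well-definedness of the decomposition) and then obtain the second as an easy consequence, using the first to evaluate the blocks on a matching of my choice.

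For Statement~\ref{erpoigfehpguio}, let $\cM$ and $\cM'$ be two matchings of $\cG_F$ of maximal cardinality. The plan is to show that the vertex set reached by $\cM$-alternating paths from $\cM$-unmatched equations coincides with the one reached by $\cM'$-alternating paths from $\cM'$-unmatched equations (and symmetrically with variables in place of equations), which forces $\block_\overapprox$, $\block_\underapprox$, hence also the remainder $\block_\squared$, to be independent of the matching. The three ingredients are: (i)~Berge's theorem, so that neither $\cM$ nor $\cM'$ admits an augmenting path; (ii)~the structure of $\cM\triangle\cM'$, a disjoint union of simple alternating paths and even cycles in which, by~(i), every path component has even length and therefore joins an $\cM$-unmatched vertex to an $\cM'$-unmatched vertex lying on the \emph{same} side of the bipartition (equations, or variables); (iii)~an exchange argument that, given an $\cM$-alternating path from an $\cM$-unmatched equation to a vertex $v$, builds by induction on its length an $\cM'$-alternating path from an $\cM'$-unmatched equation to $v$, splicing in the appropriate even component of $\cM\triangle\cM'$ each time the ``free'' endpoint must be relocated. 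Symmetry then yields the reverse inclusion, and swapping the roles of equations and variables handles $\block_\underapprox$. Finally, to ensure that $(\block_\overapprox,\block_\squared,\block_\underapprox)$ is a genuine partition, I would check that $\block_\overapprox$ and $\block_\underapprox$ are disjoint: a vertex reachable both from an $\cM$-unmatched equation and from an $\cM$-unmatched variable would yield, after concatenating two shortest such alternating paths and extracting a simple subpath, an $\cM$-augmenting path, contradicting maximality. This is precisely the combinatorics underlying the Dulmage--Mendelsohn decomposition and its refinement in~\cite{PothenF90}, which I would cite for the fully detailed argument.

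For the second statement I would argue both implications directly. If $F$ is structurally nonsingular, then by Definition~\ref{erfuilehui} there is a \emph{complete} matching $\cM$ of $\cG_F$; being complete, $\cM$ is of maximal cardinality and leaves no vertex unmatched, so there is neither an unmatched equation nor an unmatched variable from which an alternating path could start, whence $\block_\overapprox=\block_\underapprox=\emptyset$ (legitimately computed on $\cM$, by Statement~\ref{erpoigfehpguio}). Conversely, suppose $\block_\overapprox=\block_\underapprox=\emptyset$ and take any matching $\cM$ of maximal cardinality. If some equation vertex were $\cM$-unmatched it would belong to $\block_\overapprox$ via the length-zero alternating path, a contradiction; likewise no variable vertex is $\cM$-unmatched. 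Hence $\cM$ covers all vertices of $\cG_F$, i.e.\ it is a complete matching, so $F$ is structurally nonsingular (and, as noted after Definition~\ref{erfuilehui}, square).

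\textbf{Main obstacle.} The only genuinely non-routine step is the exchange argument~(iii): an $\cM$-alternating path is in general \emph{not} $\cM'$-alternating, so it cannot simply be re-read, and one must process it vertex by vertex, rerouting along the even path of $\cM\triangle\cM'$ through a vertex whenever the constructed path would otherwise ``consume'' the wrong unmatched endpoint, while keeping the result a simple path (or extracting one). I expect this induction, together with the parity analysis showing $\cM\triangle\cM'$ has no odd path component (which is where Berge's theorem really enters), to carry essentially all the weight; everything else, including the whole second statement, then follows formally from the definitions.
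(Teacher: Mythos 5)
The paper does not actually prove Lemma~\ref{erlihpfiu}: it treats the Dulmage--Mendelsohn decomposition and its matching-independence as classical facts, referring the reader to~\cite{PothenF90} (cited in Definition~\ref{elrftuierhfperui}). So there is no proof in the paper to compare against; your job here is really to reconstruct the standard argument, and your reconstruction is sound. The parity analysis of $\cM\triangle\cM'$ (no odd path component, else an augmenting path for one of the two maximum matchings), the exchange/rerouting argument, and the disjointness of $\block_\overapprox$ and $\block_\underapprox$ via augmenting-path extraction are all the right ingredients, and the proof of the second statement from the first is correct and complete.

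One remark on economy: the exchange step you flag as the main obstacle can be bypassed. A cleaner route to Statement~\ref{erpoigfehpguio} is to show that, for any maximum matching $\cM$, a variable $x$ is reachable from an $\cM$-unmatched variable by an $\cM$-alternating path \emph{iff} some maximum matching leaves $x$ unmatched; the forward implication is just toggling $\cM$ along the even alternating path, and the backward one is trivial. The right-hand side does not mention $\cM$, so $\block_\underapprox$ (and dually $\block_\overapprox$, hence also $\block_\squared$) is matching-independent without ever comparing two matchings head-on, and in particular without the simple-path extraction that your splicing step requires. The disjointness argument is also cleanest if you first orient the bipartite graph (non-matching edges from equations to variables, matching edges the other way): alternating paths from unmatched equations become directed paths, alternating paths to unmatched variables become reverse directed paths, and concatenating plus deleting directed cycles immediately yields a simple augmenting path. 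This is a cosmetic improvement only; your proof as sketched is correct.
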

A refined DM decomposition exists, which in addition refines block $\block_\squared$ into its indecomposable block triangular form. This is a direct consequence of applying Lemma~\ref{jytdfeoguip} to $\block_\squared$ in order to get its BTF.

The following obvious lemma will be useful in the sequel.
\begin{lemma}
\label{togiethgouiho} Let $\DM({F}){=}(\block_\underapprox,\block_\squared,\block_\overapprox)$ be the Dulmage-Mendelsohn decomposition of $F{=}0$. Then, no overdetermined block exists in the Dulmage-Mendelsohn decomposition of ${F}\setminus\block_\overapprox$.
\end{lemma}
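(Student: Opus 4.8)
<br>

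The plan is to argue directly from the defining property of the overdetermined block in the Dulmage-Mendelsohn decomposition, using Statement~\ref{erpoigfehpguio} of Lemma~\ref{erlihpfiu} to be free in our choice of maximal matching. First I would recall that, by Lemma~\ref{erlihpfiu}, $\DM(F)$ is independent of the maximal matching used to compute it. Fix a maximal matching $\cM$ of $\cG_F$, and let $\block_\overapprox = (\bff_\overapprox, \bfx_\overapprox)$ be the overdetermined block: by Definition~\ref{elrftuierhfperui}, $\bff_\overapprox$ and $\bfx_\overapprox$ consist of exactly those equation- and variable-vertices of $\cG_F$ reachable, via an $\cM$-alternating path, from some $\cM$-unmatched equation.

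The key structural observation I would establish is that $\cM$ restricts to a \emph{maximal} matching $\cM'$ of the bipartite graph of $F \setminus \block_\overapprox$. To see this, first note that every vertex of $\block_\overapprox$ other than the unmatched equations from which alternating paths start is matched \emph{within} $\block_\overapprox$: if $(f,x)\in\cM$ with $f\in\bff_\overapprox$, then $x$ is reachable from the same unmatched equation by extending the path, so $x\in\bfx_\overapprox$; conversely if $x\in\bfx_\overapprox$ is matched, the argument is symmetric. Hence $\cM$ decomposes as $\cM' \uplus \cM_\overapprox$ where $\cM_\overapprox$ lives entirely inside $\block_\overapprox$ and $\cM'$ lives entirely inside $F\setminus\block_\overapprox$. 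Maximality of $\cM'$ for $\cG_{F\setminus\block_\overapprox}$ follows because $\cM$ was maximal for $\cG_F$: any $\cM'$-augmenting path in $\cG_{F\setminus\block_\overapprox}$ would, since it uses no vertex of $\block_\overapprox$, also be an $\cM$-augmenting path in $\cG_F$, contradicting maximality of $\cM$. (Here I use that augmenting paths start and end at unmatched vertices, and that an unmatched vertex of $\cG_{F\setminus\block_\overapprox}$ relative to $\cM'$ is also unmatched relative to $\cM$.)

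With $\cM'$ in hand, I would then compute $\DM(F\setminus\block_\overapprox)$ using $\cM'$, which is legitimate by Lemma~\ref{erlihpfiu}(\ref{erpoigfehpguio}) since $\cM'$ is maximal. The overdetermined block of this decomposition is, by Definition~\ref{elrftuierhfperui}, the set of vertices reachable by an $\cM'$-alternating path in $\cG_{F\setminus\block_\overapprox}$ from an $\cM'$-unmatched equation. But the $\cM'$-unmatched equations of $\cG_{F\setminus\block_\overapprox}$ are precisely the $\cM$-unmatched equations of $\cG_F$ (none of the unmatched equations lay in $\bff_\overapprox$ except as path sources, and those \emph{are} unmatched), and any $\cM'$-alternating path from such an equation is also an $\cM$-alternating path in $\cG_F$ from that same equation, hence lands entirely in $\block_\overapprox$ by definition. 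Since $\cG_{F\setminus\block_\overapprox}$ contains no vertex of $\block_\overapprox$, there are no such paths of positive length, and the unmatched equations themselves are not present. Therefore the overdetermined block of $\DM(F\setminus\block_\overapprox)$ is empty, which is the claim.

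The main obstacle I anticipate is the bookkeeping around which unmatched equations survive into $F\setminus\block_\overapprox$ and the precise claim that alternating paths in the subgraph coincide with alternating paths in the full graph that happen to stay inside $\block_\overapprox$; one must be careful that removing $\block_\overapprox$ does not create \emph{new} unmatched equations (it does not, because every equation of $\block_\overapprox$ that was matched got matched inside $\block_\overapprox$, so no matched equation outside $\block_\overapprox$ loses its partner) and does not destroy alternating reachability relations among the remaining vertices. Once this combinatorial lemma about the interaction of the DM decomposition with vertex deletion is pinned down, the result — as the authors say — is obvious.
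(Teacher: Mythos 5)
Your decomposition $\cM = \cM' \uplus \cM_\overapprox$ is correct, as is the observation that $\cM'$ covers every equation of $F\setminus\block_\overapprox$ (every $\cM$-unmatched equation lies in $\block_\overapprox$, and the $\cM$-partner of an equation outside $\block_\overapprox$ lies outside $\block_\overapprox$). But the argument for maximality of $\cM'$ rests on the claim that $\cG_{F\setminus\block_\overapprox}$ ``contains no vertex of $\block_\overapprox$,'' and this is false. Removing the equations of $\block_\overapprox$ does not remove all of its variables: the block-triangular structure of the Dulmage--Mendelsohn decomposition forbids edges from equations \emph{in} $\block_\overapprox$ to variables \emph{outside} $\block_\overapprox$, but it does \emph{not} forbid edges from equations outside $\block_\overapprox$ to variables inside $\block_\overapprox$. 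A concrete counterexample: take equations $e_1,e_2,e_3$, variables $x_1,x_2$, and edges $(e_1,x_1),(e_2,x_1),(e_3,x_1),(e_3,x_2)$. With maximum matching $\cM=\{(e_1,x_1),(e_3,x_2)\}$ one finds $\block_\overapprox$ consisting of equations $\{e_1,e_2\}$ and the variable $x_1$, and $\block_\squared$ consisting of $e_3$ and $x_2$; yet $x_1$ remains incident to $e_3$ and so survives in $\cG_{F\setminus\block_\overapprox}$, where it is $\cM'$-unmatched even though it was $\cM$-matched. Hence your parenthetical step (``an unmatched vertex of $\cG_{F\setminus\block_\overapprox}$ relative to $\cM'$ is also unmatched relative to $\cM$'') is wrong, and an $\cM'$-augmenting path could in principle touch such a surviving variable without being $\cM$-augmenting.

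The gap is easy to close, and the fix is in fact the ``obvious'' argument: you have already shown that $\cM'$ matches every equation of $F\setminus\block_\overapprox$; since no matching of a bipartite graph can contain more edges than there are equation-vertices, $\cM'$ is of maximum cardinality, with no recourse to an augmenting-path argument at all. The Dulmage--Mendelsohn decomposition of $\cG_{F\setminus\block_\overapprox}$ can therefore be computed from $\cM'$, and since $\cM'$ leaves no equation unmatched, the set of vertices reachable by alternating paths from unmatched equations is empty. That is exactly your final step, so the endgame of your proof is sound; only the detour through lifting augmenting paths back to $\cG_F$ needs to be excised.
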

\begin{ccomment}\rm 
	\label{esrltuiophu} 
Suppose that, instead of removing from $F$ all equations belonging to $\block_\overapprox$, we 
only remove all unmatched equations with reference to the matching of maximal cardinality used for generating $\DM({F})$,  and call $F'$ the remaining subsystem of $F$. Then, we still get that no overdetermined block exists in the Dulmage-Mendelsohn decomposition of $F'$ and we have $F'\supseteq({F}\setminus\block_\overapprox)$, the inclusion being strict in general. However, this policy depends on the particular matching. In contrast, our policy is canonical since the DM decomposition is independent from the matching. 
Since we will rely on Lemma~\ref{togiethgouiho} for our compilation scheme in \rref{sec:loguhiuliugh}, adopting the finer policy would introduce a risk of nondeterminism due to the arbitrary choice of the particular complete matching selected. This is the reason for sticking with the policy stated in Lemma~\ref{togiethgouiho}.\eproof
\end{ccomment}

\subsubsection{Local versus non-local use of structural analysis}
\label{wlerifuhwlpi}
Structural analysis relies on the Implicit Function Theorem for its justification, as we have seen. Therefore, the classical use of structural analysis is local: let $F(Y,X){=}0$ be a structurally nonsingular system of equations with dependent variables $X$, and let $\vval{X}$ satisfy $F(\vval{Y},\vval{X}){=}0$ for given values $\vval{Y}$ for $Y$; then, the system remains (generically) nonsingular if $\vval{}$ varies in a neighborhood of $\vval{Y}$. This is the situation encountered in the running of ODE or DAE solvers, since a close guess of the solution is known from the previous time step.

We are, however, also interested in invoking the structural nonsingularity of system $F(Y,X)=0$ when no close guess is known. This is the situation encountered when handling mode changes, see our examples of Sections\,\ref{sec:simpleclutch} and \ref{loeruighlrigtuh}. We thus need another argument to justify the use of structural analysis in this case.

As a prerequisite we recall basic definitions regarding smooth manifolds (see, e.g.,~\cite{Cartan}, Section 4.7). In the next lemma, we consider the system $F(Y,X){=}0$ for a fixed value of $Y$, and thus we omit $Y$. 
%\clearpage
\begin{lemma}
	\label{rleiufheliru} Let $k,m,n\in\bN$ be such that $m<{n}$ and set $p=n-m$. For $\cS\subset\bR^n$ and $x^*=(x^*_1,\dots,x^*_n)\in\cS$, the following properties are equivalent:
\begin{enumerate}
	\item \label{lreiuhleuirh} There exists an open neighborhood $V$ of $x^*$ and a $\cC^k$-diffeomorphism $F:V\ra{W}{\subset}\bR^n$ such that $F(x^*)=0$ and $F(\cS\cap{V})$ is the intersection of $W$ with the subspace of $\bR^n$ defined by the $m$ equations $w_{p+1}{=}0,\dots,w_n{=}0$, where $w_i$ denote the coordinates of points belonging to $W$.
	
	\item  \label{weuygweouertpoi} There exists an open neighborhood $V$ of $x^*$, an open neighborhood $U$ of $0$ in $\bR^p$ and a homeomorphism $\psi:U\rightarrow\cS\cap{V}$, such that $\psi$, seen as a map with values in $\bR^n$, is of class $\cC^k$ and has rank $p$ at $0$---i.e., $\psi'(0)$ has rank $p$.
\end{enumerate}
\end{lemma}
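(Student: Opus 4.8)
The plan is to recognize this as the classical equivalence between the ``local slice'' description and the ``local parametrization'' description of a $p$-dimensional $\cC^k$-submanifold of $\bR^n$, and to prove it purely by means of the Inverse Function Theorem (see, e.g., \cite{DieudonneEA1}); none of the structural-analysis material is involved. I treat the point $x^*$ as satisfying $\psi(0)=x^*$ in item~\ref{weuygweouertpoi}, consistently with $F(x^*)=0$ in item~\ref{lreiuhleuirh}. Throughout, write points of $\bR^n$ as pairs $(a,b)$ with $a\in\bR^p$, $b\in\bR^{n-p}$, and let $\iota:\bR^p\to\bR^n$ denote the inclusion $a\mapsto(a,0)$.

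For $\ref{lreiuhleuirh}\Rightarrow\ref{weuygweouertpoi}$, I read the parametrization off the straightening diffeomorphism. Set $U=\{a\in\bR^p:(a,0)\in W\}$, an open neighbourhood of $0$, and $\psi(a)=F^{-1}(a,0)$. Then $\psi$ is $\cC^k$ because $F$ is a $\cC^k$-diffeomorphism; it is injective, and its image equals $F^{-1}\bigl(W\cap\{w_{p+1}=\dots=w_n=0\}\bigr)=F^{-1}(F(\cS\cap V))=\cS\cap V$; and $\psi^{-1}$ is the restriction of the continuous map obtained by composing $F$ with the projection onto the first $p$ coordinates, so $\psi$ is a homeomorphism onto $\cS\cap V$. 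Finally $\psi'(0)=D(F^{-1})(0)\circ\iota=[DF(x^*)]^{-1}\circ\iota$, which has rank $p$ since $[DF(x^*)]^{-1}$ is invertible.

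For $\ref{weuygweouertpoi}\Rightarrow\ref{lreiuhleuirh}$, I thicken the parametrization into a diffeomorphism. After permuting the coordinates of $\bR^n$ so that the first $p$ rows of the $n\times p$ matrix $\psi'(0)$ are linearly independent, define $\Phi(a,b)=\psi(a)+(0,b)$ near $0\in\bR^n$. Its Jacobian at $0$ is block lower-triangular with diagonal blocks $\partial(\psi_1,\dots,\psi_p)/\partial a$ (invertible by the choice of coordinates) and $I_{n-p}$, hence invertible; by the Inverse Function Theorem, $\Phi$ restricts to a $\cC^k$-diffeomorphism of a neighbourhood of $0$ onto a neighbourhood $V$ of $x^*=\Phi(0)$, and $F:=\Phi^{-1}$ satisfies $F(x^*)=0$. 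Since $\Phi(a,0)=\psi(a)\in\cS$, for every $w=(a,0)\in W$ we get $F^{-1}(w)=\psi(a)\in\cS\cap V$, hence $W\cap\{w_{p+1}=\dots=w_n=0\}\subseteq F(\cS\cap V)$.

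The one genuinely delicate point — and the step I expect to be the main obstacle — is the opposite inclusion $F(\cS\cap V)\subseteq\{w_{p+1}=\dots=w_n=0\}$, i.e.\ that near $x^*$ the set $\cS$ contains nothing beyond the image of $\psi$. This is exactly where the \emph{homeomorphism} hypothesis on $\psi$ is indispensable: a merely injective $\cC^k$-immersion need not have locally-slice image. Concretely, after shrinking $V$ if necessary, the continuity of $\psi^{-1}$ makes $U'':=\psi^{-1}(\cS\cap V)$ open in $U$ with $\psi(U'')=\cS\cap V$; hence any $x\in\cS\cap V$ equals $x=\psi(a'')=\Phi(a'',0)$ for some $a''\in U''$, and writing also $x=\Phi(a,b)$ for the unique $(a,b)$ in the small neighbourhood and using injectivity of $\Phi$ there forces $b=0$, i.e.\ $F(x)=(a'',0)$. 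The remaining work is bookkeeping: at each invocation of the Inverse Function Theorem one shrinks the ambient neighbourhood, and one checks that the smaller neighbourhood is still a product on which $\Phi$ and $\psi$ are defined and $\Phi$ is injective; this is routine once the homeomorphism property is used at the right place.
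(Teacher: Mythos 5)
The paper does not prove Lemma~\ref{rleiufheliru}: it is stated as a recall of basic facts about $\cC^k$-submanifolds, with a pointer to Cartan, Section 4.7. So there is no ``paper's route'' to compare with; you are supplying a proof where the paper defers to a textbook.

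Your argument is the standard one via the Inverse Function Theorem and it is correct. The $\ref{lreiuhleuirh}\Rightarrow\ref{weuygweouertpoi}$ direction is clean. In the $\ref{weuygweouertpoi}\Rightarrow\ref{lreiuhleuirh}$ direction your thickening $\Phi(a,b)=\psi(a)+(0,b)$ and the block lower-triangular Jacobian are exactly right, and you correctly identify the genuinely delicate step: showing $F(\cS\cap V)\subseteq\{w_{p+1}=\dots=w_n=0\}$, which is where the homeomorphism hypothesis (as opposed to mere injective immersion) is used. Your sketch of that step is slightly loosely phrased: you need to \emph{first} choose the small $U_1\times U_2$ on which $\Phi$ is a diffeomorphism, and \emph{then} shrink $V$ so that $\cS\cap V\subseteq\psi(U_1)$ --- which is possible precisely because $\psi(U_1)$ is open in $\cS\cap V'$ (homeomorphism onto image), hence of the form $\cS\cap V''$ for some open $V''$, and one intersects. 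With that shrinking fixed, every $x\in\cS\cap V$ is $\psi(a)=\Phi(a,0)$ with $(a,0)\in U_1\times U_2$, and injectivity of $\Phi$ on that product forces $F(x)=(a,0)$. You did flag this as the main obstacle and identify the homeomorphism hypothesis as what closes it, so this is a matter of tightening the write-up, not of a missing idea.
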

Statement\,\ref{lreiuhleuirh} means that $\cS$ is, in a neighborhood of $x^*$, the solution set of the system of equations $f_1(X){=}0,\dots,f_m(X){=}0$ in the $n$-tuple of dependent variables $X$, where $F=(f_1,\dots,f_m)$. Statement\,\ref{weuygweouertpoi} expresses that $\psi$ is a parameterization of $\cS$ in a neighborhood of $x^*$, of class $\cC^k$ and rank $p$---meaning that $\cS$ has dimension $p$ in a neighborhood of $x^*$.

{In order to apply Lemma\,$\ref{rleiufheliru}$ to the non-local use of structural analysis, we have to study the case of square systems of equations (i.e., let $m = n$).}
Consider a system $F(X){=}0$, where $X{=}(x_1,\dots,x_n)$ is the $n$-tuple of dependent variables and $F{=}(f_1,\dots,f_n)$ is an $n$-tuple of functions $\bR^n\rightarrow\bR$ of class $\cC^k$. Let $\cS\subseteq\bR^n$ be the solution set of this system, {that we assume is non-empty}. To be able to apply Lemma\,\ref{rleiufheliru}, we augment $X$ with one extra variable $z$, i.e., we set $Z\eqdef{X}\cup\{z\}$ and we now regard our formerly square system $F(X){=}0$ as an augmented system $G(Z){=}0$ where $G(X,z)\eqdef{F}(X)$. Extended system $G$ possesses $n$ equations and $n{+}1$ dependent variables, hence, $p{=}1$,  and its solution set is equal to $\cS\times\bR$. 
Let $x^* \in \cS$ be a solution of $F=0$. Then, we have $G(x^*,z^*){=}0$ for any $z^*\in\bR$. Assume further that the Jacobian matrix $\Jacobian_{\!X}F(x)$ is nonsingular at $x^*$. Then, it remains nonsingular in a neighborhood of $x^*$. Hence, the Jacobian matrix $\Jacobian_{\!X}G(x,z)$ has rank $n$ in a neighborhood $V$ of $(x^*,z^*)$ in $\bR^{n+1}$. We can thus extend $G$ by adding one more function in such a way that the resulting $(n{+}1)$-tuple $\bar{G}$ is a $\cC^k$-diffeomorphism, from $V$ to an open set $W$ of $\bR^{n+1}$. The extended system $\bar{G}{=}0$ satisfies Property\,\ref{lreiuhleuirh} of Lemma\,\ref{rleiufheliru}. By Property\,\ref{weuygweouertpoi} of Lemma\,\ref{rleiufheliru}, $(\cS\times\bR)\cap{V}$ has dimension $1$, implying that $\cS\cap\proj{\bR^n}{V}$ has dimension $<1$. This analysis is summarized in the following lemma:
\begin{lemma}
	\label{erkltuiherliu}  Consider the square system $F(X){=}0$, where $X{=}(x_1,\dots,x_n)$ is the $n$-tuple of dependent variables and $F{=}(f_1,\dots,f_n)$ is an $n$-tuple of functions $\bR^n\rightarrow\bR$ of class $\cC^k, k{\geq}1$. Let $\cS\subseteq\bR^n$ be the solution set of this system. Assume that system $F{=}0$ has solutions and let $x^*$ be such a solution. Assume further that the Jacobian matrix $\Jacobian_{\!X}F(x)$ is nonsingular at $x^*$. Then, there exists an open neighborhood $V$ of $x^*$ in $\bR^n$, such that $\cS\cap{V}$ has dimension $<1$.
\end{lemma}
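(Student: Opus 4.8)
The plan is to reduce Lemma~\ref{erkltuiherliu} to Lemma~\ref{rleiufheliru}, exactly as the paragraph preceding the statement sketches, and then to fill in the details that the sketch glosses over. The idea is to artificially turn the square system into an underdetermined one by adding a dummy variable, so that the dimension-counting machinery of Lemma~\ref{rleiufheliru} applies.

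First I would set $Z \eqdef X \cup \{z\}$ for a fresh variable $z$ and define $G(X,z) \eqdef F(X)$, an $n$-tuple of $\cC^k$ functions on $\bR^{n+1}$; here $m = n$, $n$ is replaced by $n+1$, and $p = (n+1)-n = 1$, so Lemma~\ref{rleiufheliru} is applicable in principle. The solution set of $G=0$ is $\cS \times \bR$. Fixing the given solution $x^*$ of $F=0$ and any $z^* \in \bR$, we have $G(x^*,z^*)=0$. Since $\Jacobian_{\!X}F$ is nonsingular at $x^*$ and nonsingularity is an open condition, $\Jacobian_{\!X}F(x)$ is nonsingular on a neighborhood of $x^*$; consequently the $n \times (n+1)$ Jacobian $\Jacobian_{\!Z}G$ has rank $n$ on a neighborhood $V_0$ of $(x^*,z^*)$ in $\bR^{n+1}$ (its first $n$ columns already span $\bR^n$).

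Next I would invoke the standard fact that an $n$-tuple of $\cC^k$ functions on $\bR^{n+1}$ whose Jacobian has full rank $n$ at a point can be completed, by adjoining one more suitably chosen $\cC^k$ function (e.g.\ a linear coordinate transverse to the range of $\Jacobian_{\!Z}G$ at the point, concretely $z$ itself up to relabeling), to an $(n+1)$-tuple $\bar G$ whose Jacobian is invertible at $(x^*,z^*)$; by the Inverse Function Theorem $\bar G$ restricts to a $\cC^k$-diffeomorphism from some open neighborhood $V \subseteq V_0$ of $(x^*,z^*)$ onto an open set $W \subseteq \bR^{n+1}$, and after translating we may assume $\bar G(x^*,z^*)=0$ and that $\bar G(G^{-1}(0)\cap V)$ is cut out inside $W$ by the $n$ equations $w_{2}=0,\dots,w_{n+1}=0$ (i.e.\ the $m=n$ last coordinates, with $p=1$ free coordinate $w_1$). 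Thus Property~\ref{lreiuhleuirh} of Lemma~\ref{rleiufheliru} holds for $\bar G$ on $V$, so by the equivalence with Property~\ref{weuygweouertpoi}, the set $(\cS \times \bR)\cap V$ is, near $(x^*,z^*)$, the image of a $\cC^k$ parameterization of rank $1$ from an open subset of $\bR^1$; in particular it is a $1$-dimensional submanifold. Projecting onto the first $n$ coordinates, write $V' \eqdef \proj{\bR^n}{V}$, an open neighborhood of $x^*$ in $\bR^n$. Since $(\cS\times\bR)\cap V$ surjects onto $\cS\cap V'$ under this projection and already contains the full fiber direction $\{x^*\}\times(\text{interval})$, the set $\cS\cap V'$ must have dimension $\le 1 - 1 = 0$, i.e.\ dimension $<1$; shrinking $V$ if necessary so that $V = V'\times(\text{interval})$ makes this fiber argument clean. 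Taking $V = V'$ in the statement completes the proof.

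The main obstacle I anticipate is making the dimension-drop-under-projection step fully rigorous: one must argue that because $(\cS\times\bR)\cap V$ is a $1$-manifold containing an honest segment in the $z$-direction through $(x^*,z^*)$, its projection to $\bR^n$ cannot locally be $1$-dimensional, i.e.\ $\cS\cap V'$ has dimension $<1$. The cleanest route is to choose the neighborhood $V$ of product form $V'\times J$ with $J$ an open interval (possible since $\cS\times\bR$ is a product and we may shrink), observe that the parameterization $\psi$ of Property~\ref{weuygweouertpoi} has a nonvanishing derivative, and note that if $\cS\cap V'$ had a point $x'\neq x^*$ then $(\cS\times\bR)\cap V$ would contain the two disjoint segments $\{x^*\}\times J$ and $\{x'\}\times J$, which a connected $1$-manifold parameterized by a single open interval cannot — or, more simply, invoke that a $\cC^k$ rank-$1$ image has Lebesgue measure zero in $\bR^{n+1}$ hence cannot contain the open-in-$\bR^{n+1}$-fibered set $\{(x,z): x\in U, z\in J\}$ for any nonempty open $U\subseteq\bR^n$, forcing $\cS\cap V'$ to have empty interior and, by the manifold structure, dimension $<1$. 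I would present the first (connectedness) version as it is the most self-contained given what the paper has set up.
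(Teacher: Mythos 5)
Your proposal follows the paper's own route (augment with a dummy variable, apply Lemma~\ref{rleiufheliru}, then project), so in structure it matches the text. You also correctly sense where the hard part is: the paper's sentence ``$(\cS\times\bR)\cap V$ has dimension $1$, implying that $\cS\cap\proj{\bR^n}{V}$ has dimension $<1$'' is asserted without detail, and that implication is precisely the step you try to repair.

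Unfortunately, both of your proposed repairs misfire. The connectedness version---``a connected $1$-manifold parameterized by a single open interval cannot [contain two disjoint segments]''---is false as stated: $(-1,1)$ certainly contains the disjoint subintervals $(-0.8,-0.5)$ and $(0.5,0.8)$, so the mere presence of two disjoint fibers $\{x^*\}\times J$ and $\{x'\}\times J$ is not in itself a contradiction. What does work, and is close to what you wanted, is an \emph{open-and-closed} argument: after shrinking $V$ to a product $V'\times J$ and $U$ to a connected interval, the fiber $\{x^*\}\times J$ is closed in $(\cS\cap V')\times J$ (the singleton $\{x^*\}$ is closed in $\cS\cap V'$) and open in it by invariance of domain in dimension $1$ (an injective continuous map from a $1$-manifold into a $1$-manifold has open image); since $(\cS\cap V')\times J$ is connected, $\{x^*\}\times J$ must be all of it, forcing $\cS\cap V'=\{x^*\}$. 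Alternatively, one can invoke the product formula for topological dimension, $\dim(A\times J)=\dim A+1$ when $J$ is an interval, and read off $\dim(\cS\cap V')\le 0$. Your measure-theoretic variant only yields that $\cS\cap V'$ has empty interior in $\bR^n$, which for $n\ge 2$ is strictly weaker than ``dimension $<1$'' (a line segment in $\bR^2$ has empty interior and dimension $1$), so it does not close the gap either.

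One further remark: under the hypotheses of Lemma~\ref{erkltuiherliu} the whole augmentation detour can be bypassed. Since $\Jacobian_{\!X}F(x^*)$ is nonsingular and $F$ is $\cC^1$, the Inverse Function Theorem makes $F$ a local diffeomorphism at $x^*$, so $x^*$ is an isolated zero of $F$ and $\cS\cap V=\{x^*\}$ for $V$ small enough---dimension $0<1$ with no appeal to Lemma~\ref{rleiufheliru}. The paper presumably took the longer route to stay within the vocabulary it had just introduced, but for this particular lemma the short proof is worth keeping in mind.
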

{The implications for the non-local use of structural analysis are as follows.} Consider the square system $F(X){=}0$, where $X{=}(x_1,\dots,x_n)$ is the $n$-tuple of dependent variables and $F{=}(f_1,\dots,f_n)$ is an $n$-tuple of functions $\bR^n\rightarrow\bR$ of class $\cC^k$. 
We first check the structural nonsingularity of $F{=}0$. If structural nonsingularity does not hold, then we abort solving the system. Otherwise, we proceed to solving the system and the following cases can occur:
\begin{enumerate}
	\item \label{epriuthpeo} the system possesses no solution;
	
	\item \label{kerufygeouy} its solution set is nonempty, of dimension $<1$; or
	
	\item \label{leriughpoiuh} its solution set is nonempty, of dimension $\geq 1$.
\end{enumerate}
Based on Lemma\,\ref{erkltuiherliu}, one among cases\,\ref{epriuthpeo} or\,\ref{kerufygeouy} will hold, generically, whereas case\,\ref{leriughpoiuh} will hold exceptionally. Thus, structurally, we know that the system is not underdetermined. 
This is our justification of the non-local use of structural analysis. Note that the existence of a solution is not guaranteed. Furthermore, unlike for the local use, uniqueness is not guaranteed either.\footnote{{Lemma\,\ref{erkltuiherliu} only states that there cannot be two arbitrarily close solutions of $F=0$, as its conclusion $\cS \cap V = \{ x^* \}$ involves an open neighborhood $V$ of $x^*$.}} Of course, subclasses of systems for which existence and uniqueness are guaranteed are of interest---the illustration examples we develop in this paper have this property.

\subsubsection{{Equations with existential quantifiers}}
To our knowledge, the results of this section are new.
\label{sec:operiuhlo}
For the handling of mode change events, we will also need to develop a structural analysis for the following class of (possibly nonsquare) algebraic systems of equations with existential quantifier:
	  \beq
	 \exists{W}:F(X,W,Y){=}0 ~ ,
	 \label{elrifuhuio}
	 \eeq
	  where $(X,W)$ collects the dependent variables of system $F(X,W,Y){=}0$. 
Our aim is to find structural conditions ensuring that (\ref{elrifuhuio}) defines a partial function $Y\ra{X}$, meaning that the value of tuple $X$ is uniquely defined by the satisfaction of (\ref{elrifuhuio}), given a value for tuple $Y$.
	   
	   At this point, a fundamental difficulty arises. Whereas (\ref{elrifuhuio}) is well defined as an abstract relation, it cannot in general be represented by a projected system of smooth algebraic equations of the form $G(X,Y)=0$. Such a $G$ can be associated to (\ref{elrifuhuio}) only in subclasses of systems.\footnote{Examples are linear systems for which elimination is easy, and polynomial systems for which it is doable---but expensive---using Gr\"obner bases.} In general, no extension of the Implicit Function Theorem exists for systems of the form (\ref{elrifuhuio}); thus, one cannot apply as such the arguments developed in Section~\ref{wlerifuhwlpi}. Therefore, we will reformulate our requirement differently.
	   
	   Say that $\vval{Y}$ is \emph{consistent} if the system of equations $F(X,W,\vval{Y}){=}0$ possesses a solution for $(X,W)$. We thus consider the following property for the systen $F(X,W,Y){=}0$: for every consistent $\vval{Y}$,
	   \beq
	  \left.\bea{l}
	  F(\vval{X}^1,\vval{W}^1,\vval{Y})=0 \\ [1mm]
	  F(\vval{X}^2,\vval{W}^2,\vval{Y})=0
	  \eea\right\}\implies& \vval{X}^1=\vval{X}^2 ~~ , 
	   \label{lriuyokiyugti}
	   \eeq
	   expressing that $X$ is independent of $W$ given a consistent tuple of values for $Y$. 
	   To find structural criteria guaranteeing (\ref{lriuyokiyugti}), we will consider the algebraic system $F(X,W,Y){=}0$ with $X,W,Y$ as dependent variables (i.e., values for the entries of $Y$ are no longer seen as being given).
\begin{definition}
	\label{lerigfueriopg} System $(\ref{elrifuhuio})$ is \emph{structurally nonsingular} if the following two conditions hold, almost everywhere when the nonzero coefficients of the Jacobian matrix $\Jacobian_{X,W,Y}F$ vary over some neighborhood:\footnote{The precise meaning of this statement is the same as in Lemma~\ref{oerigtuerio}, see the discussion therafter.} consistent values for $Y$ exist and condition $(\ref{lriuyokiyugti})$ holds.
\end{definition}
The structural nonsingularity of (\ref{elrifuhuio}) can be checked by using the DM decomposition of $F$ as follows. Let $(\block_\underapprox,\block_\squared,\block_\overapprox)=\DM(F)$ be the DM decomposition of $F(X,W,Y){=}0$, with $(X,W,Y)$ as dependent variables. We further assume that the regular block $\block_\squared$ is expressed in its block triangular form, see Lemma~\ref{jytdfeoguip} and comments thereafter. Let $\bfB$ be the set of all indecomposable blocks of $\block_\squared$ and $\preceq$ be the partial order on $\bfB$ following Lemma~\ref{jytdfeoguip}. The following holds:
\begin{lemma}
	\label{lergfuioerhpuip} System $(\ref{elrifuhuio})$ is \emph{structurally nonsingular} if and only if:
\begin{enumerate}
	\item \label{sguiholghouiy} Block $\block_\overapprox$ is empty;
	\item \label{odigpiufr} Block $\block_\underapprox$ involves no variable belonging to $X$;
	\item \label{lpiguhlosgioj} For every $\block\in\bfB$ containing some variable from $X$, then, $\block$ contains no variable from $W$, and for every $\block'{\prec}\block$ and every directed edge $(z',{z}){\in}\vec{\cG}^\cM_F$ such that $z'{\in}\block'$, $z{\in}\block$, and $\cM$ is an arbitrary complete matching for $\cG_F$, then 	$z'\not\in{W}$.
\end{enumerate}
\end{lemma}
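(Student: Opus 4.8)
The plan is to prove the two implications of Lemma~\ref{lergfuioerhpuip} by relating the abstract relation~(\ref{elrifuhuio}) to the combinatorial structure of $\DM(F)$, using Lemma~\ref{erlihpfiu} and Lemma~\ref{jytdfeoguip} as the main tools and the non-local justification of structural analysis from Section~\ref{wlerifuhwlpi} to convert structural statements into generic numerical ones. Throughout, I would work with the algebraic system $F(X,W,Y){=}0$ in which \emph{all} of $X,W,Y$ are treated as dependent variables, exactly as in Definition~\ref{lerigfueriopg}; write $\cM$ for a complete matching of the regular block $\block_\squared$ (which exists since $\block_\squared$ is square and structurally nonsingular by construction), and recall that by Lemma~\ref{jytdfeoguip} the block decomposition $\bfB$ and the order $\preceq$ are matching-independent, so conditions \ref{sguiholghouiy}--\ref{lpiguhlosgioj} are well posed.

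First I would treat the ``only if'' direction by contraposition: if any of the three conditions fails, I exhibit a generic obstruction to structural nonsingularity in the sense of Definition~\ref{lerigfueriopg}. If $\block_\overapprox\neq\emptyset$, then $F$ itself is structurally singular (overdetermined part nonempty, Lemma~\ref{erlihpfiu}\eqref{erpoigfehpguio}), so generically no consistent $\vval{Y}$ exists; this kills the first clause of Definition~\ref{lerigfueriopg}. If condition~\ref{odigpiufr} fails, i.e.\ $\block_\underapprox$ contains some $x\in X$, then along the refined DM structure the underdetermined block carries a genuine degree of freedom whose associated unmatched variable is reachable, via an alternating path, from a variable of $X$; moving that free parameter changes the value of $x$ while preserving $F{=}0$ and the value of $Y$ (since $Y$-entries in $\block_\underapprox$ can be held fixed or, if $Y$-entries lie in $\block_\underapprox$, one restricts to the slice where they are fixed), contradicting~(\ref{lriuyokiyugti}). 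If condition~\ref{lpiguhlosgioj} fails, there are two sub-cases: either a block $\block$ containing some $x\in X$ also contains some $w\in W$ — then within the strongly connected component $\block$ the values of $x$ and $w$ are determined \emph{together} from the inputs of $\block$, and generically one can perturb the local solution of that component's square subsystem so that $x$ changes, which again violates~(\ref{lriuyokiyugti}); or there is an edge $(z',z)$ with $z'\in W$, $z'\in\block'\prec\block$, $z\in\block$, $x\in\block\cap X$ — then $z'\in W$ feeds, through the BTF cascade, into the block that fixes $x$, so varying $z'$ (which is a legitimate free direction because $z'$ belongs to a strictly earlier block whose solution is not over-constrained) propagates to a change in $x$. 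In each failing case I would make the perturbation argument precise by invoking the Implicit Function Theorem on the relevant square indecomposable block, exactly as in the local use of structural analysis (Section~\ref{wlerifuhwlpi}), and observe that the obstruction persists for generic nonzero Jacobian coefficients.

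For the ``if'' direction, assume conditions \ref{sguiholghouiy}--\ref{lpiguhlosgioj} hold. By \ref{sguiholghouiy} and Lemma~\ref{togiethgouiho}/Lemma~\ref{erlihpfiu}, $\block_\overapprox=\emptyset$, so $F{=}0$ has solutions generically; this gives the first clause of Definition~\ref{lerigfueriopg} (consistent $Y$ exist). For~(\ref{lriuyokiyugti}), fix a consistent $\vval{Y}$ and two solutions $(\vval{X}^1,\vval{W}^1)$, $(\vval{X}^2,\vval{W}^2)$ of $F(\cdot,\cdot,\vval{Y}){=}0$. The idea is to run the BTF solving order on $\block_\squared$ and show, by induction on the block order $\preceq$, that every block $\block$ containing a variable of $X$ is solved using only data that is already pinned down by $\vval{Y}$ and by previously solved $X$-blocks — never by anything depending on a free $W$ choice. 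Concretely: $\block_\underapprox$ involves no $X$-variable (condition~\ref{odigpiufr}), so the indeterminacy carried by the underdetermined part does not touch $X$; and for an $X$-containing block $\block$, condition~\ref{lpiguhlosgioj} guarantees (a) no $W$-variable is \emph{inside} $\block$, so the square subsystem attached to $\block$ has all its unknowns in $X$ (plus, possibly, $Y$-variables that are themselves fixed), and (b) no incoming edge to $\block$ originates from a $W$-variable in an earlier block, so the ``inputs'' of $\block$ are functions of $\vval{Y}$ and of $X$-variables in blocks $\prec\block$, which by induction are common to both solutions. Hence the square system solved at $\block$ is the same for both solutions, and by structural nonsingularity of that indecomposable block (genericity of its Jacobian, via Lemma~\ref{oerigtuerio}) it has, locally, a unique solution — so the $X$-part agrees. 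Inducting through all of $\bfB$ yields $\vval{X}^1=\vval{X}^2$, establishing~(\ref{lriuyokiyugti}) and thus structural nonsingularity of~(\ref{elrifuhuio}).

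\textbf{Main obstacle.} The delicate point is the bookkeeping in the ``if'' direction: making rigorous that the BTF solving order can be carried out so that $X$-blocks are closed off using only $Y$-data and earlier $X$-data, i.e.\ that $W$-variables genuinely form a ``downstream'' or ``sideways'' part of the dependency DAG relative to $X$. This requires care because $W$-variables may live both in $\block_\underapprox$ and in non-$X$ blocks of $\block_\squared$, and one must argue that condition~\ref{lpiguhlosgioj}, as stated (about edges $(z',z)$ with $z'\in\block'\prec\block$), together with condition~\ref{odigpiufr}, exactly forbids every route by which a $W$-choice could reach an $X$-block. I expect the crux to be a clean graph-theoretic lemma: under \ref{odigpiufr}--\ref{lpiguhlosgioj}, in $\vec{\cG}^\cM_F$ there is no directed path from any $W$-variable to any $X$-variable that passes only through $\block_\squared$; once that is in hand, both directions follow by the perturbation/uniqueness arguments sketched above. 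A secondary subtlety is handling the case where some entries of $Y$ themselves fall into $\block_\squared$ or $\block_\underapprox$ — there one restricts attention to the affine slice on which those $Y$-entries take their given consistent values, which does not affect the genericity statements since it only removes coordinates, not Jacobian directions.
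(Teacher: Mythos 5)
Your proposal tracks the paper's proof closely: the ``if'' direction is exactly the paper's block-triangular induction (inputs to any $X$-block lie in $X\cup Y$ by condition~\ref{lpiguhlosgioj}, so given $\vval{Y}$ each $X$-block is closed off by data common to both solutions), and the ``only if'' direction is the same case analysis by contradiction. However, two points in your ``main obstacle'' paragraph deserve comment.

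First, the graph-theoretic lemma you propose as the crux — that under conditions~\ref{odigpiufr}--\ref{lpiguhlosgioj} there is \emph{no} directed path in $\vec{\cG}^\cM_F$ from a $W$-variable to an $X$-variable through $\block_\squared$ — is stronger than what is true and is not needed. Such paths can perfectly well exist; what condition~\ref{lpiguhlosgioj} guarantees is only that any edge crossing \emph{into} an $X$-block does not originate from a $W$-vertex, so a $W$-to-$X$ influence must pass through a $Y$-vertex first, and $Y$ is pinned. Your block-by-block induction already makes exactly this work for you without the path lemma, so you should simply drop the lemma rather than try to prove it.

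Second, your ``secondary subtlety'' (that some $Y$-entries may themselves sit in $\block_\underapprox$) is not secondary: it is precisely the point at which the ``only if'' direction for condition~\ref{odigpiufr} becomes delicate. If an $X$-variable $x$ sits in $\block_\underapprox$ but every alternating path from $x$ to an unmatched vertex passes through an unmatched $Y$-variable, then fixing $\vval{Y}$ may already pin $x$; moving the free parameter changes $Y$, so it does not exhibit a violation of~(\ref{lriuyokiyugti}). A minimal instance is $F=\{f_1,f_2\}$ with $f_1=x+y$ ($x\in X$, $y\in Y$) and $f_2=w_1+w_2$ ($w_1,w_2\in W$): here $x$ lands in $\block_\underapprox$, so condition~\ref{odigpiufr} fails, yet given $y$ the equation $f_1=0$ uniquely determines $x$ generically and~(\ref{lriuyokiyugti}) holds. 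This is a gap in the necessity direction that the paper's own proof shares with yours (the paper's one-sentence claim that $x\in\block_\underapprox$ is ``not determined, even for given values of $W,Y$'' is what breaks on this example). For the use made of this lemma in Algorithm~\ref{alg:erofiueopiu}, only the sufficiency (``if'') direction is load-bearing — failing conditions leads to a conservative rejection — so the overall development is unaffected; but if you want a genuine ``if and only if'' you would need to re-state condition~\ref{odigpiufr} relative to the DM decomposition of $F$ with $(X,W)$ as dependents and $Y$ as fixed parameters, rather than the DM decomposition with all of $X,W,Y$ dependent.
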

Conditions~\ref{sguiholghouiy} and~\ref{odigpiufr} speak by themselves. Regarding Condition~\ref{lpiguhlosgioj}, the intuition is the following. Every directed path of $\vec{\cG}^\cM_F$, originating from $W$ and terminating in $X$, of minimal length, must traverse $Y$. Consequently, $W$ influences $X$ ``through'' $Y$ only.
\begin{proof}
\myparagraph{``If'' part} By Condition~\ref{sguiholghouiy}, there exist consistent values for $Y$. By condition~\ref{odigpiufr}, no variable of $X$ belongs to the underdetermined part of $\DM(F)$. It remains to show that condition $(\ref{lriuyokiyugti})$ holds. By condition (\ref{lpiguhlosgioj}), each indecomposable block $\block\in\bfB$ involving a variable of $X$ has the form $G(Z,U){=}0$,
where 
\begin{enumerate}
	\item $Z$ is the $n$-tuple of dependent variables of $G=(g_1,\dots,g_n)$, 
	\item no variable of $W$ belongs to $Z$, and
	\item $U\subseteq{X}{\cup}Y$ is a subset of the dependent variables of the blocks $\block'$ immediately preceding $\block$.
\end{enumerate}
Since $G{=}0$ is structurally regular, fixing a value for $U$ entirely determines all the dependent variables of block $\block$. This proves the ``if'' part.

\myparagraph{``Only if'' part} We prove it by contradiction. If condition~\ref{sguiholghouiy} does not hold, then the existence of consistent values for $Y$ is not structurally guaranteed. If condition~\ref{odigpiufr} does not hold, then the variables of $X$ that belong to $\block_\underapprox$ are not determined, even for given values of $W,Y$. If condition~\ref{lpiguhlosgioj} does not hold, then two cases can occur:
\begin{itemize}
	\item Some $x\in{X}$ and $w\in{W}$ are involved in a same indecomposable block $\block\in\bfB$. Then, condition (\ref{lriuyokiyugti}) will not hold for $x$ due to the structural coupling with $w$ through block $\block$.
	\item There exists an indecomposable block $\block\in\bfB$  of the form $G(Z,U){=}0$,
where 
\begin{enumerate} 
	\item $Z$ is the $n$-tuple of dependent variables of $G=(g_1,\dots,g_n)$, and $Z$ contains a variable $x\in{X}$,
	\item no variable of $W$ belongs to $Z$, and
	\item some variable $w\in{W}$ belongs to $U$, the set of the dependent variables of the blocks $\block'$ immediately preceding $\block$.
\end{enumerate}
\end{itemize}
Hence, $w$ influences $x$, structurally, which again violates (\ref{lriuyokiyugti}). This finishes the proof.
\end{proof}
We complement Lemma~\ref{lergfuioerhpuip} with the algorithm $\atomicact{ExistQuantifEqn}$ (\rref{alg:erofiueopiu}), which
requires a system $F$ of the form (\ref{elrifuhuio}). If condition~\ref{sguiholghouiy} of Lemma~\ref{lergfuioerhpuip} fails to be satisfied, then $b_\overapprox\gets\fff$ is returned, indicating overdetermination. If conditions~\ref{odigpiufr} or \ref{lpiguhlosgioj} of Lemma~\ref{lergfuioerhpuip} fail to be satisfied, then $b_\underapprox\gets\fff$ is returned, indicating underdetermination. Otherwise, $\atomicact{ExistQuantifEqn}$ succeeds and returns the value $\ttt$ for both Booleans, together with
the decomposition $F_\Sigma\cup\consistency{F}_\Sigma$ of $\block_\squared$. In this decomposition:
\begin{itemize}
\item Subsystem $F_\Sigma$ collects the indecomposable blocks involving variables belonging to $X$, so that ${F}_\Sigma$ determines $X$ as a function of $\vval{Y}$ when $\vval{Y}$ is consistent;
\item Subsystem $\consistency{F}_\Sigma$ collects the consistency conditions, whose dependent variables belong to $W\cup{Y}$.
\end{itemize}
\begin{algorithm}[ht]
	\caption{$\atomicact{ExistQuantifEqn}$} 
	\label{alg:erofiueopiu}
 \begin{algorithmic}[1]
\Require $F$;
\Return $(b_\overapprox,b_\underapprox,F_\Sigma,\consistency{F}_\Sigma)$
\State $(\block_\underapprox,\block_\squared,\block_\overapprox)=\DM(F)$
\If{condition~\ref{sguiholghouiy} of Lemma~\ref{lergfuioerhpuip} holds} 
\State $b_\overapprox\gets\ttt$;
\If{\ref{odigpiufr} and \ref{lpiguhlosgioj} of Lemma~\ref{lergfuioerhpuip} hold} 
\State $b_\underapprox\gets\ttt$;
  \State partition $\block_\squared=F_\Sigma\cup\consistency{F}_\Sigma$
\label{op:efuiehoi}
\Else
\State $b_\underapprox\gets\fff$
\EndIf
\Else
\State $b_\overapprox\gets\fff$
\EndIf
	\end{algorithmic}
\end{algorithm}

Our background on the structural analysis of algebraic equations is now complete. 
In the next section, we recall the background on the structural analysis of (single-mode) DAE systems.

\subsection{The \sigmamethod\ for DAE systems}
\label{sec:structural}
\label{reoiughfuil}
The DAE systems we consider are ``square'', i.e., have the form 
\beq
f_j(\mbox{the } x_i \mbox{'s and their derivatives}) = 0
\label{sepguiohp}
\eeq
where $x_1,\dotsc,x_n$ denote the dependent variables and $f_1=0,\dotsc,f_n=0$ denote the equations. Call \emph{leading variables} of System~(\ref{sepguiohp}) the $d_i$-th derivatives\footnote{The notation $\pprime{k}{x}$ is adopted throughout this paper, instead of the more classical $x^{(k)}$, for the $k$-th derivative of $x$. See Notations~\ref{lerigftuheroiuof}.} $\pprime{d_i}{x_i}$ for $i=1,\dots,n$, where $d_i$ is the maximal differentiation degree of variable $x_i$ throughout $f_1=0,\dotsc,f_n=0$. The problem addressed by the structural analysis of DAE systems of the form (\ref{sepguiohp}) is the following. Regard (\ref{sepguiohp}) as a system of algebraic equations with the leading variables as unknowns. If this system is structurally nonsingular, then, given a value for all the $\pprime{k}{x_i}$ for $i=1,\dots,n$ and $k=0,\dots,d_i{-}1$, a unique value for the leading variables can be computed, structurally; hence, System~(\ref{sepguiohp}) is like an ODE. If this is not the case, finding additional \emph{latent equations} by differentiating suitably selected equations from (\ref{sepguiohp}) will bring the system to an ODE-like form, while not changing its set of solutions. Performing this is known as \emph{index reduction.} 

In our simple examples, finding latent equations was easy. In general, this is difficult and algorithms were proposed in the literature for doing it efficiently. Among them, the Pantelides algorithm~\cite{pantelides} is the historical solution. We decided, however, to base our subsequent developments on the beautiful method proposed in~\cite{Pryce01} by J. Pryce, called the \sigmamethod. The {\sigmamethod} also covers the construction of the Block Triangular Form and addresses numerical issues, which we do not discuss here. 

\paragraph{Weighted bipartite graphs}
We consider System~(\ref{sepguiohp}), which is entirely characterized by its set of dependent variables $X$ (whose generic element is denoted by $x$) and its set of equations $F{=}0$ (whose generic element is written $f{=}0$). 
We attach to (\ref{sepguiohp}) the bipartite graph $\cG$ having $F\cup{X}$ as set of vertices and having an edge $(f,x)$ if and only if $x$ occurs in function $f$, regardless of its differentiation degree. Recall that a matching is \emph{complete} iff it involves all equations of $F$.

So far, $\cG$ is agnostic with respect to differentiations. To account for this, we further equip $\cG$ with \emph{weights}: to each edge $(f,x){\in}\cG$ is associated a nonnegative integer $d_{\!f\!x}$, equal to the maximal differentiation degree of variable $x$ in function $f$. This yields a \emph{weight} for any matching $\cM$ of $\cG$ by the formula $w(\cM)=\sum_{(f,x)\in\cM}d_{\!f\!x}$.
Suppose we have a solution to the following problem:
\begin{problem}
	\label{liftuerhpituhepu8} 
Find a complete matching $\cM$ for $\cG$ and integer \emph{offsets $\{c_f\mid{f\in{F}}\}$ and $\{d_x\mid{x\in{X}}\}$}, satisfying the following conditions, where the $d_{\!f\!x}$ are the weights as before:
\beq
\bea{rcl}
d_x-c_f &\!\!\!\geq\!\!\!& d_{\!f\!x} \; \mbox{ with equality if } (f,x){\in}\cM
 \\
c_f &\!\!\!\geq\!\!\!& 0
\eea
\label{ltuhltrli}
\eeq
\end{problem}
Then, differentiating $c_f$ times each function $f$ yields a DAE system $F_\Sigma{=}0$ having the following properties. Inequality $d_x\geq{c}_f+d_{\!f\!x}$ holds for each $x{\in}X$, and $d_x={c}_f+d_{\!f\!x}$ holds for the unique $f$ such that  $(f,x){\in}\cM$. Hence, the leading variables of DAE system $F_\Sigma{=}0$ are the 
$d_x$-th derivatives $\pprime{d_x}{x}$. Consequently, system $F_\Sigma{=}0$, now seen as a system of algebraic equations having $\pprime{d_x}{x}$ as dependent variables, is structurally nonsingular by Definition~\ref{erfuilehui}. Hence, $F_\Sigma{=}0$ is like an ODE. The integer $k=\max_{f\in{F}}\,c_f$ is called the \emph{index} of the system.\footnote{We should rather say the \emph{differentiation index} as, once again, other notions of index exist for DAEs~\cite{CampbellGear1995} that are not relevant to our work.}
\begin{definition}
	\label{def:erlgtuioo} For $F$ a DAE system, the solution to Problem\,$\ref{liftuerhpituhepu8}$ yields the DAE system $F_\Sigma{=}\{f^{\prime{c_f}}{\mid} f{\in}{F}\}$ together with the system of \emph{consistency constraints} $\overline{F}_\Sigma{=}\{f^{\prime{k}}{\mid} f{\in}{F},0{\leq}{k}{<}c_f\}$.
\end{definition}
Knowing the offsets also allows transforming $F$ into a system of index~$1$, by not performing the final round of differentiations. 
There are infinitely many solutions to Problem~\ref{liftuerhpituhepu8} with unknowns $c_f$ and $d_x$, since, for example, adding the same $\ell\in\bN_{\geq{0}}$ to all $c_f$'s and $d_x$'s yields another solution. We thus seek for a \emph{smallest} solution, elementwise.
Hence, Problem~\ref{liftuerhpituhepu8} is the fundamental problem we must solve, and we seek a smallest solution for it.

The beautiful idea of J. Pryce is to propose a linear program encoding Problem~\ref{liftuerhpituhepu8}. As a preliminary step, we claim that a bruteforce encoding of Problem~\ref{liftuerhpituhepu8} is the following: Find integers $\xi_{f\!x},d_x,c_f$ such that
\begin{equation}
\bea{rl}
\left.\bea{r}
\sum_{f:(f,x)\in{\cG}}\;\xi_{f\!x}=1 \\
\sum_{x:(f,x)\in{\cG}}\;\xi_{f\!x}={1} \\
\xi_{f\!x}\geq{0} \eea\right\}\hspace*{-3mm} &\mbox{complete matching}
\\ [6mm]
\left.\bea{r}
d_x{-}c_f-d_{\!f\!x}\geq{0} \\ 
c_f\geq{0} \eea\right\}\hspace*{-3mm}  &\mbox{encodes ``$\geq$'' in (\ref{ltuhltrli})}
\\ [3mm]
\displaystyle\sum_{(f,x)\in{\cG}}\;\xi_{f\!x}(d_x{-}c_f-d_{\!f\!x})=0\hspace*{-1mm}  &\mbox{encodes ``$=$'' in (\ref{ltuhltrli})}
\eea
\label{kerfyugwerluh}
\end{equation}
We now justify our claim. Focus on the first block. Since the $\xi$'s are integers, they can only take values in $\{0,1\}$ and one can define a bipartite graph by stating that edge $(f,x)$ exists iff $\xi_{f\!x}=1$. The first two equations formalize that this graph is a matching, which  is complete since all equations are involved. The second block is a direct encoding of (\ref{ltuhltrli}) if we ignore the additional statement ``with equality iff''. The latter is encoded by the last constraint (since having the sum equal to zero requires that all the terms be equal to zero).
Constraint problem (\ref{kerfyugwerluh}) does not account for our wish for a ``smallest'' solution: this will be handled separately.

Following the characterization of solutions of linear programs via  \emph{complementary slackness conditions}, every solution of problem (\ref{kerfyugwerluh}) is a solution of the following dual linear programs (LP), where the $\xi_{f\!x}$ are real:
\beq\mbox{\emph{primal}} \hspace*{-3mm}&:&\bea{rl}
\mbox{maximize}&\sum_{(f,x)\in{\cG}}\;d_{\!f\!x}\;\xi_{f\!x} \\
[1mm] \mbox{subject to}&\sum_{f:(f,x)\in{\cG}}\;\xi_{f\!x}=1 \\
\mbox{and}&\sum_{x:(f,x)\in{\cG}}\;\xi_{f\!x}\geq{1} \\
\mbox{and}&\xi_{f\!x}\geq{0}
\eea
\label{erofuierhopiu}
\\ [2mm]
\mbox{\emph{dual}} \hspace*{-3mm}&:&\bea{rl}
\mbox{minimize}&\sum_xd_x{-}\sum_f\;c_f \\
[1mm] \mbox{subject to}&d_x{-}c_f\geq{d}_{\!f\!x} \\
\mbox{and}&c_f\geq{0}
\eea
\label{guiohpui}
\eeq
In these two problems, $f$ ranges over $F$, $x$ ranges over $X$, and $(f,x)$ ranges over $\cG$. Also, we have relaxed the integer LP to a real LP, as the two possess identical solutions in this case. 
Note that LP (\ref{erofuierhopiu}) encodes the search for a complete matching of maximum weight for $\cG$. By the principle of complementary slackness in linear programming, 
\beq
\mbox{
\begin{minipage}{7cm}
	 for respective optima of problems (\ref{erofuierhopiu}) and (\ref{guiohpui}), $\xi_{f\!x}{>}0$ if and only if $d_x{-}c_f{=}d_{\!f\!x}$,
\end{minipage}
}
\label{eogiheoiwy}
\eeq
which is exactly the last constraint of (\ref{kerfyugwerluh}). Using this translation into linear programs (\ref{erofuierhopiu}) and (\ref{guiohpui}), it is proved in~\cite{Pryce01}, Thm 3.6, that, if a solution exists to Problem~\ref{liftuerhpituhepu8}, then a unique elementwise smallest solution exists. 
Based on the above analysis, the following algorithm was proposed in~\cite{Pryce01} for solving (\ref{erofuierhopiu},\ref{guiohpui}) and was proved to provide the smallest solution for (\ref{guiohpui}). 
Let $\cG$ be a bipartite graph with weights $\{d_{\!f\!x}|(f,x){\in}\cG\}$:
\begin{enumerate}
	\item 
\label{elriuu}	Solve LP (\ref{erofuierhopiu}), which gives a complete matching $\cM$ of maximum weight for $\cG$; any method for solving LP can be used;
	\item \label{elrgfuehliu} Apply the following iteration until a fixpoint is reached (in finitely many steps), from the initial values $c_f=0$:
\begin{enumerate}
	\item \label{erlfiuhpui} $\forall x:d_x\la\max\{d_{\!f\!x}+c_f\mid(f,x)\in\cG\}$~;
	\item \label{louighgpuig} $\forall f:\, c_f\,\la d_x-d_{\!f\!x}$ where  $(f,x)\in\cM$~.
\end{enumerate}
\end{enumerate}
The reason for using the special iterative algorithm for solving the dual LP (\ref{guiohpui}) is that a standard LP-solving algorithm will return an arbitrary solution, not necessarily the smallest one. The following lemma, that will be crucially put to use in Section~\ref{sec:otherschemes}, is an obvious consequence of the linearity of problems (\ref{erofuierhopiu}) and (\ref{guiohpui}):
\begin{lemma}
	\label{rilugtlo}  Let $\cG$ be a given bipartite graph and let two families of weights 
	$(d^1_{\!f\!x})_{(f,x)\in\cG}$ and 	$(d^2_{\!f\!x})_{(f,x)\in\cG}$ be related by 
	$d^2_{\!f\!x}=M\times{d^1_{\!f\!x}}$ for every ${(f,x)\in\cG}$, where $M$ is a fixed positive integer. Then, the offsets of the corresponding \sigmamethod\ are also related in the same way: $d^2_x=M\times{d^1_x}$ for every variable $x$, and $c^2_f=M\times{c^1_f}$ for every function $f$.
\end{lemma}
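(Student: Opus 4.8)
The plan is to argue directly from the two linear programs (\ref{erofuierhopiu}) and (\ref{guiohpui}) that characterize the $\Sigma$-method, using the hypothesis $d^2_{\!f\!x}=M\times d^1_{\!f\!x}$. First I would observe that the primal LP (\ref{erofuierhopiu}) has the same feasible polytope for both weight families (the constraints $\sum_{f}\xi_{f\!x}=1$, $\sum_{x}\xi_{f\!x}\ge 1$, $\xi_{f\!x}\ge 0$ do not involve the weights at all), and that the objective $\sum_{(f,x)}d^2_{\!f\!x}\,\xi_{f\!x} = M\sum_{(f,x)}d^1_{\!f\!x}\,\xi_{f\!x}$ is just $M$ times the first objective. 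Since $M>0$, the argmax is unchanged: a complete matching $\cM$ of maximum weight for the weights $(d^1_{\!f\!x})$ is also one for $(d^2_{\!f\!x})$. So we may fix a single maximum-weight complete matching $\cM$ and use it in step~\ref{elriuu} of Pryce's algorithm for both weight families.

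Next I would run the fixpoint iteration of step~\ref{elrgfuehliu} in parallel for the two weight families, starting from $c_f=0$ in both cases, and prove by induction on the iteration count that at every stage the second run's values are exactly $M$ times the first run's: $d^{2,(t)}_x = M\, d^{1,(t)}_x$ and $c^{2,(t)}_f = M\, c^{1,(t)}_f$. The base case is immediate since $c_f=0=M\cdot 0$. For the inductive step, line~\ref{erlfiuhpui} gives
\[
d^{2,(t+1)}_x = \max\{\,d^2_{\!f\!x}+c^{2,(t)}_f \mid (f,x)\in\cG\,\} = \max\{\,M d^1_{\!f\!x}+M c^{1,(t)}_f\,\} = M\,d^{1,(t+1)}_x,
\]
using that $\max$ commutes with multiplication by the positive constant $M$; and line~\ref{louighgpuig} gives, for the unique $(f,x)\in\cM$,
\[
c^{2,(t+1)}_f = d^{2,(t+1)}_x - d^2_{\!f\!x} = M d^{1,(t+1)}_x - M d^1_{\!f\!x} = M\,c^{1,(t+1)}_f.
\]
Since the iteration is known (by~\cite{Pryce01} and the discussion before this lemma) to reach a fixpoint in finitely many steps, and the two runs proceed in lockstep, they reach their respective fixpoints simultaneously, and the fixpoint of the second run is $M$ times that of the first. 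Because this algorithm is proved in~\cite{Pryce01} to return the elementwise smallest solution of Problem~\ref{liftuerhpituhepu8} (equivalently, of the dual LP (\ref{guiohpui})), these fixpoint values are exactly the offsets $c_f,d_x$ of the $\Sigma$-method, which yields $d^2_x = M\times d^1_x$ and $c^2_f = M\times c^1_f$ as claimed.

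I do not anticipate a serious obstacle here: the result is essentially the homogeneity of degree $1$ of the whole $\Sigma$-method construction in the edge weights, and the only facts needed beyond elementary observations are already available in the excerpt (invariance of the primal polytope under rescaling the objective, and the fact that Pryce's iteration computes the smallest dual solution in finitely many steps). The one point that deserves a careful sentence rather than a hand-wave is the uniqueness of the smallest solution: one should note that the maximum-weight matching $\cM$ used in step~\ref{elriuu} need not be unique, but since Pryce's theorem guarantees a \emph{unique} elementwise-smallest solution of (\ref{guiohpui}) independently of which optimal $\cM$ is chosen, the lockstep argument above — carried out with one common choice of $\cM$ — suffices to pin down that unique solution for both weight families. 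An alternative, slightly slicker route would be to bypass the algorithm entirely: observe that $(c_f,d_x)$ is feasible for the dual LP with weights $d^1$ if and only if $(Mc_f,Md_x)$ is feasible for the dual LP with weights $d^2$ (the constraints scale), and that the bijection $(c,d)\mapsto(Mc,Md)$ is monotone, hence maps the componentwise-least feasible point to the componentwise-least feasible point; this gives the conclusion directly from the characterization of the offsets as the smallest dual-feasible solution, without reference to the iteration. I would likely present the direct-from-the-LP version as the main argument and mention the algorithmic version as a remark.
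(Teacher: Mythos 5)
Your proof is correct, and it takes the same route as the paper, which does not actually spell out an argument at all: Lemma~\ref{rilugtlo} is asserted there to be ``an obvious consequence of the linearity of problems (\ref{erofuierhopiu}) and (\ref{guiohpui}).'' Your ``slicker'' direct observation --- that rescaling the edge weights by $M>0$ leaves the primal feasible polytope unchanged and rescales the dual-feasible region via the monotone bijection $(c,d)\mapsto(Mc,Md)$, hence carries the elementwise-least dual solution to the elementwise-least dual solution --- is exactly the linearity/homogeneity observation the authors have in mind, and your lockstep induction on Pryce's fixpoint iteration is a correct (if redundant) alternative derivation of the same fact.
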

A necessary and sufficient condition for Problem~\ref{liftuerhpituhepu8} to have a solution is that the set of complete matchings for $\cG$ is non-empty. We thus consider the  function
\beq
\success\gets\atomicact{ExistsMatching}(\cG)\,,
\label{ergoijpojii}
\eeq
which returns a Boolean $\success$ such that $\success=\ttt$ iff there exists a complete matching for the bipartite graph $\cG$ (regardless of its associated weights). In this case, step~\ref{elriuu} of the above algorithm is guaranteed successful (since the set of complete matchings for $\cG$ is nonempty and finite). In the sequel, we encode the above algorithm  as the function
\beq
(\bfc,\bfd)\gets\atomicact{FindOffsets}(\cG,\bfD)\,, &\mbox{where}& 
\left\{\bea{rr}
\bfc:&\hspace*{-2mm} F\ra\bN \\
\bfd:&\hspace*{-2mm} X\ra\bN \\
\bfD:&\hspace*{-2mm} F{\times}X\ra\bN  
\eea\right. \label{erilughroghwgouyg} 
\eeq
Having $\success=\ttt$ returned by $\atomicact{ExistsMatching}()$ is a prerequisite to  calling  $\atomicact{FindOffsets}()$, which is then guaranteed successful. 

\begin{algorithm}[ht]
	\caption{$\atomicact{SigmaMethod}$} 
	\label{alg:indexreduction}
 \begin{algorithmic}[1]
\Require $F$;
\Return $(\success,\consistency{F}_\Sigma,F_\Sigma)$
\State $\success\gets\atomicact{ExistsMatching}(\cG)$.
\If{$\success$}
\State $(\bfc,\bfd)\gets\atomicact{FindOffsets}(\cG,\bfD)$ \label{op:lghuoiygfori}
\State $(F_\Sigma,\consistency{F}_\Sigma)\gets\atomicact{LatentEqns}(\bfc,F)$
\label{op:oeriugheriopu}
\EndIf
	\end{algorithmic}
\end{algorithm}
We encapsulate these two functions as~\rref{alg:indexreduction} ($\atomicact{SigmaMethod}$). It requires a DAE system $F{=}0$. $\cG$ is the weighted bipartite graph of $F$ with weights $\bfD$. $\atomicact{SigmaMethod}$ returns a Boolean $\success$ and, if $\success=\ttt$, a pair $(\consistency{F}_\Sigma,F_\Sigma)$ defining an index-0 system $F_\Sigma{=}0$ and a (possibly underdetermined) system $\consistency{F}_\Sigma{=}0$ of consistency constraints following \rref{def:erlgtuioo}.
In \rref{op:oeriugheriopu}, $F_\Sigma$ collects the $\pprime{c_f}{f}$, for $f\in{F}$, and $\consistency{F}_\Sigma$ collects the $\pprime{k}{f}$, for $f\in{F}$ and $0\leq{k}<c_f$. This completes our background material on structural analysis.
%END\input{StructuralBackground}
%%%%%%%%%%%%%%%%%%%%%%%%%%%%%%%%%%%%%%%%%%%%%%%%%%%%%%%%%%%%%%%%%%%%%%%%%%%%%%%%%%%%%%%%%
%\input{StructuralmDAE}
\section{Structural analysis of multimode DAE systems}
\label{sec:loguhiuliugh}
In this section, we extend structural analysis, from (single-mode) DAE systems, to multimode DAE systems. As a prerequisite, we precisely define the class of multimode DAE systems considered. Since the structural analysis will actually be applied to the nonstandard semantics, we also need to define multimode \emph{difference Algebraic Systems} (\emph{dAE}), operating in discrete time.

\subsection{Defining multimode DAE/dAE systems}
\label{sec:mdAE}
In this section, the class of systems of multimode Differential/difference Algebraic Equations considered in this paper is formally defined. By analogy with DAE's, we use the acronym dAE to mean \emph{difference} algebraic equations. Notation $\postset{x}$ denotes the shift of $x$, as per \rref{defn:tshifts}.
\begin{notation} \rm
	\label{lerigftuheroiuof}
Let $\Vars$ be an underlying set of variables. For
$x{\in}\Vars$ and $m{\in}\bN$, the $m$-differentiation and $m$-shift of $x$ are denoted by
$\pprime{m}{x}$ and $\ppostset{m}{x}$, respectively.  
Let 
$\pprime{m}{\Vars}$ and $\ppostset{m}{\Vars}$ denote the set of all
$\pprime{m}{x}$ and $\ppostset{m}{x}$, for $x$ ranging over the set
$\Vars$ of variables. Let
\beq
\Vars^{(\prime)}\eqdef\bigcup_{m\in\bN}\pprime{m}{\Vars} \mbox{ and }
\Vars^{(\bullet)}\eqdef\bigcup_{m\in\bN}\ppostset{m}{\Vars} \enspace .
\label{erpfuhqf9}
\eeq
For $X{\subset}\Vars$, sets $\pprime{m}{X},\ppostset{m}{X},X^{(\prime)},X^{(\bullet)}$ are defined in a similar way.\eproof
\end{notation}
The following assumption will be in force in the remainder of this paper:
\begin{assumption}
	\label{liruyeliutyl} 
All the functions and predicates over $\Vars^{(\prime)}$ are assumed to involve only finitely many of the $\pprime{m}{x}$ for $x{\in}\Vars$ and $m{\in}\bN$. The same holds regarding $\Vars^{(\bullet)}$.
\end{assumption}
\subsubsection{Syntax and meaning}
\label{sec:loeriuehoriu}

		From the discussion in \rref{sec:language} regarding the Cup-and-Ball example, we know we need to type the modes as either \emph{long} or \emph{transient}, since the two situations require different index reduction techniques. This justifies the consideration of two different kinds of guarded equations.
\begin{definition}[\mDAE/\mdAE\ syntax]
\label{defn:mDAEattempt}
\emph{Multimode DAE systems (\mDAE{s})},
  respectively \emph{multimode dAE systems (\mdAE{s})}, are defined by the following syntax:
	\beq
	S&::=& s \mid S;s \nonumber \\
	s &::=& g \mid e \nonumber \\
g &::=& \guard=b \label{leirufghlouei} \\
	e&::=&~~~\;\wemph{\mid}\prog{if}\;\guard\;\doo\;f{=}0 \label{oui9hgpo} \\
&&\mid\prog{when}\;\guard\;\doo\;f{=}0 \label{werlituhwliu} 
\eeq
\end{definition}
\mDAE\ $S$ is a finite set of $s$; $s$ is either a \emph{guard evaluation} $g$ (\ref{leirufghlouei}), or a \emph{guarded equation} $e$, of two different kinds (\ref{oui9hgpo}) or (\ref{werlituhwliu}); $f$ is a smooth scalar
function over $\Vars^{(\prime)}$, and $b$ is a Boolean expression of predicates over $\Vars^{(\prime)}$. The same holds for the syntax of \mdAE, with $\Vars^{(\bullet)}$ replacing $\Vars^{(\prime)}$.
\begin{notation}\rm 
	\label{ilgftuerhogftuie} 
 Let $\eqq$ be a guarded equation. Its \emph{body} $f{=}0$ and  \emph{guard} $\guard$ are denoted by $f(\eqq)$ and $\guard(\eqq)$. For $S$ an \mDAE\ or \mdAE\ system, we denote by $X(S)$, $\Eqqs(S)$, and $\Guards(S)$ its sets of numerical variables, guarded equations, and guards, respectively. The mention of $S$ is omitted when it is clear from the context. Finally, for $e:\prog{if/when}\;\guard\;\doo\;f{=}0$ a guarded equation following \rref{defn:mDAEattempt}, we define
 \beq
 \postset{e}/\dot{e} &:& \prog{if/when}\;\guard\;\doo\;\postset{f}/\dot{f}{=}0
 \label{eprguiheguil}
 \eeq
 where $\postset{f}$ follows \rref{defn:tshifts}, and, for $E$ a set of equations, Notations~\ref{lerigftuheroiuof} are adapted so that $\pprime{m}{E},\ppostset{m}{E},E^{(\prime)},E^{(\bullet)}$ can be considered. We insist that, in (\ref{eprguiheguil}), the guard $\guard$ is left unchanged.
\eproof
\end{notation}
\paragraph{\bf Meaning} The meanings of \ifequation\ (\ref{oui9hgpo}) and \whenequation\ (\ref{werlituhwliu}) differ:
\begin{itemize}
	\item In the \ifequation\ (\ref{oui9hgpo}),  equation $f{=}0$
is enabled if and only if the guard $\guard$ holds.  Otherwise, this equation is
disabled.
	\item In the \whenequation\ (\ref{werlituhwliu}),  equation $f{=}0$
is enabled exactly at the events when guard $\guard$ switches from $\fff$ to $\ttt$. Otherwise, this equation is disabled.\eproof
\end{itemize}
For \mdAE\ systems, the formalization of the second statement is straightforward: equation $f{=}0$ is enabled when
\beq
\up{\guard} \mbox{ holds, where } \up{\guard}~\eqdef~(\prog{not}\;{\preset{\guard}})\;\prog{and}\;\guard \enspace .
\label{lwtguiohpui}
\eeq
For \mDAE\ systems: equation $f{=}0$ is enabled when
\beq
\up{\guard} \mbox{ holds, where } \up{\guard}~\eqdef~(\prog{not}\;{{\guard}^-})\;\prog{and}\;\guard
\label{epgiuhwligu}
\eeq
where $\guard^-$ denotes the left-limsup of $\guard$ at the current instant, defined by:
\[\bea{rll}
\guard^-(t) &\eqdef& \limsup_{s\nearrow{t}}\guard(s) \\ [1mm]
&=& \prog{if}\; \exists s_n\nearrow t: \guard(s_n){=}\ttt ~~ \prog{then}\; \ttt \; \prog{else}\; \fff
\eea\]
The left-limsup is always defined and coincides with the left limit when the latter exists.
\begin{ccomment}\rm 
	\label{guihui} Why bother with the two types of equations, since \whenequation{s} are mapped to \ifequation{s} by mapping $\guard$ to $\up{\guard}$? The motivation was given in \rref{sec:language}. We need to type the modes as either \emph{long} or \emph{transient}, since the two situations require different index reduction techniques. This is precisely what our \whenequation{s} guarantee, since, by construction, switching from $\fff$ to $\ttt$ occurs in a single nonstandard instant. \emph{Hence, by construction, \whenequation{s} are active at transient modes only.} Actually, this distinction between \ifequation{s} and \whenequation{s} already exists in object-oriented DAE-based languages such as Modelica. \eproof
\end{ccomment}
Regarding \ifequation{s}, we state the following assumption:
\begin{assumption} In \ifequation{s}, guard $\guard$ holds true for a (standard) positive duration.
	\label{erjtdfguiohu}  
\end{assumption}
It is the responsibility of the programmer to make sure that \ifequation{s} are properly used so that Assumption\,\ref{erjtdfguiohu} holds.

\subsubsection{Long versus transient modes}
\label{sec:oguihwpgui}
For our theoretical development and the presentation of algorithms, it will be convenient to simplify our syntax by replacing \whenequation{s} by \ifequation{s} using (\ref{lwtguiohpui}) and Assumption\,\ref{erjtdfguiohu}. Our syntax of \rref{defn:mDAEattempt} with arbitrary guards is thus replaced with the following (equivalent) syntax:
\begin{definition}[\mDAE/\mdAE~revisited]
\label{defn:mDAE}
\emph{Multimode DAE systems (\mDAE{s})},
  respectively \emph{multimode dAE systems (\mdAE{s})}, are defined by the following syntax:
	\beq
	S&::=& s \mid S;s \nonumber \\
	s &::=& g \mid e \nonumber \\
g &::=& \guard=b \nonumber \\
	e&::=&\prog{if}\;\guard\;\doo\;f{=}0 \label{eprioguhoiu} 
\eeq
where $b$ is a $\prog{\llong}/\prog{\transient}$-\emph{typed} Boolean expression of predicates over $\Vars^{(\prime)}$ (respectively $\Vars^{(\bullet)}$).
\end{definition}
From now on, and unless otherwise specified, we will use the modified syntax of \rref{defn:mDAE}. The notion of mode is then easily introduced.

\begin{definition}[mode]
	\label{defn:egfuiher} For $S$ a \mDAE\ or \mdAE, a \emph{mode} is a valuation, in $\{\fff,\ttt\}$, of all of its guards $\guard\in\Guards(S)$. The mode is called \emph{\transient} if at least one guard typed {\tt transient} takes the value $\ttt$ in it; otherwise, it is called \emph{\llong}. Modes will be generically denoted by the symbol $\mode$.
\end{definition}
 A mode enables a subset of the equations $f{=}0$ and disables the other ones.

\myparagraph{Nonstandard semantics}
An {\mDAE} $S$ is transformed to a (nonstandard) {\mdAE} $\nst{S}$ through the following syntactic transformation:
\beq\bea{rcl}
 \nst{S} & \eqdef & S\bigl[x' \; \mbox{is replaced with} \; \frac{\postset{x}-x}{\vsmall}\bigr]
\eea
\label{whgp45g8hseluio}
\eeq 
Since we perform the structural analysis on the nonstandard semantics, we must develop a structural analysis for \mdAE\ systems, operating in discrete time. Therefore, in the rest of this section, we work in nonstandard semantics.\eproof

\begin{ccomment}{\bf (Progressive evaluation of the modes)}\rm 
	\label{guiohpiouh} 
It may be that some guards have their value unknown when the execution of the current instant starts; this was, for example, the case with the Cup-and-Ball example in its first form given in Section~\ref{leriugfehliu}. We can use the known enabled equations to evaluate part of the variables, hoping that some of the yet unknown guards get their value defined. This yields more equations getting enabled/disabled, allowing for more guards being evaluated. If this process leads to the evaluation of all guards, then our execution scheme succeeds; otherwise, we report a failure due to a logico-numerical fixpoint equation, as was the case in Section~\ref{leriugfehliu}.

This progressive evaluation of guards becomes problematic when both long and transient modes are considered. Indeed, different structural analyses are used for each case (see \rref{sec:lerighugholuih} for the long mode case of the Cup-and-Ball example, versus~\rref{sec:leiruioepui} for the transient mode case). Unfortunately, we do not know which one should be used if the mode is only partially known, since the  long/transient qualification is only known when all guards have been evaluated. 
Consequently, we have to choose between the following two incompatible features for our compilation method:
\begin{enumerate}
	\item either we support the progressive evaluation of modes, but are unable to handle transient modes; 
	\item or we support both long and transient modes, but we do not support the progressive evaluation of modes.
\end{enumerate}
Since many physically meaningful models involve transient modes, our opinion is that the second option is much more useful in practice, hence it is adopted in our work.\eproof
\end{ccomment}
Comment~\ref{leriuhlu}, regarding the Cup-and-Ball example, discussed what we miss by favoring the second option. 
Indeed, delaying the effect of guards was the appropriate correction to remove logico-numerical fixpoint problems in the Cup-and-Ball example. Remember that the so introduced delay is infinitesimal, so this does not really matter.\footnote{This is unlike Synchonous Languages for discrete time systems, in which the restriction resulting from adopting \rref{defn:mDAEmodified} would be significant.}

Therefore, throughout the rest of this work, \rref{defn:mDAE} is modified to restrict ourselves \emph{by the syntax} to systems in which all guards can be evaluated at the initialization of each nonstandard instant:\footnote{With this new syntax, the original forms for the Cup-and-Ball and Westinghouse examples (given in the introduction) could not be expressed.}
\begin{definition}[\mdAE\ modified]
\label{defn:mDAEmodified}
\emph{Multimode dAE systems (\mdAE{s})}, are defined by the following syntax:
	\beq
	S&::=& s \mid S;s \nonumber \\
	s &::=& g \mid e \nonumber \\
g &::=& \postset{\guard}=b \label{leriughou} \\
	e&::=&\prog{if}\;\guard\;\doo\;f{=}0 \nonumber 
\eeq
where $b$ is a $\prog{\llong}/\prog{\transient}$-\emph{typed} Boolean expression of predicates over $\Vars^{(\bullet)}$.
\end{definition}
\begin{notation}\rm 
	\label{lguihuio} 
Referring to (\ref{leriughou}), we write $\bexp(\guard)$ to refer to the expression defining $\postset{\guard}$.\eproof
\end{notation}
From now on, by \mdAE, we will mean a system of the form given in \rref{defn:mDAEmodified}.

\subsection{Structural analysis: intuition}
\label{seoiguhweroui}
In this section, we give an intuition about how to extend the structural analysis from dAE systems to mdAE systems. To simplify our explanation, 
\emph{we consider the restricted case in which both modes (previous and current) are long.}
Also, it will be convenient to assume, for the two modes before and after the change, that we have two index-$1$ models, of the form
\beqq
S_i:
\left\{\bea{rcl}
\dot{X_i}&=&F_i(X_i,Y_i) \\ 0&=&H_i(X_i)
\eea
\right., \mbox{ for }i=0,1\,,
%\label{rtlgihsfoih}
\eeqq
where indices $0$ and $1$ refer to the system before and after the change. Observe that system state $X_i$ may collect different variables before and after change, and so does the algebraic tuple $Y_i$. Of course, the dynamics $F_i,H_i$ are subject to changes as well. 
Let $\Jacobian_i\eqdef\Jacobian_{\!X_i}{H_i}$ denote the Jacobian of $H_i$ w.r.t. $X_i$. Hence, the system augmented with its latent equations 
\beqq
S_{i\Sigma}:
\left\{\bea{rcl}
\dot{X_i}&=&F_i(X_i,Y_i) \\ 0&=&H_i(X_i) \\ 0&=&\Jacobian_i(X_i)\dot{X_i}
\eea
\right., \mbox{ for }i=0,1 \enspace ,
\eeqq
seen as a system of algebraic equations with $\dot{X_i},Y_i$ as dependent variables,
is structurally nonsingular. Equivalently, we may replace, in this system, the differentiation of the algebraic constraint, $\frac{d}{dt}H_i(X_i)$, by its shifting $H_i(\postset{X_i})$ (their bipartite graphs are equal):
\beq
S_{i\Sigma}:
\left\{\bea{rcl}
\dot{X_i}&=&F_i(X_i,Y_i) \\ 0&=&H_i(X_i) \\ 0&=&H_i(\postset{X_i})
\eea
\right., \mbox{ for }i=0,1 \enspace .
\label{rtlgihsfoih}
\eeq
Form~(\ref{rtlgihsfoih}) can be seen as the result of applying Pantelides' or Pryce's structural analysis in each mode, before and after change. 
\begin{ccomment}[key idea]
	\label{erkuiyui}\rm 
{\emph{Our key idea is to handle the mode change as a task of reconciling the possible conflict between the predictions generated by the previous mode and the consistency conditions associated to the new mode.}}
Indeed, at the instant of mode change, there is a possible conflict between:
\begin{itemize}
	\item on the one hand, the value for $X_0$ predicted by model $S_{0\Sigma}$ which sets the value of the derivative $\dot{X_0}$ just before the change;
	\item on the other hand, the consistency constraints on the pair $X_1,Y_1$ resulting from model $S_{1\Sigma}$ after the change.\eproof
\end{itemize}
\end{ccomment}
To perform this reconciliation, we move to the nonstandard semantics by \emph{syntactically replacing}, in (\ref{rtlgihsfoih}), the derivatives $\dot{x}$ by their first-order Euler approximations with infinitesimal time step $\vsmall$:
\beqq
\dot{x}(t) \underbrace{\gets}_{\rm is\;replaced\;by} \frac{{x}(t+\vsmall)-x(t)}{\vsmall} = \frac{\postset{x}(t)-x(t)}{\vsmall} \enspace .
\eeqq
In addition, we replace, in~(\ref{rtlgihsfoih}), the derivative $\frac{d}{dt}H_i(X_i)=\Jacobian_i(X_i)\dot{X_i}$ by the time shift $H_i(\postset{X_i})$. This substitution is legitimate, since the bipartite graph of the system is left unchanged by performing this replacement.

Performing the above substitutions in (\ref{rtlgihsfoih}) and sampling it on the nonstandard discrete time basis $\bT{=}\{0,\vsmall,2\vsmall,\dots\}$ yields a discrete-time two-modes system. At the instant of mode change, we inherit the following system of algebraic equations:
\beq
S &=& \underbrace{~\preset{S}_{0\Sigma}~}_{\rm previous\;instant} \bigcup \underbrace{~S_{1\Sigma}~}_{\rm current\;instant}
\label{ekerughlruigth}
\eeq
where 
\begin{equation}
\preset{S}_{0\Sigma}:
\left\{\bea{rcl}
\frac{X_0-\preset{X_0}}{\vsmall}&=&F_0(\preset{X_0},\preset{Y_0}) \\ 0&=&H_0(\preset{X_0}) \\ 0&=&H_0(X_0)
\eea
\right. ~,~ 
{S}_{1\Sigma}:
\left\{\bea{rclc}
\frac{\postset{X_1}-{X_1}}{\vsmall}&=&F_1({X_1},{Y_1}) & (\eqq_0) \\ 0&=&H_1({X_1})  & (\eqq_1) \\ 0&=&H_1(\postset{X_1})  & (\postset{\eqq_1}) 
\eea
\right.
\label{lerigtuerhpiuwh}
\end{equation}
and $\cup$ refers to the union of sets  of equations. For each state variable $x$ that is shared between the two systems, i.e., $x\in{X_0}\cap{X_1}$, the value for $x$ must be identical in $\preset{S}_{0\Sigma}$ and ${S}_{1\Sigma}$.
To avoid duplicates of a same equation while taking the union of previous and current systems in $S$, 
	 \emph{we identify previous and current equations that are identical in their syntax, if any.} 
The dependent variables of $S$ are the $X_0$ inherited from $S_{0\Sigma}$ and the $X_1,Y_1,\postset{X_1}$ inherited from $S_{1\Sigma}$. When seen as sets of variables, $X_0$ on the one hand, and $X_1,Y_1,\postset{X_1}$ on the other hand, possess in general a nonempty intersection.
Being the result of a structural analysis, system ${S}_{1\Sigma}$ is:
\begin{itemize}
	\item structurally nonsingular, when seen as a system of algebraic equations in its leading variables $Y_1,\postset{X_1}$, for given values for $X_1$; and
	\item possibly structurally underdetermined, when 
seen as a system of algebraic equations in all its variables $X_1,Y_1,\postset{X_1}$ (ensuring consistent initialization).
\end{itemize}
Focus on $S_{1\Sigma}$ in decomposition (\ref{ekerughlruigth}). Denote by $\Phi$ the subsystem of $S_{1\Sigma}$ collecting all its equations that also belong to $\preset{S}_{0\Sigma}$, and are thus already solved. 

The remaining system $S_{1\Sigma}\setminus\Phi$ must then be solved for its dependent variables, which consist of the subset of $X_1\cup Y_1\cup \postset{X_1}$ collecting all variables not yet determined by $\preset{S}_{0\Sigma}$. 

To prepare for this, we submit $S_{1\Sigma}\setminus\Phi$ to the Dulmage-Mendelsohn (DM) decomposition, see Definition~\ref{elrftuierhfperui}: $(\block_\underapprox,\block_\squared,\block_\overapprox)\gets\mathrm{DM}(S_{1\Sigma}\setminus\Phi)$. If ${\block}_\underapprox={\block}_\overapprox=\emptyset$, then $S_{1\Sigma}\setminus\Phi$ is structurally nonsingular by Lemma\,\ref{erlihpfiu} and we solve it. Otherwise, the following  can occur:

\myparagraph{Case ${\block}_\underapprox\neq\emptyset$} This points to an underspecified subsystem at the instant of mode change. This information is returned to the designer for a correcting action.

\myparagraph{Case ${\block}_\underapprox=\emptyset$ and ${\block}_\overapprox\neq\emptyset$} This points to a conflict between $\preset{S}_{0\Sigma}$ and $S_{1\Sigma}$ and is handled differently. We solve this conflict by applying Principle~\ref{oeuryfgy} of causality. Accordingly, the equations belonging to ${\block}_\overapprox$ are removed from the consistency equations $(\eqq_1)$ of the new system $S_{1\Sigma}$. We thus keep the system $S^\downarrow_{1\Sigma}\eqdef\{(\eqq_0),(\eqq_1)\setminus{\block}_\overapprox,(\postset{\eqq_1})\}$. 
By Lemma~\ref{togiethgouiho}, since the system $\{(\eqq_0),(\postset{\eqq_1})\}$ is structurally nonsingular, performing the DM decomposition on system $S^\downarrow_{1\Sigma}$ returns a triple with an empty overdetermined block.
Hence, the so obtained reduced system $S^\downarrow_{1\Sigma}$ is structurally regular and  ready to be solved. 
\begin{ccomment}\rm
	\label{trlguihp} 
If the two systems ${S}_{i\Sigma},i=0,1$ are identical (there is no mode change), then $X_1=X_0,Y_1=Y_0,F_1=F_0$, and $H_1=H_0$, so that 
$\Phi$ consists of the consistency subsystem $0=H_1(X_1)$, see the definition of $S_{1\Sigma}$ in (\ref{lerigtuerhpiuwh}). Hence, $S_{1\Sigma}\setminus\Phi$ coincides in this case with the structurally nonsingular system obtained by performing index reduction (see \rref{def:erlgtuioo}), and we are left with the usual structural analysis for DAE.

The reconciliation may take several nonstandard instants following the mode change, where less and less equations from ${S}_{1\Sigma}$ are erased when applying DM, until we reach the steady regime. The interpretation of the erasure of some equations from ${S}_{1\Sigma}$ is that their satisfaction gets postponed for a few instants. Recall that we are in nonstandard semantics, so that all of this will occur in a total zero duration, in standard real time.
\end{ccomment}
In the rest of this section, we formalize this intuition and generalize it.

\subsection{Structural Analysis of long modes}
\label{sec:csemantics}
In this section we develop the structural analysis of \mdAE\ systems for the subclass of systems possessing only long modes. Within each long mode, we simply reuse the existing structural analysis based on the \sigmamethod. The only difficulty sits in the mode changes, for which we use conflict reconciliation as explained above. Throughout this section, we work in the nonstandard semantics as given by Equation~(\ref{whgp45g8hseluio}).

\subsubsection{Constructive Semantics}
\label{eproguihpeghp}
We reuse the techniques of \emph{constructive semantics,} first introduced in the context of reactive synchronous programming languages~\cite{Berry96,BenvenisteCG00,BenvenisteCEHGS03,EsterelBook}, for the purpose of grounding compilation on solid mathematical foundations. 
For synchronous languages, the execution of a program proceeds through successive \emph{reactions}, by which discrete time progresses from the current instant to the next one. The compilation of a program consists in generating the code for executing a reaction. This task is formalized through the notion of \emph{constructive semantics}, which relies on the following two pillars:
\begin{enumerate}
\item The specification of the set of \emph{{atomic actions}}, which are effective, non-interruptible, state transformations. For synchronous languages, atomic actions consist of: $({a})$ the evaluation of a single expression, and $({b})$ control flow actions. Executing an instance of an atomic action is referred to as performing a \emph{micro-step}. 
\item The \emph{causality analysis} of the reaction, which is a partial order on the set of micro-steps, abstracting the dependencies between them. A scheduling of the micro-steps is correct if and only if it complies with the causality analysis.
\end{enumerate}
The \emph{constructive semantics} consists in decomposing a reaction as the symbolic execution of a sequence of micro-steps subject to causality constraints. 
Intermediate stages of this execution are represented by the `status' of each variable, belonging to the abstract alphabet \{{not\,evaluated}/{evaluated}\}. The assignment of a status to all variables is called a \emph{configuration}. Such executions are monotonic, in that the so produced sequence of configurations is increasing with respect to the product order derived from the order \emph{not\,evaluated}$\,<\,$\emph{evaluated} for each variable. The constructive semantics is complete if it terminates by having all variables evaluated. Failure occurs if an incomplete sequence cannot be extended.
This approach is suited to \mdAE\ systems, since it will provide the formal support for correctly chaining the different actions composing the structural analysis, and describing the resulting increase in knowledge on all involved equations, variables and guards.
For {\mdAE} systems, the \emph{atomic actions}, however,  consist of: 
\begin{itemize}
\item[$({a})$]  the evaluation of a guard; 

\vspace*{-2mm}

\item[$({b})$]  manipulations of systems of equations: adding latent equations or deleting conflicting equations at mode changes; and 

\vspace*{-2mm}

\item[$({c})$] solving a block of algebraic equations.
\end{itemize}
Actions $(a)$ and $(b)$ participate in the structural analysis, whereas $(c)$ is the runtime solving of systems of equations to evaluate the leading variables.
To keep this evaluation symbolic, we abstract away the values of numerical variables and regard the values of guards as oracles, so that all the possible valuations must be explored.

\paragraph{Statuses and Contexts}
\label{wpgtuihhj}
The different statuses of the guards, variables and equations of an \mdAE~are the following: an evaluated guard may be {\it true} or {\it false};
 a variable may be {\it not evaluated}, or {\it evaluated};
 an equation may be {\it not evaluated}, {\it disabled} (its guard is false), 
or {\it solved}. This is summed up in Table\,\ref{ouilfhl}. 
There is no need to consider the status {\it not evaluated} for a guard, since guards can be evaluated at the initialization of each nonstandard instant, by \rref{defn:mDAEmodified}.
\begin{table}[ht]
\begin{center}
\begin{tabular}{|r|c|c|c|c|}\hline
	 & \irrelevant &\undef &\fff &\ttt \\ \hline
	 guard & irrelevant &  & evaluated to $\fff$ & evaluated to  $\ttt$ \\ \hline
	 variable & irrelevant & not evaluated &  & evaluated \\ \hline
	 equation & irrelevant & not evaluated & disabled & solved \\ \hline
\end{tabular}
\end{center}
\caption{
The partially ordered domain $\dom = \{\irrelevant,\undef,\fff,\ttt\}$ and its interpretation for guards, variables, and equations.}
\label{ouilfhl}
\end{table}

Unlike for single-mode DAEs, the set of equations defining the dynamics of an \mdAE\ is mode-dependent. This information should therefore be traced as part of the constructive semantics of \mdAE\ systems. To capture this, we tag as \emph{irrelevant} in the considered mode all the variables, guards, and equations that are not used to define the dynamics of the \mdAE\ system in this mode.   
Formally, the domain of statuses is the partially ordered set $(\dom,<)$ defined by
\beq
\dom = \{\irrelevant,\undef,\fff,\ttt\} &\text{and}& \irrelevant \; < \; \undef \;<\; \fff,\ttt \enspace .
\label{eq:D}
\eeq
The interpretation of this domain for variables, equations and guards, is summarized in Table~\ref{ouilfhl} and detailed next:
\begin{itemize}
\item The minimal element $\irrelevant$ is used to represent the fact
  that a variable, a guard, or an equation is \emph{irrelevant}.
\item Value $\undef$ means that a variable or equation has \emph{not
  been evaluated yet}. At the beginning of a time step, only state
  variables are known, and all other variables are set to $\undef$,
  reflecting that their numerical values are not known yet.
\item Maximal element $\ttt$ has different meanings, depending on
  whether it applies to a variable, a guard or an equation. In the
  case of a variable, it means that the numerical value of the
  variable has been computed, whatever it could be. For a guard, it
  means that the guard has been evaluated to true. For an equation, it
  means that the equation has been solved.
\item Maximal element $\fff$ also has different meanings, depending on
  whether it applies to a guard or an equation. In the context of a
  guard, it means that the guard has been evaluated to false. When it
  applies to an equation, it means that the equation is
  disabled, i.e., its guard is false. This value does not apply to variables.
\end{itemize}
We call \emph{status} the assignment, to each guard, variable, and equation, of a value over $\dom$. It is formally defined as follows:

\begin{definition}[status]\label{def:status} Let $S$ be an \mdAE\ system. Its set $\bbV$ of \emph{\ScottVars}\footnote{The prefix ``\textbf{S}'' is reminiscent of constructive \textbf{S}emantics.}  is
\beq
\bbV &\eqdef&  \Guards \cup X^{(\bullet)} \cup E^{(\bullet)}
  \label{kldfykl76}
  \eeq
(using Notations~\ref{lerigftuheroiuof} and~\ref{ilgftuerhogftuie}). A \emph{status} $\status$ of $S$ is a valuation in $\dom$ of the
\ScottVars, that is, a mapping $\status:\bbV\rightarrow \dom$.
A status is called \emph{finite} if it is
equal to $\irrelevant$, except for a finite number of \ScottVars.
The set of statuses is partially ordered by the product order:
$\sigma_1 \leq \sigma_2$ if and only if  for all $v\in\bbV$, $\sigma_1(v) \leq
\sigma_2(v)$.
\end{definition}
We denote by $\irrelevant\nearrow\undef$, $\undef\nearrow\fff$, and $\undef\nearrow\ttt$ the increasing changes of status for a given \ScottVar.

\myparagraph{Enabled Equations and Leading Variables}
Let $\status$ be a status. 
Equation $\ppostset{m}{\eqq}$ is \emph{enabled in} $\status$ (respectively \emph{disabled in} $\status$) if and only if $\status(\guard)=\ttt$ (respectively $\status(\guard)=\fff$), where $\guard$ is the guard of $\ppostset{m}{\eqq}$, see (\ref{eprguiheguil}).
Denote by  
$$\enbld{\eqq}{\status} \mbox{ (respectively  $\disbld{\eqq}{\status}$)}$$
 the set of equations that
are enabled (respectively disabled) in $\status$. Recall that, for any finite status $\status$,
these sets are finite. 
The set $\enbld{\eqq}{\status}$ defines the DAE system that is enabled at status $\status$. The \emph{leading variables} in status $\status$ are defined, with reference to this enabled DAE system, following the introduction of \rref{sec:structural}. 
We define the set
$$\bea{rcl}
\ndef{\status} &\eqdef& \{ v\in \bbV \mid \sigma(v) \leq \undef \}
\eea
$$ 
collecting the \ScottVars\ that are 
either irrelevant or not evaluated in status $\status$. We are now ready to introduce micro-steps, and runs as finite sequences of micro-steps.
\begin{definition}[micro-step]
	\label {ilugfehoriur} 
A \emph{micro-step} is a transformation, via some atomic action, of a status $\status$ to a status $\status'$ by updating the values of a finite subset of
\ScottVars, from $\irrelevant$ to $\undef$, or from $\undef$ to $\ttt$ or $\fff$. 
\end{definition}
A run is a finite sequence of micro-steps. It is complete when every equation is either irrelevant, disabled, or solved, and all the enabled leading variables are evaluated. 
Formally (the reader is referred to \rref{def:status} for the notions used):
\begin{definition}[runs]
\label{defn:success} 
Given a finite initial status $\status_0$,
a \emph{run} of \mdAE\ system $S$ is a finite increasing sequence of statuses
\begin{equation}
\status_0<\status_1<\dots<\status_k<\status_{k+1}<\dots<\status_\eend
\end{equation}
such that, for every $k<\eend$, $\status_{k+1}$ is obtained from $\status_k$ by performing a micro-step.
The final status $\status_\eend$ is called \emph{{\ssuccessful}} if, in $\status_\eend$, all equations $\eqq_i$ have either the value $\ttt$ or $\fff$ and no leading variable has the value $\undef$.
A run $\status_0 <  \ldots < \status_\eend$ of \mdAE\ system $S$ is  \emph{{\ssuccessful}}
if and only if its final status $\status_\eend$ is \ssuccessful.
\end{definition}
\myparagraph{Handling conflicts between past and present}
To handle the conflicts between the previous and current instants when a mode change occurs, we will need to keep track of the possible sources of conflict between previous and current instants. If equation $\ppostset{(m+1)}{\eqq},m{\geq}0$, was solved at the previous instant (meaning that $\status(\ppostset{(m+1)}{\eqq})=\ttt$ for the final status of that instant), then $\ppostset{m}{\eqq}$ is a possible source of conflict at the current instant. The set of all such equations is called the \emph{context} at the current instant. All the variables involved in the context are already evaluated at the initial status of the current run (value $\ttt$). 

If, however, equation $\ppostset{m}{\eqq}\in\Delta$ is also enabled at status $\status$ of the current instant, then  $\ppostset{m}{\eqq}$ is no longer a source of conflict, but rather a satisfied consistency equation. We call it a \emph{fact} at $\status$. Formally:
\begin{definition}[contexts and facts]
	\label{eporgfuieropiufl} 
	Let $\system$ be an \mdAE\ system. Its \emph{context} $\Delta$ at the current instant, and \emph{set of facts} $\Phi(\status)$ at the current status, are defined as follows:
\[\bea{l}
\Delta =
\left\{
\ppostset{m}{\eqq}
\mid
\emph{the final status of $\ppostset{(m+1)}{\eqq}$ at previous instant is $\ttt$}
\right\}
\\ [1mm]
\Phi(\status)= \Delta\cap\enbld{\eqq}{\status}
\eea\]
By abuse of notation, we will also denote by $\Delta$ and $\Phi$ the sets of functions $f(e)$, for $e$ ranging over $\Delta$ and $\Phi$.
\end{definition}
To prepare for the algorithm computing a run for given initial status $\status_0$ and context $\Delta$, we now collect a few useful auxiliary algorithms. 

\subsubsection{Auxiliary algorithms}

\paragraph{$\atomicact{Tick}$ algorithm \emph{(\rref{alg:rltoghiuiop})}} When a run is {\ssuccessful}, the system can proceed to the next time step by executing the algorithm $\atomicact{Tick}$. $\atomicact{Tick}$ requires a complete (final) status $\status$ and returns a pair $\left(\wpostset{\status},\wpostset{\Delta}\right)$ consisting of an initial status $\wpostset{\status}$ and a context $\wpostset{\Delta}$, for use as initial conditions for the next run.
\begin{algorithm}[ht]
\caption{$\atomicact{Tick}$}
\label{alg:rltoghiuiop}
	\begin{algorithmic}[1]
	\Require $\status$, a complete status; 
	\Return pair $\left(\wpostset{\status},\wpostset{\Delta}\right)$.
	\State $
\atomicact{Tick}(\status) \eqdef \left(\wpostset{\status},\wpostset{\Delta}\right)$,
where:
\begin{eqnarray*}
 \wpostset{\status}({\guard}) & = & \mbox{\bf if} \; {\guard} \; \mbox{is involved in} \, S \;\mbox{\bf then} \; \status({\bexp(\guard)}) 
                                          \; \mbox{\bf else} \; \irrelevant \\ 
 \wpostset{\status}\bigl(\ppostset{m}{x}\bigr) & = & \mbox{\bf if} \; \status\bigl(\ppostset{m+1}{x}\bigr) = \ttt \; \mbox{\bf then} \; \ttt \;\mbox{\bf else} 
\\&& \mbox{\bf if}\; \ppostset{m}{x} \; \mbox{is a variable of} \, S \; \mbox{\bf then} \; \undef\;  \mbox{\bf else} \; \irrelevant \\ 
\wpostset{\status}\bigl(\ppostset{m}{\eqq}\bigr) & = & \mbox{\bf if} \; \ppostset{m}{\eqq} \; \mbox{is a variable of} \, S \;\mbox{\bf then} \; \undef \; \mbox{\bf else} \; \irrelevant \\ 
\wpostset{\Delta} & = & \left\{ \ppostset{m}{\eqq} \left|\,\eqq{{\in}}\Eqqs, m{\in}\bN: \status\bigl(\ppostset{m+1}{\eqq}\bigr){=}\ttt \right. \right\}
\end{eqnarray*}
\end{algorithmic}
\end{algorithm}

Performing $\atomicact{Tick}$ has the effect of
shifting backward defined variables, and setting all other \ScottVars\ $v\in
\bbV$ to either $\undef$, if $v$ is relevant to the \mdAE\ $S$, or $\irrelevant$ (irrelevant) otherwise. Guards are set either to $\irrelevant$ or to the value $\status({\bexp(\guard)})\in\{\fff,\ttt\}$, using Notations~\ref{lguihuio} and the definition (\ref{leriughou}) for $\postset{\guard}$. Thus, the value of all guards at the next instant is set.
The new context is defined to be the set of equations that are
known to be satisfied in the next instant. $\atomicact{Tick}$ is not increasing and it does not have to be, since it applies when moving to the next time step.

\paragraph{$\atomicact{IndexReduc}$ algorithm \emph{(\rref{alg:lkciyughsliu})}} 
Given a system $G$ with its set of leading variables determined by the status $\status$, the $\atomicact{IndexReduc}$ algorithm is a renaming of the  \sigmamethod\ (\rref{alg:indexreduction}), together with a refined interpretation of its result. \emph{It is understood that we use the discrete-time adaptation of it, in which forward shift replaces differentiation but everything else remains unchanged.} 
\begin{algorithm}[ht]
	\caption{$\atomicact{IndexReduc}$}
	\label{alg:lkciyughsliu} 
\begin{algorithmic}[1]
	\Require $(G,\status)$; 
	\Return $(\status,b,G_\Sigma,\consistency{G}_\Sigma)$
	  \State $(b,G_\Sigma,\consistency{G}_\Sigma)\gets\atomicact{SigmaMethod}(G)$ with dependent variables set by $\status$; increase $\status$
      \label{op:trguiohpuih}
      \If{$b$} \label{op:lerigfueoiu}
       \Return $(G_\Sigma,\consistency{G}_\Sigma)$
      \EndIf
\end{algorithmic}
\end{algorithm}

Recall that this algorithm requires a square system $G{=}0$ and returns a success/failure flag $\success$, as well as, in case of success, a pair consisting of $G_\Sigma{=}0$, a structurally nonsingular system determining some leading variables, and the system $\consistency{G}_\Sigma{=}0$ collecting the consistency equations. As a consequence, the status of every latent equation added by the \sigmamethod\ is updated as $\irrelevant\nearrow\undef$.

\paragraph{$\atomicact{SolveConflict}$ algorithm \emph{(\rref{alg:elsgouihuip})}} 
This algorithm requires a system ${K}{=}0$ and a status $\status$ for all variables involved in $K$. System $K$ is submitted to the DM decomposition (\rref{op:roghuphpu}); in doing so, the dependent variables are those with status $\undef$. Possible conflicts are collected in $\block_\overapprox$. If the latter is nonempty,
corresponding equations are deleted (\rref{op:welriuwou}). The status is updated as $\undef\nearrow\fff$ for every removed equation.
The so reduced system is again submitted to the DM decomposition (\rref{op:ogiulwopaifugh}), which, by Lemma~\ref{togiethgouiho}, 
will return an empty overdetermined block: conflicts are removed. If block $\block_\underapprox$  is nonempty, then system ${K}$ is underdetermined. Therefore, $b=\fff$ is returned (\rref{op:eriughforui}). Otherwise, $b=\ttt$ and $H=\block_\squared$ are returned (\rref{op:erpigftuhpiou}).

\begin{algorithm}[h]
	\caption{$\atomicact{SolveConflict}$}
	\label{alg:elsgouihuip} 
\begin{algorithmic}[1] 
	\Require $(K,\status)$; 
	\Return    $(\status,b,H)$
	  \State
$(\block_\underapprox,\block_\squared,\block_\overapprox)\gets\mathrm{DM}({K})$ with dependent variables set by $\status$ \label{op:roghuphpu}
      \If{$\block_\overapprox\neq\emptyset$} \label{op:erituheoiut}
      \State  $K\;\gets\;K\setminus\block_\overapprox$\,;~ increase $\status$
      \label{op:welriuwou}
      \State
$(\block_\underapprox,\block_\squared,\emptyset)\gets\mathrm{DM}({K})$ with dependent variables set by $\status$ \label{op:ogiulwopaifugh} 
      \EndIf
\State  {$b\gets[\block_\underapprox=\emptyset]$}
       \label{op:eriughforui}
       \If{$b$} $H\gets\block_\squared$
      \EndIf
\label{op:erpigftuhpiou} 
\end{algorithmic}
\end{algorithm}

\paragraph{$\atomicact{Eval}$ algorithm} 
\beq
\status\gets\atomicact{Eval}(H,\status)
\label{toihjiurh}
\eeq
This algorithm requires a status $\status$ and a structurally regular system $H$, enabled at $\status$. It evaluates $H$ for its dependent variables, which sets the values of the leading variables. The corresponding change in status is $\undef\nearrow\ttt$ for both the variables that are evaluated and the equations that are solved.

\subsubsection{Executing a nonstandard instant}
\label{elguehletiughi}
\begin{algorithm}[ht] 
\caption{$\atomicact{ExecRun}$ 
}\label{alg:newmain}
\begin{algorithmic}[1]
   \Require $(\status,\Delta)$; 
   \Return  $(\mathit{Fail},(\wpostset{\status},\wpostset{\Delta}))$ 
      \State $\Phi\gets\{\eqq\in\Delta\mid\status(\guard(e))=\ttt\}$ \label{op:newinitfacts} 
      \State $G \gets [\enbld{\eqq}{\status} \cap \ndef{\status}]$   \label{op:newenabledeqs}
      \State $(\status,b,G_\Sigma,\consistency{G}_\Sigma)\gets\atomicact{IndexReduc}(G,\status)$; increase $\status$
      \label{op:porihujrpoi}
      \If{$\neg b$ \label{op:oesriughoiyu}} \Return\emph{Fail}$(\status)$ 
      \Else \label{op:leriuyliufiu}
      \State $(\status,b,H)\gets\atomicact{SolveConflict}((G_\Sigma\cup\consistency{G}_\Sigma)\setminus\Phi,\status)$; increase $\status$
      \label{op:elkfuiehoeuio}
      \If{$\neg b$} \label{op:lpruightrlu}
      \Return\emph{Fail}$(\status)$ 
      \Else 
      \State $\status\gets\atomicact{Eval}(\status,H)$; increase $\status$
       \label{op:weorpifuhfpeiu} 
      \State $(\wpostset{\status},\wpostset{\Delta})\gets \atomicact{Tick}(\status)$ \label{op:newtick}
      \EndIf
      \EndIf
\end{algorithmic}

\end{algorithm}
$\atomicact{ExecRun}$ requires a finite status $\status$ and a finite context $\Delta$ to serve as initial conditions for the run. It returns, either a documented \emph{Fail}, or a final value for status $\status$, as well as values for $\wpostset{\status}$ and $\wpostset{\Delta}$ to serve as initial conditions for the next instant. A line-by-line description of $\atomicact{ExecRun}$ is given next.
\begin{description}
\item[{\rref{op:newinitfacts}}] Not all the equations belonging to the context are active in the current instant. The active ones are collected in the set $\Phi$ of \emph{facts}.

\item[{\rref{op:newenabledeqs}}]
System $G$ is set to the enabled guarded
equations that are not evaluated yet in status $\sigma$. 

\item[{\rref{op:porihujrpoi}}] We submit $G$ to index reduction ($\atomicact{IndexReduc}$ algorithm, \rref{alg:lkciyughsliu}). 

\item[{\rref{op:oesriughoiyu}}] If $\atomicact{IndexReduc}$ returns $b{=}\fff$, then $\atomicact{ExecRun}$ returns $\mathit{Fail}(\status)$  and stops.

\item[\rref{op:leriuyliufiu}] Otherwise, $\atomicact{ExecRun}$ returns	a pair consisting of $G_\Sigma{=}0$, a structurally nonsingular system determining some leading variables, and the system $\consistency{G}_\Sigma{=}0$ collecting consistency equations. In this case, the algorithm can further progress.

\item[\rref{op:elkfuiehoeuio}] Taking the set $\Phi$ of facts into account, the possible conflict is solved using the $\atomicact{SolveConflict}$ algorithm (\rref{alg:elsgouihuip}). When applying $\atomicact{SolveConflict}$, the dependent variables of $(G_\Sigma\cup\consistency{G}_\Sigma)\setminus\Phi$ include all the variables (both leading and non-leading) of the current system that have status $\status=\undef$, i.e., whose value was not set by  executing the previous nonstandard instant.  

\item[\rref{op:lpruightrlu}] If $\atomicact{SolveConflict}$ fails, $\atomicact{ExecRun}$ returns $\mathit{Fail}(\status)$ and stops. 

\item[{\rref{op:weorpifuhfpeiu}}] Otherwise, $\atomicact{SolveConflict}$ returns a structurally regular system $H$, which is solved using $\atomicact{Eval}$, and the so reached status $\status$ is complete.

\item[{\rref{op:newtick}}] Since $\status$ is {\ssuccessful}, $\atomicact{Tick}$ (\rref{alg:rltoghiuiop}) is performed, so that the values of guards, the status of all {\ScottVars} and the context are known for the next nonstandard instant.
\end{description}

\subsubsection{Important properties}
\label{eliughepiguhp}
\paragraph{Evaluating guards}
Since $\atomicact{ExecRun}$ (\rref{alg:newmain}) performs a \emph{symbolic} interpretation of the model, all the possible modes (valuations for the guards) must be hypothesized and explored at \rref{op:newtick}.\footnote{Assertions may be used to restrict the set of possible modes, by stating some physical knowledge the programmer may have, e.g., `this sequence of modes is not possible'. This is a minor change to our approach, not developed here.} The program is accepted if and only if, for all possible modes, no \emph{Fail} is returned by $\atomicact{ExecRun}$. In case of success, all reachable pairs $(\status,\Delta)$ of statuses and contexts have been explored. 

\paragraph{$\atomicact{ExecRun}$ as a labeled automaton}
We can see $\atomicact{ExecRun}$ as a labeled finite state automaton having nodes labeled with the different encountered status-context pairs, and transitions labeled with the micro-steps. In \rref{fig:clutch-graph}, we depict this graph 
 for the clutch example. The Tick transitions indicate a move to the next nonstandard instant.
\begin{figure}
	\centerline{
		\begin{tikzpicture}
		[
                shorten >=1pt,node distance=3.8cm,auto,
                every node/.style={scale=0.8},
                every edge/.style={thick,draw},
		state/.style={rectangle,scale=.9,shape=rectangle,thick,draw}
               ]
		\node[state,initial] (q0) {\cnfa};
		\node[state] (q1) [above left of=q0] {\cnfb};
		\node[state] (q2) [above right of=q0] {\cnfc};
		\node[state] (q3) [below right of=q0] {\cnfd};
		\node[state] (q4) [below right of=q3] {\cnfe};
		\node[state] (q5) [below left of=q4] {\cnff};
		\node[state] (q6) [above left of=q5] {\cnfg};
		\node[state] (q7) [above left of=q6] {\cnfh};
		\node[state] (q8) [below right of=q5] {\cnfi};
		\node[state] (q9) [below left of=q5] {\cnfj};
		\path[->] (q0) edge[blue] node[blue] {\lbla} (q1)
                                (q0) edge node {\lblb} (q3)
                                (q1) edge[blue] node[blue] {\lblc} (q2)
                                (q2) edge[blue] node[blue] {$\tick$} (q0)
                                (q3) edge[red] node[red] {\lbld} (q4)
                                (q4) edge node {$\tick$} (q5)
                                (q5) edge node {\lble} (q6)
                                (q5) edge[blue] node[blue] {\lblf} (q8)
                                (q6) edge node {\lblg} (q7)
                                (q7) edge node {$\tick$} (q0)
                                (q8) edge[blue] node[blue] {\lblh} (q9)
                                (q9) edge[blue] node[blue] {$\tick$} (q5);
		\end{tikzpicture}
	}
	\caption{ $\atomicact{ExecRun}$ as a labeled automaton for the Simple
          Clutch.
          {Notations: For all statuses (shown in boxes), 
              $\asgnt{v}$ (respectively $\asgnf{v}$) means $v=\ttt$
              (respectively $v=\fff$), and not mentioning $v$ means
              $v=\undef$. $\asgnw{e}$ means that $e_f$ belongs to
              context $\Delta$. The assignment
              $\redundent{e_3}$ refers to \rref{op:newtick} of
              Algorithm~\ref{alg:newmain}. Blue (respectively black) transitions
              belong to a continuous-time (respectively discrete-time)
              dynamics. The red transition is impulsive. A semicolon
              is the sequential composition of computations, and the
              $+$ symbol defines blocks of
              equations.}
              }\label{fig:clutch-graph}
\end{figure}

\paragraph{Rejecting models} Firstly, a system exhibiting logico-numerical fixpoint equations is rejected via synctactic checks, since it does not agree with \rref{defn:mDAEmodified}. Then, a failure is found in the following cases: 
\begin{itemize}
	\item $\atomicact{SigmaMethod}$ (\rref{alg:indexreduction}) itself fails, indicating that no matching is found for the bipartite graph $\cG_G$ of the running dAE system $G{=}0$, a failure criterion for the \sigmamethod. Failure message indicates either over- or under-determination.
	\item The conflict between the predictions from the past and the consistency conditions from the current mode cannot be solved, i.e., $\block_\underapprox\neq\emptyset$ at \rref{op:eriughforui} of $\atomicact{SolveConflict}$, revealing some missing restart equation(s) for the new mode.
\end{itemize}

\paragraph{Detecting continuous modes} Elementary cycles of $\atomicact{ExecRun}$ capture
runs with stationary valuations of the guards and
define the continuous dynamics in each mode. Other
runs capture mode changes and their reset actions.
During runs with constant valuations of the guards, we have $(G_\Sigma\cup\consistency{G}_\Sigma)\setminus\Phi=G_\Sigma$. Hence, \rref{op:roghuphpu} returns $\block_\underapprox=\block_\overapprox=\emptyset$ and, thus, $\block_\squared=G_\Sigma$: we recover the usual structural analysis for DAE systems. 

\paragraph{Solving conflicts at mode changes takes only finitely many nonstandard instants} This follows from the fact that, at a mode change, the maximal shifting degree of context $\Delta$ is finite and the maximal shifting degree of the conflicting subsystem decreases by one at each subsequent time step. See \rref{sec:lerighugholuih} on the Cup-and-Ball example as an illustration of this.

\paragraph{Determinism of the algorithm}
All the algorithms called throughout the execution of \rref{alg:newmain} possess a unique return (at \rref{op:porihujrpoi} and \rref{op:elkfuiehoeuio}). Since \rref{alg:newmain} involves no choice by itself, it is deterministic. Comment~\ref{esrltuiophu} following Lemma~\ref{togiethgouiho} finds its justification here: our design choice in Lemma~\ref{togiethgouiho} was essential in guaranteeing the determinism of our compilation method.

\subsubsection{Generic Blocks within a long mode}
\label{sec:lerigfuiopu}
The results of this section will be used in \rref{sec:standardization}, 
and more specifically in \rref{sec:perughoriugh}.
The successive application of \rref{op:porihujrpoi} and \rref{op:elkfuiehoeuio} of \rref{alg:newmain} yields blocks of a specific form, that is investigated next.
\begin{notation}\rm \label{leirughliu}
	 For $f:\bR^K\rightarrow\bR$ a smooth numerical function and $n\geq 0$ an integer, set
\beq\bea{rcl}
B_\bullet(f,n)&\eqdef&\left[\bea{r} 
B_\bullet^{\head}(f,n) \\ B_\bullet^{\tail}(f,n)
\eea\right] \mbox{ where} 
\\ [4mm]
B_\bullet^{\head}(f,n)&\eqdef&\left[\bea{l} f \\
\postset{f} \\ \,\vdots \\ \ppostset{(n-1)}{f} 
\eea\right] \mbox{ and}
\\ [4mm]
 B_\bullet^{\tail}(f,n)&\eqdef& ~~~\ppostset{n}{f} \enspace .
\eea
\label{pppghuidsliu}
\eeq
$B_\bullet(f,n)$ is called an \emph{$(f,n)$-block}, $B_\bullet^{\head}(f,n)$ is its \emph{head} and $B_\bullet^{\tail}(f,n)$ is its \emph{tail}. By convention, $B_\bullet(f,n)$ is empty for $n<0$ and $B_\bullet^{\head}(f,n)$ is empty for $n=0$.
For $\{f_1,\dots,f_k,\dots,f_K\}$ a tuple of $K$ smooth functions in $K$ dependent variables, define
\beq\bea{l}
F{=}\left[\!\!\!\bea{c}f_1 \\ \vdots \\ f_K\eea\!\!\!\!\right]\!\!, \enspace
 N{=}\left[\!\!\!\bea{c}n_1 \\ \vdots \\ n_K\eea\!\!\!\!\right]\!\!, \enspace
  B_\bullet(F,N) {=} \left[\!\!\!\!\bea{c}
B_\bullet(f_1,n_1) \\ \vdots \\ B_\bullet(f_K,n_K)
\eea\!\!\!\right]
\eea
\label{pweouigtndroui}
\eeq
and the definitions of $B_\bullet^{\head}({F},N)$ and $B_\bullet^{\tail}({F},N)$ follow accordingly.\eproof
\end{notation}
\begin{lemma}
	\label{erlfiuerh} 
Assume that the system is within a long mode. Then, successively applying \emph{\rref{op:porihujrpoi}} and \emph{\rref{op:elkfuiehoeuio}} of \emph{\rref{alg:newmain}} to the system $G{=}0$ returns, either `{Fail}', or a structurally nonsingular system $H{=}B_\bullet^{\tail}({G},N)$ of the form~$(\ref{pweouigtndroui})$.
\end{lemma}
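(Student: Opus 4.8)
The plan is to unfold what \rref{op:porihujrpoi} (the call to $\atomicact{IndexReduc}$, which is the discrete-time \sigmamethod) and \rref{op:elkfuiehoeuio} (the call to $\atomicact{SolveConflict}$) actually produce when the system is in a long mode, and to show the combined output has the claimed block-tail shape. First I would recall, from \rref{def:erlgtuioo} and the discrete-time adaptation of \rref{alg:indexreduction}, that $\atomicact{SigmaMethod}$ applied to $G{=}0$ returns (in case of success) the pair $(\consistency{G}_\Sigma,G_\Sigma)$, where $G_\Sigma=\{\ppostset{c_f}{f}\mid f{\in}G\}$ and $\consistency{G}_\Sigma=\{\ppostset{k}{f}\mid f{\in}G,\,0{\leq}k{<}c_f\}$, with $\bfc$ the offsets computed by $\atomicact{FindOffsets}$. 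If I set, for each $f$ enumerated as $f_1,\dots,f_K$, the integer $n_i\eqdef c_{f_i}$, then by Notation~\ref{leirughliu} we have exactly $\consistency{G}_\Sigma\cup G_\Sigma = B_\bullet(F,N)$, with $\consistency{G}_\Sigma=B_\bullet^{\head}(F,N)$ (the shifts of orders $0,\dots,c_f{-}1$) and $G_\Sigma=B_\bullet^{\tail}(F,N)$ (the top shift $\ppostset{c_f}{f}$). So the structured form is already visible at the output of \rref{op:porihujrpoi}; what remains is to track what $\atomicact{SolveConflict}$ does to it.

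Next I would use the observation already made in the paragraph ``Detecting continuous modes'' of \rref{eliughepiguhp}: within a long mode with a stationary valuation of the guards, the set of facts satisfies $\Phi=\Delta\cap\enbld{\eqq}{\status}$ and one has $(G_\Sigma\cup\consistency{G}_\Sigma)\setminus\Phi=G_\Sigma$, i.e.\ the consistency part $\consistency{G}_\Sigma=B_\bullet^{\head}(F,N)$ consists entirely of facts inherited from the previous instant and is therefore removed before the DM decomposition. Hence $\atomicact{SolveConflict}$ is actually called on $G_\Sigma=B_\bullet^{\tail}(F,N)$. But $G_\Sigma{=}0$, seen as an algebraic system in its leading variables $\pprime{d_x}{x}$, is structurally nonsingular by construction of the \sigmamethod\ (this is precisely the property recorded after Problem~\ref{liftuerhpituhepu8} and in \rref{alg:indexreduction}). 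Therefore its DM decomposition has $\block_\overapprox=\block_\underapprox=\emptyset$ and $\block_\squared=G_\Sigma$: \rref{op:erituheoiut} is skipped, \rref{op:eriughforui} returns $b{=}\ttt$, and \rref{op:erpigftuhpiou} returns $H=\block_\squared=G_\Sigma=B_\bullet^{\tail}(F,N)$, which is of the required form. If instead \rref{op:porihujrpoi} fails (no complete matching for the weighted bipartite graph, so $\atomicact{SigmaMethod}$ returns $b{=}\fff$), then \rref{alg:newmain} returns `Fail' at \rref{op:oesriughoiyu}, covering the first alternative in the statement.

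The one point that needs care — and the main obstacle — is justifying the equality $(G_\Sigma\cup\consistency{G}_\Sigma)\setminus\Phi=G_\Sigma$ rigorously rather than by appeal to intuition. Concretely I would argue that in a long mode the guard valuation is the same at the previous and current nonstandard instants (Assumption~\ref{erjtdfguiohu}: guards of \ifequation{s} hold for a positive standard duration, so two consecutive nonstandard instants see the same mode), hence the enabled equation set $\enbld{\eqq}{\status}$ and the offsets $\bfc$ computed by the \sigmamethod\ are the same at both instants; consequently each $\ppostset{k}{f}$ with $k<c_f$ that sits in $\consistency{G}_\Sigma$ at the current instant was produced and solved as some $\ppostset{k'}{f}$ with $k'=k{+}1\leq c_f$ in $G_\Sigma$ at the previous instant, so it belongs to the context $\Delta$ by \rref{defn:mDAEmodified}'s bookkeeping (\rref{alg:rltoghiuiop}, definition of $\wpostset{\Delta}$), and being enabled it lies in $\Phi$. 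One must double-check the boundary case $k=c_f{-}1$ versus $k'=c_f$ and that the identification of syntactically identical previous/current equations (mentioned in \rref{seoiguhweroui}) does not leave a spurious leftover; modulo that verification, the three displayed steps chain together to give the lemma. I would also note that the non-failure branch automatically yields structural nonsingularity of $H$ as a free consequence of $\atomicact{SolveConflict}$ returning via \rref{op:erpigftuhpiou}, so no separate regularity argument is needed.
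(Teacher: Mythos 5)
Your proof follows the same path as the paper's: you unfold the output of $\atomicact{SigmaMethod}$ as $(G_\Sigma,\consistency{G}_\Sigma)=(B_\bullet^{\tail}(G,N),B_\bullet^{\head}(G,N))$ via \rref{def:erlgtuioo} and \rref{leirughliu}, then invoke the long-mode condition to identify $\consistency{G}_\Sigma$ with $\Phi$, so that $\atomicact{SolveConflict}$ receives the structurally nonsingular system $G_\Sigma$ alone, whose trivial DM decomposition yields $H=G_\Sigma=B_\bullet^{\tail}(G,N)$; the `Fail' branch at \rref{op:oesriughoiyu} covers the remaining alternative. The paper's own two-paragraph proof asserts the key equality $\consistency{G}_\Sigma=\Phi$ without the $\Delta$/$\atomicact{Tick}$ bookkeeping you flag as needing verification, so you have essentially reproduced the paper's argument while being more explicit about where its burden actually sits.
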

\begin{proof}
Consider \rref{op:porihujrpoi} of \rref{alg:newmain}. Since the system is in a long mode, $\atomicact{SigmaMethod}$ returns, either \emph{Fail}, or $(G_\Sigma,\bar{G}_\Sigma)=(B_\bullet^{\tail}(G,N),B_\bullet^{\head}(G,N))$, where $G$ is the subset of enabled equations at status $\status$.

Move to \rref{op:elkfuiehoeuio}. Since the system is in a long mode, we have $\bar{G}_\Sigma{=}\Phi$, expressing that consistency conditions are guaranteed by previous nonstandard instants. Hence, the Dulmage-Mendelsohn decomposition performed at \rref{op:elkfuiehoeuio} simply returns $\block_\squared{=}G_\Sigma{=}{B}_\bullet^{\tail}({G},N)$. 
\end{proof}

\subsection{{Structural Analysis: general case}}
\label{sec:epwtgouihpiouh}
So far, we restricted ourselves to systems having only long modes. In this section, we consider the general case, where transient modes can also occur, possibly in cascades. The Cup-and-Ball example of Section~\ref{loeruighlrigtuh} can be kept in mind while reading this section.

Our main task is to revisit index reduction. So far, this was performed by invoking the \sigmamethod, which requires the new mode to last for long enough to allow for an unlimited number of time shifts to be applied.
If the current mode is transient, this no longer holds and the continuation of the system dynamics is determined by its successive future modes---see the analysis of the Cup-and-Ball example in \rref{sec:leiruioepui}.

Whereas the \sigmamethod\ applies to time-invariant systems only, the method of \emph{differentiation arrays} (see Section 3.1 of~\cite{CampbellGear1995}) can be adapted to time-varying systems.

\subsubsection{{Index reduction with Difference Arrays}}
The index reduction algorithm proposed in this section applies to time-varying DAE systems. It is, however, less computationally efficient than the \sigmamethod\ (when the latter can be applied). 

\subsubsection*{Difference arrays for time-varying systems}
In the following, $\bT$ denotes a discrete-time index set possessing a minimal element, denoted by $0$. 
\begin{definition} 
	\label{leriuerkl} A \emph{time-varying dAE system} over $\bT$ is a pair $(X,\cF)$, where $X$ is a tuple of numerical variables taking its values in an Euclidian space $\dom$, and
	 $\cF:\bT\ra\cE$ maps every instant $t\in\bT$ to a dAE system $F_t=0$ whose set of dependent variables is contained in $X$.
A \emph{solution} of $(X,\cF)$ is a trajectory $\bfx:\bT\ra\dom$ such that, for every $t\in\bT$, $\bfx(t)$ satisfies $F_t=0$.
\end{definition}
\begin{example}\rm 
	\label{eloriufheofui} Consider $\bT=\bN$, $X=(u,v,w)$, and $\cF$ is defined by
\[
t{=}0: \;
0=\postset{v}-3u
~~; \mbox{ and }
\forall t>0: \left\{\bea{l}
0=\postset{u}-v \\ 0=v-uw \\ 0=\postset{w}+u
\eea\right. \enspace .
\]
The solutions of this system are trajectories $t\mapsto(u(t),v(t),w(t))$ such that
\[
0={v(1)}-3u(0)~~;~~
\forall t>0: \left\{\bea{l}
0={u(t+1)}-v(t) \\ 0=v(t)-u(t)w(t) \\ 0={w(t+1)}+u(t)
\eea\right. \enspace .
\]
Note that solutions are only partially specified at $t=0$ since $w$ is not involved in the initial conditions.\eproof
\end{example}
Given a time-varying dAE system $(X,\cF)$, we consider, for each $k\in\bT$, the following \emph{dAE array}:
	\beq
	\cA_k &\eqdef& \left[\bea{c}	
F_0 \\ F_1 \\ \vdots \\ F_k
\eea	\right] \enspace .
\label{riotgjpip}
	\eeq
	This array allows us to perform the `index reduction' of $F_0$, by adding `latent equations' taken from the future systems $F_1$, $F_2$, etc., generally different from $\postset{F_0},\ppostset{2}{F_0}$, etc. To this end, the set of all variables of array $\cA_k$ decomposes as $Y\cup{X}\cup{W}$, where:
\begin{equation}\bea{l}
\mbox{$Y$ collects the variables in $F_0$ that participate in the consistency;}
\\
	 \mbox{$X$ collects the leading variables of $F_0$;} 
	 \\
	 \mbox{$W$ collects all the remaining variables of the array.}
	 \eea \label{lreiotuhleiuh}
\end{equation}
In its Section 3.1 on `standard indices', Campbell and Gear's landmark paper~\cite{CampbellGear1995} states that, for time-invariant systems ($F_t\equiv{F_0}$), the index is the smallest $k$ such that $\exists W{:}\,\cA_k(X,W,Y){=}0$ uniquely defines $X$ as a function of $Y$, provided that $Y$ is consistent.

Therefore, we consider the following structural translation of the above problem (compare with Problem~\ref{liftuerhpituhepu8} in \rref{sec:structural}):
\begin{problem}
	\label{leriguhelriugl} Find the smallest integer $k$ such that the system $\exists W:\cA_k(X,W,Y){=}0$ is structurally nonsingular in the sense of Definition~$\ref{lerigfueriopg}$.
\end{problem}
For time-invariant systems ($F_t\equiv{F}$ for all $t$), Problem~\ref{leriguhelriugl} is monotonic in that, if this problem has a solution for $k$, then it also has a solution for $k'>k$. However, this no longer holds for time-varying systems, since it can be that system $F_{k+1}=0$ is in conflict with $\cA_k$.

\subsubsection*{Handling finite cascades of transient modes}
Now, we specialize the structural analysis of difference arrays to the handling of finite cascades of transient modes. That is, we assume the following finite chain of modes:
\beq
\underbrace{\mode_0,\mode_1\,\dots,\mode_K}_{\rm finite\;cascade\;of\;transient\;modes~~},\underbrace{\mode_\infty}_{\rm ~~final\;long\;mode}
\label{ioulbguihu}
\eeq
We adopt the following principles in developing our structural analysis:
\begin{itemize}
	\item For the final long mode $\mode_\infty$, it is legitimate to postpone the satisfaction of some of the consistency equations. That is, we reuse the conflict solving approach of the algorithm $\atomicact{SolveConflict}$ (\rref{alg:elsgouihuip}).
	\item For the transient modes $\mode_0,\mode_1\,\dots,\mode_K$, however, this is not legitimate, so that we require the corresponding equations to be satisfied and refuse to erase any of them.
\end{itemize}
We associate to the chain (\ref{ioulbguihu}) the array
	\beq\bea{c}
	\cA \eqdef \left[\bea{c}	
F_0 \\ F_1 \\ \vdots \\ F_K \\ F_{\infty,\Sigma}\cup\consistency{F}_{\infty,\Sigma}
\eea	\right]
\\ [11mm]
\mbox{with a decomposition of its variables as $Y\cup{X}\cup{W}$ following (\ref{lreiotuhleiuh}).}
\eea \label{lguhpiouhp}
	\eeq
	The last row of this array originates from applying index reduction on the long mode $\mode_\infty$. 

Based on the results of \rref{sec:operiuhlo}, in particular Lemma~\ref{lergfuioerhpuip}, the structural analysis for time-varying systems is described in \rref{alg:timevarindexreduction} ($\atomicact{DiffArray}$). 
\begin{algorithm}[t]
	\caption{$\atomicact{DiffArray}$}
	\label{alg:timevarindexreduction}
 \begin{algorithmic}[1]
\Require $(\cA,\status)$ as in (\ref{lguhpiouhp}); 
\Return $(\status,b,F_\Sigma,\consistency{F}_\Sigma)$
\State $k=-1$; \label{op:oeriughopi}
\If{$k<{K}$} \label{op:perioughpiguh}
\State  $k\gets{k{+}1}$ \label{op:orghoiu}
\State  $\cA_k\gets(F_0,\dots,F_k)$
\State  arrange the variables of $\cA_k$ as $(Y_k{\cup}X_k{\cup}W_k)$
\label{op:oerilutriopu}
\State  $(b_\overapprox,b_\underapprox,F_{\Sigma},\consistency{F}_{\Sigma})\gets\atomicact{ExistQuantifEqn}(\cA_k)$ \label{op:oseirugfioeruh}
\If{$\neg b_\overapprox$} return $b\gets\fff$ \label{op:epoguihpouh}
\Else \If{$\neg b_\underapprox$} \textbf{go to } \rref{op:perioughpiguh}
\Else\ {return} $(b\gets\ttt,F_{\Sigma},\consistency{F}_{\Sigma})$; increase $\status$ \label{op:loguhotui}
\EndIf
 \EndIf
 \Else \ consider $\cA$ defined in (\ref{lguhpiouhp})
 \State $\cA\gets\cA$ where $\consistency{F}_{\infty,\Sigma}\gets(\consistency{F}_{\infty,\Sigma}\setminus\cA_K)$
 \label{op:prgruioghioyuku}
 \State  $(\block_\underapprox,\block_\squared,\block_\overapprox)=\DM(\cA)$
 \label{op:hgtrhtighiou}
 \If{$\block_\overapprox\neq\emptyset$}
 \State $\cA\gets\cA$ where  $\consistency{F}_{\infty,\Sigma}\gets\consistency{F}_{\infty,\Sigma}\setminus\block_\overapprox$; \label{op:pseriguhltruip}
 \State \textbf{go to} \rref{op:hgtrhtighiou}
 \Else \label{op:egroghp}
 \If{cond.\,\ref{odigpiufr} and\,\ref{lpiguhlosgioj} of Lemma\,\ref{lergfuioerhpuip} hold} \label{op:owfyugoyu}
  \State partition $\block_\squared=F_\Sigma\cup\consistency{F}_\Sigma$
\State {return} $(b\gets\ttt,F_{\Sigma},\consistency{F}_{\Sigma})$; increase $\status$
\label{op:gsekliuhoauhpu}
  \Else \ {return} $b\gets\fff$
 \EndIf
 \EndIf
 \EndIf
	\end{algorithmic}
\end{algorithm}
This algorithm requires as inputs an array $\cA$ as in~(\ref{lguhpiouhp}) and a status; it returns, either a \emph{Fail} information ($b=\fff$), or a pair $(F_{\Sigma},\consistency{F}_{\Sigma})$, where $F_{\Sigma}{=}0$ is a structurally nonsingular system extracted from $\cA$ and determining the leading variables of $F_0$, and $\consistency{F}_{\Sigma}{=}0$ is a solvable system of consistency equations. The algorithm proceeds according to two phases.

In a first phase (\rref{op:oeriughopi}--\rref{op:loguhotui}), the subchain of transient modes is explored, with increasing array size $k=0,\dots,K$. Using status $\status$, the variables of $\cA_k$ are sorted at \rref{op:oerilutriopu} following (\ref{lreiotuhleiuh}).
At \rref{op:oseirugfioeruh}, array $\cA_k$ is submitted to $\atomicact{ExistQuantifEqn}$ (\rref{alg:erofiueopiu}), which returns the two Boolean flags $b_\overapprox$, indicating whether the overdetermined block is empty, and $b_\underapprox$, indicating whether the underdetermined block is empty.
If $b_\overapprox=\fff$, which indicates an overdetermined system, a failure indication is returned at \rref{op:epoguihpouh}. If $b_\overapprox=\ttt$ and $b_\underapprox=\fff$ is returned, the array is increased (\rref{op:orghoiu}) and reprocessed the same way, until $k=K$ is reached. Otherwise, the first phase returns the pair $(F_\Sigma,\consistency{F}_\Sigma)$. The status is updated as $\irrelevant\nearrow\undef$ for each latent equation added, and the algorithm successfully terminates.

If $k{=}K$ is reached with no success, we move to the second phase, starting with the removal of duplicates between $\cA_K$ and  $\consistency{F}_{\infty,\Sigma}$ (\rref{op:prgruioghioyuku}).
Then, we apply the DM decomposition at \rref{op:hgtrhtighiou} to identify possible conflicts and underdetermined parts. 
We handle conflicts as in the $\atomicact{SolveConflict}$ algorithm (\rref{alg:elsgouihuip}), by deleting conflicting equations from $\consistency{F}_{\infty,\Sigma}$ (\rref{op:pseriguhltruip}) and returning to \rref{op:hgtrhtighiou}; this loop is visited only once thanks to Lemma\,\ref{togiethgouiho}. When reaching \rref{op:egroghp}, we know that condition~\ref{sguiholghouiy} of Lemma\,\ref{lergfuioerhpuip} is satisfied, so that we can conclude by using the successful case of Lemma\,\ref{lergfuioerhpuip} at Lines\,\ref{op:owfyugoyu} and subsequent. The status is updated as $\irrelevant\nearrow\undef$ for each latent equation added (\rref{op:gsekliuhoauhpu}).

\subsubsection{Executing a nonstandard instant}
We first revisit the algorithm $\atomicact{IndexReduc}$.
 If the current mode is long, then $\atomicact{IndexReduc}$ is implemented as \rref{alg:lkciyughsliu}. Otherwise, we will implement it by hypothesizing a future mode trajectory and calling the algorithm $\atomicact{DiffArray}$ (\rref{alg:timevarindexreduction}). $\atomicact{IndexReduc}$, revisited, is specified as \rref{alg:epgioehpguioh}.
\begin{algorithm}[ht]
	\caption{$\atomicact{IndexReduc}$, revisited}
	\label{alg:epgioehpguioh} 
\begin{algorithmic}[1]
	\Require $(G,\status,\mathit{type})$; 
	\Return $(\status,b,G_\Sigma,\consistency{G}_\Sigma)$
	\If{\emph{type}$=$long} 
	  \State $(b,G_\Sigma,\consistency{G}_\Sigma)\gets\atomicact{SigmaMethod}(G)$ with dependent variables set by $\status$; increase $\status$
      \label{op:rtokuygfguihuo}
      \Else
      \State hypothesize $\cF=(G_0,G_1,\dots)$ with $G_0{=}G$; let $\cA$ be the corresponding array \label{op:eofiuerhpoiufeiu}
      \State $(b,G_\Sigma,\consistency{G}_\Sigma)\gets\atomicact{DiffArray}(\cA,\status)$; increase $\status$
      \EndIf
      \If{$b$} \label{op:pgtuioh}
        \Return $(G_\Sigma,\consistency{G}_\Sigma)$
      \EndIf
\end{algorithmic}
\end{algorithm}

In this algorithm, \rref{op:eofiuerhpoiufeiu} requires hypothesizing a continuation for the current mode. This, of course, is problematic if all modes can be visited in this continuation, particularly transient modes. It is therefore essential to be able, at compile time, to restrict the number of possible continuations for the current mode. This question is investigated in \rref{sec:perlgtuiohoiu}. 

\begin{algorithm}[ht] 
\caption{$\atomicact{ExecRun}$ , revisited
}\label{alg:ergfipuhfopiu}
\begin{algorithmic}[1]
   \Require $(\status,\Delta)$; 
   \Return  $(\mathit{Fail},\status,(\wpostset{\status},\wpostset{\Delta}))$ 
   \State $\mathit{type}\gets\atomicact{EvalModeType}(\status)$ \label{op:oelrigfurhoihoi}
      \State $\Phi\gets\{\eqq\in\Delta\mid\status(\guard(e))=\ttt\}$ \label{op:newinitfactsrevis} 
      \State $G \gets [\enbld{\eqq}{\status} \cap \ndef{\status}]$   \label{op:newenabledeqsrevis}
      \State $(\status,b,G_\Sigma,\consistency{G}_\Sigma)\gets\atomicact{IndexReduc}(G,\status,\mathit{type})$; increase $\status$
      \label{op:porihujrpoirevis}
      \If{$\neg b$} \Return\emph{Fail}$(\status)$ \label{op:eogfuiehoriufhoui}
      \Else \label{op:leifuerhfuiohouirevis}
      \State $(\status,b,H)\gets\atomicact{SolveConflict}((G_\Sigma\cup\consistency{G}_\Sigma)\setminus\Phi,\status)$; increase $\status$
      \label{op:elkfuiehoeuiorevis}
      \If{$\neg b$} \label{op:lpruightrlurevis}
      \Return\emph{Fail}$(\status)$ 
      \Else
      \State $\status\gets\atomicact{Eval}(\status,H)$; increase $\status$
       \label{op:weorpifuhfpeiurevis} 
      \State $(\wpostset{\status},\wpostset{\Delta})\gets \atomicact{Tick}(\status)$ \label{op:newtickrevis}
      \EndIf
      \EndIf
\end{algorithmic}
\end{algorithm}
Having revisited index reduction, we can now revisit $\atomicact{ExecRun}$, see \rref{alg:ergfipuhfopiu}. Only two lines are modified with reference to the original algorithm $\atomicact{ExecRun}$:
\begin{description}
\item[\rref{op:oelrigfurhoihoi}] We add the evaluation of the \emph{type} (long or transient) of the current mode.
\item[\rref{op:porihujrpoirevis}] We call instead the $\atomicact{IndexReduc}$ corresponding to \rref{alg:epgioehpguioh}, which selects the appropriate method for index reduction, depending on the \emph{type} of the current mode.
\end{description}

\subsubsection{{Continuations of transient modes}}
\label{sec:perlgtuiohoiu}
We need to investigate this issue for transient modes only. Therefore, we begin with assumptions regarding transient modes. 
Recall that we follow Notations~\ref{lerigftuheroiuof}, hence, by \emph{numerical variable} of an \mDAE\ system $\system$, we mean a variable of the form $\pprime{m}{x}$ ($m$-th derivative of $x$) for some algebraic or state variable $x$ of the considered system. The \emph{degree} of $\pprime{m}{x}$ is $m$. The same holds regarding \mdAE\ system $\system$, with $\ppostset{m}{x}$ ($m$-shifted version of $x$) replacing $\pprime{m}{x}$. We repeatedly use these notations in the sequel.

\paragraph{Assumptions regarding transient modes} 
According to \rref{defn:egfuiher}, a mode $\mode$ is transient if at least one of its $\ttt$-valued guards is transient. Guards are Boolean expressions of predicates over numerical variables. We will  restrict ourselves to the class of nonstandard \mdAE\ models $\system$ with set $X$ of numerical variables, satisfying the following assumption regarding transient predicates over numerical variables:
\begin{assumption}
	\label{lerifuherlifu} Let $\guard_1,\guard_2\in\Guards(\system)$ be any two distinct transient predicates over numerical variables of \mdAE\ model $\system$. Then, for $\vval{X}^1$ and $\vval{X}^2$ any two valuations of the tuple $X$ such that $\vval{X}^1-\vval{X}^2$ is infinitesimal, $\guard_1(\vval{X}^1)=\guard_2(\vval{X}^2)=\ttt$ cannot occur.
\end{assumption}
\myparagraph{Discussion}
Assumption~\ref{lerifuherlifu} formalizes that two different transient predicates never take the value $\ttt$ simultaneously, ``by chance''. For example, for $x,y\in{X}$, the two predicates $\guard_1=\prog{when}\;x{\geq}0$ and $\guard_2=\prog{when}\;y{\leq}5$ specify the events when $x$ becomes ${\geq}{0}$ and when $y$ becomes ${\leq}5$. The assumption says that these two events will never occur simultaneously. For this case, Assumption~\ref{lerifuherlifu} is very reasonable. Of course, if $\guard_1=\prog{when}\;x^3{\geq}8$ and $\guard_2=\prog{when}\;x{\geq}2$, the two guards are considered different (their syntax differ) but generate simultaneous events. It is not the duty of the compiler to discover if two guards happen to be mathematically identical whereas they differ in their syntax. It is rather the responsibility of the programmer to use identical syntax for different occurrences of a same predicate.\eproof

\smallskip

Using Assumption~\ref{lerifuherlifu} we can in particular bound the number of transient modes that can occur at a given instant $t\in\bT$. For $\guard_o\in\Guards(\system)$ a transient predicate, let 
\beq
\Guards(\system,\guard_o) &\subseteq& \Guards(\system) \label{iogelguiohj}
\eeq
be the subset of all guards $\guard$ of $\system$ that are Boolean expressions of predicates such that $\guard=\ttt$ requires $\guard_o=\ttt$. An example would be $\guard=\guard_o\wedge\guard'$. The following obvious lemma holds:
\begin{lemma}
	\label{luthelriuth} Under Assumption~$\ref{lerifuherlifu}$, the number of transient modes that can possibly occur at a given instant $t\in\bT$ is bounded by
\beqq
M(\system) &=& \max_{\guard_o}\bigl|2^{\Guards(\system,\guard_o)}\bigr|
\eeqq
where $\guard_o$ ranges over the set of all transient predicate guards of $\system$, and $|A|$ denotes the cardinal of set $A$.
\end{lemma}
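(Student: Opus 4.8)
The plan is to argue directly from Assumption~\ref{lerifuherlifu} and the definition of a transient mode (Definition~\ref{defn:egfuiher}). First I would recall what must be shown: at a fixed instant $t\in\bT$, how many distinct transient modes can the system possibly be in? By Definition~\ref{defn:egfuiher}, a mode $\mode$ is transient precisely when at least one of its $\ttt$-valued guards is a transient predicate. So the first step is to observe that, at instant $t$, the numerical state of the system is a single valuation $\vval{X}$ of the tuple $X$ of numerical variables (up to infinitesimals, which is exactly what Assumption~\ref{lerifuherlifu} is designed to tolerate). Consequently, among all transient predicate guards $\guard_o\in\Guards(\system)$, Assumption~\ref{lerifuherlifu} forces that \emph{at most one} of them can evaluate to $\ttt$ at $t$: if two distinct transient predicates $\guard_1,\guard_2$ were both true, taking $\vval{X}^1=\vval{X}^2=\vval{X}$ (difference zero, hence infinitesimal) would contradict the assumption.

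The second step is to identify, having fixed the unique true transient predicate $\guard_o$, which remaining guards are free to take either value. A guard $\guard$ that is \emph{not} in $\Guards(\system,\guard_o)$ and is itself a transient predicate distinct from $\guard_o$ must be $\fff$ at $t$, again by Assumption~\ref{lerifuherlifu}. A guard $\guard\in\Guards(\system,\guard_o)$ is, by the definition~(\ref{iogelguiohj}), a Boolean expression that can be $\ttt$ only when $\guard_o=\ttt$; since $\guard_o=\ttt$ holds at $t$, such a $\guard$ may be either $\ttt$ or $\fff$, so it contributes at most a factor of $2$ to the count. All guards outside $\Guards(\system,\guard_o)$ that are non-transient do not affect whether the mode is transient, but they are constrained by their own defining predicates, so they contribute nothing new to the bound on transient modes beyond what is already counted. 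Hence, once $\guard_o$ is fixed, the number of transient modes compatible with ``$\guard_o$ is the active transient predicate'' is at most $\bigl|2^{\Guards(\system,\guard_o)}\bigr|$.

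The third step is simply to take the maximum over the possible choices of $\guard_o$: since $\guard_o$ ranges over the finite set of transient predicate guards of $\system$ and, as established, at most one of them is active at $t$, the total number of possible transient modes at $t$ is bounded by $M(\system)=\max_{\guard_o}\bigl|2^{\Guards(\system,\guard_o)}\bigr|$, which is the claimed bound. Finiteness of $M(\system)$ follows from Assumption~\ref{liruyeliutyl}, which guarantees $\Guards(\system)$ is finite (each guard involves finitely many numerical variables and $S$ is a finite set of guarded equations).

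The lemma is called ``obvious'' in the text, and indeed the only subtle point---the one I would treat as the main obstacle---is the passage from ``two transient predicates cannot be simultaneously true for the \emph{same} valuation'' to ``at most one transient predicate is true at instant $t$''. This requires noting that at a single nonstandard instant the system occupies one numerical configuration, so we may invoke Assumption~\ref{lerifuherlifu} with $\vval{X}^1=\vval{X}^2$; the infinitesimal-closeness clause in the assumption is there to handle the more delicate situations (e.g., left-limits at mode changes) but is not even needed for this counting statement. Everything else is bookkeeping over the Boolean structure of the guards.
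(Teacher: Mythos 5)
The paper gives no proof of this lemma --- it is introduced with the phrase ``The following obvious lemma holds'' --- so there is no baseline argument to compare against. Your first step is correct and identifies the single place where Assumption~\ref{lerifuherlifu} does any work: instantiating it with $\vval{X}^1=\vval{X}^2=\vval{X}(t)$ (a difference of zero being trivially infinitesimal) shows that at most one transient predicate guard can be $\ttt$ at the instant $t$. You also correctly note that the stronger ``infinitesimally close'' clause of the assumption is not even needed for this counting statement.

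Where the argument has a genuine gap is in the bookkeeping that produces the exact exponent $\Guards(\system,\guard_o)$. You partition the remaining guards into (i) guards in $\Guards(\system,\guard_o)$, each contributing a factor $2$; (ii) other transient \emph{predicate} guards, which are forced to $\fff$; and (iii) non-transient guards outside $\Guards(\system,\guard_o)$, which you dismiss with ``they are constrained by their own defining predicates, so they contribute nothing new.'' That last claim is circular: \emph{every} guard, including those in $\Guards(\system,\guard_o)$ whose freedom you do count, is ``constrained by its own defining predicate,'' so the phrase does not distinguish the guards you are allowed to vary from the ones you are not. Since a mode (Definition~\ref{defn:egfuiher}) is a valuation of \emph{all} guards, any guard outside $\Guards(\system,\guard_o)$ that could take either value would multiply the count and break the bound. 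What is actually needed here --- and what you should state --- is an argument that every guard not in $\Guards(\system,\guard_o)$ has its value pinned down at instant $t$: the other transient predicates are $\fff$ by Assumption~\ref{lerifuherlifu}; a transient guard that is a nontrivial Boolean combination (a case your partition omits entirely, since ``not in $\Guards(\system,\guard_o)$ and a transient predicate'' and ``non-transient'' are not exhaustive) must, for the bound to make sense, be resolved by the values of the transient predicates it is built from; and the long-typed guards are not free parameters at a transient instant but inherit their values from the enclosing long mode. Without spelling out these cases, the step from ``one active $\guard_o$'' to ``at most $\bigl|2^{\Guards(\system,\guard_o)}\bigr|$ transient modes'' is asserted rather than proved, and it is precisely the step that separates the claimed bound from the trivial bound $\bigl|2^{\Guards(\system)}\bigr|$.
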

In practice, this bound is generally much smaller than the number of possible modes for $\system$, which is bounded by $|2^{\Guards(\system)}|$.

\paragraph{Assumptions regarding continuations of transient modes} Let $\mode_0=\mode(\nstime)$ be the current mode at time $\nstime\in\bT$, and let $\mode_0,\mode_1,\mode_2,\dots$ be any continuation for $\mode_0$ at instants ${\nstime},\postset{\nstime},\ppostset{2}{\nstime},\dots$. Such continuation can have one of the following two forms:
\beq
\hspace*{-5mm} \mbox{finite cascade}&:&\mode_0,\mode_1,\dots,\mode_{m-1},\mode_{\infty} \label{kufeygoigheoi} \\
\hspace*{-5mm} \mbox{infinite cascade}&:&\mode_0,\mode_1,\dots,\mode_{m-1},\dots \label{leriguhliuh} 
\eeq
where the final mode $\mode_{\infty}$ in (\ref{kufeygoigheoi}) is long (and thus repeated), and all other listed modes, including the ``$\dots$'', are transient. We will state assumptions that statically rule out (\ref{leriguhliuh}) and limit the number of possible continuations in (\ref{kufeygoigheoi}).
For $\mode$ a transient mode, let 
$$X(\mode)\subseteq{X}$$
be the subset of numerical variables $x$ such that $\postset{x}-x$ is possibly non-infinitesimal at this mode. In other words, $X(\mode)$ is the set of variables that are reset by this transient mode. We then define, for $\Modes$ a set of transient modes: 
\[
X(\Modes)=\bigcup_{\mode\in\Modes}X(\mode) ~ .
\]
This set can be identified using structural algorithms.
\begin{example}\rm
	\label{lifuihil} The statement
\[
\dot{x}=f(x,u) \;\prog{reset}\;0\;\prog{when}\;x\geq{1}
\]
expands, according to \rref{defn:mDAEattempt}, as
\[
\left\{\bea{rcl}
\prog{when}\;x\geq{1}&\doo&x^+=0 \\
&\prog{else}&\dot{x}=f(x,u) 
\eea\right.
\]
whose nonstandard semantics is
\beq
\left\{\bea{rclc}
\prog{when}\;x\geq{1}&\doo&\postset{x}=0 &(e_1) \\
&\prog{else}&\frac{\postset{x}-x}{\vsmall}=f(x,u) &(e_2)
\eea\right.
\label{sys:eprigfuhperiuf}
\eeq
Since $\postset{x}-x$ is not guaranteed infinitesimal by $(e_1)$, $\postset{x}-x$ is non-infinitesimal at the instant of zero-crossing (i.e., the first time when $x\geq{1}$ occurs). Thus, $X(\mode_{\rm zc})=\{x\}$, where $\mode_{\rm zc}$ is the mode when the zero-crossing occurs. This can be found at compile time, based on the syntax of \rref{sys:eprigfuhperiuf}.\eproof
\end{example}
\begin{example}\rm 
	\label{erkuyfkuofio} Consider the Cup-and-Ball example with elastic impact, see \rref{sec:leiruioepui}, in its original specification prior to adding the restart condition (\ref{lruifhepriuhiu}). Let $\mode_{\rm imp}$ be the mode when the elastic impact occurs. Then, $X(\mode_{\rm imp})=\emptyset$. Thus, we know that the only continuation is: $\mode_{\rm imp},\mode_{\rm free}$,\, in which $\mode_{\rm free}$ is the mode in which the rope is not straight and, thus, free motion occurs. The same holds after adding  restart condition (\ref{lruifhepriuhiu}).\eproof
\end{example}
In the sequel, we denote by $M$ the set of all possible modes of system $\system$. 
For $Y\subset{X}$ a subset of variables, let 
\[
\Modes(Y)\subseteq M
\]
be the set of all transient modes that can possibly be visited as a consequence of making noninfinitesimal changes to the variables of $Y$---this set can be identified at compile time using structural algorithms.

In bounding the possible continuations of a transient mode $\mode_0$, we consider the following chain of sets of possibly reachable transient modes:
\beq\bea{rl}
\continuation{\mode_0} =& M_0,M_1,M_2,\dots 
\\
\nonumber
\mbox{where}:& M_0=\{\mode_0\} \\
\nonumber
\mbox{and }\forall k>0:& M_k=M(X(M_{k-1})) 
\eea
\label{giohoioigth}
\eeq
Once again thanks to structural algorithms, the following assumption can be checked at compile time on the considered system $\system$:
\begin{assumption}
	\label{elrituhgui} For every transient mode $\mode$, the chain $X_k,k\geq{0}$ in  $\continuation{\mode}$ converges to $\emptyset$ in finitely many steps.
\end{assumption}
\begin{lemma}
	\label{refuilerhuil} 
Assumption~$\ref{ughlrhlruih}$ implies Assumption~$\ref{elrituhgui}$,
\end{lemma}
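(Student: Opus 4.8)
The plan is to unfold the recursion defining the chain $\continuation{\mode_0}=M_0,M_1,M_2,\dots$ of (\ref{giohoioigth}) and to show that Assumption~\ref{ughlrhlruih} forces $M_k$, and hence the reset–variable sets $X_k\eqdef X(M_k)$ appearing in Assumption~\ref{elrituhgui}, to reach $\emptyset$ after finitely many steps; since $X(\emptyset)=\emptyset$ and $\Modes(\emptyset)=\emptyset$, the chain is absorbing at $\emptyset$, so it is enough to reach $\emptyset$ once. First I would record the elementary facts that the set of transient modes of $\system$ is finite (it is a subset of $2^{\Guards(\system)}$), and that both $Y\mapsto\Modes(Y)$ and $\Modes\mapsto X(\Modes)$ are monotone for inclusion; these are immediate from their definitions via the structural algorithms referenced in the text.

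Next I would reformulate the recursion as reachability in the finite directed graph $\cH$ on transient modes that has an edge $\mode\rightarrow\mode'$ whenever $\mode'\in\Modes(X(\mode))$: then $M_k$ is the union of the $k$-step $\cH$-successors of $\mode_0$. Assumption~\ref{ughlrhlruih}, being the stronger and syntactically checkable condition, provides precisely what is needed to bound these successors, namely that cyclic cascades of transient modes are impossible, so that $\cH$ is acyclic and carries a rank function $\rho$ with $\rho(\mode')<\rho(\mode)$ along every edge. A straightforward induction on $k$ then gives $\max\{\rho(\mode)\mid\mode\in M_k\}\le\rho(\mode_0)-k$, with the convention $\max\emptyset\eqdef-\infty$, using that every $\mode'\in M_k$ is an $\cH$-successor of some $\mode\in M_{k-1}$. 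Hence $M_k=\emptyset$ as soon as $k>\rho(\mode_0)$, a fortiori once $k$ exceeds the finite range of $\rho$, and therefore $X_k=\emptyset$ for all such $k$. Since nothing in the argument depends on which transient mode $\mode_0$ we start from, it applies to every transient mode, which is exactly Assumption~\ref{elrituhgui}.

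The only real difficulty is bookkeeping rather than substance: the recursion in (\ref{giohoioigth}) takes a \emph{union} over all modes currently reachable, so the decrease of $\rho$ must be controlled simultaneously along every branch, not along a single path — this is what the induction on $\max\{\rho(\mode)\mid\mode\in M_k\}$ takes care of. A secondary, minor step is to translate whatever concrete form Assumption~\ref{ughlrhlruih} is stated in (acyclicity of the reset–reachability relation on transient modes, well-foundedness of the induced ordering on reset–variable sets, or a direct bound $L$ on cascade length) into the existence of such a rank $\rho$; in the last case the conclusion is even more immediate, since one may simply observe $M_k=\emptyset$ for $k>L$. In all cases the proof closes by the absorbing property of $\emptyset$ noted at the outset.
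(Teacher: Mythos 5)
The paper states this lemma without proof, so there is no reference argument to match against; judged on its own merits, your sketch has the right high-level shape (rank function plus finiteness of the set of transient modes) but leaves the one genuinely substantive step unaddressed, and introduces an unnecessary subsidiary claim that is in fact doubtful.

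The substantive step is to extract the rank function from Assumption~\ref{ughlrhlruih}. You explicitly defer this, writing that one should ``translate whatever concrete form Assumption~\ref{ughlrhlruih} is stated in'' into a rank, and you list several guesses (acyclicity of a reachability relation, well-foundedness, a length bound). But the assumption is none of these: it is the quantitative inequality $\degree{}{X(\Modes(Y))} > \degree{}{Y}$ for every reset-variable set $Y$, where $\degree{}{\cdot}$ is the minimum shifting/differentiation degree. This \emph{is} the rank: apply the inequality to $Y = X(M_{k-1})$, use $X(M_k) = X(\Modes(X(M_{k-1})))$, and conclude $\degree{}{X(M_k)} > \degree{}{X(M_{k-1})}$ directly on the sets $X_k = X(M_k)$ that Assumption~\ref{elrituhgui} is about. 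Since Assumption~\ref{liruyeliutyl} bounds all derivative/shift degrees occurring in the model, a strictly increasing integer sequence of degrees must terminate, so some $X_k$ is empty, and $\emptyset$ is absorbing for the recursion~(\ref{giohoioigth}). That is the whole proof, in two sentences, with no graph needed.

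The graph detour also contains a questionable claim: you assert that ``$M_k$ is the union of the $k$-step $\cH$-successors of $\mode_0$'' and, in the induction step, that ``every $\mode'\in M_k$ is an $\cH$-successor of some $\mode\in M_{k-1}$.'' Since $M_k = \Modes(X(M_{k-1}))$ and $X(M_{k-1}) = \bigcup_{\mode\in M_{k-1}} X(\mode)$, that claim amounts to the additivity $\Modes(Y_1\cup Y_2) = \Modes(Y_1)\cup\Modes(Y_2)$. Monotonicity of $\Modes$ gives the inclusion ``$\supseteq$'', but ``$\subseteq$'' is not automatic: a transient mode may be triggered only by a \emph{simultaneous} non-infinitesimal change of variables belonging to different $X(\mode)$, $\mode\in M_{k-1}$, so it need not be a successor of any single mode of $M_{k-1}$. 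Your observation that one must ``control the decrease simultaneously along every branch'' via $\max\{\rho(\mode)\mid\mode\in M_k\}$ is correct only because, with $\rho(\mode) = -\degree{}{X(\mode)}$, this max equals $-\degree{}{X(M_k)}$ — i.e.\ you are secretly just re-deriving the degree argument at the level of sets, which applies Assumption~\ref{ughlrhlruih} to $Y = X(M_{k-1})$ directly and never needs the edge-wise decomposition. I would drop the graph $\cH$ and the per-mode rank entirely, and argue at the level of the chain $(X_k)_{k\ge 0}$.
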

where Assumption~$\ref{ughlrhlruih}$ states:
\begin{assumption}
	\label{ughlrhlruih} For every subset $Y$ of reset variables, 
	\beqq
	\degree{}{X(\Modes(Y))} &>& \degree{}{Y} 
	\eeqq
	where $\degree{}{Z}$ is the smallest (shifting or differentiation) degree among all variables of $Z$.
\end{assumption}
Assumption~\ref{ughlrhlruih} holds if, when a transient mode is reached due to a predicate on positions $x,y,\dots$, then only velocities $\dot{x},\dot{y},\dots$ or accelerations $\ddot{x},\ddot{y},\dots$ can be reset. This kind of property is often encountered in classes of physical systems, e.g., contact mechanics.
%END\input{StructuralmDAE}
%%%%%%%%%%%%%%%%%%%%%%%%%%%%%%%%%%%%%%%%%%%%%%%%%%%%%%%%%%%%%%%%%%%%%%%%%%%%%%%%%%%%%%%%%
%\input{deltaAE}
\section{Systems with shifts and differentiations}
\label{sec:mdDAE}
So far, our formalism of multi-mode DAE systems handles only guarded DAEs. For more elaborate examples, we  need to support also left- and right-limits, to be able to explicitly define  restart or reset values for the dynamics of a given mode. With reference to model class (\ref{ow8ugthoerui}), our extension takes the form
\begin{equation}
\!\!\!\bea{rl}
\mbox{if} & \guard_j(\mbox{the } x_i \mbox{ their derivatives and left-limits}) \\
\mbox{do} & f_j(\mbox{the } x_i \mbox{ their derivatives and right-limits}) = 0
\eea
\label{edfsrogieosgfi}
\end{equation}
Right-limits are typically used in defining restart values at mode changes.

Since we will map the dynamics in the nonstandard domain, we will represent the left- and right-limits $x^-$ and $x^+$ through the backward and forward shifts $\preset{x}$ and $\postset{x}$, see \rref{defn:tshifts}.  
We thus consider a syntax offering  derivative $\prime$ and  forward shift $\bullet$, both related by the following identities in the nonstandard semantics: 
\beq
\bea{rcl}
\dot{x} &=& \frac{1}{\vsmall}(\postset{x}-x)
\\ \vspace*{-3mm} \\
\postset{x} &=& x + \vsmall\ttimes\dot{x}
\eea
\label{weorifjqoi}
\eeq
The resulting formalism will be called mdDAE, or \deltaAE, for short. We formalize this next.

\subsection{Preliminaries}
In the nonstandard semantics, shift and derivative are related via identities (\ref{weorifjqoi}). In these preliminaries we investigate additional relations and properties that follow from identities (\ref{weorifjqoi}).
\begin{lemma} \
	\label{erlirgfuih} 
\begin{enumerate}
	\item \label{rtguii} The two operators $\bullet$ and $\prime$ commute: for every variable $x$, we have \mbox{$x^{\bullet\prime}=x^{\prime\bullet}$}.
	\item \label{lirufeli} For $f:\bR^k\mapsto\bR$ a $\mathcal{C}^1$-function and $X$ a vector variable of dimension $k$, define	
	\beq\bea{rcl}
	\postset{f}(X)\eqdef	[f(X)]^\bullet &\eqdef& {f}(\postset{X}) \\ [1mm]
	(f(X))^\prime &\eqdef& \nabla{f}(X).X^\prime
	\eea
	\label{roitughoiu}
	\eeq
	 where $\nabla{f}$ denotes the Jacobian of $f$. 
	Then, differentiation and shift commute, meaning that: $[f(X)]^{\bullet\prime}=[f(X)]^{\prime\bullet}$. \eproof 
\end{enumerate}
\end{lemma}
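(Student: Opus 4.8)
The plan is to reduce everything to the single variable identity in part~\ref{rtguii}, which itself is a direct computation from the two relations in~(\ref{weorifjqoi}). For part~\ref{rtguii}: starting from $\postset{x}=x+\vsmall\ttimes\dot{x}$, apply the derivative operator to both sides. Since differentiation is $\bbR$-linear and $\vsmall$ is a (nonstandard) constant, this gives $x^{\bullet\prime}=\dot{x}+\vsmall\ttimes\dder{2}{x}$. On the other hand, starting from $\dot{x}=\frac{1}{\vsmall}(\postset{x}-x)$ and applying the shift operator $\bullet$, which is also linear and commutes with multiplication by the constant $\vsmall$, gives $x^{\prime\bullet}=\frac{1}{\vsmall}(\ppostset{2}{x}-\postset{x})$. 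It then suffices to expand $\ppostset{2}{x}$ and $\postset{x}$ using $\postset{z}=z+\vsmall\ttimes\dot{z}$ twice (first with $z=\postset{x}$, then with $z=x$): $\ppostset{2}{x}-\postset{x}=\vsmall\ttimes\postset{\dot{x}}=\vsmall\ttimes(\dot{x}+\vsmall\ttimes\dder{2}{x})$ after one more use of the same relation on $\dot{x}$. Dividing by $\vsmall$ recovers $\dot{x}+\vsmall\ttimes\dder{2}{x}$, matching $x^{\bullet\prime}$. Hence $x^{\bullet\prime}=x^{\prime\bullet}$.

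For part~\ref{lirufeli}, I would carry the commutation through the two definitions in~(\ref{roitughoiu}). Compute $[f(X)]^{\prime\bullet}$ first: by definition $(f(X))^\prime=\nabla f(X).X^\prime$, and shifting this product gives $[\nabla f(X).X^\prime]^\bullet=\nabla f(\postset{X}).\postset{(X^\prime)}$, using that $\bullet$ distributes over the (finite) sum and product defining the scalar product $\nabla f(X).X^\prime$ and that $[\nabla f(X)]^\bullet=\nabla f(\postset{X})$ by the first line of~(\ref{roitughoiu}) applied componentwise. Now compute $[f(X)]^{\bullet\prime}$: by definition $[f(X)]^\bullet=f(\postset{X})$, and differentiating via the second line of~(\ref{roitughoiu}) gives $[f(\postset{X})]^\prime=\nabla f(\postset{X}).(\postset{X})^\prime$. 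The two expressions agree provided $\postset{(X^\prime)}=(\postset{X})^\prime$, i.e. provided $\bullet$ and $\prime$ commute on each component of $X$ — which is exactly part~\ref{rtguii}. So part~\ref{lirufeli} follows from part~\ref{rtguii} together with the chain-rule-style definition of the derivative of a composite in~(\ref{roitughoiu}).

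The only mild subtlety, and the place I would be slightly careful, is purely bookkeeping: the ``derivative'' here is not the analytic derivative but the formal operator defined by~(\ref{roitughoiu}) (a symbolic Jacobian times the tuple of formal derivatives), so the commutation claim is an algebraic identity between symbolic expressions, and I should make sure each manipulation (linearity of $\bullet$ over finite sums/products, $\bullet$ commuting with multiplication by the constant $\vsmall$, and $\nabla f$ commuting with $\bullet$ because $[\cdot]^\bullet$ acts by the substitution $X\mapsto\postset{X}$) is one of the operations actually licensed by Notation~\ref{lerigftuheroiuof} and Assumption~\ref{liruyeliutyl}. None of these is deep; I expect no genuine obstacle, just the need to phrase the computation at the level of formal expressions rather than functions, and to invoke part~\ref{rtguii} at the one point where $\postset{(X^\prime)}=(\postset{X})^\prime$ is needed.
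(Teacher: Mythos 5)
Your proposal is correct and follows essentially the same route as the paper: part~\ref{rtguii} reduced to direct manipulation of the identities~(\ref{weorifjqoi}), and part~\ref{lirufeli} by substituting definitions~(\ref{roitughoiu}) and invoking part~\ref{rtguii} at the one place where $(\postset{X})^\prime=\postset{(X^\prime)}$ is needed (the paper runs the chain starting from $[f(X)]^{\bullet\prime}$, you start from $[f(X)]^{\prime\bullet}$, but it is the same computation). Your part~\ref{rtguii} argument takes a small detour — the cleanest path is to rewrite both $x^{\bullet\prime}$ and $x^{\prime\bullet}$ as $\frac{1}{\vsmall}\bigl(\ppostset{2}{x}-\postset{x}\bigr)$ via the first identity of~(\ref{weorifjqoi}) plus linearity of $\bullet$ — but your version reaching $\dot{x}+\vsmall\,\dder{2}{x}$ on both sides is equally valid.
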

\begin{proof}
	Statement~\ref{rtguii} follows immediately from identities (\ref{weorifjqoi}). Then, the equalities
	\beqq\bea{rcl}
	[{f}(X)]^{\bullet\prime} = \dot{[\postset{{f}}(X)]}&\!\!\!=\!\!\!& {\nabla{f}}(\postset{X}).X^{\bullet\prime}
	\\ [1mm]
	\mbox{(by Statement~\ref{rtguii})}&\!\!\!=\!\!\!& \postset{[{\nabla{f}}(X).\dot{X}]} 
	\\ [1mm]
	&\!\!\!=\!\!\!& [{f}(X)]^{\prime\bullet}
	\eea
	\eeqq
	prove Statement~\ref{lirufeli}.
\end{proof}
We consider the alphabet $\{\bullet,\prime\}$ related through identities (\ref{weorifjqoi}), representing the shift and the derivative operators, respectively. 
Successive differentiations or shifts of a variable $x$ are denoted by $x^{w}$ for some word $w\in\{\bullet,\prime\}^*$ where $^*$ denotes the usual Kleene closure. We denote by $\bullet{n}$ and $\prime{n}$ the concatenation of $n$ successive $\bullet$ and $\prime$, respectively.
In $x^{w}$, word $w$ is read from left to right, e.g., in $x^{\prime\bullet}$, $x$ is first differentiated, and then $x'$ is shifted. 
We denote by $X^{w}$ the set of all $x^{w}$ for $x$ ranging over the set $X$ of variables. For $\Sigma\subseteq\{\bullet,\prime\}^*$, we set
\beq
X^{(\Sigma)} \eqdef \bigcup_{w\in\Sigma}\,x^{w} &\mbox{ and }&
\{X\} \eqdef  X^{(\{\bullet,\prime\}^*)}
\label{poqiurwehpguo}
\eeq
\begin{definition}
\label{defn:deltaAE}
A \emph{{\deltaAE} system} is a finite set of \emph{guarded equations} of the form:
  \beq
e&::=&~~~\prog{if}\;\guard\;\prog{do}\;f{=}0 \label{operguiophzz}  \\
\postset{\guard}&::=&~~~\mathit{bexp} \label{leriughouzz}
\eeq
where
$f$ is a smooth scalar function over $\{X\}$,  the shifted guard $\postset{\guard}$ is defined by $\mathit{bexp}$, which is a \emph{typed} boolean expression of predicates over $\{X\}$, and the guard $\guard$ is typed $\prog{\llong}/\prog{\transient}$.
\end{definition}
The reader is invited to compare this definition with \rref{defn:mDAEmodified}. The notations of \rref{defn:mDAEmodified} and the explanations that follow it extend to {\deltaAE}.
In a \deltaAE\ model we may find $x,\dot{x},\postset{x},\postset{\ddot{x}},x^{\prime\bullet\prime}$ and so on. 
Using (\ref{weorifjqoi}) and Lemma~\ref{erlirgfuih} repeatedly, $x^{\prime\bullet\prime}$ can be expressed as a function of $x,\postset{x},\ppostset{2}{x}$, $\ppostset{3}{x}$ or as a function of  $x,\dot{x},\ddot{x},x^{\prime{3}}$. 
Consequently, for \mbox{$w\in\{\prime,\bullet\}^*$} a word over the alphabet $\{\prime,\bullet\}$,
it is natural to define the \emph{degree} of $x^{w}$ with respect to $x$ as being $|w|$, the length of word $w$.
For $X$, $\Sigma\subseteq\{\bullet,\prime\}^*$, and $
X^{(\Sigma)}
$ as in (\ref{poqiurwehpguo}), let $E$ be a system of algebraic equations over the set $X^{(\Sigma)}$. 
\beq
\mbox{
\begin{minipage}{14cm}
	We denote by $^{(\bullet)}\!E$  the system obtained by expressing any monomial $x^{w}$ as a function of $x,\postset{x},\ppostset{2}{x},\ppostset{3}{x},\dots$, by repeatedly using the first identity of (\ref{weorifjqoi}) and Lemma~\ref{erlirgfuih}.
\end{minipage}
} 
\label{pw49guhl}
\eeq
System $^{(\bullet)}\!E$ involves only shifts and no differentiations. 
Vice-versa, recalling (\ref{roitughoiu}):
\beq
\mbox{
\begin{minipage}{14cm}
	For $E$ of the shifted form $E = \ppostset{K}{F}$, let $^{(\prime)}\!E$  
be obtained by repeatedly applying, to $E$, until all shifts get eliminated from it, the map
$
G(\postset{X}) \mapsto G(X)+\nabla{G}(X).\dot{X}
$,
where $\nabla G$ denotes the Jacobian of $G$. 
\end{minipage}
}
\label{oweiurghoui}
\label{o4f8ygffgwelio}
\eeq
Note the difference in defining $^{(\bullet)}\!E$ and $^{(\prime)}\!E$.

\subsection{Shifting versus differentiating}
The whole Sections~\ref{sec:loguhiuliugh} and~\ref{sec:csemantics}, where the structural analysis of \mdAE\ systems was developed, extends verbatim to {\deltaAE} systems, up to the following changes:
\begin{itemize}
	\item Replace everywhere the superscript $\bullet{m}$, for $m{\in}\bN$, by a word $w{\in}\{\prime,\bullet\}^*$ of length $|w|{=}m$; define, for $\eqq^{w}::\prog{if}\;\guard\;\prog{do}\;f^{w}{=}0$, the operators
\beq
	\bea{rr}
	\atomicact{ForwardShift}:\hspace*{-3mm}& 	\eqq^{w\bullet} \eqdef \prog{if}\;\guard\;\prog{do}\;f^{w\bullet}=0 
\\
[1mm]
\atomicact{Differentiate}:\hspace*{-3mm}& 
	\eqq^{w\prime} \eqdef \prog{if}\;\guard\;\prog{do}\;f^{w\prime}=0
	\eea
	\label{elpiuegfip}
	\eeq
\item When invoking the \sigmamethod\ (via $\atomicact{IndexReduc}$) in \rref{alg:newmain} (respectively \rref{alg:ergfipuhfopiu}), we can either use differentiation or forward shift at \rref{op:porihujrpoi} (respectively \rref{op:porihujrpoirevis}). The question is: does this choice influence the outcome of the Dulmage-Mendelsohn decomposition at \rref{op:roghuphpu} of $\atomicact{SolveConflict}$ (\rref{alg:elsgouihuip})?
Other steps of \rref{alg:newmain} or \rref{alg:ergfipuhfopiu} are for sure not affected.
\end{itemize}
The alternative ``shift or differentiate'' raises the following question: \emph{what is the effect, on the structural analysis, of chosing a particular alternative each time we reach \rref{op:porihujrpoi} of \rref{alg:newmain} or \rref{op:porihujrpoirevis} of \rref{alg:ergfipuhfopiu}}\,? We address this issue next.

\subsection{Invariance results}
\label{q3p95oui}
We use the notion of weighted bipartite graph associated to a \deltaAE\ system $F{=}0$. We refer the reader to Section~\ref{reoiughfuil} for this notion associated with DAE systems, and we extend it to \deltaAE\ systems by deducing weights from the length $|w|$ of words over the alphabet $\{\bullet,\prime\}^*$.
Thanks to the two-way correspondence (\ref{weorifjqoi}), the following holds:
\begin{lemma}
	\label {ow47ghleuir} 
	For $F{=}0$ any system of algebraic equations over $X^{(\Sigma)}$: $F$, ${^{(\bullet)}\!F}$, and ${^{(\prime)}\!F}$ (when the latter can be considered) possess identical weighted bipartite graphs. As a consequence, applying \emph{\rref{alg:indexreduction}} (the \sigmamethod) to $F$, ${^{(\bullet)}\!F}$, or ${^{(\prime)}\!F}$ yields the same result in terms of offsets and outcome of the DM decomposition.
\end{lemma}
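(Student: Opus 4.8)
The plan is to show that the three systems $F$, $^{(\bullet)}\!F$, and $^{(\prime)}\!F$ share the same weighted bipartite graph, and then invoke the fact (already established in Section~\ref{reoiughfuil}) that the \sigmamethod\ and the Dulmage--Mendelsohn decomposition depend only on this weighted bipartite graph. The first reduction is purely combinatorial: since the linear program pair~(\ref{erofuierhopiu})--(\ref{guiohpui}) and the iterative algorithm computing the offsets take as input \emph{only} the bipartite graph together with its integer weights $d_{\!f\!x}$, any two systems with identical weighted bipartite graphs produce identical offsets $\bfc,\bfd$, hence identical $F_\Sigma$ and $\consistency{F}_\Sigma$ (up to the syntactic encoding shift-vs-differentiate of latent equations), and in particular identical incidence matrices feeding into the DM decomposition. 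So the entire content of the lemma lies in the claim: \emph{the weighted bipartite graph is invariant under the rewritings $F\mapsto{^{(\bullet)}\!F}$ and $F\mapsto{^{(\prime)}\!F}$.}

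First I would fix notation: for a system $F{=}0$ over $X^{(\Sigma)}$ (with $\Sigma\subseteq\{\bullet,\prime\}^*$), an equation $f{=}0$ and a base variable $x\in X$, the weight $d_{\!f\!x}$ is the maximal degree $|w|$ of any monomial $x^{w}$ occurring in $f$, and the edge $(f,x)$ is present iff some such $x^{w}$ occurs. I would then analyze the two rewritings separately. For $^{(\bullet)}\!F$ (definition~(\ref{pw49guhl})): each monomial $x^{w}$ with $w\in\{\bullet,\prime\}^*$ is rewritten, by repeated use of the first identity of~(\ref{weorifjqoi}) and the commutation Lemma~\ref{erlirgfuih}, into a finite expression in $x,\postset{x},\dots,\ppostset{|w|}{x}$. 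The crucial point is twofold: (i) the highest shift appearing is exactly $\ppostset{|w|}{x}$, since each application of $\dot{z}=\frac1\vsmall(\postset{z}-z)$ to a monomial of current top degree $k$ produces a monomial of top degree $k{+}1$ and nothing higher, and the top-degree term $\ppostset{|w|}{x}$ genuinely survives with nonzero coefficient (it is $\pm\vsmall^{-(\#\prime\text{'s})}$ times the leading coefficient); and (ii) the set of \emph{base} variables mentioned is unchanged---no new $x'\in X$ with $x'\neq x$ is introduced by the expansion of a monomial in $x$. Combining (i) and (ii) over all monomials of $f$, the edge set and the weights $d_{\!f\!x}$ are preserved, i.e. $\cG_{^{(\bullet)}\!F}=\cG_F$ with identical weights.

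For $^{(\prime)}\!F$ (definition~(\ref{o4f8ygffgwelio})): here $F$ must be of the shifted form $E=\ppostset{K}{F}$, and one repeatedly applies $G(\postset{X})\mapsto G(X)+\nabla G(X).\dot X$. The same two observations hold: each such step replaces a monomial of top degree $k$ (say $x^{w}$ with $|w|=k$ ending in $\bullet$) by terms of top degree $\leq k$, with the genuine top-degree contribution being $x^{w'\prime}$ where $w'$ is $w$ with its final $\bullet$ removed---so again no degree is lost or spuriously gained, and no new base variables appear; moreover Lemma~\ref{erlirgfuih} guarantees the $\bullet$ and $\prime$ operators commute, so the rewriting is well-defined and the top-degree bookkeeping is consistent regardless of the order in which monomials are processed. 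Hence $\cG_{^{(\prime)}\!F}=\cG_F$ with identical weights as well. I would then close by remarking that, since the \sigmamethod\ (Algorithm~\ref{alg:indexreduction}, via $\atomicact{ExistsMatching}$ and $\atomicact{FindOffsets}$) and the DM decomposition are defined entirely in terms of the weighted bipartite graph (Problem~\ref{liftuerhpituhepu8}, LPs~(\ref{erofuierhopiu})--(\ref{guiohpui}), Definition~\ref{elrftuierhfperui}), equal weighted bipartite graphs force equal offsets and equal DM outcome, which is the assertion.

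The main obstacle I anticipate is item~(i): carefully verifying that the leading (highest-degree) term is never cancelled during the rewriting, so that the weight is genuinely preserved and not merely bounded. This requires tracking the nonzero leading coefficients $\vsmall^{\pm j}$ through the substitutions and noting that the Euler expansion of a pure derivative $\ppostset{\ell}{x}$-type or $\pprime{\ell}{x}$-type monomial produces a \emph{triangular} change of variables between $\{x,\postset{x},\dots,\ppostset{\ell}{x}\}$ and $\{x,\dot x,\dots,\pprime{\ell}{x}\}$ with unit-up-to-$\vsmall$-power diagonal, so invertibility---and hence non-cancellation of the top term---is automatic. Once that triangularity is made explicit, the rest is routine bookkeeping over monomials and equations.
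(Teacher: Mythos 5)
The paper does not actually give an explicit proof of Lemma~\ref{ow47ghleuir}: it is stated as an immediate consequence of the ``two-way correspondence'' (\ref{weorifjqoi}). Your argument supplies exactly the details the paper leaves implicit, and it is in the right spirit: track each monomial $x^{w}$ through the rewritings (\ref{pw49guhl}) and (\ref{o4f8ygffgwelio}), observe that the base variable set and the maximal degree $|w|$ are preserved, conclude that the weighted bipartite graph is invariant, and then observe that Problem~\ref{liftuerhpituhepu8}, LPs~(\ref{erofuierhopiu})--(\ref{guiohpui}), and the DM decomposition are functions of that graph alone. So your proof is correct in approach and effectively \emph{is} the paper's intended proof made explicit.

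One refinement on the obstacle you flag yourself. The triangularity argument establishes that the expansion of a \emph{single} monomial $x^{w}$ has a genuine top-degree term $\ppostset{|w|}{x}$ (resp.\ $\pprime{|w|}{x}$) with nonzero $\vsmall$-power coefficient. It does not, by itself, rule out cancellation of the top-degree contributions coming from two \emph{different} monomials in $x$ occurring in the same equation $f$: e.g.\ $f=\postset{x}-\vsmall\dot{x}$ has weight $1$ syntactically, while $^{(\prime)}\!f = x$ has weight $0$. The cleaner way to dispose of this is to note that in this framework the bipartite graph is read off syntactically, before any algebraic simplification --- one does not collect terms across monomials when forming $^{(\bullet)}\!F$ or $^{(\prime)}\!F$ --- and that, even if one did, the structural-analysis viewpoint (Lemma~\ref{oerigtuerio}) treats such coincidental cancellations as non-generic. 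Either remark closes the hole; the triangularity invertibility argument alone does not, since a triangular change of variables does not prevent a particular \emph{combination} of its outputs from being degenerate. With that one sentence added, your proof is complete and matches the paper's intent.
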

Recall the following definition: An \emph{execution tree} for \rref{alg:newmain} is a labeled tree $T$ representing all its possible runs. Nodes of $T$ are labeled by one of the \ref{op:newtick} action labels of \rref{alg:newmain} and each maximal path 
of $T$ describes a possible run of \rref{alg:newmain} through the sequence of action labels sitting at the nodes of this path. 
In this section, we investigate the effect of
\beq
\mbox{
\begin{minipage}{14cm}  selecting one of the two alternatives
in (\ref{elpiuegfip}), each time index reduction is performed.
\end{minipage}
}
\label{erogirhjugi}
\eeq
Performing (\ref{erogirhjugi}) results in corresponding variations of the context $\Delta$ at the next instant. 
We thus consider the set $\cT$ of all executions trees obtained by performing (\ref{erogirhjugi}) in \rref{alg:newmain}. 
From Lemma~\ref{ow47ghleuir}  we immediately deduce:
\begin{lemma}
	\label {weor8gfyubh} 
$\cT$ is a singleton consisting of an execution tree that we denote by $T_S$. In addition, the contribution, to the bipartite graph, by the context $\Delta$, is independent from these choices. The same holds for \emph{\rref{alg:ergfipuhfopiu}}.
\end{lemma}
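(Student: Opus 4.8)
\textbf{Proof plan for Lemma~\ref{weor8gfyubh}.}

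The plan is to reduce everything to Lemma~\ref{ow47ghleuir}, which already tells us that the weighted bipartite graph (and hence the $\Sigma$-method offsets and the Dulmage--Mendelsohn decomposition) is insensitive to whether we realize a given power of the leading-variable reduction operator by iterated forward shift $\bullet$ or iterated differentiation $\prime$. So the real content is bookkeeping: I must check that every place in \rref{alg:newmain} where the choice (\ref{erogirhjugi}) could propagate is in fact shielded by the graph-invariance of Lemma~\ref{ow47ghleuir}. First I would fix an arbitrary execution tree $T\in\cT$, i.e. a tree obtained by committing, at each visit of \rref{op:porihujrpoi} (index reduction), to one of the two alternatives of (\ref{elpiuegfip}); I want to show $T=T_S$ as a labeled tree and, simultaneously, that the contribution of the context $\Delta$ to the bipartite graph at the next instant does not depend on the committed choices.

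The argument proceeds by induction on the depth of a node in the execution tree, i.e. on the number of $\atomicact{Tick}$s performed so far. At a given node, the incoming data are a status $\status$ and a context $\Delta$. The induction hypothesis is that $\status$ and the \emph{bipartite-graph contribution} of $\Delta$ are the same for all choices; note I am \emph{not} claiming that $\Delta$ itself, as a set of syntactic equations, is choice-independent---two equations differing only in whether a power of the operator is written with $\bullet$'s or $\prime$'s are distinct as syntax but, by Lemma~\ref{ow47ghleuir}, identical for the weighted bipartite graph, which is all that downstream algorithms see. Now I walk through \rref{alg:newmain}: \rref{op:newinitfacts} (computing $\Phi$) depends only on guards and enabledness, hence on $\status$, hence is choice-independent; \rref{op:newenabledeqs} (forming $G$) likewise; \rref{op:porihujrpoi} invokes $\atomicact{IndexReduc}$, and by Lemma~\ref{ow47ghleuir} the returned offsets and the split $(G_\Sigma,\consistency{G}_\Sigma)$ are graph-identical regardless of whether latent equations were produced by $\atomicact{ForwardShift}$ or $\atomicact{Differentiate}$---the resulting status change $\irrelevant\nearrow\undef$ on latent equations is the same; \rref{op:elkfuiehoeuio} calls $\atomicact{SolveConflict}$, whose only input-sensitive step is the Dulmage--Mendelsohn decomposition at \rref{op:roghuphpu}, which depends solely on the weighted bipartite graph of $(G_\Sigma\cup\consistency{G}_\Sigma)\setminus\Phi$ together with the graph-contribution of $\Delta$ through $\Phi$ and the status---all choice-independent by the induction hypothesis and Lemma~\ref{ow47ghleuir}; $\atomicact{Eval}$ at \rref{op:weorpifuhfpeiu} changes statuses $\undef\nearrow\ttt$ on the same variables and equations; finally $\atomicact{Tick}$ at \rref{op:newtick} produces $\wpostset{\status}$ and $\wpostset{\Delta}$, and here $\wpostset{\Delta}$ is the set of equations solved at the current instant---again possibly differing in $\bullet$-vs-$\prime$ syntax but with the same graph-contribution at the next instant. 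This closes the induction: at every node the action label (one of the \rref{op:newtick} labels) and the branching (the hypothesized guard valuations at \rref{op:newtick}, which are purely symbolic oracles) coincide across all choices, so $T=T_S$, and $\cT=\{T_S\}$.

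The same walk-through applies line by line to \rref{alg:ergfipuhfopiu}: the only new lines are \rref{op:oelrigfurhoihoi} ($\atomicact{EvalModeType}$, depending only on $\status$) and \rref{op:porihujrpoirevis}, where in the transient case $\atomicact{IndexReduc}$ routes to $\atomicact{DiffArray}$ (\rref{alg:timevarindexreduction}); but $\atomicact{DiffArray}$ is itself built from $\atomicact{ExistQuantifEqn}$ and $\DM$ applied to the array $\cA$, all of which see only weighted bipartite graphs, so Lemma~\ref{ow47ghleuir} again neutralizes the $\bullet$-vs-$\prime$ choice (here using the fact, from (\ref{pw49guhl})--(\ref{oweiurghoui}), that $^{(\bullet)}\!E$ and $^{(\prime)}\!E$ have the same weighted bipartite graph). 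I expect the main obstacle to be purely one of careful phrasing rather than mathematics: one must be scrupulous about the distinction between \emph{syntactic equality} of intermediate systems/contexts (which genuinely fails---the $\bullet$ and $\prime$ expansions are different formulas) and \emph{graph-level equivalence} (which holds and is all that the structural algorithms consume), and one must verify that no step of either $\atomicact{ExecRun}$ ever uses an intermediate system in a way other than through its weighted bipartite graph, its status, or its guards. Once that invariant is pinned down, the induction is routine.
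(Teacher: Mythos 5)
Your proposal is correct and takes exactly the route the paper intends: reduce to Lemma~\ref{ow47ghleuir} (identical weighted bipartite graphs for $F$, $^{(\bullet)}\!F$, $^{(\prime)}\!F$, hence identical offsets and DM decompositions) and observe that every downstream step of Algorithms~\ref{alg:newmain} and~\ref{alg:ergfipuhfopiu} consumes only the weighted bipartite graph, the statuses, and the guard oracles. The paper gives no explicit proof (it simply asserts ``From Lemma~\ref{ow47ghleuir} we immediately deduce''), so your inductive walk-through, with the careful distinction between syntactic equality and graph-level equivalence of intermediate systems and contexts, fills in precisely the bookkeeping the paper leaves implicit.
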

Lemma \ref{weor8gfyubh} means that Algorithms~\ref{alg:newmain} and~\ref{alg:ergfipuhfopiu} will visit the same sequence of action labels, irrespectively of all the choices made (either shifting or differentiating). The systems of equations produced by \rref{op:porihujrpoi} of \rref{alg:newmain} and \rref{op:porihujrpoirevis} of \rref{alg:ergfipuhfopiu} may differ, but possess identical weighted bipartite graphs. From now on we focus on \rref{alg:ergfipuhfopiu}, of which \rref{alg:newmain} is a specialization.

\subsection{Executing a nonstandard instant, revisited}
\label{erliuherliu}
In \rref{alg:ergfipuhfopiu}, forward shifting is used when performing index reduction. 
In Section~\ref{sec:mdDAE}, however, we have seen that, in \rref{alg:ergfipuhfopiu} forward shifting and differentiation can be substituted for one another.
This leads us to consider the following strategy, whose motivation is to make standardization easier (see the standardisation of the Cup-and-Ball example in \rref{sec:oguhkiu}): 
\begin{strategy}
	\label{oerigfu79} In \emph{\rref{alg:ergfipuhfopiu}}, we priorize shifts or differentiations depending on the following cases:
\begin{enumerate}
	\item \label{rouiygfoui} The considered nonstandard instant belongs to a cascade of events: we priorize shifts;
	\item \label{w478ghsljkl} The considered nonstandard instant is in a continuous mode: we priorize differentiation;
	\item \label{etyoiryp} We are entering a continuous mode: we priorize differentiation unless it brings leading derivatives.
\end{enumerate}
\end{strategy} 
The justification of point~\ref{w478ghsljkl} is clear: we should reproduce the usual index reduction. In a cascade of mode changes occurring at successive nonstandard instants, we are in a discrete time dynamics and thus point~\ref{rouiygfoui} follows. Finally, point~\ref{etyoiryp} is justified by the fact that, if $\pprime{n}{x}$ is the leading derivative for $x$ in the new mode, we need to know restart values for the lower derivatives $x,\dot{x},\dots,\pprime{(n-1)}{x}$, given by the values for the variables $\postset{x},\postset{{\dot{x}}},\dots,x^{\prime(n-1)\bullet}$.

\subsection{Generic Blocks}
\label{sec:lteiughlituh}
Despite all the different choices (either $\prime$ or $\bullet$) give rise to the same execution tree $T_S$ by Lemma~\ref{weor8gfyubh}, they may result in different standardizations. We thus want to refine our understanding of the effect of choices (\ref{erogirhjugi}) when applying \rref{alg:ergfipuhfopiu}. What we are really interested in is what happens at \rref{op:weorpifuhfpeiurevis} where enabled blocks are solved: it may matter whether shift or differentiation was used at each particular traverse of \rref{op:porihujrpoirevis} of \rref{alg:ergfipuhfopiu}. 
To this end we investigate the effect, on a given $e\in\bE^S$, of the choices (\ref{erogirhjugi}) performed in \rref{alg:ergfipuhfopiu}, where $\bE^S$ is the set of all the guarded equations of a given \deltaAE\ system $S$. 
The successive application of \rref{op:porihujrpoirevis} and \rref{op:elkfuiehoeuiorevis} of \rref{alg:ergfipuhfopiu} yields blocks of a specific form that we investigate next.

\paragraph{Notations:} Let $w\in\{\prime,\bullet\}^*$ be a word of length $n$,  let $w[0],\dots,w[n-1]$ be the successive prefixes of $w$, starting from the empty prefix. For $f:\bR^K\mapsto\bR$ a smooth numerical function and $w$ as above, set
\beqq\bea{rcll}
B(f,w)&\eqdef& \left[\bea{l}
B^{\head}(f,w) \\ B^{\tail}(f,w)
\eea\right] &\mbox{where}
\\ \vspace*{-2mm} \\
B^{\head}(f,w) &\eqdef& \left[\bea{l}
f^{(w[0])} \\ f^{(w[1])} \\ \,\vdots \\ f^{(w[n-1])} 
\eea\right] &\mbox{and}
\\ \vspace*{-2mm} \\
B^{\tail}(f,w) &\eqdef& \quad\, f^{w}
\eea
\eeqq
$B(f,w)$ is called an \emph{$(f,w)$-block}, $B^{\head}(f,w)$ is its \emph{head} and $B^{\tail}(f,w)$ is its \emph{tail.} 
For $\{f_1,\dots,f_k,\dots,f_K\}$ a tuple of $K$ smooth functions in $K$ dependent variables, \mbox{$w_1,\dots,w_k,\dots,w_K\in\{\prime,\bullet\}^*$} a set of $K$ words of respective lengths $n_{1},\dots,n_{k},\dots,n_{K}$, 
define
\beq\bea{l}
F{=}\left[\!\!\!\bea{c}f_1 \\ \vdots \\ f_K\eea\!\!\!\!\right]\!\!,
 W{=}\left[\!\!\!\bea{c}w_1 \\ \vdots \\ w_K\eea\!\!\!\!\right]\!\!,
  B(F,W) {=} \left[\!\!\!\!\bea{c}
B(f_1,w_1) \\ \vdots \\ B(f_K,w_K)
\eea\!\!\!\right]
\eea
\label{pweouigtndroui}
\eeq
and the definitions of $B^{\head}({F},W)$ and $B^{\tail}({F},W)$ follow accordingly. In addition, we will write 
\beqq
B_\bullet^{\head}(f,n) \eqdef B^{\head}(f,\bullet{n})
&\mbox{and}&
B_\prime^{\head}(f,n) \eqdef B^{\head}(f,\prime{n})
\eeqq
\eproof
\paragraph{Exchanging $\prime \leftrightarrow \bullet$:}
In this paragraph we detail such exchanges for the above identified blocks.
Let $f:\bR^m\mapsto\bR$ be a function and let \mbox{$f(X){=}0$} be the associated equation with $m$ dependent variables collected in the vector $X$. 
Consider $w{\in}\{\prime,\bullet\}^*$, and let
\beq
f^{w} &\mapsto&
\bigl(f^{(\prime{n})},f^{(\bullet{n})}\bigr)\,, \mbox{ where } n=|w|\,,
 \label{pw4ttuighui}
\eeq
be the map defined by the following algorithm:
\begin{enumerate}
	\item Initialize $h_0=g_0:=f^{w}$ and $v_0:=w$;
	\item For $j=1,\dots,n$:
\begin{enumerate}
	\item 
	Decompose $v_{j-1}:=v_j.\nu$ where $\nu\in\{\prime,\bullet\}$;
	\item 
\begin{description}
	\item[case $\nu=\prime\;$:] define $h_j:=h_{j-1}$ and  
	\beq
	\bea{rcl} g_j&:=&g_{j-1}\left[\dot{x}\leftarrow\frac{\postset{x}-x}{\vsmall}\right],
	\eea
	\label{werpghuio}
	\eeq
	where $\dot{x}$ ranges over the set of derivatives involved in $g_{j-1}$;
	\item[case $\nu=\bullet$\,:]  $h_{j-1}$ has the form $\postset{\ell}(Y)$ for some function $h$ and $Y$ its set of dependent variables; define $g_j:=g_{j-1}$ and 
	\beq
	h_j(Y)&:=&\nabla{\ell}(Y).\dot{Y},
	\label{wrebhjrbi}
	\eeq
	where $\nabla{\ell}$ is the Jacobian of $\ell$.
\end{description}
\end{enumerate}
\item Return $\bigl(f^{(\prime{n})},f^{(\bullet{n})}\bigr) = (h_n,g_n)$.
\end{enumerate}
\begin{lemma}
	\label{slguihilu} 
Applying map $(\ref{pw4ttuighui})$ to all the entries of $B^\head(f,w),B^\head(f,w)$, and $B^\tail(f,w)$, returns
\beq\bea{rrrrr}
B_\prime^\head(f,{n})
&\mbox{and}&
B_\prime^\tail(f,{n})\,,
\\
B_\bullet^\head(f,{n})
&\mbox{and}&
B_\bullet^\tail(f,{n})\,,
\eea
\label{iytfghsodij}
\eeq
respectively, where $n=|w|$. 
Doing the same for $B(F,W)$ defined in $(\ref{pweouigtndroui})$ returns
\beq\bea{rrrrr}
B_\prime(F,N) &\mbox{and}& 
B_\bullet(F,N) 
\eea
\label{fq38ogakuhyfb}
\eeq
where $N$ is the vector of integers collecting the lengths of the entries of $W$.
\end{lemma}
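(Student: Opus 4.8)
The claim is essentially a bookkeeping statement about the map $(\ref{pw4ttuighui})$ applied entrywise to the three components of an $(f,w)$-block, and then to a stacked block $B(F,W)$. The plan is to first verify the two basic invariance facts that the map produces, then argue termination and well-definedness of the map itself, and finally deduce the block-level statement by a direct induction on the length $n=|w|$.

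First I would isolate the two atomic identities underlying the map. In the case $\nu=\prime$, step $(\ref{werpghuio})$ rewrites a $\prime$ at the \emph{end} of the current suffix as a $\bullet$ acting via the first identity of $(\ref{weorifjqoi})$, namely $\dot{x}=\frac{1}{\vsmall}(\postset{x}-x)$; by Lemma~\ref{erlirgfuih}, Statement~\ref{lirufeli}, this is legitimate when applied to $f(X)$ and, crucially, the weighted bipartite graph is unchanged (Lemma~\ref{ow47ghleuir}). In the case $\nu=\bullet$, step $(\ref{wrebhjrbi})$ replaces a trailing $\bullet$ by the Jacobian-chain-rule form of $\prime$, recalling $(\ref{roitughoiu})$: $[\ell(Y)]^\bullet = \ell(\postset Y)$ versus $[\ell(Y)]' = \nabla\ell(Y).\dot Y$, and these two operators commute by Lemma~\ref{erlirgfuih}. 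So each iteration $j$ of the loop peels off the last letter $\nu$ of $v_{j-1}$ and converts the corresponding operation into its ``pure $\prime$'' avatar (tracked in $h_j$) and its ``pure $\bullet$'' avatar (tracked in $g_j$), while $h_j$ accumulates only differentiations and $g_j$ accumulates only shifts. I would note that after $n$ iterations $v_n$ is the empty word, so the loop terminates, and by construction $h_n = f^{(\prime n)}$ and $g_n = f^{(\bullet n)}$, proving the ``Return'' line is correct; this also records that the map $(\ref{pw4ttuighui})$ is well-defined.

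Next I would prove the block-level statement for a single function $f$ by induction on $n=|w|$. For $n=0$ both $B^\head$ components are empty (by convention in the Notations), and the $B^\tail$ entry is $f$ itself, which maps to $(f,f)=(B_\prime^\tail(f,0),B_\bullet^\tail(f,0))$. For the inductive step, the key observation is that the prefixes $w[0],\dots,w[n-1]$ of $w$, to which the map $(\ref{pw4ttuighui})$ is applied to form $B^\head(f,w)$, map respectively to the prefixes of $\prime{n}$ and of $\bullet{n}$ of matching lengths: $w[k]$ has length $k$, so running the map on $f^{(w[k])}$ returns $(f^{(\prime k)},f^{(\bullet k)})$ by the single-word result just established. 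Collecting these for $k=0,\dots,n-1$ gives exactly $B_\prime^\head(f,n)$ on the first coordinate and $B_\bullet^\head(f,n)$ on the second; the tail entry $f^{w}$ maps to $(f^{(\prime n)},f^{(\bullet n)}) = (B_\prime^\tail(f,n),B_\bullet^\tail(f,n))$. This yields $(\ref{iytfghsodij})$. The case of a tuple $B(F,W)$ in $(\ref{pweouigtndroui})$ is then immediate: the map acts entrywise on each $B(f_k,w_k)$, so stacking the two coordinates over $k=1,\dots,K$ produces $B_\prime(F,N)$ and $B_\bullet(F,N)$ with $N=(|w_1|,\dots,|w_K|)$, which is $(\ref{fq38ogakuhyfb})$.

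The main obstacle I anticipate is purely notational rather than mathematical: making precise that ``applying map $(\ref{pw4ttuighui})$ to an entry $f^{(w[k])}$'' indeed produces the pair indexed by $k$ within the stacked head block, i.e.\ that the bookkeeping of which prefix goes to which row is consistent. One has to be careful that the map is defined on a \emph{word} $w$ and returns a pair depending only on $|w|$ \emph{as far as the shape is concerned}, while the actual symbolic expressions $h_n,g_n$ depend on $f$ and on the intermediate Jacobians; the content of Lemma~\ref{erlirgfuih} and Lemma~\ref{ow47ghleuir} is exactly what guarantees that these rewrites are sound and graph-preserving, so no genuine analytic difficulty arises. I would therefore keep the write-up short, emphasizing the single-word lemma as the workhorse and treating the head/tail assembly and the tuple case as immediate corollaries.
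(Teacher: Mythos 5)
The paper gives no explicit proof of this lemma; it is stated as an immediate consequence of the definition of map $(\ref{pw4ttuighui})$, whose Return line already declares the output to be $(f^{(\prime n)},f^{(\bullet n)})$ with $n=|w|$. Your reconstruction is correct and is the natural route: verify the Return line (termination after $|w|$ passes; soundness of each rewrite by $(\ref{weorifjqoi})$ and Lemma~\ref{erlirgfuih}), then observe that each head entry $f^{(w[k])}$ has prefix $w[k]$ of length $k$, so entrywise application stacks into $B_\prime^\head(f,n)$ and $B_\bullet^\head(f,n)$, with the tail and the tuple $B(F,W)$ handled identically. Two minor remarks: the appeal to Lemma~\ref{ow47ghleuir} on graph preservation is not logically needed here (it becomes relevant for Lemma~\ref{weor8gfyubh}), and the inductive framing could be flattened to direct entrywise application, since each row of the head block is dispatched independently by the single-word result rather than by an inductive hypothesis; neither point affects correctness.
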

\begin{ccomment} 
	\label {wogtuhstrkl} \emph{
By regarding all $\dot{x},\ddot{x}$, etc. as dummy variables,\footnote{This means that we regard $\dot{x},\ddot{x}$, etc. as different variable names and forget about the fact that they are related via differentiation. The same term holds whith shifting.}	every structurally nonsingular block $B(F,W)$ $=0$ defines a dAE system over nonstandard time basis $\bT$. This also holds for $B_\prime(F,N)=0$ and $B_\bullet(F,N)=0$.}
\emph{
On the other hand, by not involving shifts, $B_\prime(F,N)=0$ defines a (continuous time, standard) DAE system. This is not true for $B(F,W)=0$ in general.\eproof
	}
\end{ccomment}
%END\input{deltaAE}
%%%%%%%%%%%%%%%%%%%%%%%%%%%%%%%%%%%%%%%%%%%%%%%%%%%%%%%%%%%%%%%%%%%%%%%%%%%%%%%%%%%%%%%%%
%\input{NSAprimer}
This finishes our study of structural analysis for \mDAE\ or \mdAE\ systems. Referring to \rref{fig:approach} in \rref{sec:liguholui}, we are ready to move to the two downgoing arrows, consisting in the standardization of the nonstandard model resulting from the structural analysis, together with the impulse analysis. The rest of the paper is devoted to this.

\section{Background on nonstandard analysis}
\label{sec:NSA}
\label {wo458t7gykuywe}
\label {pqe9rusipg}
The development of the clutch example in \rref{sec:simpleclutch} consisted of three main steps: (1) the mapping of the original clutch model to its nonstandard semantics, (2) the symbolic execution of this nonstandard semantics (\rref{sec:orguhuoi}), and (3) the back-standardization of this symbolic execution (\rref{sec:standardizeclutch}). All of this resulted in the actual code for the simulation results reported in \rref{fig:clutch0}. Formally justifying back-standardization requires a non-trivial use of nonstandard analysis, way beyond the quick and superficial introduction we gave in \rref{sec:nsa}.
In this and the next sections, we provide the mathematical background supporting back-standardization procedures in full generality. 

\emph{Nonstandard analysis} was proposed by Abraham Robinson in the 1960s to allow for the explicit manipulation of ``infinitesimals'' in analysis~\cite{Robinson,Cutland}. Robinson's approach was axiomatic, by adding three new axioms to the  basic Zermelo-Fraenkel (ZFC) framework. 
An alternative presentation was later proposed by Lindstr{\o}m~\cite{Lindstrom}. Its interest is that it does not require any fancy axiomatic material but only makes use of Zorn's lemma, equivalent to the axiom of choice in the ZF set theory. The proposed construction bears some resemblance to the construction of $\bR$ as the set of equivalence classes of Cauchy sequences in $\bQ$ modulo the equivalence relation $\sim$ defined by $(u_n)\sim(v_n)$ iff \mbox{$\lim_{n\rightarrow\infty}(u_n-v_n)=0$.} 

The important point for us is that nonstandard analysis allows the use of  the nonstandard discretization of continuous dynamics ``as if'' it was operational and with infinitesimal discretization error. Iwasaki et al.~\cite{IwasakiFSBG95} first proposed using nonstandard analysis to discuss the nature of time in hybrid systems.  Bliudze and Krob~\cite{BliudzeK09,BliudzePhD} have also used nonstandard analysis as a mathematical support for defining a system theory for hybrid systems. In their study of mathematical foundations of hybrid systems modelers of the ODE class, Benveniste et al.~\cite{JCSS11} used extensively the nonstandard semantics of hybrid systems, with practical consequences for how to design a specification formalism for multimode ODE systems~\cite{lucy:hscc13,benveniste:hal-01549183}.\footnote{Corresponding to the Simulink class of modeling languages, with no DAE.} See also~\cite{DBLP:conf/popl/SuenagaSH13,DBLP:conf/aplas/NakamuraKSI17,DBLP:journals/ngc/Hasuo17} for similar studies using the axiomatic approach to nonstandard analysis.
The following presentation is borrowed from~\cite{JCSS11}. 

\subsection{Intuitive introduction}
We begin with an intuitive introduction to the construction of the 
nonstandard reals.
The goal is to augment $\bR\cup\{\pm\infty\}$ by adding, to each $x$ in
the set, a set of elements that are ``infinitesimally close'' to
it. We will call the resulting set ${\nstr}$.
Another requirement is that all
operations and relations defined on $\bR$ should extend to ${\nstr}$.

A first idea is to represent such additional numbers as convergent
real sequences. For example, elements infinitesimally close to the
real number zero are the sequences $u_n={1}/{n}$, $v_n={1}/{\sqrt{n}}$
and $w_n={1}/{n^2}$. Observe that the above three sequences can be
ordered: $v_n> u_n> w_n> 0$ where $0$ denotes the constant zero
sequence. Of course, infinitely large elements (close to $+\infty$)
can also be considered, e.g., sequences $x_u=n$, $y_n={\sqrt{n}}$, and
$z_n={n^2}$.

Unfortunately, this way of defining ${\nstr}$ does not yield a total order 
since two sequences converging to zero cannot always be
compared: if $u_n$ and $u'_n$ are two such sequences, the three sets
$\{n\mid u_n>u'_n\}$, $\{n\mid u_n=u'_n\}$, and $\{n\mid u_n<u'_n\}$
may even all be infinite. The beautiful idea of Lindstr{\o}m is to
enforce that \emph{exactly one of the above sets is important and the
 other two can be ignored}. 
 
 The key step in Lindstr{\o}m's construction consists in fixing once and for
all a finitely additive positive measure $\mu$ over the set $\bN$ of
integers with the following properties:\footnote{The existence of such
 a measure is non-trivial and is explained later.}
 \[\bea{l}
 \mu: 2^\bN \ra \{0,1\} \\
 \mbox{$\mu(X)=0$ whenever $X$ is finite}  \\
 \mu(\bN)=1 
 \eea
 \]
Once $\mu$ is fixed, one can compare any two sequences $u_n$ and $u'_n$ as follows. Exactly one of the three sets $\{n\mid u_n>u'_n\}$, $\{n\mid u_n=u'_n\}$, or $\{n\mid u_n<u'_n\}$ has $\mu$-measure $1$ whereas the other two must have $\mu$-measure $0$. Thus, we say that 
\[
\bea{rcl}
u>u' &\mbox{if}& \mu(\{n\,{\mid}\,u_n>u'_n\})=1 \\ 
u=u' &\mbox{if}& \mu(\{n\,{\mid}\,u_n=u'_n\})=1 \\ 
u<u' &\mbox{if}& \mu(\{n\,{\mid}\,u_n<u'_n\})=1
\eea
\]
respectively. 
Indeed, the same trick works for many other relations and 
operations on nonstandard real numbers, as we shall
see. We now proceed with a more formal presentation.

\subsection{Nonstandard domains}
For $I$ an arbitrary set, a \emph{filter} $\cF$ over $I$ is a family of subsets of $I$ such that: (1) the empty set does not belong to $\cF$; 
	(2) $P,Q\in\cF$ implies $P\cap Q\in\cF$; and 
	(3) $P\in \cF$ and $P\subset Q\subseteq I$ implies $Q\in\cF$.
Consequently, $\cF$ cannot contain both a set $P$ and its complement
$P^c$. A filter that contains one of the two for any subset
$P\subseteq I$ is called an \emph{ultra-filter}. At this point we
recall Zorn's lemma, known to be equivalent to the axiom of choice:
\begin{lemma}[Zorn's lemma] \label{epr98hepr9} 
Any partially ordered set \mbox{$(X,\leq)$} such that any chain in $X$ 
possesses an upper bound has a maximal element.
\end{lemma}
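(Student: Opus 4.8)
The plan is to derive Zorn's lemma from the axiom of choice, via the classical ``tower'' argument of Zermelo. I would argue by contradiction: assume $(X,\leq)$ has no maximal element. Then every chain $C\subseteq X$ has an upper bound $u$, and since $u$ is not maximal there is $v$ with $u<v$; such a $v$ is a \emph{strict} upper bound of $C$. Using the axiom of choice (indeed Zorn's lemma is being recalled precisely in a context where choice is available, cf. the remark before the statement), I would fix once and for all a function $g$ assigning to every chain $C\subseteq X$ a strict upper bound $g(C)$ of $C$; in particular $g(\emptyset)$ is some element of $X$.

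Next I would introduce the key auxiliary notion. Call $A\subseteq X$ a \emph{tower} if $A$ is well-ordered by $\leq$ and, for every $a\in A$, one has $a=g(\{b\in A\mid b<a\})$, i.e. each element of $A$ is the $g$-image of the set of its strict predecessors in $A$. Both $\emptyset$ and $\{g(\emptyset)\}$ are towers. The heart of the proof is the comparability lemma: for any two towers $A$ and $B$, one of them is an initial segment of the other. I would prove this by well-founded induction, considering the largest common initial segment $I$ of $A$ and $B$ and showing that if $I\neq A$ and $I\neq B$, then the $\leq$-least elements of $A\setminus I$ and of $B\setminus I$ both equal $g(I)$, contradicting the maximality of $I$.

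Granting comparability, let $U$ be the union of all towers. Using comparability, $U$ is again well-ordered by $\leq$, and every element of $U$ equals the $g$-image of its set of predecessors in $U$; hence $U$ is itself a tower, the $\subseteq$-largest one. But $U$ is a chain, so $g(U)$ is defined and is a strict upper bound of $U$, whence $U\cup\{g(U)\}$ is a tower strictly larger than $U$ — contradicting the maximality of $U$. This contradiction shows that $X$ must have a maximal element.

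The main obstacle is the comparability lemma for towers: it is the only genuinely nontrivial step, requiring a careful transfinite (well-founded) induction and a precise bookkeeping of initial segments. An alternative I would mention, sidestepping the tower combinatorics, is to invoke Hartogs' theorem to produce an ordinal $\alpha$ that does not inject into $X$, and then define by transfinite recursion a strictly increasing — hence injective — map $h\colon\alpha\to X$ by $h(\beta)=g(\{h(\gamma)\mid\gamma<\beta\})$, reaching the same contradiction; this merely relocates the difficulty onto the ordinal recursion theorem and Hartogs' theorem, both standard. Since the statement is purely foundational and used here only as background for the construction of the nonstandard reals, either route suffices.
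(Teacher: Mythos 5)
The paper does not actually prove Zorn's lemma: it is recalled as standard background (``known to be equivalent to the axiom of choice'') and used immediately afterwards to extend a filter to an ultrafilter. So there is no proof in the paper to compare against; the lemma is treated as a foundational black box.

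That said, your argument is a correct and complete proof of Zorn's lemma from the axiom of choice by Zermelo's tower method, and your alternative route via Hartogs' theorem plus transfinite recursion is equally valid. The one place worth a little extra care, which you flag yourself, is the comparability lemma for towers: you should make explicit that the union of all common initial segments of two towers $A$ and $B$ is again a common initial segment (this uses that initial segments of a well-order are closed downward and under unions), so that ``the largest common initial segment $I$'' is well-defined before you derive the contradiction from $I\neq A$ and $I\neq B$. With that small point spelled out, the argument is airtight. Either of your two routes would be an entirely acceptable proof to supply here, though in keeping with the paper's style it would also be acceptable simply to cite any standard set-theory reference.
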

A filter $\cF$ over $I$ is an ultra-filter if and only if it is maximal with 
respect to set inclusion.
By Zorn's lemma, any filter $\cF$ over $I$ can be extended to an 
ultra-filter over $I$.
Now, if $I$ is infinite, the family of sets $\cF=$
\mbox{$\{P\subseteq I\mid P^c \mbox{ is finite}\}$} is a \emph{free}
filter, meaning it contains no finite set. It can thus be extended to
a {free ultra-filter} over $I$. Hence:
\begin{lemma} \label{efiupeiu}
 Any infinite set has a free ultra-filter.
\end{lemma}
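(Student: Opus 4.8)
The plan is to obtain the desired ultra-filter by extending the Fréchet (cofinite) filter already introduced just above the statement. First I would record that, because $I$ is infinite, the family $\cF=\{P\subseteq I\mid P^c\text{ is finite}\}$ is indeed a filter over $I$: the empty set is excluded since $\emptyset^c=I$ is infinite; if $P,Q\in\cF$ then $(P\cap Q)^c=P^c\cup Q^c$ is a union of two finite sets, hence finite, so $P\cap Q\in\cF$; and if $P\in\cF$ and $P\subseteq Q\subseteq I$ then $Q^c\subseteq P^c$ is finite, so $Q\in\cF$. I would also note at this point that $\cF$ is \emph{free}: a finite set $P$ belongs to $\cF$ only if $P^c$ is finite, i.e.\ only if $I$ is finite, which is excluded by hypothesis.

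Second, I would invoke the extension property stated earlier in the excerpt, namely that any filter over $I$ can be extended to an ultra-filter over $I$ (this is the content packaged as a consequence of Zorn's lemma, Lemma~\ref{epr98hepr9}, using that a maximal proper filter is exactly an ultra-filter). Applying it to $\cF$ produces an ultra-filter $\cU$ over $I$ with $\cF\subseteq\cU$.

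Finally, I would verify that $\cU$ itself is free, which is where the choice of $\cF$ pays off. Suppose, for contradiction, that some finite set $A$ lies in $\cU$. Then $A^c$ is cofinite, hence $A^c\in\cF\subseteq\cU$, and by closure of $\cU$ under finite intersections $A\cap A^c=\emptyset\in\cU$, contradicting the first filter axiom. Therefore $\cU$ contains no finite set, i.e.\ $\cU$ is a free ultra-filter over $I$, which proves the lemma.

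The only genuinely non-elementary ingredient is the extension step, which rests on Zorn's lemma: one orders by inclusion the poset of proper filters over $I$ containing $\cF$, observes that the union of any chain of such filters is again a proper filter containing $\cF$ (properness of the union uses that $\emptyset$ lies in no member of the chain), and extracts a maximal element, necessarily an ultra-filter. Since the excerpt has already stated this as an available result, the remaining argument is the routine verification above, and I anticipate no real obstacle beyond being careful that the passage from ``$\cU\supseteq\cF$'' to ``$\cU$ is free'' genuinely requires the contradiction with the empty-set axiom rather than being automatic.
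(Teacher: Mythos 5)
Your proposal is correct and follows the same route as the paper: start from the Fréchet (cofinite) filter, note it is free because $I$ is infinite, and extend it to an ultra-filter via Zorn's lemma. You additionally spell out why freeness survives the extension (via $A\in\cU$ and $A^c\in\cF\subseteq\cU$ forcing $\emptyset\in\cU$), which the paper's ``it can thus be extended to a free ultra-filter'' leaves implicit; that small completion is a genuine improvement in rigor but not a different argument.
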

Every free ultra-filter $\cF$ over $I$ uniquely defines, by setting
$\mu(P) = 1$ if $P\in\cF$ and 0 otherwise, a finitely additive
measure\footnote{Observe that, as a consequence, $\mu$ cannot be
  sigma-additive (in contrast to probability measures or Radon
  measures) in that it is \emph{not} true that $\mu(\bigcup_n
  A_n)=\sum_n\mu(A_n)$ holds for an infinite denumerable sequence
  $A_n$ of pairwise disjoint subsets of $\bN$.}  $\mu:
2^I\mapsto\{0,1\}$, which satisfies
\beq
\mbox{
$\mu(I)=1$ and, if $P$ is finite, then $\mu(P)=0$.}
\label{erltuihjeui}
\eeq
Now, fix an infinite set $I$ and a finitely additive measure $\mu$ over $I$ 
as above. Let $\bX$ be a set and consider the Cartesian product 
$\bX^I=(x_i)_{i\in I}$. Define 
\beq
(x_i)\sim(x'_i)
\label{erlouiwleiurhfuil}
\eeq 
if and only if
\mbox{$\mu\{i\in I\mid x_i\neq x'_i\}=0$}. Relation $\sim$ is an equivalence relation whose equivalence classes are denoted by 
$[x_i]$
 and we define 
\beq
\nst{\,\bX}=\bX^I/\sim ~ .
\label{erpfheu}
\eeq 
$\bX$ is naturally embedded into $\nst{\,\bX}$ by
mapping every $x\in \bX$ to the constant tuple such that $x_i=x$ for
every $i\in I$; we denote it by 
$
[x] \mbox{ or, simply, by } x 
$,
if no confusion can result. 
Any algebraic structure over $\bX$ (group, ring,
field) carries over to $\nst{\,\bX}$ by almost pointwise extension. In
particular, if $[x_i]\neq 0$, meaning that $\mu\{i\mid x_i=0\}=0$ we
can define its inverse $[x_i]^{-1}$ by taking $y_i=x^{-1}_i$ if
$x_i\neq 0$ and $y_i=0$ otherwise. This construction yields
$\mu\{i\mid y_i x_i=1\}=1$, whence $[y_i][x_i]=1$ in $\nst{\,\bX}$.
The existence of an inverse for any non-zero element of a
ring is indeed stated by the formula: \mbox{$\forall
 x\,(x=0\mathrel{\lor}\exists y\,(xy=1))$}.\footnote{More generally, the \emph{Transfer Principle} states that every first-order {formula} is true over $\nst{\,\bX}$ if and only if it is true over $\bX$.
Recall that a first-order formula is a formula involving quantifiers over variables but not functions.}

\subsection{Nonstandard reals and integers}
\label{sec:nonstannums}
The above general construction applies to $\bX=\bR$ and 
$I=\bN$.
The result, denoted by $\nstr$, is a field (according to the transfer 
principle).
By the same principle, $\nstr$ is totally ordered by \mbox{$[u_n]\leq[v_n]$} 
iff $\mu\{n\mid u_n{>}v_n\}=0$. 
Call \emph{infinitesimal} any nonstandard real number whose absolute value is smaller than any positive real number: $[\,|x_n|\,]\leq[\varepsilon]$ for any $\varepsilon\in\bR,\varepsilon>0$. For $x$ and $y$ two nonstandard real numbers, write
\[
x \approx y
\]
if $x-y$ is infinitesimal. Call \emph{infinite} any nonstandard real number whose absolute value is larger than any positive real number: $[\,|x_n|\,]\geq[K]$ for any finite $K\in\bR$. Call \emph{finite} a nonstandard real number that is not infinite. Call \emph{bounded} a subset $A\subseteq{\nst{\,\bR}}$ such that there exists a finite standard positive real $K$ such that $|x|\leq{K}$ for every $x\in{A}$.

\begin{lemma}[standard part] 
	\label {owirguhpr} \label{elirsuthoperui} 
For any finite $[x_n]\in\nstr$, there exists a unique standard real number $x\in\bR$, such that $[x]-[x_n]$ is infinitesimal. We call $x$ the \emph{standard part} of $[x_n]$ and denote it by $\st{[x_n]}$. Infinite nonstandard reals have no standard
part in $\bR$. 
\end{lemma}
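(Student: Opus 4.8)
The plan is to realise the standard part of a finite hyperreal as a supremum in $\bR$, exploiting Dedekind completeness, and then to check the approximation and uniqueness claims by elementary order arguments; the construction is in the same spirit as recovering $\bR$ from Cauchy sequences.

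First I would fix a finite $y=[x_n]\in\nstr$ and introduce the set
$A=\{\,r\in\bR\mid [r]\le y\,\}$, where $[r]$ denotes the constant tuple with value $r$. Since $y$ is finite, there is a standard $K>0$ with $-[K]\le y\le[K]$; hence $-K\in A$, so $A\neq\emptyset$, and $K$ is an upper bound of $A$, because if $r\in A$ then $[r]\le y\le[K]$, and for standard reals $[r]\le[K]$ forces $r\le K$ (the embedding $r\mapsto[r]$ is order preserving with order‑preserving inverse on its image). By the least‑upper‑bound property of $\bR$, I then set $x\eqdef\sup A\in\bR$ and claim this is the standard part.

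To prove $y\approx[x]$, I would argue by contradiction: suppose $|y-[x]|\ge[\varepsilon]$ for some standard $\varepsilon>0$. Since the order on $\nstr$ is total, either $y\le[x]-[\varepsilon]=[x-\varepsilon]$ or $y\ge[x]+[\varepsilon]=[x+\varepsilon]$. In the first case every $r\in A$ satisfies $[r]\le y\le[x-\varepsilon]$, hence $r\le x-\varepsilon$, so $x-\varepsilon$ is an upper bound of $A$ strictly below $\sup A$, a contradiction. In the second case $[x+\varepsilon]\le y$ shows $x+\varepsilon\in A$, contradicting $x+\varepsilon>\sup A$. Thus $|y-[x]|<[\varepsilon]$ for every standard $\varepsilon>0$, i.e.\ $y\approx[x]$. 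For uniqueness, if $x,x'\in\bR$ both satisfy $y\approx[x]$ and $y\approx[x']$, then $[x-x']=[x]-[x']$ is infinitesimal while $x-x'$ is standard, and the only infinitesimal standard real is $0$ (any nonzero standard real exceeds some positive standard real in absolute value), so $x=x'$. Finally, if $y$ is infinite and $x\in\bR$ is arbitrary, then $|y-[x]|\ge|y|-[|x|]$ is again infinite, hence not infinitesimal, so $y$ has no standard part.

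The one point I would treat with care — and the only genuine obstacle — is the repeated translation between inequalities among the constant tuples $[r]$ in $\nstr$ and inequalities in $\bR$, together with the totality of the order on $\nstr$; both follow directly from the ultrafilter construction (exactly one of $\{n\mid r>s\}$, $\{n\mid r=s\}$, $\{n\mid r<s\}$ has $\mu$‑measure $1$), but they are precisely what makes the supremum argument legitimate. Everything else is a routine use of completeness of $\bR$.
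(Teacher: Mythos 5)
Your proof is correct and follows essentially the same route as the paper: both define the candidate standard part as the supremum $\sup\{r\in\bR\mid [r]\le [x_n]\}$ and then rule out a non-infinitesimal gap by a case split using the total order on $\nstr$. You merely fill in details (nonemptiness and boundedness of the set, the uniqueness step, the infinite case) that the paper treats as obvious.
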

\begin{proof} 
To prove this, let
$x=\sup\{u\in\bR\mid [u]\leq[x_n]\}$. Since $[x_n]$ is finite, $x$ exists and we only
need to show that $[x_n]{-}x$ is infinitesimal. If not, then there
exists $y\in\bR, y{>}0$ such that either
\mbox{$[x]<[x_n]-[y]$} or \mbox{$[x]>[x_n]+[y]$}, which both contradict
the definition of $x$. The uniqueness of $x$ is clear, thus we can
define $\st{[x_n]}=x$. 
\end{proof}
It is also of interest to apply the general construction
(\ref{erpfheu}) to $\bX=I=\bN$, which results in the set $\nst{\,\bN}$ of
\emph{nonstandard natural numbers}; $\nst{\,\bZ}$ is defined similarly, from $\bX=\bZ$ and $I=\bN$.
The nonstandard set $\nstn$ differs from
$\bN$ by the addition of \emph{infinite natural numbers,} which are
equivalence classes of sequences of integers whose essential limit is
$+\infty$. 

\subsection{Internal functions and sets}
Any sequence $(f_n)$ of functions $f_n:\bR\rightarrow\bR$ pointwise defines a function $[f_n]:\nstr\rightarrow\nstr$ by setting
\beq
[f_n]([x_n]) &\eqdef& [f_n(x_n)]\,.
\label{piufhwpiou}
\eeq
To justify this definition based on a representant of the equivalence class, note that $[x_n]=[x'_n]$ means that the set of integers $n$ such that $x_n\neq{x'_n}$ is neglectible, and we have $f_n(x_n)=f_n(x'_n)$ for every $n$ outside of this set.
A function $\nstr\rightarrow\nstr$ obtained in this way is called \emph{internal.}
Properties of, and operations on, ordinary
functions extend pointwise to internal functions of
$\nstr\rightarrow\nstr$. 
The same notions apply to sets by considering their characteristic functions. 

An internal
set $A=[A_n]$ is called \emph{hyperfinite} if \mbox{$\mu\{n\mid A_n
 \mbox{ finite}\}=1$}; the \emph{cardinal} $|A|$ of $A$ is defined as
$[\,|A_n|\,]$, where $|A_n|$ denotes the cardinal of the finite set $A_n$ in the usual sense.
For $A$ a bounded set of hyperreals, we define its \emph{shadow} $\shadow{A}\subseteq\bR$ as follows: 
\beq
\shadow{A} = \{\st{x}\in{\bR}\mid\exists{y}{\in}{A} \mbox{ such that } y\approx{x}\}\,. \label{oltuheliu}
\eeq
The \emph{nonstandard lifting} of 
\mbox{$f:\bR\ra\bR$} is the internal function
$\nst{f}=[f,f,f,\dots]\,.$
Let \mbox{$f:\bR\ra\bR$} be a standard function that is differentiable at $x\in\bR$, and $\nst{f}=[f,f,f,\dots]$ its nonstandard lifting. Then, we have, for every $x\in\bR$~:
\beq
\dot{f}(x)&=&\st{\frac{\nst{f}(y)-\nst{f}(x)}{y-x}}
\label{50bhjietk}
\eeq
for every $y\in\nstr$ such that $y\approx[x]$---the proof is similar to that of Lemma~\ref{owirguhpr}. 

Now, consider an infinite number $K\in\nstn$ and the set 
\beq
T=\left\{\,
0,\frac{1}{K},\frac{2}{K},\frac{3}{K},\dots\frac{K-1}{K},1
\,\right\} ~ .
\label{erpifuherifp}
\eeq
By definition, if $K=[K_n]$, then $T=[T_n]$ with 
\[
T_n=\left\{\,
0,\frac{1}{K_n},\frac{2}{K_n},\frac{3}{K_n},\dots\frac{K_n-1}{K_n},1
\,\right\}
\]
hence $|T|=[\,|T_n|\,]=[K_n+1]=K+1$. Note that the shadow of $T$ (see (\ref{oltuheliu})) is the standard interval $[0,1]$: for any given real number $a$ in $[0,1]$, $\dfrac{\lfloor aK \rfloor}{K}$ standardizes to $a$.

In our nonstandard semantics, we will be using the following time basis:
\beq
\bT&\eqdef&\{k \vsmall\mid{k}{\in}\nstn\}
\label{eortuheli}
\eeq
where $\vsmall>0$ is some fixed infinitesimal time. This definition is reminiscent of Equation~(\ref{non-standard-time-line}), in Section~\ref{sec:nsa}. The shadow of $\bT$ is $\bR_+$, the set of nonnegative real numbers, and that, informally speaking, $\bT$ is both ``discrete'' (every element has a previous and a next element, except for 0) and ``continuous'' (as the shadow of $\bT$ is $\bR_+$). 
\begin{notation}\rm 
	\label{leruioghpuio} Generic elements of $\bT$ will be denoted by the special symbol $\nstime$ or simply $t$ when no confusion can result.
\end{notation}
\subsection{Integrals}
Now, consider an internal function $f=[f_n]$ and a hyperfinite set 
$A=[A_n]$.
The \emph{sum} of $f$ over $A$ can be defined by
\[
\sum_{a\in A} \; f(a) \eqdef \left[\,
\sum_{a\in A_n} \; f_n(a)
\right] ~ .
\]
If $T$ is as above, and $f: \bR\ra\bR$ is a standard function, we obtain
\beq
\sum_{t\in T}\frac{1}{|T|}\,{{\nst{f}}(t)} = \left[\,
\sum_{t\in T_n}\frac{1}{|T_n|}{f}(t)
\right] ~ .
\label{08t7f087}
\eeq
Now, the continuity of $f$ implies the convergence of the Riemann sums:
$\sum_{t\in T_n}\frac{1}{|T_n|}{f}(t)\rightarrow\int_0^1f(t)\dt$. Hence,
\beq
\int_0^1f(t)\dt &=& \st{\,\sum_{t\in T}\,\frac{1}{|T|}{\nst{f}(t)}\,}.
\label{peruiohpferuio}
\eeq
Under the same assumptions, for any $t\in[0,1]$, 
\beq
\int_0^tf(u)\du &=& \st{\,\sum_{u\in T, u\leq t}\ 
\frac{1}{|T|}{\nst{f}(u)}\,}.
\label{oeiyro87}
\eeq
\subsection{ODE}
Consider the following ODE:
\beq
\dot{x}=f(x,t), \ \ x(0)=x_0\,.
\label{0e9784rfh9pe8}
\eeq 
Assume (\ref{0e9784rfh9pe8}) possesses a solution $x:[0,1]\rightarrow\bR$ such that the function $t \mapsto f(x(t),t)$ is continuous.
Rewriting (\ref{0e9784rfh9pe8}) in its equivalent integral form
\mbox{$x(t)=x_0+\int_0^tf(x(u),u)\du$}\, and using (\ref{oeiyro87})
yields
\beq 
x(t) = \st{\nst{x}(t)} \mbox{ where }\,{\nst{x}(t)\eqdef x_0+\sum_{u\in T, u\leq t}\
  \frac{1}{|T|}\,{\nst{f}(x(u),u)}\,}.
\label{erpifuheipu}
\eeq
One can rewrite the positive infinitesimal quantity $1/|T|$ as $\partial$, so that
$$T=\{\nstime_k=k\partial\mid k=0,\dots,|T|\}\,.$$ 
Then, after substituting in (\ref{erpifuheipu}), one gets that the piecewise-constant right-continuous
function $\nst{x}(\nstime),\nstime{\in}\nst{\bR},0{\leq}\nstime{\leq}1$ satisfies the following difference equation, for $k{=}0,\dots,|T|{-}1$,
\beq\bea{lcl}
\nst{x}(\nstime_{k+1}) &=& \nst{x}(\nstime_{k})+ \partial \,.\,\ttimes \nst{f}(\nst{x}(\nstime_{k}),\nstime_{k})
\\
\nst{x}(\nstime_0) &=& x_0
\eea
\label{pe498hrte489r}
\eeq
By (\ref{erpifuheipu}), the following holds:
\begin{theorem}
	\label{0pw489gthtrgu} 
The solution $x(t),t\in\bR$, of ODE $(\ref{0e9784rfh9pe8})$, and 
$\nst{x}(\nstime),\nstime\in\nst{\bR}_+$, the piecewise constant interpolation of the trajectory of system $(\ref{pe498hrte489r})$, are related by \mbox{$x = \st{{\nst{x}}}$}.
\end{theorem}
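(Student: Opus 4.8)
The plan is to obtain the statement directly from the nonstandard characterization of the Riemann integral, Equation~(\ref{oeiyro87}) — already established in this section — and then to read off the difference equation~(\ref{pe498hrte489r}) by a telescoping argument on hyperfinite sums.

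First I would put the initial value problem~(\ref{0e9784rfh9pe8}) into its equivalent Volterra form $x(t)=x_0+\int_0^tf(x(u),u)\,\du$, valid on $[0,1]$ because $x$ is assumed to be a solution. Setting $h(u)\eqdef f(x(u),u)$, the hypothesis guarantees that $h:[0,1]\to\bR$ is continuous, so~(\ref{oeiyro87}) applies to $h$. Since the nonstandard lifting of a standard function acts pointwise (cf.~(\ref{piufhwpiou})), for $u\in\bT$ one has $\nst{h}(u)=\nst{f}(\nst{x}(u),u)$, where $\nst{x}$ here denotes the lifting of the standard solution $x$. Substituting into~(\ref{oeiyro87}) gives, for every standard $t\in[0,1]$,
\[
x(t)\;=\;x_0+\int_0^t h(u)\,\du\;=\;\st{x_0+\sum_{u\in T,\,u\le t}\frac{1}{|T|}\,\nst{f}(\nst{x}(u),u)}\;=\;\st{\nst{x}(t)},
\]
which is precisely~(\ref{erpifuheipu}).

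Then I would exhibit $\nst{x}$ as (the interpolation of) a trajectory of~(\ref{pe498hrte489r}). Write $\vsmall=1/|T|$ and $\nstime_k=k\vsmall$, so that $T=\{\nstime_0,\dots,\nstime_{|T|}\}$. The elementary identity relating a partial sum up to index $k{+}1$ to the partial sum up to index $k$ plus the $\nstime_k$-term is a first-order statement about finite sums; it therefore transfers from the finite grids $T_n$ to the hyperfinite grid $T$ and holds for all $k<|T|$, with $k$ ranging over $\nstn$. Applied to the right-continuous piecewise-constant interpolation $\nst{x}$ above, it gives $\nst{x}(\nstime_{k+1})=\nst{x}(\nstime_k)+\vsmall\,\nst{f}(\nst{x}(\nstime_k),\nstime_k)$ together with $\nst{x}(\nstime_0)=x_0$, i.e.\ the recursion~(\ref{pe498hrte489r}); conversely, that recursion together with its initial condition determines a unique trajectory on $T$ by hyperfinite recursion (again the transfer of the corresponding fact over $\bN$). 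Identifying this trajectory with $\nst{x}$ then yields $x=\st{\nst{x}}$, as claimed.

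The main obstacle is exactly this last identification. The sum in~(\ref{erpifuheipu}) is built from the \emph{true-solution} values $\nst{x}(\nstime_k)$, whereas~(\ref{pe498hrte489r}) feeds its own iterates back into $\nst{f}$; reconciling the two amounts to checking that substituting one for the other does not affect the standard part. This uses the continuity of $f$ along the trajectory to see that the per-step discrepancy is infinitesimal, but controlling the \emph{accumulated} discrepancy over the hyperfinitely many steps of $T$ needs an additional uniformity ingredient — e.g.\ a local Lipschitz bound near the graph of $x$ together with a discrete Grönwall estimate. Everything else (the manipulations of hyperfinite sums, the choice of interpolation convention at grid points) is bookkeeping whose legitimacy follows by transfer, the genuine analytic content being already packaged in~(\ref{oeiyro87}).
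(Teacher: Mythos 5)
Your route is the same one the paper takes — apply the Riemann-sum characterization (\ref{oeiyro87}) to the Volterra form of the IVP to obtain (\ref{erpifuheipu}), then read off (\ref{pe498hrte489r}) — and you correctly pinpoint the step the paper elides when it simply asserts ``By (\ref{erpifuheipu}), the following holds.'' Telescoping the hyperfinite sum of (\ref{erpifuheipu}) produces an increment of the form $\vsmall\cdot\nst{f}(\,\cdot\,,\nstime_{k+1})$ whose first slot holds the lifting of the \emph{standard solution} $x$ and whose time argument is $\nstime_{k+1}$, whereas the recursion (\ref{pe498hrte489r}) feeds its own \emph{iterate} at time $\nstime_k$ back into $\nst{f}$. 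The per-step discrepancy is infinitesimal, but there are $|T|$ (an infinite hypernatural) many steps; reconciling the two standard parts is exactly the discrete Gr\"onwall argument you describe, under a local Lipschitz bound on $f$ near the graph of $x$, and the paper does not carry it out.

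That ingredient is in fact \emph{necessary}, not merely convenient: under the hypothesis stated in the paper — only that a solution exists with $t\mapsto f(x(t),t)$ continuous — the theorem as phrased is false. Take $\dot{x}=2\sqrt{|x|}$, $x(0)=0$, and the solution $x(t)=t^2$; the hypothesis holds, yet the iteration (\ref{pe498hrte489r}) produces $\nst{x}(\nstime_k)=0$ for every $k$, so $\st{\nst{x}}\equiv 0\neq x$. So the Lipschitz (or at least uniqueness) assumption you flag must be read into the theorem. Your proposal is therefore more careful than the paper's one-line proof and identifies the correct remedy; it stops short only of actually running the Gr\"onwall estimate, which is the standard one.
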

The same line of arguments applies for systems of ODEs possessing a unique continuous solution, i.e., $x$ and $f$ take their values in $\bR^m$ in (\ref{0e9784rfh9pe8}). 
Formula (\ref{pe498hrte489r}) can be seen as a \emph{nonstandard
  semantics} for ODE (\ref{0e9784rfh9pe8}). This semantics seems to depend on 
  the choice of infinitesimal step parameter $\partial$.
Property (\ref{erpifuheipu}), though, expresses that 
\beq
\mbox{
\begin{minipage}{11cm}
	 all of these nonstandard semantics are equivalent from the standard viewpoint, regardless of the choice made for $\partial$.
\end{minipage}
}
\eeq
The paper~\cite{Lindstrom} goes beyond our short exposure by including a direct proof of the Peano theorem on the existence of solutions of (\ref{0e9784rfh9pe8}) when $f$ is continuous and bounded.

\subsection{Infinitesimal calculus}
\label{sec:infinitcalc}
We conclude this background material with a summary of the section ``Infinitesimal calculus'' from~\cite{Lindstrom}. We formulate the essential characterizations of properties of the infinitesimal calculus in nonstandard terms. $f$ denotes a standard function and $\nst{f}=[f,f,f,\dots]$ is its nonstandard lifting. 
The reader is referred to that text for the proofs.
\begin{theorem}
	\label{perqoui} \ 
\begin{enumerate}
	\item The function $f:\bR\ra\bR$ is continuous at $a\in\bR$ if and only if $\nst{f}(x)\approx{f(a)}$ for all $x\approx{a}$.
	\item \label{ow4u9ghspsr} The function $f:\bR\ra\bR$ is uniformly continuous on set $A\subseteq\bR$ if and only if $\nst{f}(x)\approx\nst{f}(y)$ for all $x,y\in\nst{A}$ such that $x\approx{y}$.
	\item \label{ouiepuicksv} The function $f:\bR\ra\bR$ is differentiable at $a\in\bR$ if and only if there exists a number $b\in\bR$ such that
	$$
	\frac{\nst{f}(x)-\nst{f}(a)}{x-a} \approx b\, , 
	\mbox{for all $x\in{\nst{\bR}},x\approx{a}, x\neq{a}$.}
	$$
	Moreover, if such a number $b$ exists, then it equals $\dot{f}(a)$.
\end{enumerate}
\end{theorem}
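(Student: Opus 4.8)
The plan is to derive all three characterizations from the classical $\varepsilon$-$\delta$ definitions, using the Transfer Principle together with the sequence representation of hyperreals recalled around~(\ref{erpfheu}). The two-sided structure of each proof is the same: one direction is a downward/upward use of transfer, and the converse is cleanest when one exhibits an explicit failing witness $[y_n]$ in the monad of $a$.

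For Statement~1, forward direction, assume $f$ is continuous at $a$, fix $x\approx a$ in $\nstr$ and a standard real $\varepsilon>0$. Continuity yields a standard $\delta>0$ with $|y-a|<\delta\implies|f(y)-f(a)|<\varepsilon$; this is a first-order sentence, so by transfer it holds with $\nst f$ for all $y\in\nstr$. Since $|x-a|$ is infinitesimal and $\delta$ is a positive standard real, $|x-a|<\delta$, hence $|\nst f(x)-f(a)|<\varepsilon$; as $\varepsilon$ was an arbitrary standard positive real, $\nst f(x)\approx f(a)$. For the converse I would argue contrapositively: if $f$ is not continuous at $a$, pick $\varepsilon>0$ and, for each $n$, a real $y_n$ with $|y_n-a|<1/n$ yet $|f(y_n)-f(a)|\ge\varepsilon$; then $x:=[y_n]$ satisfies $x\approx a$ (because $|x-a|\le[1/n]$ is infinitesimal) while $|\nst f(x)-f(a)|=[\,|f(y_n)-f(a)|\,]\ge\varepsilon$, so $\nst f(x)\not\approx f(a)$. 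Statement~2 is literally the same argument with two variables: transfer $\forall\varepsilon>0\,\exists\delta>0\,\forall x,y\in A\,(|x-y|<\delta\Rightarrow|f(x)-f(y)|<\varepsilon)$ to $\nst f$ on $\nst A$ for one direction, and for the converse produce sequences $x_n,y_n\in A$ with $|x_n-y_n|<1/n$ and $|f(x_n)-f(y_n)|\ge\varepsilon$, so that $[x_n],[y_n]\in\nst A$ are infinitely close while their $\nst f$-images are not.

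For Statement~3 I would reduce to Statement~1. Define the standard function $g(y)=\frac{f(y)-f(a)}{y-a}$ for $y\ne a$. If $f$ is differentiable at $a$, then $g$ extends continuously at $a$ by setting $g(a):=\dot f(a)$; applying Statement~1 to this extended $g$ gives $\nst g(x)\approx\dot f(a)$ for all $x\approx a$, which for $x\ne a$ reads $\frac{\nst f(x)-f(a)}{x-a}\approx\dot f(a)$ (using that lifting respects the arithmetic of functions), so one may take $b=\dot f(a)$. Conversely, if some standard $b$ satisfies $\frac{\nst f(x)-f(a)}{x-a}\approx b$ for all $x\approx a$, $x\ne a$, and $f$ were not differentiable at $a$ with this limit, there would be $\varepsilon>0$ and reals $y_n\ne a$ with $|y_n-a|<1/n$ and $\bigl|\frac{f(y_n)-f(a)}{y_n-a}-b\bigr|\ge\varepsilon$; then $x:=[y_n]$ is a nonzero element of the monad of $a$ violating the hypothesis. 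The ``moreover'' clause follows from uniqueness of standard parts, Lemma~\ref{owirguhpr}: any admissible $b$ is the standard part of the (finite) hyperreal difference quotient for a fixed infinitesimal increment, hence unique, hence equal to $\dot f(a)$.

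The only real point of care is bookkeeping about the direction of transfer. Transfer applies to sentences, so the converse directions should be handled by explicitly constructing the failing witness $[y_n]$ rather than by trying to transfer an existential statement back to $\bR$; and one must check that $1/n\to 0$ genuinely places $[y_n]$ in the monad of $a$, which it does since $[1/n]$ is infinitesimal. Everything else is routine and I would simply cite~\cite{Lindstrom} for the details.
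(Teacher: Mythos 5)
Your proof is correct. Note, however, that the paper supplies no proof of this theorem: the sentences immediately preceding the statement explicitly defer to the ``Infinitesimal calculus'' section of~\cite{Lindstrom} for the details, so there is no in-paper argument to compare against. Your argument — Transfer for the forward implications, an explicit sequence witness $x=[y_n]$ in the monad of $a$ (with $[1/n]$ infinitesimal since $\mu$ vanishes on finite sets) for the converses, and reduction of Statement~3 to Statement~1 via the continuously extended difference quotient $g$ — is the standard one and is consistent with the paper's footnoted Transfer Principle and its sequence-based construction of $\nst{\,\bR}$. One small remark on fit with the paper's presentation: since the paper builds $\nst{\,\bR}$ concretely as $\bR^{\bN}/\sim$ and treats Transfer only in passing, the forward directions can equally be carried out at the level of representatives without invoking Transfer at all; e.g.\ for Statement~1, given $x=[x_n]\approx a$ and standard $\varepsilon>0$, continuity yields a standard $\delta>0$ with $\mu\{n:|x_n-a|<\delta\}=1$, hence $\mu\{n:|f(x_n)-f(a)|<\varepsilon\}=1$, i.e.\ $|\nst f(x)-f(a)|<\varepsilon$. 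This is a stylistic choice, not a correction — your version is the textbook phrasing — but it hews closer to the paper's own machinery.
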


\section{The toolkit supporting standardization}
\label{sec:standardization}
By building on the background of \rref{sec:NSA}, we develop here the specific material we need for our developments. First, we develop the compile-time impulse analysis. Second, we formally justify our use of structural analysis in the nonstandard domain. Third, referring to Comment~\ref{keruif} in support of the clutch example, we formally justify our method for standardizing systems of equations. 
We consider systems of equations of the form
\beq
0=\bfH(\dot{X},\postset{X},V,X) ~ ,
\label{eroifughoiu}
\label{lergtiurhuip}
\eeq
where $Z\eqdef(\postset{X},V)$ collects the dependent variables. Such systems are also written as
\beq
0=H(Z,X,\vsmall) 
\label{ltrghilu}
\eeq
after expanding $\dot{X}$ as $\frac{\postset{X}-X}{\vsmall}$.
In System~(\ref{eroifughoiu}), $V$ collects the algebraic variables, $X$ collects the state variables, and $\frac{\postset{X}-X}{\vsmall}$ is the nonstandard semantics of $\dot{X}$. Writing (\ref{lergtiurhuip}) involves both $\dot{X}$ and $\postset{X}$ but not $\vsmall$. 
As~(\ref{ltrghilu}) is obtained from~(\ref{eroifughoiu}) by performing the expansion, it involves both $\postset{X}$ and $\vsmall$ but not $\dot{X}$.

As an illustration, with reference to the Cup-and-Ball example, System (\ref{uioghrpui}) is of the form (\ref{eroifughoiu})---we repeat it here for convenience:
\beqq\left\{\bea{rll}
& 0= \ddot{x}+{\tension}x & (\eqq_1) \\
& 0= \ddot{y}+{\tension}y+g  & (\eqq_2) \\
& {\postset{\guard}= [s\leq{0}]; \guard(0)=\fff}  & {(\straight_0)} \\
\when \; \guard\; \doo& 0={L^2}{-}(x^2{+}y^2)   & (\straight_1) \\
\prog{and}& \remph{0={L^2}{-}\postset{(x^2{+}y^2)}}   & \remph{(\postset{\straight_1})} \\
\prog{and}& \remph{0={L^2}{-}\ppostset{2}{(x^2{+}y^2)}}   & \remph{(\ppostset{2}{\straight_1})} \\
\prog{and}& 0=\tension+s   & (\straight_2) \\
\when \;\prog{not}\; \guard\; \doo& 0=\tension   & (\straight_3) \\
\prog{and}& 0=({L^2}{-}(x^2{+}y^2))-s   & (\straight_4) \\
\eea\right.
\eeqq
Its counterpart following form (\ref{ltrghilu}) is obtained by expanding the second derivatives by using Euler scheme:
\beqq\left\{\bea{rll}
& 0= \frac{\ppostset{2}{x}-2\postset{x}+x}{\vsmall^2}+{\tension}x & (\eqq_1) \\
& 0= \frac{\ppostset{2}{y}-2\postset{y}+y}{\vsmall^2}+{\tension}y+g  & (\eqq_2) \\
& {\postset{\guard}= [s\leq{0}]; \guard(0)=\fff}  & {(\straight_0)} \\
\when \; \guard\; \doo& 0={L^2}{-}(x^2{+}y^2)   & (\straight_1) \\
\prog{and}& \remph{0={L^2}{-}\postset{(x^2{+}y^2)}}   & \remph{(\postset{\straight_1})} \\
\prog{and}& \remph{0={L^2}{-}\ppostset{2}{(x^2{+}y^2)}}   & \remph{(\ppostset{2}{\straight_1})} \\
\prog{and}& 0=\tension+s   & (\straight_2) \\
\when \;\prog{not}\; \guard\; \doo& 0=\tension   & (\straight_3) \\
\prog{and}& 0=({L^2}{-}(x^2{+}y^2))-s   & (\straight_4) \\
\eea\right.
\eeqq
\begin{ccomment}\rm
	\label{klgfuio} System (\ref{eroifughoiu}) is the generic form of systems we need to solve at \rref{op:weorpifuhfpeiurevis} of \rref{alg:ergfipuhfopiu}, when we eliminate higher order derivatives and shifts by introducing auxiliary variables.\eproof
\end{ccomment} 
We consider the following assumptions, which hold for the nonstandard semantics of any standard \mDAE\ system: 
\begin{assumption} \
	\label{oergfiuehoiu} 
\begin{enumerate}
\item The functions $\bfH(\cdot)$ and $H(\cdot)$ are standard (they result from applying \emph{\rref{alg:ergfipuhfopiu}} to our given standard \mDAE\ model);
\item The state variables $X$ are standard and finite (their values were assigned by the standardized dynamics of the previous mode).
\end{enumerate} 
\end{assumption}

\subsection{{Impulse Analysis}}
\label{sec:impulseanalysis}
We refer the reader to \rref{sec:standardizeclutch} about the standardization of the clutch model at mode changes, and particularly  \rref{sys:wp49guhsopiu} in it. As a prerequisite to standardization, we had to identify which variables could become impulsive at mode changes (the two torques in this example), and then, we had to eliminate them. 

To identify impulsive and non-impulsive variables in a systematic way, we develop the \emph{impulse analysis,} which
 consists in abstracting hyperreals with their ``{magnitude order}'' compared to the infinitesimal $\vsmall$. Since our original \DAE\ models are all standard, only the occurrence of the infinitesimal $\vsmall$ will have a role in this matter. The impulse analysis will be useful as a preparation step for  standardization. 
\begin{definition}[impulse order and analysis]
	\label{kfrygjkasdtf}  \
\begin{enumerate}
	\item 
Given a system of equations such as $(\ref{eroifughoiu})$ or $(\ref{ltrghilu})$, say that a dependent variable $x$ has \emph{impulse order} (or simply \emph{order}) $\imporder\in\bR$, if the solution of the considered system is such that $x\vsmall^\imporder$ is provably a finite non-zero (standard) real number. We denote by $\magorder{x}$ the impulse order of $x$. By convention, the constant $0$ has impulse order $-\infty$. 
\item
Say that $x$ is \emph{impulsive} if $\magorder{x}>0$.
\item
The \emph{impulse analysis} of a system of equations such as $(\ref{eroifughoiu})$ or $(\ref{ltrghilu})$ is the system of constraints satisfied by the impulse orders of the dependent variables of the system.
\end{enumerate}
\end{definition}
\subsubsection{The rules of impulse analysis}
Figures~\ref{fig:lreiufgoeyuy} and~\ref{fig:lrieuiouwi} display the rules defining the translation of a system of equations of the form $(\ref{eroifughoiu})$ or $(\ref{ltrghilu})$ into its impulse analysis, for the restricted class where only rational expressions are involved.
\begin{figure}[ht]
\vspace*{-5mm}
\beqq
e &~::=~& 0 ~\mid~c ~\mid~ \vsmall ~\mid~ x  ~\mid~ e^{c} ~\mid~ e+e ~\mid~ e \times e 
\\
E &::=& e=e ~\mid~ E \;\prog{and}\; E 
\eeqq
\vspace*{-5mm}
\caption{Syntax: $E$ is a system of one or several equations $e=e$. An expression $e$ is $0$, a nonzero (standard) real constant $c$, the infinitesimal $\vsmall$, a variable $x$, the monomial $e^c$, a sum, or a product.}
\label{fig:lreiufgoeyuy}
\[ 
\hspace*{-3mm}\bea{crcl} 
\mbox{(R1)}& \magorder{0}&=&-\infty \\ [1mm] \mbox{(R2)}& \magorder{c}&=&0 \\ [1mm] \mbox{(R3)}&\magorder{\vsmall}&=&-1 \\ [1mm] \mbox{(R4)}& \magorder{e^{c}}&=&c\magorder{e}
\\ [1mm]
\mbox{(R5)}& \magorder{e_1 \times e_2}&=&\magorder{e_1}+\magorder{e_2}
\\ [1mm]
\mbox{(R6)}& \magorder{e_1 + e_2} &\leq& \max\{\magorder{e_1},\magorder{e_2}\}
\eea 
\bea{cc}
\displaystyle\frac{
E \vdash e=e'
}{
\magorder{E} \vdash \magorder{e}=\magorder{e'}
} & \mbox{(R7)}
\\ [8mm]
\displaystyle\frac{\left.\bea{l}
E \vdash x=y+e\, \mbox{ ~or} \\ E \vdash 0=y-x+e 
\eea\right\} \mbox{ and }~
E \nvdash y=x-e
}{
E \vdash E \;\prog{and}\; y=x-e
} & \mbox{(R8)}
\eea
\]
\caption{Rules: The left column displays the impulse order of the primitive expressions. Rule (R7) indicates that $\magorder{e}=\magorder{e'}$ is an equation of the impulse analysis $\magorder{E}$ if $e=e'$ is an equation of $E$; Rule (R8) indicates that, if $E$ involves the equation $x=y+e$ but not the equation $y=x-e$, then we augment $E$ with the latter, i.e., we saturate $E$ with the rule $x=y+e \implies y=x-e$.}
\label{fig:lrieuiouwi}
\end{figure}

\rref{fig:lreiufgoeyuy} describes the syntax of a mini-language specifying such systems of equations. The left column of \rref{fig:lrieuiouwi} gives the rules for mapping expressions to their corresponding impulse orders. All the rules are self-explanatory, with the exception of Rule (R6), where the inequality deserves some explanation.
The sum $e_1{+}e_2$, the dominant terms in the expansion of the $e_i$'s as power series over $\vsmall$ may compensate each other for their impulse orders. For an example of this, see equation $(\eqq^\vsmall_1)$ in \rref{sys:wp49guhsopiu}: rewriting this equation as $
a_1(\omega_1)=\frac{\postset{\omega_1}-\omega_1}{\vsmall}-b_1(\omega_1)\tau_1
$, we see a case of strict inequality for (R6) since $a_1(\omega_1)$ has order zero, whereas it is equal to the difference of two terms of order one, see Section~\ref{eroifueoiug} for details. 

We will use Rule (R6) in the following way, thereby reinforcing it. Consider an equation
\[
\eqq: z=x+y\,.
\]
We can rewrite $\eqq$ in the following equivalent ways: $0=x+y-z\,,\, x=z-y$, or $y=z-x$. To each of them we apply the max rule. This yields, for the impulse analysis of equation $\eqq$, the following system of constraints:
\beq
\mbox{impulse analysis of }\eqq:
\left\{
\bea{lcl}
\magorder{z}\leq\max\{\magorder{x},\magorder{y}\}&;&
\magorder{0}\leq\max\{\magorder{x},\magorder{y},\magorder{z}\}
\\ [1mm]
\magorder{x}\leq\max\{\magorder{z},\magorder{y}\}&;&
\magorder{y}\leq\max\{\magorder{x},\magorder{z}\}
\eea\right.
\label{leriughpiu}
\eeq
Note that the constraint $\magorder{0}\leq\dots$ is vacuously satisfied since $\magorder{0}=-\infty$. Also note that, among the three nontrivial inequalities of (\ref{leriughpiu}), at least two must be saturated. We will use impulse analysis (\ref{leriughpiu}) for handling sums of terms. 
This  reinforcement of the max rule is formalized by Rule (R8) of \rref{fig:lrieuiouwi}, which mechanizes the association, to equation $\eqq$ of (\ref{leriughpiu}), of its different rewritings.

Using the rules of Figures~\ref{fig:lreiufgoeyuy} and~\ref{fig:lrieuiouwi} in the numerical expressions, we map any system of rational equations of the form $(\ref{eroifughoiu})$ or $(\ref{ltrghilu})$ into a system of constraints over impulse orders. 

To cover functions beyond polynomials, we need to extend $\bR\cup\{-\infty\}$ with $+\infty$. In this extension, we take the convention that $-\infty+\infty=-\infty$, justified by the equality $0{\times}x=0$ for any nonstandard $x$. For functions $f(x)=\sum_{k=0}^\infty a_kx^ k$ that can be represented as absolutely converging power series, we  then get
\beq
\magorder{f(x)} = \magorder{\,\sum_{k=0}^\infty a_kx^ k\,} = \magorder{{x}}.\sup(A), \mbox{ where } A=\{k\mid a_k\neq{0}\}
\label{weoriuthlkui}
\eeq
is the support of the series and $\sup(A)$ is the supremum of set $A$.
In particular, if $\magorder{{x}}>0$ and if the support of the series is infinite, we get $\magorder{f(x)}=+\infty$.

\subsubsection{Impulse analysis of systems of equations for restarts}
Here we particularize the impulse analysis to systems of equations of the form (\ref{eroifughoiu}), where the only reason for $\vsmall$ to occur is the expansion of derivatives using the Euler scheme:
\[
0=\bfH\left(\frac{\postset{X}-X}{\vsmall},\postset{X},V,X\right) 
\]
The dependent variables are $\postset{X},V$. It will be convenient to introduce the auxiliary variables
\[
U\eqdef\postset{X}-X\,,
\]
so that the systems we consider take the following form:
\beq
\mbox{dependent variables $\postset{X},V,U$ in}&:&
\left\{\bea{ccl}
0&=& \bfH\left(\frac{U}{\vsmall},\postset{X},V,X\right)
\\ [1mm]
U&=&\postset{X}-X
\eea\right.
\label{sys:lerioefkfuo}
\eeq
The following condition for \rref{sys:lerioefkfuo} can be assumed, based on physical considerations (restart values for an ODE or a DAE cannot be impulsive in a physically meaningful model):
\begin{assumption}
	\label{lwrieuogwh} 
	 Since $X$ is a state, both $X$ (a known value) and $\postset{X}$ must be finite.
\end{assumption}
First, the impulse orders $\magorder{X}$ are all known, from previous nonstandard instants. Next, from Assumption~\ref{lwrieuogwh} we deduce the inequalities
\beq
\magorder{\postset{X}}\leq 0 &\mbox{and}& \magorder{U}\leq 0\,.
\label{leriufokuydf}
\eeq
The impulse orders $\magorder{V}$ are a priori unknown. We have, however, more prior information, thanks to the structural analysis. As part of both the simpler algorithm $\atomicact{ExecRun}$ (\rref{alg:newmain}) and revisited algorithm $\atomicact{ExecRun}$ (\rref{alg:ergfipuhfopiu}), the auxiliary algorithm $\atomicact{SolveConflict}$ (\rref{alg:elsgouihuip}) is called. At the considered mode change, we thus know which consistency equation(s) of the new mode was/were conflicting with the dynamics of the previous mode. Formally, call $G{=}0$ the subsystem collecting all the equations that were erased by algorithm $\atomicact{SolveConflict}$ at \rref{op:elkfuiehoeuio} of \rref{alg:newmain} or \rref{op:elkfuiehoeuiorevis} of \rref{alg:ergfipuhfopiu}. As a result, $G{=}0$ is no longer satisfied at the considered mode change, and thus, $G$ defines a tuple $R$ of finite nonzero variables called \emph{residuals}, by setting
\beq
R=G\,. \label{elifugeyuiu}
\eeq
An example of residual in the clutch is $r=\omega_1-\omega_2$, which is both finite and nonzero at mode change $\guard:\fff\ra\ttt$; this information was found by the structural analysis.
Finally, the system of equations that we need to solve collects all the above items, namely:
\beq
\mbox{dependent variables $\postset{X},V,U,R$ in}&:&
\left\{\bea{ccl}
0&=& \bfH\left(\frac{U}{\vsmall},\postset{X},V,X\right)
\\ [1mm]
U&=&\postset{X}-X
\\ [1mm]
R&=&G
\eea\right.
\label{uoegiuhwoeilu}
\\ \mbox{prior information on impulse orders}&:& \left\{\bea{ccl}\magorder{\frac{1}{\vsmall}}&=&1 \\  [1mm] \magorder{\postset{X}} &\leq& 0 \\ [1mm] \magorder{U} &\leq& 0 \\ [1mm] \magorder{R}&=&0
\eea\right.
\label{ltrgiubriu}
\eeq

\paragraph{Decomposing algebraic dependent variables into impulsive and non-impulsive ones}
In the following, we assume that the vector functions $\bfH$ or $H$ can be represented as absolutely convergent power series in their arguments. We can then apply the rules of Figures~\ref{fig:lreiufgoeyuy} and~\ref{fig:lrieuiouwi}, complemented with (\ref{weoriuthlkui}), to derive the system of equations that impulse orders must satisfy. 
In particular, this allows us to partition the set $V$ of algebraic variables as
\beq
V &=& \impuls{W}\,\cup\,\nonimpuls{W}
\label{psergtiohjpio}
\eeq
where $\impuls{W}$ collects the \emph{impulsive variables}, having impulse order $>0$, and $\nonimpuls{W}$ collects the \emph{non-impulsive variables}. In \rref{sec:restartschemes}, we will propose a numerical scheme for computing restarts that only requires the splitting of algebraic dependent variables into impulsive and non-impulsive ones. For now, let us develop the impulse analysis for the two transitions $\guard:\ttt\ra\fff$ and $\guard:\fff\ra\ttt$ 
of \rref{sys:nscoupledshaftsbroken} assuming linear $f_i$ as in \rref{eq:peowigtu9p}. 

\subsubsection{Example: mode change $\guard:\ttt\ra\fff$ of the clutch}

We develop this example here based on the rules of Figures~\ref{fig:lreiufgoeyuy} and~\ref{fig:lrieuiouwi}. To better illustrate this mechanical reasoning, we forbid ourselves algebraic manipulations that otherwise could help simplifying a manual analysis.
We repeat here the corresponding system of equations with its expansion (\ref{sys:lerioefkfuo})---there is no residual for this mode change:
\beq
\left\{
\bea{lcc}
\frac{u_1}{\vsmall}=a_1(\omega_1)+b_1(\omega_1)\tau_1 &&(\eqq_1^\vsmall) \\
\frac{u_2}{\vsmall}=a_2(\omega_2)+b_2(\omega_2)\tau_2 &&(\eqq_2^\vsmall) \\
\tau_1=0 &&(\eqq_{5}) \\
\tau_2=0 &&(\eqq_{6}) \\
u_1 = \postset{\omega_1}-\omega_1 \\
u_2 = \postset{\omega_2}-\omega_2 
\eea
\right.
\label{sys:reuhgoiweruh}
\eeq
We then saturate \rref{sys:reuhgoiweruh} using Rule (R8) of \rref{fig:lrieuiouwi} (added equations are in $\bemph{\rm blue}$):
\beq
\left\{
\bea{l}
\frac{u_1}{\vsmall}=a_1(\omega_1)+b_1(\omega_1)\tau_1 \;;\;
\bemph{a_1(\omega_1)=\frac{u_1}{\vsmall}-b_1(\omega_1)\tau_1 \;;\;
b_1(\omega_1)\tau_1=\frac{u_1}{\vsmall}-a_1(\omega_1)} \\
\frac{u_2}{\vsmall}=a_2(\omega_2)+b_2(\omega_2)\tau_2 \;;\;
\bemph{a_2(\omega_2)=\frac{u_2}{\vsmall}-b_2(\omega_2)\tau_2 \;;\;
b_2(\omega_2)\tau_2=\frac{u_2}{\vsmall}-a_2(\omega_2)} \\
\tau_1=0  \\
\tau_2=0 \\
u_1 = \postset{\omega_1}-\omega_1 \;;\; \bemph{u_1+\omega_1 = \postset{\omega_1} \;;\; u_1-\postset{\omega_1} = -\omega_1}\\
u_2 = \postset{\omega_2}-\omega_2 \;;\;\bemph{ u_2+\omega_2 = \postset{\omega_2}  \;;\; u_2 - \postset{\omega_2}=-\omega_2 }
\eea
\right.
\label{sys:regfygygfg}
\eeq
and we consider the prior information (\ref{ltrgiubriu}):
\beq
\magorder{\postset{\omega_i}}\leq 0 \;;\; \magorder{u_i}\leq 0 \;;\; \magorder{\frac{1}{\vsmall}}=1
\label{liughpoitugh}
\eeq
We now apply the rules of \rref{fig:lrieuiouwi} to \rref{sys:regfygygfg}, then use (\ref{liughpoitugh}):
\beq
\left\{
\bea{l}
1+\magorder{u_1}\leq 0 \;;\; 0 \leq 1+ \magorder{u_1}\;;\; \mbox{vacuous}  \\
1+\magorder{u_2}\leq 0 \;;\; 0 \leq 1+ \magorder{u_2}\;;\; \mbox{vacuous}  \\
\magorder{\tau_1}=-\infty \\
\magorder{\tau_2}=-\infty \\
\magorder{u_1} \leq 0 \;;\; \mbox{vacuous}  \;;\; \mbox{vacuous} \\
\magorder{u_2} \leq 0 \;;\; \mbox{vacuous}  \;;\; \mbox{vacuous}
\eea
\right.
\label{sys:lrsieuhgoweiurhg}
\eeq
where ``vacuous'' indicates that the equation sitting in the corresponding position in \rref{sys:regfygygfg} generates a vacuously satisfied constraint.
\rref{sys:lrsieuhgoweiurhg} shows that $\magorder{u_i}=-1$, meaning that $u_i$ is continuous. Note that \rref{sys:lrsieuhgoweiurhg} uniquely determines the impulse orders for all dependent variables.

\subsubsection{Example: mode change $\guard:\fff\ra\ttt$ of the clutch}
\label{eroifueoiug}

We repeat here the corresponding system of equations with its expansion (\ref{sys:lerioefkfuo}), and we include the residual $\remph{r\eqdef\omega_1-\omega_2}$ (from structural analysis, we know it is nonzero), highlighted in $\remph{\rm red}$:
\beq
\left\{
\bea{lcc}
\frac{u_1}{\vsmall}=a_1(\omega_1)+b_1(\omega_1)\tau_1 &&(\eqq_1^\vsmall) \\
\frac{u_2}{\vsmall}=a_2(\omega_2)+b_2(\omega_2)\tau_2 &&(\eqq_2^\vsmall) \\
\postset{\omega_1}-\postset{\omega_2}=0 &&(\postset{\eqq_3}) \\
\tau_1+\tau_2=0 &&(\eqq_{4}) \\
u_1 = \postset{\omega_1}-\omega_1 \\
u_2 = \postset{\omega_2}-\omega_2 \\
\remph{r = \omega_1-\omega_2}
\eea
\right.
\label{sys:ilerughoreiugh}
\eeq
We then saturate \rref{sys:ilerughoreiugh} using Rule (R8) of \rref{fig:lrieuiouwi} (added equations are in $\bemph{\rm blue}$):
\beq
\left\{
\bea{l}
\frac{u_1}{\vsmall}=a_1(\omega_1)+b_1(\omega_1)\tau_1 \;;\;
\bemph{a_1(\omega_1)=\frac{u_1}{\vsmall}-b_1(\omega_1)\tau_1} \;;\;
\bemph{b_1(\omega_1)\tau_1=\frac{u_1}{\vsmall}-a_1(\omega_1)} \\
\frac{u_2}{\vsmall}=a_2(\omega_2)+b_2(\omega_2)\tau_2 \;;\;
\bemph{a_2(\omega_2)=\frac{u_2}{\vsmall}-b_2(\omega_2)\tau_2} \;;\;
\bemph{b_2(\omega_2)\tau_2=\frac{u_2}{\vsmall}-a_2(\omega_2)} \\
\postset{\omega_1}-\postset{\omega_2}=0 \;;\; \bemph{\postset{\omega_1}=\postset{\omega_2}} \\
\tau_1+\tau_2=0 ~\,\;;\; \bemph{\tau_1=-\tau_2} \\
u_1 = \postset{\omega_1}-\omega_1 \;;\; \bemph{u_1+\omega_1 = \postset{\omega_1}} \;;\; \bemph{u_1-\postset{\omega_1} = -\omega_1}\\
u_2 = \postset{\omega_2}-\omega_2 \;;\; \bemph{u_2+\omega_2 = \postset{\omega_2} } \;;\; \bemph{u_2 - \postset{\omega_2}=-\omega_2} \\
\remph{r = \omega_1-\omega_2}
\eea
\right.
\label{sys:rkuigewo}
\eeq
We also consider the prior information (\ref{ltrgiubriu}):
\beq
\magorder{\postset{\omega_i}}\leq 0 \;;\; \magorder{u_i}\leq 0 \;;\; \magorder{r}=0 \;;\; \magorder{\frac{1}{\vsmall}}=1 ~ .
\label{ieurfygeoyiu}
\eeq
We now apply the rules of \rref{fig:lrieuiouwi} to \rref{sys:rkuigewo}, then use (\ref{ieurfygeoyiu}):
\beq
\left\{
\bea{l}
1\leq \magorder{\tau_1} \;;\; \mbox{vacuous} \;;\;
\magorder{\tau_1}\leq 1 \\
1\leq \magorder{\tau_2} \;;\; \mbox{vacuous} \;;\;
\magorder{\tau_2}\leq 1 \\
\mbox{vacuous} \;;\; \magorder{\postset{\omega_1}}=\magorder{\postset{\omega_2}} \\
\mbox{vacuous} \;;\; \magorder{\tau_1}=\magorder{\tau_2} \\
\mbox{vacuous} \;;\; \mbox{vacuous} \;;\; 0\leq \max\{\magorder{u_1},\magorder{\postset{\omega_1}}\} \\
\mbox{vacuous} \;;\; \mbox{vacuous} \;;\; 0\leq \max\{\magorder{u_2},\magorder{\postset{\omega_2}}\} \\
0= \magorder{r} = \magorder{\omega_1} = \magorder{\omega_2} \\
\magorder{\postset{\omega_i}}\leq 0 \;,\; \magorder{u_i}\leq 0\;,\; i=1,2
\eea
\right.
\label{sys:elrihgtoui}
\eeq
which proves $\magorder{\tau_1}=\magorder{\tau_2}=1$, expressing that the two torques are impulsive. Note, on the other hand, that (\ref{sys:elrihgtoui}) does not fully determine the impulse orders of ${\postset{\omega_i}}$ and $u_i$, since one may, e.g., have $\magorder{\postset{\omega_i}}<0$ and $\magorder{u_i}=0$.

\subsubsection{{Using impulse analysis in code generation}}
\label{lriuhglreiugh}

Code generation for restarts consists in standardizing System~(\ref{eroifughoiu}) or System~(\ref{ltrghilu}). Recall that standardizing systems of equations requires more care than standardizing numbers, due to impulsive behaviors and singularity issues that result.

We can exploit the impulse analysis through three different approaches, which we illustrate by using the clutch example and particularly the mode change $\guard:\fff\ra\ttt$ when the clutch gets engaged. The system of equations for the corresponding restart is (\ref{sys:wp49guhsopiu}), which we recall for convenience:
\beqq \left\{ \bea{lcc}
  \postset{\omega_1}=\omega_1+{\vsmall}.(a_1(\omega_1)+b_1(\omega_1)\tau_1) &&(\eqq_1^\vsmall) \\
  \postset{\omega_2}=\omega_2+{\vsmall}.(a_2(\omega_2)+b_2(\omega_2)\tau_2) &&(\eqq_2^\vsmall) \\
  \postset{\omega_1}-\postset{\omega_2}=0 &&(\postset{\eqq_3}) \\
  \tau_1+\tau_2=0 &&(\eqq_{4}) \eea \right.
\eeqq
As we have shown, the two torques $\tau_1$ and $\tau_2$ are impulsive of order $1$.

\paragraph{Eliminating impulsive variables}
When this is practical, the simplest method is to eliminate impulsive variables from the restart system, namely the two torques in \rref{sys:wp49guhsopiu}. The details were given in Section~\ref{lwergtuiherlu} and the resulting reduced system of equations is 
\beq
\postset{\omega_1} =\postset{\omega_2} = \displaystyle\frac{b_2(\omega_2) \omega_1 + b_1(\omega_1) \omega_2}{b_1(\omega_1) + b_2(\omega_2)} 
 + \vsmall\,\displaystyle\frac{a_1(\omega_1) b_2(\omega_2) + a_2(\omega_2) b_1(\omega_1)}{b_1(\omega_1) + b_2(\omega_2)}  \;,
\label{eq:odwifgkyyug}
\eeq
whose standardization is simply achieved by substituting $\vsmall\gets{0}$. No numerical information is provided on the torques with this method.

This is a satisfactory solution when elimination of impulsive variables is practical. In our example, they entered linearly in the restart system, so that elimination was straightforward. When this is not the case, elimination becomes costly or even impossible. We thus need to look for alternatives.

\paragraph{Rescaling impulsive variables}

Since $\magorder{\tau_i}=1$, we apply the change of variable $\tau_i=\vsmall\hat{\tau}_i$, which transforms \rref{sys:ilerughoreiugh} into:
\beq
\left\{
\bea{lcc}
{\postset{\omega_1}-\omega_1}=a_1(\omega_1)+b_1(\omega_1)\hat{\tau}_1 &&(\eqq_1^\vsmall) \\
{\postset{\omega_2}-\omega_2}=a_2(\omega_2)+b_2(\omega_2)\hat{\tau}_2 &&(\eqq_2^\vsmall) \\
\postset{\omega_1}-\postset{\omega_2}=0 &&(\postset{\eqq_3}) \\
\hat{\tau}_1+\hat{\tau}_2=0 &&(\eqq_{4}) 
\eea
\right.
\label{sys:lriugeorliu}
\eeq
which yields again the solution (\ref{eq:odwifgkyyug}) for $\postset{\omega_1}=\postset{\omega_2}$. 
However, the rescaled value $\hat{\tau}_1$ can now be computed.
Rescaling impulsive variables is simpler than eliminating them. Unfortunately, this method does not work in full generality since impulse orders can be infinite as  we have shown in (\ref{weoriuthlkui}).
The last method addresses such cases, at the price of a possibly poor numerical conditioning.

\paragraph{Bruteforce solving of the restart system}
Consider once more \rref{sys:wp49guhsopiu},  replace in it the infinitesimal time step $\vsmall$ by a small real positive time step $\delta>0$, and expand the derivatives as before. This yields
\beq \left\{ \bea{lcc}
  \postset{\omega_1}=\omega_1+{\delta}.(a_1(\omega_1)+b_1(\omega_1)\tau_1) &&(\eqq_1^\delta) \\
  \postset{\omega_2}=\omega_2+{\delta}.(a_2(\omega_2)+b_2(\omega_2)\tau_2) &&(\eqq_2^\delta) \\
  \postset{\omega_1}-\postset{\omega_2}=0 &&(\postset{\eqq_3}) \\
  \tau_1+\tau_2=0 &&(\eqq_{4}) \eea \right.
\label{sys:rkeuyfgeuy}
\eeq
Then, it will be proved in Theorem~\ref{eroitutrdh} of \rref{sec:restartschemes} that \emph{solving \rref{sys:rkeuyfgeuy} for its dependent variables and then discarding the values found for the impulsive variables yields a converging approximation for the states $\omega_1^+$ and $\omega_2^+$ at restart}. 
Of course, numerical conditioning for \rref{sys:rkeuyfgeuy} is likely to be less good than  for \rref{sys:lriugeorliu}, so that rescaling is recommended when impulse orders are finite.

\paragraph{Qualitative information obtained from impulse analysis}
The dynamical profile of impulsive variables yields interesting information about the nature of the impulsive behaviors. For the clutch example, we get the following profile for $\magorder{\tau_i}:\dots,0,1,0,\dots$, where the $1$ occurs at the event $\guard:\fff\ra\ttt$. This indicates a Dirac-like impulse. As a second example (not developed here), the impulse analysis of the Cup-and-Ball example with inelastic impact yields the following profile for the impulse order of the tension $\magorder{\tension}:\dots,0,1,0,\dots$, where the $1$ occurs at the event following the mode change $\guard:\fff\ra\ttt$, i.e., when \rref{sys:rkufygekyu} gets executed to evaluate restart values for velocities. This again indicates a Dirac impulse for the tension. Note that having a profile $\dots,0,k,0,\dots$ for an impulsive variable, where $k$ is a positive integer, still indicates a Dirac behavior. The amplitude does not matter: only the profile matters. Developing a systematic analysis of this kind requires further study and is left for future investigations.

\subsection{Nonstandard Structural Analysis}
\label{ptw9ugfdll}
The motivation for this section was developed in \rref{sec:standardizeclutch}.
The following two questions will be addressed:
\begin{enumerate}
	\item \label{elriguheoiu} We apply structural analysis to the nonstandard semantics. Is this justified? In the standard setting, the structural analysis was motivated by the quest for low cost criteria for almost everywhere nonsingularity of systems of algebraic equations. Can we have a similar argument here?
	\item \label{lerigui} We need to formalize and generalize the Comment\,\ref{keruif} of Section\,\ref{lwergtuiherlu}, where we discussed the difference between standardizing functions and standardizing equations.
\end{enumerate}
We begin with question\,\ref{elriguheoiu}. 
\rref{alg:ergfipuhfopiu} returns at its \rref{op:weorpifuhfpeiurevis} systems of algebraic equations of the form $H(Z,X,\vsmall)=0$ introduced in (\ref{ltrghilu}), where tuple $Z$ collects the dependent variables and tuple $X$ collects the state variables. We justify the call of the atomic action $\atomicact{Eval}(\status,H)$ by the fact that ``$H$ is structurally nonsingular''. But the justification of structural analysis was developed in Section\,\ref{epfuiowerhpfoui}, i.e., in a standard setting. 
We thus need a justification for taking the liberty of using it for nonstandard semantics as well. 
Formally, we consider nonstandard systems of equations of the form 
\beq\mbox{
\begin{minipage}{11.3cm}
$\nst{H}\left(Z,\nst{X},\vsmall\right){=}0$, where $\nst{X}{\in}\nstR^m$, $Z$ collects the dependent (nonstandard) variables, $\nst{H} {=} [H,H,H,\dots]$ is the nonstandard lifting of the standard function $H$, and $\vsmall {=} [\delta_n]$ is infinitesimal.
\end{minipage}
}
\label{qweruighp39}
\eeq
\begin{definition}
	\label{opw4u9gthler} 
Say that $\nst{H}\left(Z,\nst{X},\vsmall\right)=0$ is \emph{structurally nonsingular} if and only if so is the standard equation $H\left(Z,{X},\delta\right)=0$ in the sense of Section\,${\ref{epfuiowerhpfoui}}$, where $\delta>0$ is a standard step size. 
\end{definition}
\begin{lemma}
	\label{liughlkuygfyu} 
Lemma~$\ref{oerigtuerio}$ extends to nonstandard systems of the form $(\ref{qweruighp39})$.
\end{lemma}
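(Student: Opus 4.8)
The plan is to reduce the nonstandard statement to its standard counterpart via the transfer principle, exploiting the fact that ``structural nonsingularity'' has, by Definition~\ref{opw4u9gthler}, been \emph{defined} to mean structural nonsingularity of the associated standard system $H(Z,X,\delta)=0$. So the two properties we must prove equivalent — namely (1) $\nst{H}(Z,\nst{X},\vsmall)=0$ is structurally nonsingular, and (2) for every $(\vval{Z},\nst{X})$ satisfying $\nst{H}(\vval{Z},\nst{X},\vsmall)=0$, the Jacobian $\Jacobian_{\!Z}\nst{H}(\vval{Z},\nst{X},\vsmall)$ is generically nonsingular when its nonzero coefficients vary over some neighborhood — are, after unfolding the definitions, statements about the standard function $H$ together with a transfer step handling the nonstandard points $\nst{X}$, $\vval{Z}$, $\vsmall$.

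First I would record the obvious direction: the bipartite graph (equivalently the incidence matrix $M_{H}$) of $\nst{H}(Z,\nst{X},\vsmall)=0$ is the same as that of $H(Z,X,\delta)=0$, since nonstandard lifting $[H,H,H,\dots]$ does not alter which variables appear in which component equations; this is why Definition~\ref{opw4u9gthler} is well posed and is the content we will lean on. Then, for the implication (1)$\Rightarrow$(2): assume $H(Z,X,\delta)=0$ is structurally nonsingular in the sense of Section~\ref{epfuiowerhpfoui}; by Lemma~\ref{oerigtuerio} (the standard version), for every standard $(\vval{Z},\vval{X},\delta)$ with $H=0$ the Jacobian $\Jacobian_{\!Z}H$ is generically nonsingular on a neighborhood of its nonzero entries, i.e. the sets $U_H,V_H$ of (\ref{leiuthpieu})--(\ref{lerigfui}) exist. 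The existence of such $U,V$ for \emph{all} standard arguments is a first-order (in fact, with measure-zero exceptional sets, an $\bR$-definable) statement about $H$; transferring it gives the same for all nonstandard arguments, in particular for $(\vval{Z},\nst{X},\vsmall)$. The key point to make carefully here is that ``$V\subseteq U$ has complement of Lebesgue measure zero'' transfers: one uses the representation of the exceptional set as the zero set of the determinant polynomial of the matrix with symbolic nonzero entries, whose non-vanishing is a semialgebraic condition, so the whole statement is expressible in the first-order language over $(\bR,+,\times,<)$ and transfer applies verbatim. For the converse (2)$\Rightarrow$(1): if $H(Z,X,\delta)=0$ were structurally singular, then $M_H$ has no complete matching, hence $\det$ of the symbolic Jacobian is identically zero as a polynomial, hence $\Jacobian_{\!Z}\nst{H}$ is singular for \emph{every} choice of nonzero entries — contradicting (2); again this is a transferable first-order fact.

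The cleanest way to package all this is: prove the lemma by literally quoting Lemma~\ref{oerigtuerio} for the standard system $H(Z,X,\delta)=0$ (legitimate since $\delta>0$ standard), then invoke the Transfer Principle (stated in the footnote after the inverse-existence formula in Section~\ref{pqe9rusipg}) to pass from ``for all standard $(\vval{Z},\vval{X})$ with $H=0$, $\Jacobian_{\!Z}H$ is generically nonsingular'' to ``for all nonstandard $(\vval{Z},\nst{X})$ with $\nst{H}=0$, $\Jacobian_{\!Z}\nst{H}$ is generically nonsingular'', and conversely; Definition~\ref{opw4u9gthler} then identifies the nonstandard notion of structural nonsingularity with its standard witness, closing the loop. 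Along the way I would note that $\nst{X}$ finite is not needed for this lemma (finiteness of $X$ is used later, for impulse analysis and standardization, not here), and that $\vsmall$ being specifically infinitesimal is also irrelevant: any nonstandard $\vsmall$, standard or not, works, because Definition~\ref{opw4u9gthler} quantifies over a standard $\delta$.

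The main obstacle — really the only subtle point — is making the transfer argument rigorous for the measure-theoretic clause ``outside a set of Lebesgue measure zero''. Measure zero is not prima facie first-order; the fix, as sketched above, is to replace it by the equivalent algebraic statement that the polynomial $\det$ of the symbolic Jacobian (a polynomial in the $K$ indeterminates indexing the nonzero entries, with the graph structure baked in) is \emph{not identically zero}, which is what ``structurally nonsingular $\Leftrightarrow$ has a complete matching $\Leftrightarrow$ the permanent/determinant polynomial is nonzero'' gives us, and then the exceptional set is precisely the (measure-zero, nowhere dense) algebraic hypersurface $\{\det=0\}$. Once phrased this way, everything is a finite Boolean combination of polynomial equalities and inequalities over $\bR$, transfer is immediate, and no genuinely new analysis is required beyond what Lemma~\ref{oerigtuerio} already supplies in the standard case.
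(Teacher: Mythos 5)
Your proposal is correct, but it takes a genuinely different route than the paper. The paper works in the Lindstr{\o}m ultrafilter model directly: it picks representative sequences $[z_n],[x_n],[\delta_n]$, applies the standard Lemma~\ref{oerigtuerio} componentwise for each $n$ in the measure-one index set where $H(z_n,x_n,\delta_n)=0$, and \emph{builds} the internal witness sets $\mathbf{U}=[\mathbf{U}_n]$, $\mathbf{V}=[\mathbf{V}_n]$ of property~(\ref{leiuthpieu}) as equivalence classes of their standard componentwise counterparts — a concrete, ``bottom-up'' argument that never invokes the Transfer Principle as a black box. You instead go ``top-down'': you observe that the nonstandard lifting has the same incidence graph as $H(Z,X,\delta)=0$, reformulate the generic-nonsingularity clause as non-vanishing of the symbolic Jacobian determinant polynomial (a purely combinatorial fact tied to the existence of a complete matching), and then transfer a first-order statement. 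This is tighter in one respect: it exposes that the generic-nonsingularity clause in Lemma~\ref{oerigtuerio} is really carried by a single fixed polynomial, so the transfer needs no quantification over sets, and the role of $\vsmall$ and $\nst{X}$ is demonstrably irrelevant (both points you make correctly). The one place you should be more careful is in claiming that ``$U\setminus V$ has Lebesgue measure zero'' is first-order once reformulated algebraically: what actually transfers cleanly is the statement $p\not\equiv 0$ and the regularity of every $\cJ$ with $p(\cJ)\neq 0$; you then need to separately identify $U\setminus V=\mathbf{U}\cap\nst{\{p=0\}}$ and check its \emph{internal} measure $\nst{\lambda}$ vanishes, which it does since it is the lift of a standard null set — this is exactly the $\nst{\lambda}(\mathbf{V}\setminus\mathbf{U})=0$ (sic) step the paper handles by componentwise construction. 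Both approaches are sound; the paper's is self-contained within the construction of $\nstR$, yours relies on the axiomatic Transfer Principle plus the observation that the exceptional set is algebraic, and arguably surfaces the combinatorial core more explicitly.
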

\begin{proof}
Select $\nst{z}=[z_n]$ and $\nst{x}=[x_n]$ satisfying $\nst{H}\left(\nst{z},\nst{x},\vsmall\right)=0$. By definition, we have $[H\left(z_n,x_n,\delta_n\right)]\sim[0,0,\dots]$, 
where relation $\sim$ was defined in (\ref{erlouiwleiurhfuil}). Equivalently,
\beq
\mu(A)=1, \mbox{ where }A=\{n\mid{H}\left(z_n,x_n,\delta_n\right){=}0\}
\label{eorgtuioeiru}
\eeq
and $\mu$ is the finitely additive measure over $\bN$ introduced in (\ref{erltuihjeui}).
Now, assume that $\nst{H}$ is {structurally nonsingular} according to Definition\,\ref{opw4u9gthler}. Then, using notations (\ref{lerigfui}) of  Section~{\ref{epfuiowerhpfoui}}, for every $n\in{A}$, there exist $\mathbf{U}_n\eqdef\regularU{H_n}{z_n,x_n,\delta_n}$ and $\mathbf{V}_n\eqdef\regularV{H_n}{z_n,x_n,\delta_n}$ satisfying property (\ref{leiuthpieu}) with respect to the system ${H}\left(Z_n,X_n,\delta_n\right){=}0$.

Define the internal sets $\mathbf{U}=[\mathbf{U}_n]$ and \mbox{$\mathbf{V}=[\mathbf{V}_n]$}. We have \mbox{$\nst{\lambda}(\mathbf{V}\setminus\mathbf{U})=0$}, where $\nst{\lambda}$ is the internal set function $\nst{\lambda}=[\lambda,\lambda,\lambda,\dots]$. For every $n\in{A}$, every $\cJ_n\in{\mathbf{V}_n}$ yields an invertible matrix, implying that $\cJ\eqdef[\cJ_n]$ also yields an invertible matrix. As a result, we have exhibited two sets $\mathbf{U}$ and $\mathbf{V}$ satisfying property (\ref{leiuthpieu}) with respect to the system $\nst{H}\left(Z,\nst{X},\vsmall\right)=0$.
\end{proof}

Definition~\ref{opw4u9gthler} thus formally justifies our shameless use of the structural analysis in nonstandard domains. Note that Definition~\ref{opw4u9gthler} tells nothing about how $\mathbf{U}$ is---it could be infinitely small. Neither does it require uniqueness of the solutions of the considered equations.

We now move to question\,\ref{lerigui} and investigate how equations should be standardized.
\begin{theorem}[standardizing equations]
	\label {ow4u89ghsljkwrg} 
Consider a square system of the form $(\ref{qweruighp39})$, and assume that $H$ is of class $\cC^1$. The following two properties hold: 
\begin{enumerate}
	\item \label{leriuhtpwui} If the standard pair $(z,x)$ satisfies $H(z,x,0){=}0$ and the Jacobian $\Jacobian_{\!Z}H$ at the point $(z,x,0)$ is nonsingular, then, for every nonstandard pair $(\nst{x},\vsmall)$ such that $({\nst{x},\vsmall})\approx(x,0)$, there exists $\nst{z}$ such that $\st{\nst{z}}{=}z$ and $\nst{H}\left(\nst{z},\nst{x},\vsmall\right){=}0$.
	
	\item \label{lerituherilut} Vice-versa, let $(\nst{z},\nst{x},\vsmall)$ be a finite nonstandard triple satisfying $\nst{H}\left(\nst{z},\nst{x},\vsmall\right)=0$, and let $(z,x,0)=\st{\nst{z},\nst{x},\vsmall}$ be its standard part. Then $H(z,x,0)=0$ holds.
\end{enumerate}
Under these assumptions, the standardization of the equation $\nst{H}\left(\nst{z},\nst{x},\vsmall\right){=}0$ is independent from the particular value for $\vsmall$ we have chosen (provided that it is infinitesimal).
\end{theorem}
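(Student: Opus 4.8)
The plan is to reduce everything to the standard Implicit Function Theorem applied pointwise along a representant of the nonstandard object, in the spirit of the proof of Lemma~\ref{liughlkuygfyu} and Lemma~\ref{owirguhpr}. First I would set $H:\bR^{m}\times\bR^{m}\times\bR\ra\bR^{m}$ to be the $\mathcal{C}^1$ function of the statement, write $\vsmall=[\delta_n]$ with each $\delta_n>0$ and $\delta_n\ra 0$ outside a $\mu$-negligible set, and write $\nst{X}=[x_n]$ with $x_n\ra x$ outside a negligible set (this is legitimate since $\nst{X}$ is finite, by Lemma~\ref{owirguhpr} applied componentwise). Since $H$ is $\mathcal{C}^1$ and $\Jacobian_{\!Z}H(z,x,0)$ is nonsingular, the Implicit Function Theorem (Theorem~10.2.2 of~\cite{DieudonneEA1}, as invoked in Section~\ref{epfuiowerhpfoui}) provides open neighborhoods $U_0\ni(x,0)$ in $\bR^{m}\times\bR$ and a $\mathcal{C}^1$ map $\Gamma:U_0\ra\bR^{m}$ with $\Gamma(x,0)=z$ and $H(\Gamma(y,\delta),y,\delta)=0$ for all $(y,\delta)\in U_0$.

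For Statement~\ref{leriuhtpwui}, I would argue as follows. Since $(x_n,\delta_n)\ra(x,0)$ outside a negligible set $N$, the set $A=\{n\mid (x_n,\delta_n)\in U_0\}$ satisfies $\mu(A)=1$. For $n\in A$, set $z_n\eqdef\Gamma(x_n,\delta_n)$, and for $n\notin A$ set $z_n$ arbitrarily (say $z_n=0$); define $\nst{z}\eqdef[z_n]$. Then $\nst{H}(\nst{z},\nst{X},\vsmall)=[H(z_n,x_n,\delta_n)]=0$ because the equation holds for every $n\in A$ and $\mu(A)=1$. Moreover, by continuity of $\Gamma$ at $(x,0)$, we have $z_n=\Gamma(x_n,\delta_n)\ra\Gamma(x,0)=z$ along $A$, so $\nst{z}\approx z$, i.e.\ $\st{\nst{z}}=z$. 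This gives the first assertion.

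For Statement~\ref{lerituherilut}, let $(\nst{z},\nst{X},\vsmall)=([z_n],[x_n],[\delta_n])$ be finite with $\nst{H}(\nst{z},\nst{X},\vsmall)=0$, and let $(z,x,0)$ be its standard part (the third coordinate standardizes to $0$ since $\vsmall$ is infinitesimal). Then $\mu(B)=1$, where $B=\{n\mid H(z_n,x_n,\delta_n)=0\}$, and also $(z_n,x_n,\delta_n)\ra(z,x,0)$ outside a negligible set; choosing $n\in B$ along such a convergent subsequence and passing to the limit in $H(z_n,x_n,\delta_n)=0$ using continuity of $H$ yields $H(z,x,0)=0$. (More carefully: the set of $n$ that are both in $B$ and for which $(z_n,x_n,\delta_n)$ is within any prescribed distance of $(z,x,0)$ has $\mu$-measure $1$, hence is nonempty, for every prescribed distance; letting the distance shrink and using continuity of $H$ forces $H(z,x,0)=0$.)

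Finally, the independence claim is immediate from Statements~\ref{leriuhtpwui} and~\ref{lerituherilut} taken together: Statement~\ref{lerituherilut} shows that \emph{any} solution of the nonstandard equation $\nst{H}(\nst{z},\nst{X},\vsmall){=}0$ (for any infinitesimal $\vsmall$ and any $\nst{X}\approx x$) standardizes to a solution of $H(z,x,0){=}0$, while Statement~\ref{leriuhtpwui}, combined with the nonsingularity of $\Jacobian_{\!Z}H(z,x,0)$ which makes the standard solution $z$ locally unique, shows that such a solution $\nst{z}$ exists and its standard part is exactly that unique $z$; neither $z$ nor the procedure ``solve then standardize'' depends on which infinitesimal $\vsmall$ was used. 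I would close by remarking that this is precisely the rigorous content of Comment~\ref{keruif}: standardizing the equation amounts to setting $\vsmall{=}0$ \emph{provided} the relevant Jacobian $\Jacobian_{\!Z}H$ is nonsingular at $\vsmall{=}0$, which is exactly the hypothesis here; when it is singular one must first eliminate impulsive variables, as done for the clutch. The main obstacle, and the only delicate point, is the bookkeeping with the measure $\mu$ — in particular making sure that the intersections of the ``good'' index sets (where the IFT applies, where convergence holds, where the equation holds) all have $\mu$-measure $1$, so that a valid representant of the equivalence class can be constructed; this is routine but must be done with care, exactly as in the proof of Lemma~\ref{liughlkuygfyu}.
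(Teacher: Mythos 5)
Your proof follows the same overall route as the paper's: Statement~1 via the standard Implicit Function Theorem applied pointwise along a representant $[x_n],[\delta_n]$ on a $\mu$-conull index set, and Statement~2 via continuity of $H$ and passing to the standard part. Two remarks.

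First, a technical imprecision that you should fix: the assertion ``$(x_n,\delta_n)\to(x,0)$ outside a negligible set'' does not follow from $(\nst{x},\vsmall)\approx(x,0)$. What is true is only the $\varepsilon$-by-$\varepsilon$ statement: for each \emph{standard} $\varepsilon>0$, $\mu\{n\mid |(x_n,\delta_n)-(x,0)|<\varepsilon\}=1$; since $\mu$ is only finitely additive, one cannot in general extract a single $\mu$-conull set on which the sequence converges. You in fact use only the correct weaker statement when you argue $\mu(A)=1$, and your parenthetical ``more careful'' argument in Statement~2 is exactly the right $\varepsilon$-by-$\varepsilon$ version; the same $\varepsilon$-by-$\varepsilon$ argument (using continuity of $\Gamma$ at $(x,0)$) is what is actually needed to conclude $\nst{z}\approx z$ in Statement~1, and you should phrase it that way rather than invoking convergence along a subsequence.

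Second, a genuine improvement over the paper's proof: the paper constructs $\nst{z}=[G(x_n,\delta_n)]$ and verifies $\nst{H}(\nst{z},\nst{x},\vsmall)=0$, but never actually shows $\st{\nst{z}}=z$, which the theorem statement requires. Your proof supplies that missing step (modulo the fix above). For Statement~2, the paper invokes the nonstandard characterization of uniform continuity (Statement~2 of Theorem~\ref{perqoui}) to get $H(z,x,0)\approx\nst{H}(\nst{z},\nst{x},\vsmall)=0$; your sequential argument via the measure $\mu$ reaches the same conclusion more elementarily and is equally valid. Both proofs are correct up to the small gaps noted; neither requires differentiability beyond what the IFT needs for Statement~1, and Statement~2 only uses continuity of $H$.
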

\begin{corollary}
	\label{erktuieghoriugfter} We can standardize $\nst{H}(Z,X,\vsmall){=}0$ as the system $H(Z,X,0){=}0$, provided that the latter is structurally nonsingular.
\end{corollary}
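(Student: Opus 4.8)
The plan is to derive the Corollary directly from Theorem~\ref{ow4u89ghsljkwrg}, using Definition~\ref{opw4u9gthler} to interpret ``structurally nonsingular'' in the nonstandard setting. First I would observe that, by Definition~\ref{opw4u9gthler}, the hypothesis ``$H(Z,X,0){=}0$ is structurally nonsingular'' is exactly the assertion that $\nst{H}(Z,\nst{X},\vsmall){=}0$ is structurally nonsingular; so the assumption of the Corollary is precisely the structural counterpart, at $\delta{=}0$, of the regularity condition needed to invoke the theorem. By Lemma~\ref{oerigtuerio} (extended to the standard system $H(Z,X,0){=}0$, which is a $\cC^1$ algebraic system in the sense of Section~\ref{epfuiowerhpfoui}), structural nonsingularity implies that for every $(\vval Z,\vval X)$ with $H(\vval Z,\vval X,0){=}0$ the Jacobian $\Jacobian_{\!Z}H$ at $(\vval Z,\vval X,0)$ is generically nonsingular when its nonzero coefficients vary over a neighborhood. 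This is the genericity qualifier already present in the earlier parts of the paper, so the standardization statement is to be understood ``generically'' in the same sense; I would state this explicitly to avoid any overclaim.

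Next I would spell out the two directions. For the ``forward'' direction: given a standard solution $(z,x)$ of $H(z,x,0){=}0$ at which $\Jacobian_{\!Z}H$ is nonsingular, Statement~\ref{leriuhtpwui} of Theorem~\ref{ow4u89ghsljkwrg} produces, for any infinitesimal $\vsmall$ and any $\nst{x}\approx x$, a nonstandard $\nst{z}$ with $\st{\nst{z}}{=}z$ solving $\nst{H}(\nst{z},\nst{x},\vsmall){=}0$. For the ``backward'' direction: any finite nonstandard solution $(\nst{z},\nst{x},\vsmall)$ of $\nst{H}(\nst{z},\nst{x},\vsmall){=}0$ has a standard part $(z,x,0)$ which, by Statement~\ref{lerituherilut}, solves $H(z,x,0){=}0$. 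Combining the two, the solution set of $H(Z,X,0){=}0$ is exactly the set of standard parts of the finite solutions of $\nst{H}(Z,\nst{X},\vsmall){=}0$, and moreover the correspondence $\nst{z}\mapsto\st{\nst{z}}$ is well-defined on the relevant branch because the nonsingular Jacobian at $(z,x,0)$ forces local uniqueness (Implicit Function Theorem, as recalled in Section~\ref{epfuiowerhpfoui}). This is precisely what it means to ``standardize $\nst{H}(Z,X,\vsmall){=}0$ as $H(Z,X,0){=}0$''.

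Finally I would note that the last clause of Theorem~\ref{ow4u89ghsljkwrg}---independence of the standardization from the particular infinitesimal $\vsmall$---carries over verbatim: the limiting system $H(Z,X,0){=}0$ does not mention $\vsmall$ at all, and both Statements~\ref{leriuhtpwui} and~\ref{lerituherilut} hold for every infinitesimal choice, so the resulting standard system is intrinsic. Hence the Corollary follows by simply packaging these observations.

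I expect the main obstacle to be purely expository rather than mathematical: one must be careful that the hypothesis ``$H(Z,X,0){=}0$ is structurally nonsingular'' is invoked in the correct place. Structural nonsingularity is only a \emph{generic} guarantee of Jacobian regularity (Lemma~\ref{oerigtuerio}), whereas Theorem~\ref{ow4u89ghsljkwrg} requires \emph{actual} nonsingularity of $\Jacobian_{\!Z}H$ at the solution point. So the honest statement of the Corollary is: whenever structural nonsingularity holds, then, outside an exceptional set of coefficient values of Lebesgue measure zero, $\nst{H}(Z,X,\vsmall){=}0$ standardizes to $H(Z,X,0){=}0$ in the sense made precise above. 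The remaining subtlety---already flagged after Lemma~\ref{erkltuiherliu}---is that existence and uniqueness of the standard solution are not guaranteed in general, only that there are no arbitrarily close distinct solutions; I would point the reader back to that discussion rather than re-proving it here.
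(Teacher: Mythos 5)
Your approach is the same as the paper's: the corollary is a direct repackaging of Theorem~\ref{ow4u89ghsljkwrg}, with Lemma~\ref{oerigtuerio} supplying the passage from structural nonsingularity to (generic) Jacobian nonsingularity at solution points, and the paper itself presents it exactly this way, offering only the one-line remark that ``the nonsingularity condition is needed for statement~\ref{leriuhtpwui} to hold'' rather than a separate proof. Your care about the generic-versus-actual Jacobian distinction is appropriate and matches the spirit of Section~\ref{epfuiowerhpfoui}.

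One claim in your opening paragraph is, however, wrong, though it is not load-bearing. You write that, by Definition~\ref{opw4u9gthler}, structural nonsingularity of $H(Z,X,0){=}0$ ``is exactly the assertion that $\nst{H}(Z,\nst{X},\vsmall){=}0$ is structurally nonsingular.'' Definition~\ref{opw4u9gthler} relates the nonstandard system to $H(Z,X,\delta){=}0$ for a \emph{positive} standard step size $\delta$, not to $\delta{=}0$. The two are not equivalent: setting $\delta{=}0$ can delete edges from the bipartite graph (terms proportional to $\delta$ vanish), so a system can be structurally nonsingular for $\delta{>}0$ yet structurally singular at $\delta{=}0$. This is precisely the situation in the clutch example (Comment~\ref{keruif} and System~\eqref{sys:wp49guhsopiu}) and is the very reason the corollary must carry the explicit proviso ``provided that the latter is structurally nonsingular'': it is a genuine additional hypothesis, not a restatement of what is already assumed. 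Since the rest of your argument never uses the alleged equivalence, the proof goes through; just drop that sentence, or replace ``is exactly'' with ``implies'' (structural nonsingularity at $\delta{=}0$ does imply it for generic $\delta{>}0$, since a complete matching survives adding edges, but not conversely).
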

The nonsingularity condition is needed for statement~\ref{leriuhtpwui} to hold. We proceed to the proof of Theorem~\ref{ow4u89ghsljkwrg}.
\begin{proof} We successively prove the two statements.
For {Statement~\ref{leriuhtpwui}}, by the Implicit Function Theorem applied to the standard function $H$, there exists a $\cC^1$-class function $(x',\delta)\mapsto{G}(x',\delta)$ such that $H(G(x',\delta),x',\delta){=}0$ holds for $(x',\delta)$ ranging over a neighborhood $V$ of $(x,0)$. Let $(\nst{x},\vsmall)=([x_n],[\delta_n])$ be such that $\st{\nst{x},\vsmall}{=}(x,0)$. Then, by definition of the standard part, there exists $B\subseteq{\bN}$ with $\mu(B)=1$ such that, for every $n\in{B}$, $({x_n},\delta_n)\in{V}$. The sequence $(G({x_n},\delta_n))_{n\in{B}}$ satisfies 
\beq
\mbox{$H(G({x_n},\delta_n),{x_n},\delta_n){=}0$, for every $n\in{B}$.}
\label{wleiuthupo}
\eeq
We extend the sequence $(G({x_n},\delta_n))_{n\in{B}}$ to $\bN$ by assigning arbitrary values to indices $n \in \bN \setminus B$ and consider the nonstandard $\nst{z}\eqdef[G({x_n},\delta_n)]$. By (\ref{wleiuthupo}), we get $\nst{H}\left(\nst{z},\nst{x},\vsmall\right){=}0$.

{For Statement~\ref{lerituherilut}}, let the finite triple $(\nst{z},\nst{x},\vsmall)$ satisfy $\nst{H}\left(\nst{z},\nst{x},\vsmall\right)=0$ and let $(z,x,0){=}\st{\nst{z},\nst{x},\vsmall}$. Since $(z,x,0){\approx}({\nst{z},\nst{x},\vsmall})$ and $H$ is of class $\cC^1$, we can apply Statement~\ref{ow4u9ghspsr} of Theorem~\ref{perqoui} to derive $H(z,x,0)\approx\nst{H}\left(\nst{z},\nst{x},\vsmall\right)=0$, whence $H(z,x,0)=0$ follows since $H(z,x,0)$ is standard.
\end{proof}

\subsection{DAE of index zero}
Invoking the implicit function theorem, one can also address DAEs of index $0$, i.e., of the form:
\beq
F(\dot{X},X)=0 
\label{whi0gojevfnh}
\eeq
where the Jacobian $\nabla_{\!V}F(V,X)$ is invertible in a neighborhood of the solution of the equation $F(V,X)=0$, where $V$ is the vector of dependent variables. 
\begin{theorem}
	\label {gu9htrkjwr} 
	Let $F(\dot{X},X)=0$ be a structurally nonsingular DAE system of index $0$, and let $\nst{F}=[F,F,F,\dots]$ be the nonstandard lifting of $F$. Consider the following nonstandard transition system having $\postset{X}$ as dependent variables:
\beq
0=\nst{F}\left(\frac{\postset{{X}}-{X}}{\vsmall}\,,\,{X}\right)\,, X(0)=X_0
\label {gfruilouigre}
\eeq
Then, system $(\ref{gfruilouigre})$ possesses a unique solution and we denote by $\nst{X}(\nstime),\nstime{\in}\nst{\bR}_+$ its piecewise constant interpolation.
Furthermore, the solution $X$ of DAE system $F(\dot{X},X){=}0, X(0){=}X_0$ satisfies $X = \st{\nst{X}}$.
\end{theorem}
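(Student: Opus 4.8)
The plan is to reduce Theorem~\ref{gu9htrkjwr} to the already-established ODE case, namely Theorem~\ref{0pw489gthtrgu}, by using the fact that an index-$0$ DAE is, via the implicit function theorem, equivalent to an ODE. First I would invoke structural nonsingularity together with Assumption: since $F(\dot{X},X){=}0$ has index $0$, the Jacobian $\nabla_{\!V}F(V,X)$ is nonsingular at the solution, so by the Implicit Function Theorem (Theorem $10.2.2$ in \cite{DieudonneEA1}) there is a $\cC^1$ function $G$, defined on a neighborhood of the relevant trajectory, such that $F(V,X){=}0 \iff V{=}G(X)$ locally. Hence the DAE is equivalent, along its solution, to the explicit ODE $\dot{X}{=}G(X), X(0){=}X_0$, which possesses a unique solution $X$ on the interval under consideration (I would here make explicit the standing assumption that such a continuous solution exists, paralleling the hypothesis in Theorem~\ref{0pw489gthtrgu}).

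Next I would handle the nonstandard transition system (\ref{gfruilouigre}). Writing $V{=}\frac{\postset{X}-X}{\vsmall}$, equation (\ref{gfruilouigre}) reads $\nst{F}(V,X){=}0$ with $X$ known (inductively) at each instant $\nstime{\in}\bT$; I would apply Theorem~\ref{ow4u89ghsljkwrg}, statement~\ref{leriuhtpwui}, or more directly the nonstandard lifting of $G$, to conclude that at each step $V{=}\nst{G}(X)$ is the unique finite solution, so that $\postset{X}{=}X+\vsmall\,\nst{G}(X)$. This shows that (\ref{gfruilouigre}) has a unique solution $\nst{X}$, and moreover that $\nst{X}$ is exactly the nonstandard Euler trajectory for the ODE $\dot{X}{=}G(X)$, i.e., the trajectory governed by (\ref{pe498hrte489r}) with $f{=}G$ and $X_0$ the given initial value.

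Finally, I would apply Theorem~\ref{0pw489gthtrgu} to the ODE $\dot{X}{=}G(X), X(0){=}X_0$: its unique continuous solution $X$ and the piecewise-constant interpolation $\nst{X}$ of the Euler scheme are related by $X{=}\st{\nst{X}}$. Since the ODE solution coincides with the DAE solution (they satisfy the same equation by construction of $G$, and uniqueness pins them down), this gives $X{=}\st{\nst{X}}$ for the DAE as claimed.

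I expect the main obstacle to be a careful, global argument: the Implicit Function Theorem only yields $G$ \emph{locally}, around a given point of the solution trajectory, whereas Theorem~\ref{0pw489gthtrgu} is stated for a fixed time interval (there, $[0,1]$) and the nonstandard Euler trajectory must be controlled uniformly along all of $\bT$ up to that horizon. The clean way around this is to assume, as is implicit in the theorem's phrasing and consistent with Theorem~\ref{0pw489gthtrgu}, that a continuous solution $X$ exists on the interval of interest; then its image is compact, so finitely many local charts for $G$ cover it, $\nabla_{\!V}F$ stays uniformly nonsingular on a tube around the trajectory, and the Euler increments $\vsmall\,\nst{G}(X(\nstime))$ remain infinitesimal, keeping $\nst{X}(\nstime)$ infinitesimally close to $X$ at every $\nstime$ (this last point is exactly what (\ref{erpifuheipu})--(\ref{pe498hrte489r}) deliver once we know the scheme is the Euler scheme of a $\cC^1$ ODE with a continuous solution). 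The remaining verifications—uniqueness of the nonstandard solution at each step, and the transfer of ``$X{=}\st{\nst{X}}$'' from the ODE to the DAE—are then routine.
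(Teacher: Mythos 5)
Your proof is correct and takes essentially the same route as the paper: invoke the Implicit Function Theorem to rewrite the index-$0$ DAE as $\dot{X}=G(X)$, observe that (\ref{gfruilouigre}) then becomes the nonstandard Euler recursion for this ODE, and conclude by Theorem~\ref{0pw489gthtrgu}. The only difference is that you explicitly flag the local-vs-global subtlety of the Implicit Function Theorem (compactness of the trajectory, uniform nonsingularity of the Jacobian on a tube), which the paper's proof leaves implicit.
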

\begin{proof}
Recall that the time basis of nonstandard semantics is $\bT=\{k{\times}\vsmall\mid{k}{\in}\nst{\bN}\}$. Denoting by $\nstime_k$ the generic element of $\bT$, transition system (\ref{gfruilouigre}) rewrites:
\beq
\nst{F}\left(\frac{{X}(\nstime_{k+1})-{X}(\nstime_{k})}{\vsmall}\,,\,{X}(\nstime_{k})\right)=0\,.
\label {ow34w34w34w34uip}
\eeq
Using the Implicit Function Theorem, let $G$ be the standard function such that $\dot{X}=G(X)$ solves (\ref{whi0gojevfnh}) for $\dot{X}$, and set $\nst{G}=[G,G,G,\dots]$. Then, (\ref{ow34w34w34w34uip}) rewrites
\beq
{X}(\nstime_{k+1}) = {X}(\nstime_{k})+\vsmall.\nst{G}({X}(\nstime_{k})),~X(0)=X_0\,.
\label {wiofeugfybpl}
\eeq
Theorem~\ref{gu9htrkjwr} follows from Theorem~\ref{0pw489gthtrgu} applied to (\ref{wiofeugfybpl}).
\end{proof}
In the next two sections we investigate the standardization of the blocks returned by \rref{alg:ergfipuhfopiu}, as identified in \rref{sec:lerigfuiopu}. 
Since, in this section, we pay attention to the status standard/nonstandard, we will make this status explicit in our development whenever needed.

\subsection{Long modes of \mDAE}
\label{sec:perughoriugh}
We use the notations and results of \rref{sec:lerigfuiopu}.
More precisely, we reuse notations (\ref{pweouigtndroui}) with $\prime$ replacing $\bullet$, and we invoke Lemma~\ref{erlfiuerh}. We consider a system $B_\prime(H,N)=(B_\prime^\tail(H,N),B_\prime^\head(H,N))$ returned by \rref{alg:ergfipuhfopiu} for long modes. By construction of \rref{alg:ergfipuhfopiu}, the algebraic system $G_\Sigma=0$, where $G_\Sigma\eqdef{B_\prime^\tail(H,N)}$, is structurally nonsingular and admits ${\bar{G}}_\Sigma=0$, where ${\bar{G}}_\Sigma\eqdef{B_\prime^\head(H,N)}$, as associated consistency conditions. 

At this point, we need to relate the above blocks to their standard continuous-time counterparts. The reduced index systems $(G_\Sigma,\bar{G}_\Sigma)$ returned by \rref{alg:lkciyughsliu} ($\atomicact{IndexReduc}$) in case of success, coincide with the two systems returned by the \sigmamethod, when replacing differentiation by shifting (see the emphasized sentence when introducing this algorithm). In our study of standardization, we will also need to consider the original outcome of the \sigmamethod\ (using differentiation); we denote this pair by $(G^c_\Sigma,{\bar{G}}^c_\Sigma)$, where the superscript $^c$ refers to continuous time.
Continuous-time systems $(G^c_\Sigma,{\bar{G}}^c_\Sigma)$ define the leading equations and consistency conditions of a (standard) DAE system that is generically regular. 
In the following theorem, we denote by $X=(X(t))_{t{\in}\bR_+}$ the solution of this DAE system. 

Since we are concerned with standardization, we carefully distinguish between $H$, a standard vector function, and $\nst{H}=[H,H,\dots]$, its nonstandard lifting.
\begin{theorem}
	\label{wlpguiohnl} 
The dAE system $B_\bullet^\tail(\nst{H},N){=}0$, with consistency conditions $B_\bullet^\head(\nst{H},N){=}0$, possesses a unique solution. Its piecewise constant interpolation, denoted by $\nst{X}(\nstime),\nstime{\in}\nst{\bR}_+$,  satisfies $X=\st{\nst{X}}$. Note that this standardization is independent from the particular infinitesimal time step $\vsmall$.
\end{theorem}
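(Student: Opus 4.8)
The plan is to reduce Theorem~\ref{wlpguiohnl} to the already established index-zero standardization result, Theorem~\ref{gu9htrkjwr}, by exploiting the structure of the blocks $B_\bullet(\nst{H},N)$ identified in Lemma~\ref{erlfiuerh}. First I would recall the setup: $G_\Sigma = B_\prime^\tail(H,N)$ is a structurally nonsingular algebraic system in its leading variables, and its continuous-time counterpart $(G^c_\Sigma,\bar{G}^c_\Sigma)$ defines a DAE system of index $0$ once we regard the highest-order derivatives as the dependent variables (this is precisely what index reduction achieves). By Lemma~\ref{ow47ghleuir}, the weighted bipartite graph — hence the structural nonsingularity — is the same whether we shift or differentiate, so $B_\bullet^\tail(\nst{H},N){=}0$ with consistency conditions $B_\bullet^\head(\nst{H},N){=}0$ is structurally nonsingular in the sense of Definition~\ref{opw4u9gthler}.

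The key step is then to recognize that the block $B_\bullet(\nst{H},N){=}0$, \emph{after introducing auxiliary variables for the intermediate shifts} $\postset{x},\ppostset{2}{x},\dots,\ppostset{(n_k-1)}{x}$, is exactly a first-order nonstandard transition system of the form $0=\nst{F}\bigl(\tfrac{\postset{Z}-Z}{\vsmall},Z\bigr)$ with $Z(0)=Z_0$, where $Z$ collects the state together with these auxiliary (lower-shift) variables, and $\nst{F}$ is the nonstandard lifting of a standard index-$0$ DAE function $F$ obtained from $(G^c_\Sigma,\bar{G}^c_\Sigma)$. Concretely, in the head $B_\bullet^\head(f,n)$ the equations $f{=}0,\postset{f}{=}0,\dots,\ppostset{(n-1)}{f}{=}0$ play the role of the consistency constraints carried over from previous instants (they hold as facts, by the analysis of \rref{sec:csemantics}), while the tail $\ppostset{n}{f}{=}0$ is the leading equation. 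Using the Euler expansion $\dot{x}=\frac{1}{\vsmall}(\postset{x}-x)$ in reverse on the standard side, Theorem~\ref{gu9htrkjwr} applies to this augmented first-order system and yields: $B_\bullet^\tail(\nst{H},N){=}0$ has a unique solution, its piecewise constant interpolation $\nst{Z}(\nstime)$ standardizes to the solution $Z$ of the continuous DAE, and the projection of $Z$ onto the original state variables is $X$. Finally, the independence from the particular infinitesimal $\vsmall$ is inherited directly from the corresponding statement in Theorem~\ref{gu9htrkjwr} (equivalently, from the $\vsmall$-independence in Theorem~\ref{ow4u89ghsljkwrg}).

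The main obstacle I anticipate is the bookkeeping in the reduction to first order: one must verify that introducing the auxiliary variables $U_k \eqdef \ppostset{k}{x} - \ppostset{(k-1)}{x}$ (or equivalently the chain of shifted copies) genuinely produces a \emph{structurally nonsingular} first-order system of index $0$ — i.e., that the Jacobian with respect to the enlarged tuple of dependent variables is invertible in a neighbourhood of the solution — rather than merely a structurally square one. This amounts to checking that the complete matching furnished by the \sigmamethod\ on the original block lifts to a complete matching on the augmented system, and that the associated Jacobian block is the one whose nonsingularity Theorem~\ref{gu9htrkjwr} requires. This is essentially a verification that index reduction does what it claims, combined with Lemma~\ref{erlfiuerh} and Lemma~\ref{ow47ghleuir}; once it is in place, the rest is a direct invocation of Theorem~\ref{gu9htrkjwr}. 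A secondary, minor point is to make explicit that the consistency conditions $B_\bullet^\head(\nst{H},N){=}0$ are satisfied by the initial data $Z_0$ — this is guaranteed because, within a long mode, these equations were established as facts at the preceding nonstandard instants, so the hypothesis $X(0)=X_0$ (resp. $Z(0)=Z_0$) of Theorem~\ref{gu9htrkjwr} is met on the consistent manifold.
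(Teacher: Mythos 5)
Your plan is conceptually sound but takes a genuinely different route from the paper's. You reduce the higher-shift block to a first-order transition system by introducing auxiliary variables for the intermediate shifts, and then invoke Theorem~\ref{gu9htrkjwr} (the index-$0$ standardization result) on the companion form. The paper, by contrast, never performs this state-space augmentation: it proves an auxiliary Lemma~\ref{wo487ghlseih} that replaces, one level at a time, a shifted consistency constraint $\nst{g}(\postset{X})=0$ by its differential form $\nst{(\Jacobian{g})}(X).\dot{X}=0$, using the nonstandard characterization of differentiability (Statement~\ref{ouiepuicksv} of Theorem~\ref{perqoui}) and the already-satisfied fact $\nst{g}(X)=0$ to absorb the resulting infinitesimal error. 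Theorem~\ref{wlpguiohnl} then follows by componentwise induction on $N$. What each approach buys: your route is more modular — everything is funnelled through the previously established Theorem~\ref{gu9htrkjwr} — but it pays for this in bookkeeping, precisely the obstacle you flag, and there are two places where care is needed that you gloss over. First, the companion variables you propose ($U_k \eqdef \ppostset{k}{x}-\ppostset{(k-1)}{x}$ or the chain of shifted copies) must be chosen so that the nonstandard chain relations literally take the form $\nst{F}\bigl(\tfrac{\postset{Z}-Z}{\vsmall},Z\bigr)=0$; raw shifted copies $z_k=\ppostset{k}{x}$ give $\postset{z}_k=z_{k+1}$, which after dividing by $\vsmall$ involves $\tfrac{z_{k+1}-z_k}{\vsmall}$ rather than a bona fide state derivative, so scaled divided-difference variables are needed, and their consistency at $t=0$ must match the standard lower-order derivatives. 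Second, Theorem~\ref{gu9htrkjwr} as stated is a pure index-$0$ result with no invariant manifold, so you must separately verify that the nonstandard flow stays on (an infinitesimal neighborhood of) the consistency manifold $B_\bullet^\head=0$ so that the standardization of the companion system actually projects to the solution of the original DAE rather than to some other trajectory. The paper's inductive lemma sidesteps both issues because it never leaves the original variables: it only rewrites constraints. Your plan, once those two verifications are filled in, should yield a correct proof, but in its present form those are real gaps rather than routine bookkeeping.
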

Theorem~\ref{wlpguiohnl} formalizes the practical informal rule that 
\begin{equation}
\mbox{\begin{minipage}{11.5cm}
	 for long modes, the standardization of 
	 $B_\bullet^\tail(H,N){=}0$ with consistency conditions $B_\bullet^\head(H,N){=}0$ is 
	 $B_\prime^\tail(H,N){=}0$ with consistency conditions $B_\prime^\head(H,N){=}0$,
\end{minipage}}
\label{3hp59ghio}
\end{equation}
where 
\beq\bea{rcl}
B_\prime^{\head}(f,{\prime}n)&\eqdef&\left[\bea{l} f \\
\dot{f} \\ \,\vdots \\ \pprime{(n-1)}{f} 
\eea\right] \mbox{ and}
\\ [4mm]
 B_\prime^{\tail}(f,{\prime}n)&\eqdef& ~~~\pprime{n}{f} \enspace ,
\eea
\label{rliufgeui}
\eeq
whereas $B_\prime^\tail(H,N)$ and $B_\prime^\head(H,N)$ are constructed from (\ref{rliufgeui}) as in Notations~\ref{leirughliu}.
\begin{proof}
The theorem is proved by componentwise induction on $N$, with the help of the Lemma~\ref{wo487ghlseih} to follow.
\end{proof}
\begin{lemma}
	\label {wo487ghlseih} Consider a structurally nonsingular nonstandard dAE system of the form
	\beq\bea{rl}
	\nst{H}(\postset{X},\postset{Y},X,Y) \!\!\! &\eqdef \left[\bea{c}
	\nst{G}(\postset{X},\postset{Y},X,Y) \\ \nst{g}({X}) \\ \nst{g}(\postset{X})
	\eea\right] 
	 \\ &= 0
	\eea
	\label{wp8tughjup}
	\eeq
with leading variables $\postset{X},\postset{Y}$,	and consider 
	\beq\bea{rl}
	\nst{{\widehat{H}}}(\postset{X},\postset{Y},X,Y)  \!\!\! &\eqdef \left[\bea{c}
	\nst{G}(\postset{X},\postset{Y},X,Y) \\ \nst{g}({X}) \\ \nst{(\Jacobian{g})}(X).\dot{X}
	\eea\right]
	 \\ &=0
	 \eea
	\label{p458ghfjkl}
	\eeq
	 where $\Jacobian{g}(X)$ is the Jacobian of the function $g$ at $X$.
	Let $(\nst{X}(\nstime),\nst{Y}(\nstime)),{\nstime\in\nst{\bR}_+}$, and $(\nst{{\widehat{X}}}(\nstime),\nst{{\widehat{Y}}}(\nstime)),{\nstime\in\nst{\bR}_+}$, be the piecewise constant interpolations of the solutions of nonstandard dAE systems $(\ref{wp8tughjup})$ and $(\ref{p458ghfjkl})$, respectively.
	Assume that we know a standard DAE block 
	\[
	\widehat{G}(\dot{X},\dot{Y},X,Y)=0
	\]
	 such that the solution $(X(t),Y(t))_{t\in\bR}$ of 
		\beq\bea{rl}
	\widehat{H}(\dot{X},\dot{Y},X,Y)  \!\!\! &\eqdef \left[\bea{c}
	\widehat{G}(\dot{X},\dot{Y},X,Y) \\ {{g}}(X) \\ {\Jacobian{g}}(X).\dot{X}
	\eea\right] 
	 \\ &= 0
	 \eea
	\label{qw89p7hlui}
	\eeq
	exists, is unique, and satisfies \mbox{$(X,Y){=}\st{{\nst{{\widehat{X}}},\nst{{\widehat{Y}}}}}$}. Then, we also have $(X,Y){=}\st{\nst{{{X}}},\nst{{{Y}}}}\,.$
\end{lemma}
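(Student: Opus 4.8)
\textbf{Proof plan for Lemma~\ref{wo487ghlseih}.}

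The plan is to exploit the two-way correspondence (\ref{weorifjqoi}) between shift and derivative, together with the already-proven Theorem~\ref{ow4u89ghsljkwrg} (standardizing equations) and Theorem~\ref{0pw489gthtrgu} (nonstandard semantics of ODEs), to transfer the known standardization of the ``differentiated'' system (\ref{p458ghfjkl})--(\ref{qw89p7hlui}) to the ``shifted'' system (\ref{wp8tughjup}). First I would observe that the two nonstandard systems (\ref{wp8tughjup}) and (\ref{p458ghfjkl}) differ only in their last block of equations: one has $\nst{g}(\postset{X})=0$, the other has $\nst{(\Jacobian g)}(X).\dot{X}=0$. Expanding $\postset{X}=X+\vsmall\dot{X}$ and using a first-order Taylor expansion of the smooth function $g$ (legitimate since $g\in\cC^1$ and $X,\dot X$ range over finite hyperreals, by Assumption~\ref{oergfiuehoiu}), we get
\beq
\nst{g}(\postset{X}) \;=\; \nst{g}(X) + \vsmall\,\nst{(\Jacobian g)}(X).\dot{X} + o(\vsmall) \enspace .
\label{plantaylor}
\eeq
Since $\nst{g}(X)=0$ is already imposed by the middle block of (\ref{wp8tughjup}), the equation $\nst{g}(\postset{X})=0$ is equivalent, modulo the other equations and after dividing by $\vsmall$, to $\nst{(\Jacobian g)}(X).\dot{X} = o(1)$, i.e., it standardizes to $\Jacobian g(X).\dot{X}=0$. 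This is exactly the last block of (\ref{qw89p7hlui}). Hence the standardizations of (\ref{wp8tughjup}) and (\ref{p458ghfjkl}) coincide, namely both equal the standard DAE system (\ref{qw89p7hlui}), whose solution $(X,Y)$ is assumed known, unique, and equal to $\st{\nst{\widehat X},\nst{\widehat Y}}$.

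Next I would argue that the nonstandard solution $(\nst X,\nst Y)$ of (\ref{wp8tughjup}) standardizes to this same $(X,Y)$. The key point is that (\ref{wp8tughjup}), being structurally nonsingular with leading variables $\postset X,\postset Y$, defines a nonstandard transition system: given finite $(X,Y)$ satisfying the consistency condition $\nst g(X)=0$, the leading variables $(\postset X,\postset Y)$ are uniquely determined by the Implicit Function Theorem applied to the nonstandard lifting (invoke Lemma~\ref{liughlkuygfyu} for structural nonsingularity in the nonstandard domain, and the nonsingularity hypothesis built into the fact that this block was returned by \rref{alg:ergfipuhfopiu}). One then writes the resulting explicit recursion $\postset X = X + \vsmall\,\nst\Phi(X,Y)$, $\postset Y = \nst\Psi(X,Y)$ for suitable standard functions $\Phi,\Psi$ obtained from the Implicit Function Theorem applied to (\ref{qw89p7hlui}), and applies Theorem~\ref{0pw489gthtrgu} (or rather the version used in the proof of Theorem~\ref{gu9htrkjwr}) to conclude that its piecewise-constant interpolation standardizes to the solution of the corresponding standard DAE, which is precisely $(X,Y)$. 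The fact that the standardized recursion functions $\Phi,\Psi$ are the same whether we start from (\ref{wp8tughjup}) or (\ref{p458ghfjkl}) follows from the equivalence established in the first paragraph via (\ref{plantaylor}), together with Theorem~\ref{ow4u89ghsljkwrg}, which guarantees that the standardization does not depend on the particular infinitesimal $\vsmall$ chosen.

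The main obstacle I anticipate is making the informal ``modulo the other equations, divide by $\vsmall$'' step of (\ref{plantaylor}) fully rigorous at the level of \emph{equations} rather than functions --- this is exactly the subtlety flagged in Comment~\ref{keruif} and formalized in Theorem~\ref{ow4u89ghsljkwrg}. Concretely, one must check that the system (\ref{wp8tughjup}), viewed as $\nst H(Z,\nst X_{\mathrm{state}},\vsmall)=0$ in the sense of (\ref{qweruighp39}) with $Z=(\postset X,\postset Y)$, has a Jacobian $\Jacobian_Z H$ that is nonsingular at the standard point $(z,x,0)$, so that Theorem~\ref{ow4u89ghsljkwrg} and its Corollary~\ref{erktuieghoriugfter} apply; the structural nonsingularity returned by \rref{alg:ergfipuhfopiu} gives this generically. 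A secondary, purely bookkeeping difficulty is managing the distinction between $g(X)$, $g(\postset X)$ and $\Jacobian g(X).\dot X$ cleanly when $g$ is vector-valued and when the induction on $N$ in Theorem~\ref{wlpguiohnl} peels off one shift at a time; but this is routine once the base case --- the present lemma --- is settled. I would therefore devote most of the write-up to the careful justification of (\ref{plantaylor}) as an equivalence of equations via Theorem~\ref{ow4u89ghsljkwrg}, and keep the recursion/standardization part short by citing Theorems~\ref{0pw489gthtrgu} and~\ref{gu9htrkjwr}.
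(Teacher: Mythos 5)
Your proposal is correct and takes essentially the same route as the paper: apply the nonstandard characterization of differentiability (statement~\ref{ouiepuicksv} of Theorem~\ref{perqoui}) to write $\nst{g}(\postset{X})=\nst{g}(X)+\vsmall\,\nst{(\Jacobian g)}(X)\cdot(\dot{X}+\varepsilon)$ with $\varepsilon$ infinitesimal, use the middle block $\nst{g}(X)=0$ to cancel the zeroth-order term, divide out $\vsmall$, observe that dropping the infinitesimal $\varepsilon$ does not change the standardization, and then invoke the hypothesis that the standardization of system $(\ref{p458ghfjkl})$ is $(X,Y)$. The second paragraph of your plan (the Implicit Function Theorem recursion with $\Phi,\Psi$ and appeal to Theorems~\ref{0pw489gthtrgu}/\ref{gu9htrkjwr}) is not needed — once the two nonstandard systems are shown to agree modulo an infinitesimal, the hypothesis of the lemma already supplies the standardization of the second system, so the extra machinery is redundant — but it is not wrong.
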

\begin{proof}
	 By statement~\ref{ouiepuicksv} of Theorem~\ref{perqoui}, we have	 
\beqq
\nst{g}(\postset{X}) &=& \nst{g}({X})+\vsmall.\nst{(\Jacobian{g})}(X).\left(
\dot{X}+\varepsilon
\right) 
\eeqq
where $\varepsilon$ is infinitesimal. Hence $\nst{H}$ rewrites
\beqq
\nst{H}(\postset{X},\postset{Y},X,Y) 
&=& \left[\bea{c}
	\nst{G}(\postset{X},\postset{Y},X,Y) \\ \nst{g}({X}) \\ \nst{g}({X})+\vsmall.\nst{(\Jacobian{g})}(X).\left(
\dot{X}+\varepsilon
\right) 
	\eea\right] 
\eeqq
Since $\nst{H}(\postset{X},\postset{Y},X,Y)=0$ implies in particular $\nst{g}({X})=0$, (\ref{wp8tughjup}) rewrites
\beqq
\left[\bea{c}
	\nst{G}(\postset{X},\postset{Y},X,Y) \\ \nst{g}({X}) \\ \nst{(\Jacobian{g})}(X).\left(
\dot{X}+\varepsilon
\right) 
	\eea\right] = 0
\eeqq
and removing the infinitesimal $\varepsilon$ does not change the standardization of this dAE system. We conclude by invoking the assumption of the lemma.
\end{proof}

\subsection{Mode changes of \mDAE}
\label{weroguihergu}
 Assumption\,\ref{oergfiuehoiu} (stated on page~\pageref{oergfiuehoiu}) is in force throughout this section.
We study the standardization of mode change events. Having successfully performed the nonstandard structural analysis, the system at mode changes can be put in the generic form (\ref{eroifughoiu}), and we use the decomposition (\ref{psergtiohjpio}) of the algebraic variables into impulsive and non-impulsive ones. For convenience, we repeat these formulas here:
\beqq\bea{cccc}
(\ref{eroifughoiu}):\dot{X}=\frac{\postset{X}-X}{\vsmall}\mbox{ in }
0=\bfH(\dot{X},\postset{X},V,X)
&
&&
(\ref{psergtiohjpio}): \quad V =\impuls{W}\,\cup\,\nonimpuls{W} 
\eea
\eeqq
Note that, in (\ref{eroifughoiu}), the infinitesimal parameter $\vsmall$ is involved in both time (by being the time step of $\bT$) and space (since derivatives $\dot{x}$ in the original standard model are expanded as $\frac{\postset{x}-x}{\vsmall}$). Concrete instances of this are found in the clutch example, e.g., in \rref{sys:nscoupledshaftsbroken}.

Compared with the previous section that dealt with long modes, the novelty, here, is that we take a different target domain for the standardization, namely the domain of \emph{discrete-time} transition systems---standard transition systems indexed by $\bN$. 
Clearly, we cannot expect mapping to this domain the entire trajectory of System~(\ref{eroifughoiu}). We only aim at standardizing the finite prefix of it needed to specify the restart conditions for the new mode. 
The ``$\bullet$'' shift operator defines the discrete time of this dynamics, hence we do not need to eliminate $\vsmall$ from the time domain. In contrast, we will have to eliminate the occurrences of the infinitesimal parameter $\vsmall$ acting in space. 

Consider again the clutch example and in particular \rref{sys:wp49guhsopiu}, used to determine the restart values $\postset{\omega_1},\postset{\omega_2}$ and the current torques $\tau_1,\tau_2$ from the current values for the states ${\omega_1},{\omega_2}$. The occurrence of $\vsmall$ in the right-hand sides of equations $(e_1^\vsmall),(e_2^\vsmall)$ is the difficulty, since $\vsmall$ is nonstandard---see Comment\,\ref{keruif} regarding this. 

The standardization of (\ref{eroifughoiu}) relies on Theorem~\ref{ow4u89ghsljkwrg}. Unfortunately, it is not true in general that substituting $0$ for $\vsmall$ in (\ref{eroifughoiu}) yields a structurally nonsingular system with respect to the dependent variables $Z\eqdef(\postset{X},V)$. We thus need further work to be able to invoke Theorem~\ref{ow4u89ghsljkwrg}. This will consist in eliminating, from System\,(\ref{eroifughoiu}), the impulsive variables. For the example of the clutch, the corresponding analysis was developed in Section~\ref{lwergtuiherlu}. In the following, we propose a systematic approach.

To this end, we also use the rewriting of (\ref{eroifughoiu}), i.e., we use both:
\beqq
(\ref{lergtiurhuip})&:& 0=\bfH(\dot{X},\postset{X},V,X) \\
(\ref{ltrghilu}) &:&0=H(Z,X,\vsmall) 
\eeqq
where we recall that the form (\ref{ltrghilu}) is obtained by expanding $\dot{X}$ as $\frac{\postset{X}-X}{\vsmall}$ in (\ref{lergtiurhuip})---the same notational convention is used in the assumption below.
Using decomposition (\ref{psergtiohjpio}) of algebraic variables into impulsive and non-impulsive ones: $V =\impuls{W}\,\cup\,\nonimpuls{W}$, and writing $$Y\eqdef(\postset{X},W) ~ ,$$ we assume the following about (\ref{eroifughoiu}):
\begin{assumption}
	\label{pw498gthseroi}  
	There exists a reduced system 
	$\bfK(\dot{X},\postset{X},\nonimpuls{W},X)=0$ with dependent variables $Y$, satisfying the following two conditions:
\begin{enumerate}
	\item System $K(Y,X,0){=}0$ is structurally nonsingular, where $K(Y,X,\vsmall)$ is derived from $\bfK(\dot{X},\postset{X},\nonimpuls{W},X)$ as explained above;
	\item The considered model is a reduction of the original one, in that, equivalently:
	\beqq
	\bfK(\dot{X},\postset{X},\nonimpuls{W},X)=0 &\Longleftrightarrow& \exists \impuls{W}.\bfH(\dot{X},\postset{X},V,X)=0
	\\
	K(Y,X,\vsmall)=0 &\Longleftrightarrow& \exists \impuls{W}.H(Z,X,\vsmall)=0
	\eeqq
\end{enumerate}
\end{assumption}
In words, we can eliminate, from the original model, the impulsive variables (belonging to $\compl{W}$), and the result is structurally nonsingular for $\vsmall=0$. We constructed such a reduced system in Section~\ref{lwergtuiherlu} by eliminating the torques for the released-to-engaged mode change in the clutch example.
\begin{theorem}
	\label{owgu895hlskjrg} Assumptions~$\ref{oergfiuehoiu}$ and~$\ref{pw498gthseroi}$ are in force. Then, standardizing System~$(\ref{eroifughoiu})$ is performed by solving, for $Y$, the structurally nonsingular system $K(Y,X,0)=0$. Note that this standardization is independent from the particular infinitesimal time step $\vsmall$.
\end{theorem}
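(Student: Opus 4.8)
The statement to prove is Theorem~\ref{owgu895hlskjrg}: under Assumptions~\ref{oergfiuehoiu} and~\ref{pw498gthseroi}, the standardization of System~(\ref{eroifughoiu}) is obtained by solving the structurally nonsingular standard system $K(Y,X,0)=0$ for $Y$, and the result does not depend on the particular infinitesimal $\vsmall$. The plan is to reduce this to Theorem~\ref{ow4u89ghsljkwrg} (standardizing equations) applied to the \emph{reduced} system $K$, rather than to the original $H$, and then to transfer the standardized solution of $K$ back to a standardized partial solution of $H$ using the equivalence in part~2 of Assumption~\ref{pw498gthseroi}.

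\textbf{Step 1: set up the reduced system.} First I would recall the data: by Assumption~\ref{pw498gthseroi} there is a reduced system $\bfK(\dot{X},\postset{X},\nonimpuls{W},X)=0$, with dependent variables $Y=(\postset{X},\nonimpuls{W})$, whose expanded form $K(Y,X,\vsmall)=0$ (obtained as usual by substituting $\dot{X}\gets\frac{\postset{X}-X}{\vsmall}$) satisfies: (i) $K(Y,X,0)=0$ is structurally nonsingular in the sense of \rref{sec:erligtuhouih}; and (ii) $K(Y,X,\vsmall)=0 \Longleftrightarrow \exists\impuls{W}.\,H(Z,X,\vsmall)=0$, where $Z=(\postset{X},V)$ and $V=\impuls{W}\cup\nonimpuls{W}$. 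By Assumption~\ref{oergfiuehoiu}, $H$ and hence $K$ are standard functions, and the state $X$ is standard and finite; I would additionally invoke, as is implicit in the construction, that $K$ is of class $\cC^1$, so that Theorem~\ref{ow4u89ghsljkwrg} is applicable.

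\textbf{Step 2: apply the equation-standardization theorem to $K$.} Since $K(Y,X,0)=0$ is structurally nonsingular, by Lemma~\ref{oerigtuerio} the Jacobian $\Jacobian_{\!Y}K$ at a solution $(y,x,0)$ is generically nonsingular; I would fix a solution $(y,x,0)$ of $K(Y,X,0)=0$ at which this Jacobian is invertible (this is the genericity hypothesis on coefficients). Then Theorem~\ref{ow4u89ghsljkwrg} gives: for every infinitesimal $\vsmall$ and every $\nst{x}\approx x$ standard-part $x$, there is $\nst{y}$ with $\st{\nst{y}}=y$ and $\nst{K}(\nst{y},\nst{x},\vsmall)=0$ (statement~\ref{leriuhtpwui}); and conversely any finite nonstandard triple $(\nst{y},\nst{x},\vsmall)$ solving $\nst{K}(\nst{y},\nst{x},\vsmall)=0$ standardizes to a solution of $K(Y,X,0)=0$ (statement~\ref{lerituherilut}). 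Crucially, the last clause of Theorem~\ref{ow4u89ghsljkwrg} states that this standardization is independent of the choice of infinitesimal $\vsmall$ — this directly yields the ``independence from $\vsmall$'' claim of Theorem~\ref{owgu895hlskjrg}.

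\textbf{Step 3: transfer back to the original system $H$ via the existential equivalence, and conclude.} By part~2 of Assumption~\ref{pw498gthseroi}, for the infinitesimal model we have $K(Y,X,\vsmall)=0 \Longleftrightarrow \exists\impuls{W}.\,H(Z,X,\vsmall)=0$; lifting this to the nonstandard world (the equivalence is between standard first-order-expressible relations, so transfer applies), a nonstandard $\nst{Y}=(\postset{\nst{X}},\nst{\nonimpuls{W}})$ solves $\nst{K}(\nst{Y},\nst{X},\vsmall)=0$ iff there exists $\nst{\impuls{W}}$ with $(\nst{Y},\nst{\impuls{W}})$ solving $\nst{H}(\nst{Z},\nst{X},\vsmall)=0$. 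Hence the non-impulsive part of any solution $\nst{Z}$ of the nonstandard System~(\ref{eroifughoiu}) — in particular the restart values $\postset{\nst{X}}$, which are the quantities we actually care about — coincides with a solution of $\nst{K}(\nst{Y},\nst{X},\vsmall)=0$, and by Step~2 standardizes precisely to the corresponding component of the solution $y$ of $K(Y,X,0)=0$. Conversely, $y$ lifts (Step~2) to a nonstandard solution of $\nst{K}=0$, which by the equivalence extends to a solution of $\nst{H}=0$; so $K(Y,X,0)=0$ captures exactly the standardization of the (non-impulsive part of the) restart. The impulsive variables $\impuls{W}$ have no standard part and are discarded, consistently with the impulse analysis of \rref{sec:impulseanalysis}. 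This establishes the theorem.

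\textbf{Main obstacle.} The delicate point is Step~3: justifying that the \emph{existential} equivalence in Assumption~\ref{pw498gthseroi} transfers correctly to the nonstandard setting and interacts cleanly with standardization. Assumption~\ref{pw498gthseroi} is stated as an equivalence of relations ``$\exists\impuls{W}$'' — but as emphasized in \rref{sec:operiuhlo}, such projected relations need not be expressible as smooth equation systems, so one must be careful that it is the \emph{relation} (not some chosen algebraic elimination $G(Y,X)=0$) that transfers, and that the witness $\nst{\impuls{W}}$ produced on the nonstandard side need not be finite — indeed it typically is not, being impulsive of some positive order. The argument must therefore only claim standardization of $\nst{Y}$ (whose finiteness is guaranteed by Assumption~\ref{lwrieuogwh} and the structural regularity of $K$), and must not attempt to standardize the full triple including $\nst{\impuls{W}}$; the role of Assumption~\ref{pw498gthseroi}(1) is precisely to ensure that, after eliminating the badly-behaved $\impuls{W}$, what remains is amenable to Theorem~\ref{ow4u89ghsljkwrg}.
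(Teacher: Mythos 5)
Your proof is correct and takes essentially the same approach as the paper's, whose entire proof is the single sentence ``Theorem~\ref{owgu895hlskjrg} is a consequence of Theorem~\ref{ow4u89ghsljkwrg}''. Your three-step expansion — apply Theorem~\ref{ow4u89ghsljkwrg} to the reduced system $K$ rather than to $H$, then transfer back through the existential equivalence of Assumption~\ref{pw498gthseroi}, being careful not to standardize the (infinite) impulsive witnesses $\impuls{W}$ — spells out precisely the chain of reasoning the paper leaves implicit.
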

Note that the latter system is standard.
\begin{proof}
	Theorem~\ref{owgu895hlskjrg} is a consequence of Theorem~\ref{ow4u89ghsljkwrg}.
\end{proof}
Theorem~\ref{owgu895hlskjrg} allows us to compute the restart values for the state variables that are not impulsive, which is sufficient to restart the dynamics in the coming long mode (impulsive variables at mode change cannot be initial conditions for state variables in the new mode).

We have shown in Section~\ref{lwergtuiherlu} that Assumption~\ref{pw498gthseroi} holds for the clutch example. In general, the method would consist in, first, computing the decomposition (\ref{psergtiohjpio}), and, then, eliminating impulsive variables.
\rref{fig:clutch-standard} shows the effective code resulting from the
standardization of the labeled automaton of \rref{fig:clutch-graph}. 
\begin{figure}
	\centerline{
		\begin{tikzpicture}
		[
                shorten >=1pt,node distance=5cm,auto,
                every node/.style={scale=.9},
                every edge/.style={thick,draw},
		state/.style={rectangle,scale=.9,shape=rectangle,thick,draw}
               ]
		\node[state,initial,blue] (q0) {$\begin{array}{l}\mbox{\textbf{mode}} \; \neg \gamma : \; \mbox{index 0} \\ \gencodea\end{array}$};
		\node[state,blue] (q1) [below of=q0] {$\begin{array}{l}\mbox{\textbf{mode}} \; \gamma : \; \mbox{index 1} \\ \gencodec\end{array}$};
		\path[->] (q0) edge[red, bend left] node {$\begin{array}{l}\mbox{\textbf{when}} \; \gamma \; \mbox{\textbf{do}} \\ \gencodeb \\ \mbox{\textbf{done}}\end{array}$} (q1)
                                (q1) edge[bend left] node {$\begin{array}{l}\mbox{\textbf{when}} \; \neg \gamma \; \mbox{\textbf{do}} \\ \gencoded \\ \mbox{\textbf{done}}\end{array}$} (q0);
		\end{tikzpicture}
	}
	\caption{Standardization of $\atomicact{ExecRun}$ for the ideal clutch (\rref{fig:clutch-graph}). 
%          \textit
{Blocks have been standardized and
              then symbolically pivoted. $\mlast{x}$ is
              the previous value of state variable $x$, i.e.,
              the left limit of $x$ when exiting a mode; $x^+$ is the value for restart. Long modes are colored blue; non-impulsive
              (resp. impulsive) state-jumps are colored black
              (resp. red). The dynamics in mode $\neg\gamma$ is
              defined by an ODE system, while in mode $\gamma$, it is
              defined by an over-determined index-1 DAE system
              consisting of an ODE system coupled to an algebraic
              constraint. In the transition from mode $\neg\gamma$ to
              mode $\gamma$, variables $\tau_1$ and $\tau_2$ are
              impulsive, and their standardization is undefined. This
              explains why they are set to $\nan$ (Not a
              Number).}
%              }
              }\label{fig:clutch-standard}
\end{figure}

To get practical, Assumption~\ref{pw498gthseroi} relies on computer algebraic techniques for eliminating variables from systems of equations. Algebraic elimination is easy for systems in which impulsive variables enter linearly. If they enter polynomially, Gr\"obner bases can be used, but at a high computational cost. Effective elimination techniques may not exist in general, or may be too costly. Hence, the alternative based on the numerical scheme developed in \rref{sec:restartschemes} is of great practical interest.
%END\input{NSAprimer}
%%%%%%%%%%%%%%%%%%%%%%%%%%%%%%%%%%%%%%%%%%%%%%%%%%%%%%%%%%%%%%%%%%%%%%%%%%%%%%%%%%%%%%%%%
%\input{MainResults}
\section{Main results}
\label{sec:MainResults}

This section collects results that are essential in grounding our approach. First, we prove that our approach is intrinsic, in that the actual code we generate does not depend on the particular nonstandard scheme used, instead of the explicit first-order Euler scheme, to expand the derivative.
Then, we compare our approach with known schemes, for subclasses for which they are known to exist. 
Finally, we propose a numerical scheme to compute restart values at mode changes, in full generality, for programs accepted by our structural analysis. This numerical scheme does not require eliminating impulsive variables using computer algebra.

\subsection{What if we change the interpretation of the derivative?}
\label{sec:otherschemes}
So far, an essential step in our reasoning was the interpretation (\ref{whgp45g8hseluio}) for the derivative $\dot{\omega}$. However, several nonstandard interpretations of the derivative can be equally considered. What happens if we take an interpretation different from  (\ref{whgp45g8hseluio})? We first discuss this on the clutch example, then provide a general result.

\subsubsection{The clutch example}
\label{frkfuiwgui}
If the system is within a long continuous mode, the following alternative expansion exists for the derivative of $\omega$: for $n\in\bN$ finite but otherwise arbitrary, 
$
\dot{\omega} \approx \frac{\ppostset{n}{(\postset{\omega}-\omega)}}{\vsmall} 
$.
More generally, for 
$\mathbf{a}{=}(\alpha_P,\dots,\alpha_{N})$ such that $P{\leq}N$, $N\geq{0}$, $\alpha_{N}\neq{0}$, and  $\sum_{n=P}^{N}\alpha_n=1$,  one has:
\beq
\dot{\omega} &\approx& \!\!\frac{1}{\vsmall}\sum_{n=P}^{N}\alpha_n{\ppostset{n}{(\postset{\omega}-\omega)}} \enspace .
\label{ergtioeroi}
\eeq
Let us investigate how the analysis of the clutch is modified when using the nonstandard interpretation 
\beq
\dot{\omega_\bfa} &\eqdef& \frac{1}{\vsmall}\sum_{n=P}^{N}\alpha_n{\ppostset{n}{(\postset{\omega}-\omega)}}
\label{erlgtuiohou}
\eeq
for the derivative. The nonstandard semantics of the  clutch model is:
\beq
\left\{
\bea{rllcc}
&&\dot{\omega_{\bfa{1}}}=f_1(\omega_1,\tau_1) &&(\eqq_1^\vsmall) \\
[1mm]
&&\dot{\omega_{\bfa{2}}}=f_2(\omega_2,\tau_2) &&(\eqq_2^\vsmall) \\
[1mm]
\when\;\guard&\doo&{\omega_1}-{\omega_2}=0 &&({\eqq_3}) \\
&\aand&\tau_1+\tau_2=0 &&(\eqq_{4}) \\
\when\;\prog{not}\;\guard&\doo&\tau_1=0 &&(\eqq_{5}) \\
&\aand&\tau_2=0 &&(\eqq_{6}) \\
\eea
\right.
\label{oiweruhowieu}
\eeq
\rref{alg:ergfipuhfopiu} tells us that we must shift forward $({\eqq_3})$ by $N{+}1$ and keep other equations unchanged; other latent equations $(\ppostset{n}{\eqq_3})$ for $n\leq{N}$ are erased when solving the conflict with the previous mode:
\beq
\left\{
\bea{rllcl}
&&\dot{\omega_{\bfa{1}}}=f_1(\omega_1,\tau_1)  &&(\eqq_1^\vsmall) \\
[1mm]
&&\dot{\omega_{\bfa{2}}}=f_2(\omega_2,\tau_2)  &&(\eqq_2^\vsmall) \\
[1mm]
\when\;\guard&\doo&\ppostset{(N+1)}{\omega_1}-\ppostset{(N+1)}{\omega_2}=0  &&(\ppostset{(N{+}1)}{\eqq_3}) \\
&\aand&\tau_1+\tau_2=0   &&(\eqq_{4})  \\
\when\;\prog{not}\;\guard&\doo&\tau_1=0  &&(\eqq_{5}) \\
&\aand&\tau_2=0  &&(\eqq_{6}) 
\eea
\right.
\label{eroiweuhyoi}
\eeq
Using $\tau_1=-\tau_2\eqdef\tau$, the resetting equations when entering the mode $\guard = \ttt$ are
\beq
\left\{
\bea{l}
\dot{\omega_{\bfa{1}}}=f_1(\omega_1,\tau) \\
[1mm]
\dot{\omega_{\bfa{2}}}=f_2(\omega_2,-\tau) \\
[1mm]
\ppostset{(N+1)}{\omega_1}-\ppostset{(N+1)}{\omega_2}=0  \\
\eea
\right.
\label{pgtoiuaoiaug}
\eeq
Assuming again that the $f_i$'s are linear in $\tau$, i.e., of the form $f_i(\omega_i,\tau)=a_i(\omega_i)+b_i(\omega_i)\tau$, (\ref{pgtoiuaoiaug}) rewrites:
\[
\left\{
\bea{lc}
\sum_{n=P}^{N}\alpha_n\ppostset{n}{(\postset{\omega_1}-\omega_1)}
=\vsmall.(a_1(\omega_1)+b_1(\omega_1)\tau) \\
[1mm]
\sum_{n=P}^{N}\alpha_n\ppostset{n}{(\postset{\omega_2}-\omega_2)}
=\vsmall.(a_2(\omega_2)-b_2(\omega_2)\tau) \\
[1mm]
\ppostset{(N+1)}{\omega_1}-\ppostset{(N+1)}{\omega_2}=0 \\
\eea
\right. \enspace .
\]
We can eliminate $\tau$:
\begin{equation}
\bea{rcl}
0&=& 
\sum_{n=P}^{N}\alpha_n\bigl(b_2(\omega_2)\ppostset{n}{(\postset{\omega_1}-\omega_1)} 
+b_1(\omega_1)\ppostset{n}{(\postset{\omega_2}-\omega_2)}\bigr) 
\\ [1mm] &&
-\vsmall.(a_1(\omega_1)b_2(\omega_2)+a_2(\omega_2)b_1(\omega_1))  \\ 
[2mm]
0&=&\ppostset{(N+1)}{\omega_1}-\ppostset{(N+1)}{\omega_2} \\
\eea
\label{eiopgfuehpiu}
\end{equation}
We can zero the term in $\vsmall$ in the first equation: doing so keeps the system structurally nonsingular, hence it is safe. It then remains to standardize
\begin{equation}
\bea{rcl}
0&=& 
\sum_{n=P}^{N}\alpha_n\bigl(b_2(\omega_2)\ppostset{n}{(\postset{\omega_1}-\omega_1)} +b_1(\omega_1)\ppostset{n}{(\postset{\omega_2}-\omega_2)}\bigr) \\
[2mm]
0&=&\ppostset{(N+1)}{\omega_1}-\ppostset{(N+1)}{\omega_2} \\
\eea
\label{epgtuiohpeui}
\end{equation}
and, finally, solve it for the dependent variable $\ppostset{(N+1)}{\omega}\eqdef\ppostset{(N+1)}{\omega_1}=\ppostset{(N+1)}{\omega_2}$. For $n=P,\dots,N$, $\ppostset{n}{\omega_i}$ standardizes as the left-limit $\omega^-_i$ for $i=1,2$, whereas $\ppostset{(N+1)}{\omega}$ standardizes as the right-limit $\omega^+$. Hence, (\ref{epgtuiohpeui}) standardizes as
\[
0=\alpha_N\bigl(b_2(\omega^-_2){({\omega^+}-\omega^-_1)}+b_1(\omega^-_1){({\omega^+}-\omega^-_2)}\bigr)
\]
which does not depend on $\alpha_N$ since $\alpha_N\neq{0}$. Finally, we get 
$$
{\omega^+}{=}\frac{
b_2(\omega^-_2){\omega^-_1}+b_1(\omega^-_1){\omega^-_2}
}{
b_1(\omega^-_1)+b_2(\omega^-_2)
}\;,
$$
which  coincides with (\ref{eq:owgtuiho}).

\subsubsection{A general result}
In this section, we consider again the generic nonstandard interpretation~(\ref{erlgtuiohou}) for the derivative. We investigate under which assumptions the result of our whole compilation suite is independent from the particular way derivatives are expanded when mapping the considered \mDAE\ system to its nonstandard semantics.
We consider the following assumption on $\system$, thus characterizing a restricted class of \deltaAE:
\begin{assumption}
	\label{lirtylweiru} 
\mDAE-system $\system$ involves only long modes, no transient modes.
\end{assumption}
Let us compare, under Assumption~\ref{lirtylweiru} and Comment~\ref{klgfuio}, the results of \rref{alg:newmain} on $\system$ for the following two nonstandard semantics for the derivative:
\beq\bea{rcl}
{\vsmall}.\dot{x} &\!\!=\!\!& \postset{x}-x 
\\
\postset{x} &\!\!=\!\!& \vsmall.\dot{x}+x
\eea
\label{wotgihio} 
\eeq
and

\beq\bea{rcl}
\vsmall.\dot{x_\bfa} &\!\!\!\!=\!\!& \!\!\sum_{n=P}^{N}\alpha_n{\ppostset{n}{(\postset{x}-x)}}
\\ [2mm]
{\alpha_{N}}.\ppostset{(N+1)}{x} &\!\!\!\!=\!\!& 
\vsmall.\dot{x_\bfa} +\!\!\sum_{n=P}^{N}(\alpha_{n}-\alpha_{n-1})\ppostset{n}{x}
\eea
\label{lerigheil}
\eeq
where $\mathbf{a}=(\alpha_P,\dots,\alpha_{N})$ satisfies $P{\leq}N$, $N{\geq}0$, $\alpha_{N}{\neq}{0}$, and  $\sum_{n=P}^{N}\alpha_n=1$, 
whereas $\alpha_{P-1}{=}0$ by convention. Note that nothing prevents $P$ from being negative.
For the two formulas, we have given both the direct and inverse forms. Equations (\ref{wotgihio}) are obtained by taking $P{=}N{=}0$ and $\alpha_N{=}1$ in (\ref{lerigheil}).

\subsubsection*{Comparing the structural analyses}

Let $\system$ be an \mDAE\ and let $H_\system$ be the system $H$ returned by \rref{alg:newmain} at \rref{op:elkfuiehoeuio}, i.e., the outcome of index reduction, for both a long mode and a mode change event. 

We will consider the following problem: how should $H_\system$ be modified to become the outcome of the index reduction of $\system$, when derivatives are expanded using~(\ref{lerigheil}) instead of the first-order Euler expansion~(\ref{wotgihio})? We denote by $\nst{\system}_\bfa$ and $\nst{\system}$ the corresponding two nonstandard semantics for $\system$.
Regarding this question, two lines of \rref{alg:newmain} are important: \rref{op:porihujrpoi} and \rref{op:elkfuiehoeuio}. We investigate them successively.

\paragraph{\rref{op:porihujrpoi} of \rref{alg:newmain}} 
This consists in applying the \sigmamethod\ to the original \mDAE\ system $\system$, within a given mode; let $G{=}0$ be the dynamics of $\system$ in this mode. The outcome of this algorithm only depends on the weighted bipartite graph $\cG$ associated to  $G$. In the original \sigmamethod, the weight of an edge $(f,x)$ of graph $\cG$ is equal to the highest differentiation degree of $x$ in $f$. Similarly, if $G$ is a dAE system (in discrete time), the weight of an edge $(f,x)$ of $\cG$ is equal to the highest shifting degree of $x$ in $f$. Having recalled this, we can now compare the two expansions:
\begin{itemize}
	\item \emph{Using expansion $(\ref{wotgihio})$:} 
\begin{enumerate}
	\item We begin with DAE system $G$ and its bipartite graph $\cG$;
	\item Replacing DAE system $G$ by dAE system $\nst{\,G}$ using expansion $(\ref{wotgihio})$ yields ${\nst{\,\cG}}$, that is identical to $\cG$.
\end{enumerate}
	\item \emph{Using expansion $(\ref{lerigheil})$:} 
\begin{enumerate}
	\item We begin with DAE system $G$ and its bipartite graph $\cG$;
	\item Replacing DAE system $G$ by dAE system $\nst{\,G}$ using expansion $(\ref{lerigheil})$ yields ${\nst{\,\cG}_\bfa}$, in which the vertices and edges are exactly those in $\cG$ and all weights have been multiplied by $$M \eqdef N{+}1 \enspace .$$ 
\end{enumerate}
\end{itemize}
The case $N=0$ is the base case of the forward first-order Euler scheme, so that we do not need to consider it. Therefore, in the sequel, we assume $N>0$, i.e., $M>1$.

Using the notations of~(\ref{erilughroghwgouyg}), let $(c_f)_{f{\in}{F}}$ and $(c_f^\bfa)_{f{\in}{F}}$ be the offsets associated to ${\nst{\,\cG}}$ and ${\nst{\,\cG}_\bfa}$. Then, by Lemma~\ref{rilugtlo}, one gets:
\begin{lemma} \label{trghiopio}
	 The \sigmamethod\ succeeds on ${\nst{\,\cG}}$ if and only if it succeeds on ${\nst{\,\cG}}_\bfa$, and, in this case, $c_f^\bfa=M\times{c_f}$ for every $f{\in}{F}$ and $d_x^\bfa=M\times{d_x}$ for every $x{\in}X({F})$.
\end{lemma}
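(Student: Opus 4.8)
The plan is to reduce the statement to Lemma~\ref{rilugtlo} by tracking how the two nonstandard expansions of the derivative act on the weighted bipartite graph of $\system$. First I would observe that the success of the \sigmamethod\ on a weighted bipartite graph depends only on its underlying (unweighted) graph: by the role of $\atomicact{ExistsMatching}$ inside \rref{alg:indexreduction}, the method succeeds if and only if the bipartite graph admits a complete matching, and the weights are irrelevant to this. Hence, once I show that $\nst{\,\cG}$ and ${\nst{\,\cG}}_\bfa$ have the same set of vertices and the same set of edges, the ``if and only if'' part of the lemma follows immediately, and it remains only to relate the offsets in the case of success.

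Next I would verify the weight relation. Expansion~(\ref{wotgihio}) replaces a derivative $x^{(d)}$ occurring in an equation $f$ by an expression in $x,\postset{x},\dots,\ppostset{d}{x}$ whose maximal shift degree is exactly $d$; thus $\nst{\,\cG}$ carries exactly the weights of $\cG$. Expansion~(\ref{lerigheil}) applies instead the operator $z\mapsto\frac1\vsmall\sum_{n=P}^{N}\alpha_n\ppostset{n}{(\postset{z}-z)}$; since $\alpha_{N}\neq 0$ and $N>0$ (the case $N=0$ being the base Euler case already covered by~(\ref{wotgihio})), a single application raises the maximal shift degree of a variable $z$ from $0$ to $N{+}1=M$, and an induction on $d$ — whose inductive step contributes, at top degree, the nonzero coefficient $\alpha_{N}$ and adds $N{+}1$ to the maximal shift degree — shows that $x^{(d)}$ is mapped to an expression of maximal shift degree $dM$. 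Consequently every edge $(f,x)$ of ${\nst{\,\cG}}_\bfa$ is also an edge of $\nst{\,\cG}$, with weight in ${\nst{\,\cG}}_\bfa$ equal to $M$ times its weight in $\nst{\,\cG}$.

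Finally I would invoke Lemma~\ref{rilugtlo} with $d^1_{\!f\!x}$ the weights of $\nst{\,\cG}$ and $d^2_{\!f\!x}=M\times d^1_{\!f\!x}$ the weights of ${\nst{\,\cG}}_\bfa$, which is legitimate because $M=N{+}1$ is a fixed positive integer. It yields at once $c^\bfa_f=M\times c_f$ for every $f\in F$ and $d^\bfa_x=M\times d_x$ for every $x\in X(F)$; combined with the ``iff'' established in the first step, this proves the lemma.

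The only delicate point I anticipate is a purely bookkeeping one in the second step: carrying out the degree computation for the iterated operator cleanly, in particular when $P$ is allowed to be negative so that backward shifts enter the expansion, and confirming that the leading coefficient $\alpha_{N}^{\,d}$ never cancels. This is routine, and the substantive content of the lemma is entirely carried by Lemma~\ref{rilugtlo}.
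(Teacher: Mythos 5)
Your proposal is correct and takes essentially the same route as the paper: the paper establishes precisely that $\nst{\,\cG}$ and $\nst{\,\cG}_\bfa$ share the same vertices and edges (so success of the \sigmamethod, being equivalent to the existence of a complete matching, is the same for both), that the weights of $\nst{\,\cG}_\bfa$ are $M=N{+}1$ times those of $\nst{\,\cG}$, and then concludes by Lemma~\ref{rilugtlo}. Your added care about the induction on the differentiation degree and the non-cancellation of the leading coefficient $\alpha_N^d$ makes explicit what the paper leaves as an assertion, but the argument is the same.
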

In particular, each guarded equation in $\nst{\system}_\bfa$ corresponds to an equation in $\nst{\system}$ shifted $M$ times.  
We denote by 
\beq
\mbox{$(G_\Sigma,\consistency{G}_\Sigma)$ and $(G^\bfa_{\Sigma},\parconsistency{G}^\bfa_{\Sigma})$} \label{rtyiohjeproi}
\eeq 
the corresponding two returns of the \sigmamethod.

\paragraph{\rref{op:elkfuiehoeuio} of \rref{alg:newmain}} If the current instant sits within a long mode, then this line just erases the consistency conditions $\consistency{G}_\Sigma$ and $\parconsistency{G}^\bfa_{\Sigma}$. So, we are left with $G_\Sigma$ if we use expansion (\ref{wotgihio}), and $G^\bfa_{\Sigma}$ if we use expansion (\ref{lerigheil}). Now, dividing the infinitesimal step $\vsmall$ by $M$ in $G_\Sigma$ yields $G^\bfa_{\Sigma}$. Theorem~\ref{wlpguiohnl} tells us that the standardizations of the two systems $G_\Sigma$ and $G^\bfa_{\Sigma}$ yield the same DAE system. 

As such, the difficulty is the handling of mode changes by \rref{op:elkfuiehoeuio}, when the two different expansions are used. Is the handling of conflicts preserved, when the two different expansions are used? More precisely, for $\nstime_o\in\bT$ an instant of mode change:
\begin{itemize}
	\item[--] Let $G{=}0$ and $\preset{G}{=}0$ be the DAE system at the current ($\nstime\geq\nstime_o$) and previous ($\nstime<\nstime_o$) modes, respectively; we denote by $(c_f,d_x)$ and $(\previous{c_g},\previous{d_y})$ the offsets returned by the \sigmamethod\ for the current and previous mode, respectively;
	\item[--] Let $(G_\Sigma,\consistency{G}_\Sigma)$ and $(G^\bfa_{\Sigma},\parconsistency{G}^\bfa_{\Sigma})$ be the returns of the \sigmamethod\ when applied to $G{=}0$ and ${G^\bfa}{=}0$;
		\item[--] Let $(\preset{}G_\Sigma,\preset{}\consistency{G}_\Sigma)$ and $(\preset{}G^\bfa_{\Sigma},\preset{}\parconsistency{G}^\bfa_{\Sigma})$ be the returns of the \sigmamethod\ when applied to $\preset{G}{=}0$ and $\preset{G}^\bfa{=}0$;
		\item[--] We denote by $\Delta$ and $\Delta^\bfa$, and  by $\Phi$ and $\Phi^\bfa$, the contexts and sets of facts constructed by $\atomicact{ExecRun}$ if expansions (\ref{wotgihio}) and (\ref{lerigheil}) are used. 
\end{itemize}
\rref{op:elkfuiehoeuio} takes as inputs the systems $(G_\Sigma\cup\consistency{G}_\Sigma)\setminus\Phi$ and $(G^\bfa_{\Sigma}\cup\parconsistency{G}^\bfa_{\Sigma})\setminus\Phi^\bfa$. When applying $\atomicact{SolveConflict}$, the dependent variables for $(G_\Sigma\cup\consistency{G}_\Sigma)\setminus\Phi$ are all the variables of this system, both leading and non-leading, whose values were not already set by executing the previous nonstandard instant. The same holds for the dependent variables of $(G^\bfa_{\Sigma}\cup\parconsistency{G}^\bfa_{\Sigma})\setminus\Phi^\bfa$. The following result holds:
\begin{theorem}
	\label{erlituherltuih} Set $F\eqdef(G_\Sigma\cup\consistency{G}_\Sigma)\setminus\Phi$ with dependent variables defined by context $\Delta$, and $F^\bfa\eqdef(G^\bfa_\Sigma\cup\parconsistency{G}^\bfa_\Sigma)\setminus\Phi^\bfa$ with dependent variables defined by context $\Delta^\bfa$. Then, the mappings	
	\[\bea{lcl}
	\psi_f:F^\bfa\mapsto{F}&\emph{defined by}& \psi_f\left(\ppostset{M(c_f-m)}{f}\right)=\ppostset{(c_f-m)}{f}
	\\ [1mm]
	\psi_x:X(F^\bfa)\mapsto{X(F)} &\emph{defined by}& 
	\psi_x\left(\ppostset{M(d_x-m)}{x}\right)=\ppostset{(d_x-m)}{x}
	\eea
	\]
	preserve the Dulmage-Mendelsohn decompositions of $\cG_F$ and $\cG_{F^\bfa}$, i.e.,
	$g\in\block^\bfa_{\overapprox/\squared/\underapprox}$ if and only if $\psi_f(g)\in\block_{\overapprox/\squared/\underapprox}$, and	$y\in\block^\bfa_{\overapprox/\squared/\underapprox}$ if and only if $\psi_x(y)\in\block_{\overapprox/\squared/\underapprox}$.
\end{theorem}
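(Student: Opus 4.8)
The plan is to reduce Theorem~\ref{erlituherltuih} to a statement about bipartite graphs: show that $\psi_f$ and $\psi_x$ are bijections inducing an isomorphism between the weighted bipartite graphs $\cG_{F^\bfa}$ and $\cG_F$, so that the DM decompositions, which depend only on these graphs (Lemma~\ref{erlihpfiu}), must correspond. First I would set up the correspondence carefully. By Assumption~\ref{lirtylweiru}, $\system$ has only long modes, so at a mode change $\nstime_o$ the relevant systems are exactly $G_\Sigma\cup\consistency{G}_\Sigma$ (current) and $\preset{G}_\Sigma\cup\preset{\consistency{G}}_\Sigma$ (previous), the latter determining $\Delta$ and hence $\Phi = \Delta\cap\enbld{\eqq}{\status}$. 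Using Lemma~\ref{trghiopio}, each offset scales by $M=N+1$ when passing from expansion~(\ref{wotgihio}) to~(\ref{lerigheil}): $c_f^\bfa=M c_f$, $d_x^\bfa=M d_x$ (with the analogous statement for the previous mode). Consequently the latent equations of $\nst{\system}_\bfa$ are the $\ppostset{M k}{f}$ for $0\le Mk\le M c_f$, i.e., exactly the $M$-fold shifts of the latent equations $\ppostset{k}{f}$ of $\nst{\system}$, and similarly for variables. This makes $\psi_f,\psi_x$ well-defined bijections between the equation/variable sets of $F^\bfa$ and $F$ (one must check that $\Phi^\bfa$ corresponds to $\Phi$ under $\psi_f$, which follows because the facts are the shifted solved equations inherited from the previous instant, and the previous-mode offsets also scale by $M$, by Lemma~\ref{trghiopio} applied to $\preset{G}$).

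Next I would verify that $(\psi_f,\psi_x)$ is a graph isomorphism preserving weights. The key observation is a "commutation" fact: shifting an equation commutes with the substitution defining the nonstandard semantics, and, crucially, the incidence pattern of $\ppostset{m}{f}$ (which variables $\ppostset{j}{x}$ occur, with which maximal shift degree) is the image under shift-by-$m$ of the incidence pattern of $f$. Applying this with the scaling $d_x^\bfa = M d_x$: variable $\ppostset{M(d_x-m)}{x}$ occurs in $\ppostset{M(c_f-m')}{f}$ in $\cG_{F^\bfa}$ exactly when $\ppostset{(d_x-m)}{x}$ occurs in $\ppostset{(c_f-m')}{f}$ in $\cG_F$, with corresponding weights equal (the weight being the difference of the relevant shift degrees, which is scaled by $M$ on both the graph and in $\psi$, hence matches after the identification). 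Here I would lean on Lemma~\ref{rilugtlo} and the discrete-time adaptation of the \sigmamethod: the whole point of that lemma is that multiplying all weights by $M$ rescales offsets by $M$, and the present theorem is the refinement stating that the \emph{structure} (not just the numbers) is preserved. Since weights can only be $0$ for edges not arising from a genuine occurrence, the edge sets correspond exactly.

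Then the conclusion is nearly immediate: an isomorphism of bipartite graphs carries maximal matchings to maximal matchings, alternating paths to alternating paths, and unmatched vertices to unmatched vertices, hence sends $\block_\overapprox$, $\block_\squared$, $\block_\underapprox$ of $\cG_{F^\bfa}$ to the respective blocks of $\cG_F$. By Lemma~\ref{erlihpfiu}(\ref{erpoigfehpguio}) the DM decomposition is independent of the chosen maximal matching, so this correspondence is canonical and does not depend on arbitrary choices; this is exactly the statement $g\in\block^\bfa_{\overapprox/\squared/\underapprox}\iff\psi_f(g)\in\block_{\overapprox/\squared/\underapprox}$, and likewise for variables. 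I would also remark that $\atomicact{SolveConflict}$ only inspects this DM decomposition, so the conflict resolution (and in particular which consistency equations get erased) is the same under both expansions.

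The main obstacle I anticipate is the bookkeeping at the mode-change instant, specifically making the identification of $F$ and $F^\bfa$ as \emph{sets} of shifted equations precise when previous and current modes share equations. Recall that $\atomicact{ExecRun}$ deliberately identifies syntactically-identical previous and current equations to avoid duplicates, and the context $\Delta$ is built from the previous mode's solved equations; under expansion~(\ref{lerigheil}) all shift degrees in both modes are multiplied by the \emph{same} $M$ (since $M=N+1$ is a global constant of the expansion, not mode-dependent), so the identifications are compatible. One has to argue this compatibility carefully — that $\psi_f$ restricted to equations coming from the context agrees with $\psi_f$ restricted to current-mode equations wherever they overlap — but it reduces to the uniformity of the scaling across modes, which is Lemma~\ref{rilugtlo} applied in each mode. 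Beyond that, the proof is essentially a transport-of-structure argument and should contain no hard analysis.
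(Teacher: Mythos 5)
There is a genuine gap, and it is fatal to your approach as stated. You claim that $\psi_f$ and $\psi_x$ are \emph{bijections} inducing an isomorphism between $\cG_{F^\bfa}$ and $\cG_F$, justified by the statement that ``the latent equations of $\nst{\system}_\bfa$ are the $\ppostset{Mk}{f}$ for $0\le Mk\le Mc_f$, i.e., exactly the $M$-fold shifts of the latent equations $\ppostset{k}{f}$ of $\nst{\system}$.'' That is false. By Definition~\ref{def:erlgtuioo}, the \sigmamethod\ with offsets $c^\bfa_f=Mc_f$ produces consistency equations $\ppostset{k}{f}$ for \emph{all} $0\le k<Mc_f$, not only the multiples of $M$. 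The paper's Lemma~\ref{guhoiuy} and Equation~(\ref{liuhrthiojtoij}) make this explicit: for $f$ with $\previous{c_f}\le c_f$, the block $F$ contributes the $c_f-\previous{c_f}+1$ equations $\ppostset{\previous{c_f}}{f},\dots,\ppostset{c_f}{f}$, while $F^\bfa$ contributes the $M(c_f-\previous{c_f})+1$ equations $\ppostset{M\previous{c_f}}{f},\dots,\ppostset{Mc_f}{f}$. Whenever $M>1$ and some $c_f>\previous{c_f}$, $F^\bfa$ is strictly larger than $F$, so $\cG_{F^\bfa}$ and $\cG_F$ are not isomorphic and no bijective transport-of-structure argument can close the proof. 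Indeed, $\psi_f$ as written in the theorem is only defined on the $M$-multiple shift degrees, and in Lemma~\ref{elritueilrutp} the paper takes care to call the matching-level map $\psi$ \emph{surjective} rather than bijective. You can even read the non-isomorphism off Figure~\ref{werlfiuliu}: the overdetermined blocks have sizes $1$ and $2$ for $N=1$ and $N=2$, respectively.

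What the paper actually does is more delicate than a transport of structure. It first normalizes the data (Lemmas~\ref{elruioolguy}--\ref{guhoiuy}, giving $F,F^\bfa$, $Z,Z^\bfa$, $\Delta,\Delta^\bfa$, $\Phi,\Phi^\bfa$ explicitly), then proves the edge-level correspondence only along multiples of $M$ (Lemma~\ref{erliufheiuh}), and then \emph{builds}, in parallel, maximal matchings $\cM$ and $\cM^\bfa$ of $\cG_F$ and $\cG_{F^\bfa}$ by growing a base matching at the leading derivatives along augmenting paths supplied by the consistency equations (Lemma~\ref{elriguhjtoiudfh}, Lemma~\ref{elritueilrutp}\,(1)). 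The crucial observation is that these matchings cover only the $M$-multiple shift degrees; the ``intermediate'' vertices $\ppostset{j}{f}$, $j\not\equiv 0\pmod M$, remain unmatched, and the DM blocks are then read off from the pattern of matched/unmatched vertices together with Lemma~\ref{erliufheiuh}. The canonicity of the DM decomposition (Lemma~\ref{erlihpfiu}, which you did invoke correctly) is what makes this matching-specific argument legitimate. Your commutation observation and your bookkeeping about $\Phi$ versus $\Phi^\bfa$ are sound inputs to such an argument, but the missing ingredient is the explicit construction of a pair of maximal matchings aligned under $\psi$; without it, the claim that ``unmatched vertices go to unmatched vertices'' has no basis, precisely because the two vertex sets have different sizes.
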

The proof of this theorem decomposes into a series of lemmas.
Let $f{=}0$ range over the set of equations of system $G{=}0$, and let $n=c_f$ be the equation offset computed for $f$ when applying the \sigmamethod\ to $G{=}0$. 
\begin{lemma}
	\label{elruioolguy} \ 
\begin{enumerate}
	\item The following formulas hold:
\begin{equation}\hspace*{-6mm}\bea{rclcrclc}
G_\Sigma\cup\consistency{G}_\Sigma &\!\!\!=\!\!\!& \displaystyle\bigcup_{f\in{G}}B(f,c_f)
&\!\!\mbox{and}\!\!&
G^\bfa_\Sigma\cup\consistency{G}^\bfa_\Sigma &\!\!\!=\!\!\!& \displaystyle\bigcup_{f\in{G}}B(f,M{c_f}) &(a)
\\ [5mm]
\Delta&\!\!\!=\!\!\!&\displaystyle\bigcup_{f\in\preset{G}} B(f,\previous{c_f}-1)
&\!\!\mbox{and}\!\!&
\Delta^\bfa&\!\!\!=\!\!\!&\displaystyle\bigcup_{f\in\preset{G}} B(f,M\,\previous{c_f}-1)  &(b)
\\ [5mm]
\Phi&\!\!\!=\!\!\!&\displaystyle\bigcup_{f\in\preset{G}\cap{G}}B(f,k_f)
&\!\!\mbox{and}\!\!&
\Phi^\bfa&\!\!\!=\!\!\!&\displaystyle\bigcup_{f\in\preset{G}\cap{G}}B(f,K_f)
 &(c) \\ [5mm]
\mbox{where } k_f&\!\!\!=\!\!\!&\min(\previous{c_f}-1,c_f)&\!\!\mbox{and}\!\!&K_f&\!\!\!=\!\!\!&\min(M\,\previous{c_f}-1,Mc_f)  &(d)
\eea
\label{toguihukuh}
\end{equation}
\item If $\previous{c_f}\leq c_f$, then $k_f=\previous{c_f}-1 \mbox{ and } K_f=M\,\previous{c_f}-1$; 

If $\previous{c_f}> c_f$, then $k_f=c_f \mbox{ and } K_f=Mc_f$.
\end{enumerate}
\end{lemma}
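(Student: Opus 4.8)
The plan is to prove the two statements of Lemma~\ref{elruioolguy} in sequence, since statement~2 is a trivial case analysis on the sign of $\previous{c_f}-c_f$ once the formulas $(d)$ from statement~1 are in place. So the real content is statement~1, and within it, the formulas $(a)$, $(b)$, $(c)$ must be established; $(d)$ is just the definition that makes $(c)$ precise.

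First I would establish $(a)$. Recall from Definition~\ref{def:erlgtuioo} that the \sigmamethod, applied to the DAE system $G{=}0$, produces $G_\Sigma{=}\{\pprime{c_f}{f}\mid f{\in}G\}$ and consistency constraints $\consistency{G}_\Sigma{=}\{\pprime{k}{f}\mid f{\in}G,0{\leq}k{<}c_f\}$. In the discrete-time (shifting) adaptation used throughout \rref{sec:csemantics}, differentiation is replaced by forward shift, so $G_\Sigma\cup\consistency{G}_\Sigma = \bigcup_{f\in{G}}\{\ppostset{k}{f}\mid 0{\leq}k{\leq}c_f\}$, which is precisely $\bigcup_{f\in{G}}B_\bullet(f,c_f)$ using Notation~\ref{leirughliu} (head equations $f,\dots,\ppostset{(c_f-1)}{f}$ plus tail $\ppostset{c_f}{f}$). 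For the $\bfa$-version, Lemma~\ref{trghiopio} gives that the offsets of the $\bfa$-expansion are $c_f^\bfa = M\,c_f$; hence the same reasoning yields $G^\bfa_\Sigma\cup\consistency{G}^\bfa_\Sigma = \bigcup_{f\in{G}}B_\bullet(f,M{c_f})$. I would be careful to note that although each guarded equation in $\nst{\system}_\bfa$ is syntactically an $M$-fold shift of one in $\nst{\system}$ (so the ``body'' functions $f$ are literally the same up to relabelling of the $\ppostset{n}{}$ monomials), the bipartite-graph structure is preserved with all weights multiplied by $M$, which is exactly what Lemma~\ref{rilugtlo} (invoked through Lemma~\ref{trghiopio}) guarantees.

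Next, $(b)$: the context $\Delta$ at the current instant is, by Definition~\ref{eporgfuieropiufl}, the set of equations whose one-step-forward shift was solved at the previous instant, i.e., $\Delta = \{\ppostset{m}{e}\mid \status(\ppostset{(m+1)}{e})=\ttt \text{ at the end of the previous run}\}$. In the previous mode, the final status marks as solved exactly the equations of $\preset{G}_\Sigma\cup\preset{\consistency{G}}_\Sigma = \bigcup_{f\in\preset{G}}B_\bullet(f,\previous{c_f})$ by the already-proved $(a)$ applied to the previous mode. Shifting this set backward by one (the passage from $\ppostset{(m+1)}{e}$ being solved to $\ppostset{m}{e}\in\Delta$) removes the degree-$0$ equations and drops every degree by one, giving $\bigcup_{f\in\preset{G}}B_\bullet(f,\previous{c_f}-1)$; the same computation with $\previous{c_f}$ replaced by $M\,\previous{c_f}$ (via Lemma~\ref{trghiopio} applied to the previous mode) gives $\Delta^\bfa$. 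Then $(c)$ follows from $\Phi = \Delta\cap\enbld{\eqq}{\status}$ (Definition~\ref{eporgfuieropiufl}): an equation $\ppostset{m}{f}$ lies in $\Phi$ iff $f$ is enabled in both the previous and the current mode (so $f\in\preset{G}\cap G$, using the identification of syntactically identical equations) and $m$ is a shift degree occurring in \emph{both} the context $\Delta$ (so $0{\leq}m{\leq}\previous{c_f}-1$) and the current enabled set $G_\Sigma\cup\consistency{G}_\Sigma$ (so $0{\leq}m{\leq}c_f$); the overlap of these two ranges is $0{\leq}m{\leq}\min(\previous{c_f}-1,c_f)$, which is $B_\bullet(f,k_f)$ with $k_f$ as in $(d)$. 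The $\bfa$-analogue is identical with $M\,\previous{c_f}-1$ and $M c_f$, yielding $K_f=\min(M\,\previous{c_f}-1,M c_f)$.

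Finally, statement~2 is a one-line case split: if $\previous{c_f}\leq c_f$ then $\previous{c_f}-1 < c_f$ so the minimum is $\previous{c_f}-1$, and since $M\geq 1$ we also have $M\,\previous{c_f}-1 < M\,\previous{c_f}\leq M c_f$ so the $\bfa$-minimum is $M\,\previous{c_f}-1$; if $\previous{c_f} > c_f$ then $\previous{c_f}-1\geq c_f$ so the minimum is $c_f$, and likewise $M\,\previous{c_f}-1\geq M\,\previous{c_f}-M \geq M c_f$ (as $M\geq 1$, actually $M\,\previous{c_f}-1 \geq M(c_f+1)-1 = Mc_f + M - 1 \geq M c_f$) so the $\bfa$-minimum is $M c_f$. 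I expect the main obstacle to be bookkeeping rather than conceptual: carefully matching the ``solved at previous instant / shifted into the context / enabled at current instant'' bookkeeping of Definitions~\ref{eporgfuieropiufl} and the $\atomicact{Tick}$ algorithm to the block notation $B_\bullet(f,n)$, and in particular justifying the identification of syntactically identical equations across the mode change so that the intersections $\preset{G}\cap G$ and the degree-range intersections are the right objects. Everything else is routine once $(a)$ is pinned down.
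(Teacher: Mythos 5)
Your proof is correct and follows essentially the same route as the paper's, which is very terse: it cites Lemma~\ref{erlfiuerh} (for $(a)$), Definition~\ref{eporgfuieropiufl} and Lemma~\ref{trghiopio} (for $(b)$, $(c)$), and handles statement~2 via the observation that, because $M>1$ and the offsets are integers, $\previous{c_f}\leq c_f$ is equivalent to $M\,\previous{c_f}-1\leq Mc_f$; your direct case split on $\previous{c_f}-1$ vs.\ $c_f$ and $M\previous{c_f}-1$ vs.\ $Mc_f$ is an equivalent, if slightly longer, way to reach the same conclusion.
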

\begin{proof} For the first statement, 
	 (\ref{toguihukuh}-$a$) is a direct consequence of Lemma~\ref{trghiopio} and Lemma~\ref{erlfiuerh} from \rref{sec:lerigfuiopu}. (\ref{toguihukuh}-$b$) and (\ref{toguihukuh}-$c$) follow from Definition~\ref{eporgfuieropiufl} and Lemma~\ref{trghiopio}.
	 For the second statement, we note that, since $M>1$, $\previous{c_f}\leq c_f$ is equivalent to $\previous{c_f}-\frac{1}{M}\leq c_f$, or  $M\,\previous{c_f}-1\leq Mc_f$. The case $\previous{c_f}\leq c_f$ follows. The other case is proved similarly.
\end{proof}
In the following lemma, we use Notations~$\ref{leirughliu}$ and recall that $X(G)$ denotes the set of variables involved in the set $G$ of functions. We also introduce the following notations.
\begin{notation}
	\label{erokfuiioh}\rm 
	For $G{=}0$ a DAE system, we extend by convention its offsets $(c_f)_{f\in{G}}$ and $(d_x)_{x\in{X(G)}}$ to arbitrary $f$ and $x$ by setting $c_f=d_x=-\infty$ if $f\not\in{G}$ and $x\not\in{X}(G)$. 
	For $-\infty\leq{m}$ and $0\leq{n}$, and $\zeta$ a variable or a function, we set
	\[
	\bfB(\zeta,m,n)\eqdef\left[\bea{l} \ppostset{\widehat{m}}{\zeta} \\ \ppostset{(\widehat{m}+1)}{\zeta}
\\ \,\vdots \\ \ppostset{n}{\zeta} 
\eea\right] , \mbox{ where } \widehat{m}=\max(m,0)
	\] 
	with the convention that $\bfB(\zeta,m,n)$ is empty if $m>n$.
\end{notation}

\begin{lemma}
	\label{guhoiuy} Set $F\eqdef(G_\Sigma\cup\consistency{G}_\Sigma)\setminus\Phi$ and $F^\bfa\eqdef(G^\bfa_\Sigma\cup\parconsistency{G}^\bfa_\Sigma)\setminus\Phi^\bfa$. Then:
\begin{enumerate}
\item $F$ and $F^\bfa$ rewrite as follows:
\beq
F=\bigcup_{f\in{G}}\bfB(f,k_f{+}1,c_f) \mbox{ and } F^\bfa=\bigcup_{f\in{G}}\bfB(f,K_f{+}1,Mc_f) \enspace .
\label{erliueipsruiu}
\eeq
	\item Let $Z$ and $Z^\bfa$ be the sets of dependent variables of $F{=}0$ and $F^\bfa{=}0$. Then:
	\beq
	Z = \bigcup_{x\in{X}(G)}\bfB(x,\previous{d_x},d_x)
	\mbox{ and }
	Z^\bfa = \bigcup_{x\in{X}(G)}\bfB(x,M\previous{d_x},Md_x) \enspace .
	\label{lrieghjhgiohlio}
	\eeq
\end{enumerate}
\end{lemma}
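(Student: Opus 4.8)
\textbf{Proof plan for Lemma~\ref{guhoiuy}.}
The plan is to compute $F$, $F^\bfa$, $Z$, and $Z^\bfa$ directly from their definitions, using the formulas already established in Lemma~\ref{elruioolguy} together with the combinatorics of the block notation $B(\cdot,\cdot)$ from Notations~\ref{leirughliu}. The essential observation is that subtracting $\Phi$ from $G_\Sigma\cup\consistency{G}_\Sigma$ amounts, for each $f\in{G}$, to removing a prefix of the chain $f,\postset{f},\ppostset{2}{f},\dots,\ppostset{c_f}{f}$, namely the part already solved at the previous instant; this prefix is exactly the block $B(f,k_f{+}1)$ in the sense that its equations are $\ppostset{0}{f},\dots,\ppostset{k_f}{f}$. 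Hence what remains is the ``suffix'' $\ppostset{(k_f{+}1)}{f},\dots,\ppostset{c_f}{f}$, which is precisely $\bfB(f,k_f{+}1,c_f)$ in Notation~\ref{erokfuiioh}. The same reasoning applied to the rescaled system (offsets multiplied by $M$, facts truncated at $K_f$) gives $\bfB(f,K_f{+}1,Mc_f)$, which proves~(\ref{erliueipsruiu}).

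The first step, then, is to make precise the set-theoretic identity $F=(G_\Sigma\cup\consistency{G}_\Sigma)\setminus\Phi$ at the level of individual shifted equations. Using (\ref{toguihukuh}-$a$), the left operand is $\bigcup_{f\in{G}}B(f,c_f)$, i.e., the set $\{\ppostset{m}{f}\mid f\in{G},\,0\le m\le c_f\}$. Using (\ref{toguihukuh}-$c$), $\Phi=\bigcup_{f\in\preset{G}\cap{G}}B(f,k_f)=\{\ppostset{m}{f}\mid f\in\preset{G}\cap{G},\,0\le m\le k_f\}$. For $f\notin\preset{G}$ nothing is removed, and since $k_f\le c_f$ always (immediate from $k_f=\min(\previous{c_f}-1,c_f)\le c_f$), the difference for each $f\in{G}$ is $\{\ppostset{m}{f}\mid k_f{+}1\le m\le c_f\}=\bfB(f,k_f{+}1,c_f)$. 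Taking the union over $f\in{G}$ gives the first identity in~(\ref{erliueipsruiu}); the second follows verbatim after replacing $c_f$ by $Mc_f$ and $k_f$ by $K_f$, which is legitimate by Lemma~\ref{trghiopio} (the $\sigma$-method offsets for $\nst{\system}_\bfa$ are $M$ times those for $\nst{\system}$) and Lemma~\ref{elruioolguy}(\ref{toguihukuh}-$c,d$).

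The second step is the computation of the dependent-variable sets $Z$ and $Z^\bfa$. By the specification of \rref{op:elkfuiehoeuiorevis} of \rref{alg:ergfipuhfopiu}, the dependent variables of $(G_\Sigma\cup\consistency{G}_\Sigma)\setminus\Phi$ are exactly those variables (leading or not) of this system whose status is still $\undef$, i.e., those not already evaluated at the previous instant. For a variable $x\in{X}(G)$, the relevant monomials appearing in $F$ are $x,\postset{x},\dots,\ppostset{d_x}{x}$ (by the offset relation $d_x=c_f+d_{\!f\!x}$ for the matched equation, together with $d_x\ge c_f+d_{\!f\!x}$ for the others); among these, the ones already set by the previous mode are $x,\dots,\ppostset{(\previous{d_x}-1)}{x}$ if $x\in{X}(\preset{G})$ — here one must track that the context $\Delta$ records equations $\ppostset{m}{f}$ with $\status(\ppostset{(m+1)}{f})=\ttt$, which by (\ref{toguihukuh}-$b$) means $m\le\previous{c_f}-1$, and a standard induction transfers this to variables via the matching, yielding the cutoff $\previous{d_x}$. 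Hence $Z=\bigcup_{x\in{X}(G)}\bfB(x,\previous{d_x},d_x)$, and again the rescaled version follows by multiplying all offsets by $M$.

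I expect the main obstacle to be the careful bookkeeping in the second step: passing from the level of \emph{equations} (where Lemma~\ref{elruioolguy} already gives everything) to the level of \emph{variables}, one must argue that the set of ``already-determined'' variables at a mode change corresponds, under the matching $\cM$, to exactly the shifted-equation prefixes recorded in $\Delta$, so that the variable cutoff is $\previous{d_x}$ rather than some smaller or larger index. This uses the offset relations $d_x-c_f=d_{\!f\!x}$ on matched edges and $d_x-c_f\ge d_{\!f\!x}$ elsewhere (Problem~\ref{liftuerhpituhepu8}), plus the fact that $\atomicact{Tick}$ (\rref{alg:rltoghiuiop}) back-shifts exactly the solved variables. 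Once this correspondence is nailed down, both halves of the lemma are immediate, and — crucially — the rescaling $M$ enters only as a uniform multiplier on all offsets, so no new phenomenon appears in the $\nst{\system}_\bfa$ case; it is a mechanical substitution $c_f\mapsto Mc_f$, $d_x\mapsto Md_x$, $k_f\mapsto K_f=\min(M\previous{c_f}-1,Mc_f)$, justified by Lemma~\ref{trghiopio}.
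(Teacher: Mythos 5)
Your proof is correct and takes the same route the paper intends by its one-word proof (``Obvious''): unfold Lemma~\ref{elruioolguy}'s block formulas for $G_\Sigma\cup\consistency{G}_\Sigma$, $\Delta$, and $\Phi$, take the set difference pointwise in $m$ for each $f\in G$ to get the suffix $\bfB(f,k_f{+}1,c_f)$, then transfer the cutoff to variables via the $\sigma$-method offset inequalities $d_x\ge c_f+d_{fx}$ (and equality on the matching) plus the $\atomicact{Tick}$ back-shift. The bookkeeping point you flag in the second step does go through exactly as you describe, using the previous-mode offset inequality $\previous{d_x}\ge\previous{c_f}+d_{fx}$ to show $\ppostset{m}{x}$ appears in $F$ for every $m$ in $[\previous{d_x},d_x]$, and the conventional $\previous{d_x}=-\infty$ for $x\notin X(\preset{G})$ handles the remaining cases.
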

\begin{proof}
Obvious.
\end{proof}
If $\previous{c_f}\leq{c_f}$, then $k_f{+}1=\previous{c_f}$ and $K_f{+}1=M\previous{c_f}$; thus, $\bfB(f,k_f{+}1,c_f)$ and $\bfB(f,K_f{+}1,c_f)$ are both nonempty. If $\previous{c_f}>{c_f}$, then $k_f{+}1={c_f}{+}1$ and $K_f{+}1=M{c_f}{+}1$; thus, $\bfB(f,k_f{+}1,c_f)$ and $\bfB(f,K_f{+}1,c_f)$ are both empty. Hence, (\ref{erliueipsruiu}) rewrites:
\beq
F=\bigcup_{f\in{G}:\previous{c_f}\leq{c_f}}\bfB(f,\previous{c_f},c_f) \mbox{ and } F^\bfa=\bigcup_{f\in{G}:\previous{c_f}\leq{c_f}}\bfB(f,M\previous{c_f},Mc_f) \enspace .
\label{liuhrthiojtoij}
\eeq
Similarly, (\ref{lrieghjhgiohlio}) rewrites:
\beq
\hspace*{-4mm}	Z = \bigcup_{x\in{X}(G):\previous{d_x}\leq{d_x}}\bfB(x,\previous{d_x},d_x)
	\mbox{ and }
	Z^\bfa = \bigcup_{x\in{X}(G):\previous{d_x}\leq{d_x}}\bfB(x,M\previous{d_x},Md_x) \enspace .
	\label{leruighlrithliu}
\eeq
In words, $F^\bfa$ and its set of dependent variables are obtained, from $F$ and its set of dependent variables, via an $M$-stretching. From formulas (\ref{liuhrthiojtoij}) and (\ref{leruighlrithliu}), we deduce the following result:
\begin{lemma}
	\label{erliufheiuh} Let $f\in{G}$ and $x\in{X}(G)$. Then, $(\ppostset{k}{f},\ppostset{l}{x})$ is an edge of $\cG_F$ if and only if $(\ppostset{Mk}{f},\ppostset{Ml}{x})$ is an edge of $\cG_{F^\bfa}$.
\end{lemma}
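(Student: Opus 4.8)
The plan is to unwind the definitions of $F$ and $F^\bfa$ using the explicit formulas~(\ref{liuhrthiojtoij}) and~(\ref{leruighlrithliu}), which have just been established. These show that $F^\bfa$, together with its set $Z^\bfa$ of dependent variables, is literally an ``$M$-stretched'' copy of $F$ with its set $Z$: every block $\bfB(f,\previous{c_f},c_f)$ appearing in $F$ has, as its counterpart in $F^\bfa$, the block $\bfB(f,M\previous{c_f},Mc_f)$, and the same for the variable blocks. So the task reduces to tracking which pairs $(\ppostset{k}{f},\ppostset{l}{x})$ can occur as edges, and checking that the stretching sends edges to edges and non-edges to non-edges.

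First I would fix $f\in G$ and $x\in X(G)$ with $\previous{c_f}\leq c_f$ (otherwise the corresponding blocks are empty on both sides, and there is nothing to prove). An edge $(\ppostset{k}{f},\ppostset{l}{x})$ of $\cG_F$ exists precisely when $\previous{c_f}\leq k\leq c_f$, $\previous{d_x}\leq l\leq d_x$, and the variable $\ppostset{l}{x}$ actually occurs in the shifted equation $\ppostset{k}{f}$. For the $\bullet$-shifted (discrete-time) systems produced by the \sigmamethod, the shifting-degree bookkeeping is exactly the linear-programming data of Problem~\ref{liftuerhpituhepu8}: the weight $d_{fx}$ of edge $(f,x)$ in the base graph controls which shifts $\ppostset{l}{x}$ show up in $\ppostset{k}{f}$, namely those with $l\leq k+d_{fx}$, with equality forced along the matching. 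Now invoke Lemma~\ref{rilugtlo} (or its consequence Lemma~\ref{trghiopio}): replacing expansion~(\ref{wotgihio}) by~(\ref{lerigheil}) multiplies \emph{all} weights $d_{fx}$ by $M=N{+}1$, hence multiplies all offsets $c_f,d_x$ by $M$. Therefore the occurrence condition ``$\ppostset{l}{x}$ occurs in $\ppostset{k}{f}$'' becomes ``$\ppostset{Ml}{x}$ occurs in $\ppostset{Mk}{f}$'' after the $M$-stretch, and the range conditions $\previous{c_f}\leq k\leq c_f$, $\previous{d_x}\leq l\leq d_x$ become $M\previous{c_f}\leq Mk\leq Mc_f$, $M\previous{d_x}\leq Ml\leq Md_x$. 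This yields precisely the biconditional: $(\ppostset{k}{f},\ppostset{l}{x})\in\cG_F \iff (\ppostset{Mk}{f},\ppostset{Ml}{x})\in\cG_{F^\bfa}$.

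The one subtlety worth spelling out is that $\cG_{F^\bfa}$ has more vertices than the image of $\cG_F$ under the stretching map: an $M$-shifted equation $\ppostset{Mk}{f}$ of $F^\bfa$ may in principle involve an intermediate shift $\ppostset{j}{x}$ with $j$ not a multiple of $M$. I would argue this cannot create a ``new'' edge relevant to the Dulmage--Mendelsohn decomposition, because, under expansion~(\ref{lerigheil}), the dependent variables of $F^\bfa$ (the set $Z^\bfa$ of~(\ref{leruighlrithliu})) are exactly the $M$-multiples $\ppostset{Ml}{x}$; the intermediate shifts of $x$ either have status $\ttt$ already (they belong to $\Phi^\bfa$, removed from $F^\bfa$) or lie outside the relevant range, so they are not dependent variables and play no role in the DM partition. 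Formally this is just the statement that $F^\bfa$ and $Z^\bfa$ are given verbatim by~(\ref{liuhrthiojtoij})--(\ref{leruighlrithliu}), which has been proved.

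With Lemma~\ref{erliufheiuh} in hand, the bijections $\psi_f,\psi_x$ of Theorem~\ref{erlituherltuih} are graph isomorphisms $\cG_{F^\bfa}\to\cG_F$ restricted to the dependent-variable and equation vertices; since the Dulmage--Mendelsohn decomposition is intrinsic to the bipartite graph (Lemma~\ref{erlihpfiu}, statement~\ref{erpoigfehpguio}), it is preserved by any such isomorphism, giving $g\in\block^\bfa_{\overapprox/\squared/\underapprox} \iff \psi_f(g)\in\block_{\overapprox/\squared/\underapprox}$ and likewise for variables. The main obstacle I anticipate is not any of this graph-theoretic bookkeeping but the careful verification that the \emph{partition into dependent vs.\ already-solved variables} is itself compatible with the stretch --- i.e., that the contexts and fact-sets satisfy $\Delta^\bfa,\Phi^\bfa$ exactly as described in Lemma~\ref{elruioolguy}, so that removing $\Phi$ from $G_\Sigma\cup\consistency{G}_\Sigma$ and removing $\Phi^\bfa$ from $G^\bfa_\Sigma\cup\parconsistency{G}^\bfa_\Sigma$ are ``the same operation up to stretching.'' That step leans on Definition~\ref{eporgfuieropiufl} together with Lemma~\ref{trghiopio}, and is exactly what Lemmas~\ref{elruioolguy}--\ref{erliufheiuh} were set up to deliver; once they are granted, the theorem is a short corollary.
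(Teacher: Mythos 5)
Your first two paragraphs are essentially the paper's argument (the paper gives no separate proof, merely observing that the lemma is ``deduced'' from the formulas~(\ref{liuhrthiojtoij}) and~(\ref{leruighlrithliu})): the membership-in-$F$, membership-in-$Z$, and occurrence conditions all scale under the $M$-stretch once Lemma~\ref{rilugtlo}/\ref{trghiopio} has multiplied the offsets and weights by $M$. That much matches the paper and is the right level of detail for this lemma.

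However, your ``subtlety'' paragraph rests on a misreading of~(\ref{leruighlrithliu}). You assert that ``the dependent variables of $F^\bfa$ (the set $Z^\bfa$ of~(\ref{leruighlrithliu})) are exactly the $M$-multiples $\ppostset{Ml}{x}$.'' This is false: $Z^\bfa = \bigcup_{x\in X(G):\previous{d_x}\leq d_x}\bfB(x,M\previous{d_x},Md_x)$, and by Notation~\ref{erokfuiioh} the block $\bfB(x,M\previous{d_x},Md_x)$ consists of \emph{every} shift $\ppostset{j}{x}$ with $\max(M\previous{d_x},0)\leq j\leq Md_x$, not just those $j$ divisible by $M$. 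The intermediate shifts are genuine dependent variables of $F^\bfa$; they are not facts and they are not removed. Consequently, your concluding claim that $\psi_f,\psi_x$ are ``graph isomorphisms'' is too strong---these maps are only defined on the $M$-multiple sublattice and omit a large part of $\cG_{F^\bfa}$. The paper is careful to say $\psi$ is \emph{surjective}, not bijective, and Lemma~\ref{elritueilrutp} builds matchings $\cM,\cM^\bfa$ specifically so that the DM decompositions agree despite this asymmetry; that work cannot be sidestepped by the argument you sketch. For the lemma itself---the biconditional between fixed pairs $(\ppostset{k}{f},\ppostset{l}{x})$ and $(\ppostset{Mk}{f},\ppostset{Ml}{x})$---this error does no damage, since nothing in the lemma's statement involves the intermediate shifts; but you should correct the misread before using it as a stepping stone toward Theorem~\ref{erlituherltuih}.
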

The situation is illustrated in Figures~\ref{werlfiuliu}, \ref{legiotoluiohyi}, and~\ref{erligfushepui}, for the clutch and two variants of it, for $N=1$ and $N=2$.

\begin{figure}[ht]
\centerline{\input{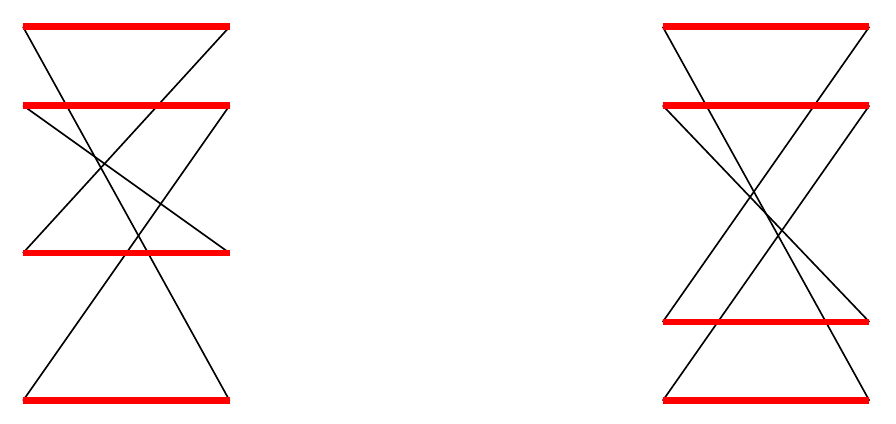_t}}
	\caption{The clutch is an example in which $\previous{c_f}=c_f$ holds. Maximal cardinality matchings for $N=1$ (left) and $N=2$ (right) are given. Subsequently applying the Dulmage-Mendelsohn decomposition returns $\block_\overapprox=\{e_3\}$ on the left, and $\block_\overapprox=\{e_3,\postset{e_3}\}$ on the right.} 
	\label{werlfiuliu}
\end{figure}

\begin{figure}[ht]
\centerline{\input{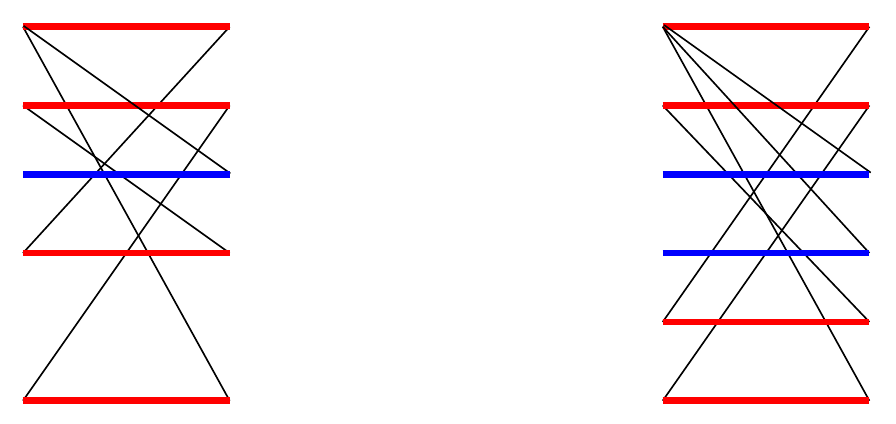_t}}
	\caption{An example with $\previous{c_f}<c_f$: a variation on the clutch. The previous mode does not involve $\postset{\omega_1}$, hence $\omega_1$ remains a dependent variable when entering the new mode. Maximal cardinality matchings for $N=1$ (left) and $N=2$ (right) are shown. With reference with the example of Figure~\ref{werlfiuliu}, the added edges correspond to consistency equations and are colored in {\color{blue}blue}. Subsequently applying the Dulmage-Mendelsohn decomposition returns $\block_\overapprox=\emptyset$ on both. Actually, the two matchings are complete. As a result, the mode change is fully determined, despite the new mode for the clutch having underdetermined consistency conditions (because the common rotation velocity is free).} 
	\label{legiotoluiohyi}
\end{figure}

\begin{figure}[ht]
\centerline{\input{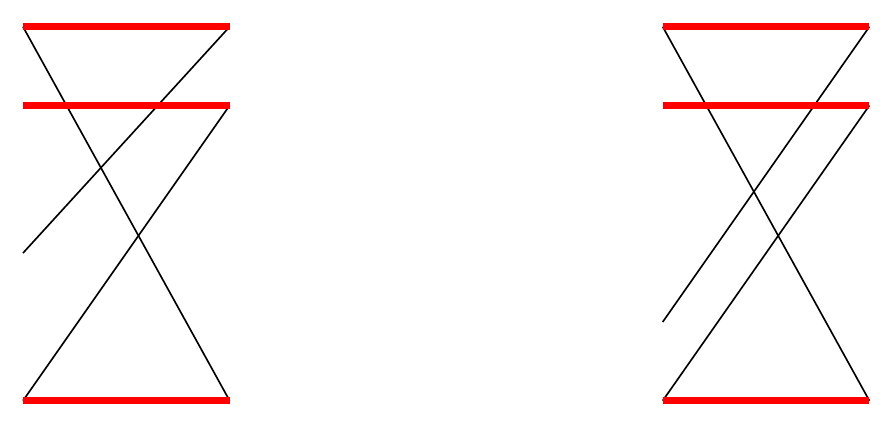_t}}
	\caption{An example with $\previous{c_f}>c_f$: another variation on the clutch. The previous mode determines $\ppostset{2}{\omega_2}$, hence $\postset{\omega_2}$ is no longer a dependent variable when entering the new mode. Maximal cardinality matchings for $N=1$ (left) and $N=2$ (right) are shown. Subsequently applying the Dulmage-Mendelsohn decomposition on the left-hand graph returns $\block_\overapprox=\cG_F$, hence \rref{op:elkfuiehoeuio} of \rref{alg:newmain} returns $H=\emptyset$. No equation of the new mode is satisfied at the first instant and we move to the next instant. The same holds for the right-hand graph.} 
	\label{erligfushepui}
\end{figure}
%\clearpage

%

\subsubsection*{Procedure for constructing matchings of maximal cardinality for $F$ and $F^\bfa$}
We want to design a procedure for determining such matchings so that the two constructions parallel each other. This will allow us to compare the results. 

\paragraph{Step 1: starting from $G_\Sigma$ and $G_\Sigma^\bfa$}
By construction, $G_\Sigma$ and $G_\Sigma^\bfa$ are both structurally nonsingular and determine the leading variables of the new mode, given a consistent valuation of other variables. They both admit a complete matching. 
\beq
\mbox{
\begin{minipage}{11cm}
 Let $\cN_1$ be a complete matching for $G_\Sigma$. Its edges have the  form $\bigl(\ppostset{c_f}{f},\ppostset{d_x}{x}\bigr)$ for appropriate pairs $(f,x)\in{G}\times{X}(G)$. Then, 
$
\cN^\bfa_1 \eqdef\left\{\left.
\bigl(\ppostset{Mc_f}{f},\ppostset{Md_x}{x}\bigr)\;
\right| \bigl(\ppostset{c_f}{f},\ppostset{d_x}{x}\bigr)\in\cN_1\;
\right\}
$
is a complete matching for $G_\Sigma^\bfa$.
\end{minipage}
} \label{lerituhlugkhliu}
\eeq
This parallel construction is illustrated in Figure~\ref{werlfiuliu}.

\paragraph{Step 2: Removing, from the set of dependent variables of $G_\Sigma$ and $G_\Sigma^\bfa$, the variables involved in the contexts $\Delta$ and $\Delta^\bfa$} Let 
\[
Y \eqdef \left\{
x\in{X}(G)
\;\left| \ppostset{d_x}{x}\in{X}(\Delta)
\right.\right\}
\mbox{ and }
Y^\bfa \eqdef \left\{
x\in{X}(G)
\;\left| \ppostset{Md_x}{x}\in{X}(\Delta^\bfa)
\right.\right\}\;.
\]
Then, $Y=Y^\bfa$ holds. Denote by $\widehat{G}_2$ (resp. $\widehat{G}_2^\bfa$) the system $G_\Sigma$ (resp. $G_\Sigma^\bfa$) where the variables involved in the context $\Delta$ (resp. $\Delta^\bfa$) have been removed. The bipartite graphs of $\widehat{G}_2$ and $\widehat{G}_2^\bfa$, denoted by $\widehat{\cG}_2$ and $\widehat{\cG}_2^\bfa$, are isomorphic. Hence:
\beq
\mbox{
\begin{minipage}{11cm}
	 If $\widehat{\cN}_2$ denotes a matching of maximal cardinality of $\widehat{\cG}_2$, then, 
$
\widehat{\cN}^\bfa_2 \eqdef\bigl\{\left.
\bigl(\ppostset{Mc_f}{f},\ppostset{Md_x}{x}\bigr)\;
\right| \bigl(\ppostset{c_f}{f},\ppostset{d_x}{x}\bigr)\in\widehat{\cN}_2\;
\bigr\}
$
is a matching of maximal cardinality for $\widehat{\cG}_2^\bfa$.
\end{minipage}
} \label{tgihgekrguihjgu}
\eeq
 In particular, their Dulmage-Mendelsohn decompositions are isomorphic too. An example of this is illustrated in Figure~\ref{erligfushepui}.

We start from $\widehat{\cN}_2$ and $\widehat{\cN}^\bfa_2$, perform their respective Dulmage-Mendelsohn decompositions, and remove, from both bipartite graphs, their respective overdetermined blocks $\block_\overapprox$ and $\block_\overapprox^\bfa$. The resulting bipartite graphs are denoted by $\cG_2$ and $\cG_2^\bfa$: they are also isomorphic. Hence:
\beq\mbox{
\begin{minipage}{11cm}
	 Let $\cN_2$ be a matching of maximal cardinality for $\cG_2$. The formula
$
\cN^\bfa_2 \eqdef\left\{\left.
\bigl(\ppostset{Mc_f}{f},\ppostset{Md_x}{x}\bigr)\;
\right| \bigl(\ppostset{c_f}{f},\ppostset{d_x}{x}\bigr)\in\cN_2\;
\right\}
$
defines a matching of maximal cardinality for $\cG_2^\bfa$.
\end{minipage}
}
\label{erlgtioio}
\eeq
$\cN_2$ involves all equations of $\cG_2$ since, by Lemma~\ref{togiethgouiho}, both $\cG_2$ and $\cG_2^\bfa$ have empty overdetermined blocks in their respective Dulmage-Mendelsohn decompositions.

\paragraph{Step 3: Adding to $G_\Sigma$ the consistency subsystem $\consistency{G}_\Sigma$ and subtracting $\Phi$; performing the corresponding changes to $G_\Sigma^\bfa$} 
We start from $\cG_2,\cN_2$ and $\cG^\bfa_2,\cN^\bfa_2$. We do not need to remove equations belonging to $\Phi$ and $\Phi^\bfa$, since such equations, if any, have already been removed at Step 2, when moving from $\widehat{\cN}_2$ to ${\cN}_2$. Therefore, the only final modification we need to perform is adding the consistency equations. The following result holds:
\begin{lemma}
	\label{elriguhjtoiudfh} 
	
\begin{enumerate}
	\item \label{leriuthuyfroui}
	Let $f\in{G}$ such that $\previous{c_f}\leq{c_f}$ and let $(\ppostset{c_f}{f},\ppostset{d_x}{x})\in\cN_2$. For every $m$ such that $0<{m}\leq{c_f}-\previous{c_f}$, the edge $(\ppostset{(c_f-m)}{f},\ppostset{(d_x-m)}{x})$ is an augmenting path of $\cN_2$ with respect to system $F$.
	\item \label{idgljhoirjrg}
	Let $f\in{G}$ such that $\previous{c_f}\leq{c_f}$ and let $(\ppostset{Mc_f}{f},\ppostset{Md_x}{x})\in\cN_2^\bfa$. For every $m$ such that $0<{m}\leq{M}({c_f}-\previous{c_f})$, the edge $(\ppostset{(Mc_f-m)}{f},\ppostset{(Md_x-m)}{x})$ is an augmenting path of $\cN_2^\bfa$ with respect to system $F^\bfa$.
\end{enumerate}
 \end{lemma}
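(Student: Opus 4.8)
The plan is to prove both statements by one and the same argument, Statement~2 being the image of Statement~1 under the $M$-stretch of Lemma~\ref{trghiopio}; so I would carry out Statement~1 in detail and only indicate the rescaling at the end.

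First I would recall the explicit shape of $F$ and of its dependent variable set that was obtained in Lemma~\ref{guhoiuy} and in formulas~(\ref{liuhrthiojtoij})--(\ref{leruighlrithliu}): the equations of $F$ coming from $f\in G$ are exactly the shifts $\ppostset{k}{f}$ with $\previous{c_f}\leq k\leq c_f$, and the dependent variables of $F$ coming from $x\in X(G)$ are exactly the shifts $\ppostset{l}{x}$ with $\previous{d_x}\leq l\leq d_x$ (reading $\previous{c_f},\previous{d_x}$ through the convention of Notation~\ref{erokfuiioh} when $f$ or $x$ is absent from $\preset{G}$). I would call $\ppostset{c_f}{f}$ the \emph{tail} equation of $f$ and its lower shifts the \emph{head} equations; likewise $\ppostset{d_x}{x}$ is the leading variable of $x$ and its lower shifts are the non-leading dependent variables. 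The decisive structural fact about $\cN_2$ is that, being obtained through Steps~1 and~2 of the construction from a complete matching $\cN_1$ of the \emph{square} system $G_\Sigma$ by deleting context variables and the vertices of an overdetermined block, every edge of $\cN_2$ has the form $(\ppostset{c_f}{f},\ppostset{d_x}{x})$; hence, viewed as a matching of $\cG_F$, $\cN_2$ leaves every head equation and every non-leading dependent variable unmatched.

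I would then fix a matched edge $(\ppostset{c_f}{f},\ppostset{d_x}{x})\in\cN_2$. Since it is an edge of $\cG_{G_\Sigma}$, the variable $\ppostset{d_x}{x}$ occurs in $\ppostset{c_f}{f}$; combined with the \sigmamethod\ slack inequality $d_x-c_f\geq d_{\!f\!x}$ of~(\ref{ltuhltrli}) this forces $d_x-c_f=d_{\!f\!x}$. The point I then need is that the head parts of $f$ and of $x$ inside $F$ have the same length, i.e.\ $d_x-\previous{d_x}=c_f-\previous{c_f}$, equivalently $\previous{d_x}-\previous{c_f}=d_{\!f\!x}$. I would obtain this from the fact that the body $f$ is mode-independent, so its weight $d_{\!f\!x}$ is the same in both modes, together with the \sigmamethod\ offset relation in the previous mode and the observation that $\ppostset{d_x}{x}$ is not a context variable (whence $d_x\geq\previous{d_x}$): this pins $\previous{d_x}=\previous{c_f}+d_{\!f\!x}$, i.e.\ $f$ also realizes the leading degree of $x$ in the previous mode. \emph{This length-matching step is where I expect the real work to be}, since it is exactly what makes the $m$-ranges for equations and for variables coincide; the degenerate cases where $f\notin\preset{G}$ or $x\notin X(\preset{G})$ have to be checked separately through Notation~\ref{erokfuiioh}.

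Granting $d_x-c_f=d_{\!f\!x}$ and $d_x-\previous{d_x}=c_f-\previous{c_f}$, Statement~1 is immediate: for $0<m\leq c_f-\previous{c_f}$ one gets $\previous{c_f}\leq c_f-m<c_f$ and $\previous{d_x}\leq d_x-m<d_x$, so $\ppostset{(c_f-m)}{f}$ is a head equation of $F$ and $\ppostset{(d_x-m)}{x}$ is a non-leading dependent variable of $F$, both unmatched by $\cN_2$ by the structural fact above; and since $(d_x-m)-(c_f-m)=d_x-c_f=d_{\!f\!x}$ is the maximal shift degree of $x$ in $f$, the variable $\ppostset{(d_x-m)}{x}$ still occurs in $\ppostset{(c_f-m)}{f}$, so $(\ppostset{(c_f-m)}{f},\ppostset{(d_x-m)}{x})$ is an edge of $\cG_F$; an edge joining two unmatched vertices is an $\cN_2$-augmenting path of length one. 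Finally, for Statement~2 I would rerun the argument on the $M$-stretched system $F^\bfa$: by Lemma~\ref{trghiopio} its offsets are $Mc_f,Md_x$ (and $M\previous{c_f},M\previous{d_x}$) and the corresponding edge weights are $Md_{\!f\!x}$, so for $0<m\leq M(c_f-\previous{c_f})=Mc_f-M\previous{c_f}$ the vertices $\ppostset{(Mc_f-m)}{f}$ and $\ppostset{(Md_x-m)}{x}$ are respectively a head equation and a non-leading dependent variable of $F^\bfa$, both unmatched by $\cN_2^\bfa$, and $(Md_x-m)-(Mc_f-m)=M(d_x-c_f)=Md_{\!f\!x}$ equals the weight of $(f,x)$ in the stretched system, so the edge is present even when $m$ is not a multiple of $M$; hence it too is an augmenting path of length one.
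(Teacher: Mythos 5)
Your plan mirrors the paper's own proof at a structural level: you identify that $\cN_2$ matches only tail equations $\ppostset{c_f}{f}$ to leading variables $\ppostset{d_x}{x}$, and you aim to show that the shifted pair $(\ppostset{(c_f-m)}{f},\ppostset{(d_x-m)}{x})$ is an edge of $\cG_F$ joining two vertices unmatched by $\cN_2$. You are also right that the crucial obstacle is whether $\ppostset{(d_x-m)}{x}$ is actually a \emph{dependent} variable of $F$, i.e.\ whether $m\leq d_x-\max(\previous{d_x},0)$; the paper's proof simply asserts that ``$\ppostset{(c_f-1)}{f}$ and $\ppostset{(d_x-1)}{x}$ are involved in $F$'' without justifying this membership, so your concern is well placed and exposes a step the published argument skips.

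The gap is in the derivation you sketch for the length-matching $d_x-\previous{d_x}=c_f-\previous{c_f}$. From $d_x-c_f=d_{\!f\!x}$ (matched edge in the current mode) and the \sigmamethod\ constraint $\previous{d_x}-\previous{c_f}\geq d_{\!f\!x}$ in the previous mode---which holds for \emph{every} edge of $\cG_{\preset{G}}$, with equality only for matched ones---you obtain $c_f-\previous{c_f}\geq d_x-\previous{d_x}$, which is the \emph{reverse} of the bound you need. Adding the non-context observation $d_x\geq\previous{d_x}$ merely sandwiches $d_x-\previous{d_x}$ between $0$ and $c_f-\previous{c_f}$ and cannot upgrade the inequality to an equality. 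The sentence ``this pins $\previous{d_x}=\previous{c_f}+d_{\!f\!x}$'' is exactly the assertion that $(f,x)$ saturates the previous mode's offset constraint, i.e.\ that $f$ also realizes the leading degree of $x$ in the previous mode; nothing in Steps~1--2 of the construction forces the current complete matching $\cN_1$ to be compatible with the previous mode's matching, so this does not follow from your three cited ingredients. Likewise, when $f\notin\preset{G}$ one has $\previous{c_f}=-\infty$, the stated range $0<m\leq c_f-\previous{c_f}$ is unbounded, and the above reasoning no longer applies; you flag this but do not resolve it. The proposal therefore contains a genuine gap, though, to be fair, it is one the paper's own two-line proof also leaves open.
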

\begin{proof}
Consider Statement~\ref{leriuthuyfroui}. If ${c_f}\leq\previous{c_f}$ for every $f\in{G}$, then matching $\cN_2$ is of maximal cardinality for $F$, so that no augmenting path exists. From now on, we assume ${c_f}>\previous{c_f}$ for some $f\in{G}$. Select any such $f$ and let $(\ppostset{c_f}{f},\ppostset{d_x}{x})$ be the edge of $\cN_2$ involving $f$. Then, $\ppostset{(c_f-1)}{f}$ and $\ppostset{(d_x-1)}{x}$ are involved in $F$, but  $\ppostset{(d_x-1)}{x}$ is not a vertex of $\cN_2$. Thus, for every $f\in{G}$ such that ${c_f}>\previous{c_f}$, then $(\ppostset{(c_f-1)}{f},\ppostset{(d_x-1)}{x})$ is an augmenting path of $\cN_2$ in $F$. The same reasoning applies for $(\ppostset{(c_f-m)}{f},\ppostset{(d_x-m)}{x})$, provided that $0<{m}\leq{c_f}-\previous{c_f}$. This proves Statement~\ref{leriuthuyfroui}. The same reasoning applies for proving Satement~\ref{idgljhoirjrg}.
\end{proof}
\begin{lemma}
	\label{elritueilrutp} 
\begin{enumerate}
	\item \label{leroyijhlruish} The following formulas define matchings of maximal cardinality for $F$ and $F^\bfa$, respectively:
	\beqq
	\cM &=& \left\{\left.
	\bigl(\ppostset{(c_f-m)}{f},\ppostset{(d_x-m)}{x}\bigr)~~~~\;
	\right| 
	\bigl(\ppostset{c_f}{f},\ppostset{d_x}{x}\bigr)\in\cN_2 , 0<{m}\leq{c_f}-\previous{c_f}
	\right\}
	\\
	\cM^\bfa &=& \left\{\left.
	\bigl(\ppostset{M(c_f-m)}{f},\ppostset{M(d_x-m)}{x}\bigr)
	\right| 
	\bigl(\ppostset{c_f}{f},\ppostset{d_x}{x}\bigr)\in\cN_2 , 0<{m}\leq {c_f}-\previous{c_f}
	\right\}
	\eeqq
	\item 
\label{ephojliuhlk}	Consider the surjective mapping $\psi:\cM^\bfa\mapsto\cM$, defined by  $$\psi\bigl(\ppostset{M(c_f-m)}{f},\ppostset{M(d_x-m)}{x}\bigr)=\bigl(\ppostset{(c_f-m)}{f},\ppostset{(d_x-m)}{x}\bigr)$$
	and define 
	\beqq
	\psi_f:F^\bfa\mapsto{F}&\emph{by}& \psi_f(\ppostset{M(c_f-m)}{f})=\ppostset{(c_f-m)}{f} \enspace ,
	\\ \psi_x:X(F^\bfa)\mapsto{X(F)} &\emph{by}& 
	\psi_x(\ppostset{M(d_x-m)}{x})=\ppostset{(d_x-m)}{x} \enspace .
	\eeqq
	 Then, $\psi$
	preserves the Dulmage-Mendelsohn decompositions of $\cG_F$ and $\cG_{F^\bfa}$, meaning that
	$g\in\block^\bfa_{\overapprox/\squared/\underapprox}$ if and only if $\psi_f(g)\in\block_{\overapprox/\squared/\underapprox}$, and	$y\in\block^\bfa_{\overapprox/\squared/\underapprox}$ if and only if $\psi_x(y)\in\block_{\overapprox/\squared/\underapprox}$.
\end{enumerate}
\end{lemma}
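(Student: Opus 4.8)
The plan is to construct $\cM$ and $\cM^\bfa$ explicitly from the matchings $\cN_2$, $\cN_2^\bfa$ produced at the end of Step~2, by adjoining the augmenting paths of Lemma~\ref{elriguhjtoiudfh}, and then to read off Statement~\ref{ephojliuhlk} from the fact that the Dulmage--Mendelsohn blocks are defined by alternating-path reachability and do not depend on the chosen maximal matching (statement~\ref{erpoigfehpguio} of Lemma~\ref{erlihpfiu}). I would keep in mind the $M$-stretch dictionary throughout: by~(\ref{erlgtioio}) the matching $\cN_2^\bfa$ is the $M$-stretch of $\cN_2$; by~(\ref{liuhrthiojtoij})--(\ref{leruighlrithliu}) the equation- and variable-vertices of $F$ are the $\ppostset{m}{f}$ with $\previous{c_f}\le m\le c_f$ and the $\ppostset{\ell}{x}$ with $\previous{d_x}\le\ell\le d_x$, while those of $F^\bfa$ are their $M$-fold dilations together with all intermediate levels; and by Lemma~\ref{erliufheiuh} a pair of multiple-of-$M$ vertices is an edge of $\cG_{F^\bfa}$ exactly when its $\tfrac1M$-contraction is an edge of $\cG_F$.

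First I would prove Statement~\ref{leroyijhlruish}. Since $\cN_2$ covers every equation of $\cG_2$ (Lemma~\ref{togiethgouiho}), for each $f$ with $\previous{c_f}\le c_f$ there is a unique edge $(\ppostset{c_f}{f},\ppostset{d_x}{x})\in\cN_2$ through the leading equation $\ppostset{c_f}{f}$, and statement~\ref{leriuthuyfroui} of Lemma~\ref{elriguhjtoiudfh} asserts that each single edge $(\ppostset{(c_f-m)}{f},\ppostset{(d_x-m)}{x})$, $0<m\le c_f-\previous{c_f}$, is an augmenting path of $\cN_2$ in $F$, with the $M$-stretched statement~\ref{idgljhoirjrg} giving the analogous augmenting paths of $\cN_2^\bfa$ in $F^\bfa$ at all levels from $M\previous{c_f}$ to $Mc_f$. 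I would then check that these length-one augmenting paths are pairwise vertex-disjoint and disjoint from $\cN_2$: the equation-endpoint $\ppostset{(c_f-m)}{f}$ determines $(f,m)$ and sits strictly below $\ppostset{c_f}{f}$, while two of them sharing a variable-endpoint would force, by the matching property of $\cN_2$, first $f=f'$ and then $m=m'$. Adjoining all of them to $\cN_2$ (adding a length-one augmenting path just means adding its edge) therefore yields a matching of $\cG_F$, which is the set $\cM$ of the statement (with $\cN_2$ contributing the ``$m=0$'' edges); likewise $\cM^\bfa$ is $\cN_2^\bfa$ together with the augmenting paths of statement~\ref{idgljhoirjrg}. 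Maximality then follows by counting: $\cM$ saturates the leading equation-vertex of each relevant $f$ through $\cN_2$ and all strictly lower ones through the new edges, hence saturates every equation-vertex of $\cG_F$ except those discarded at Step~2 as genuine conflicts; combined with the absence of an augmenting path of $\cN_2$ in $\cG_2$ and the fact that the appended ``ladders'' $\ppostset{(c_f-m)}{f}\!-\!\ppostset{(d_x-m)}{x}$ meet the rest of $\cG_F$ only through leading-level variable vertices, this shows $\cM$ has no augmenting path in $\cG_F$, so it is of maximal cardinality; the same argument, $M$-stretched, applies to $\cM^\bfa$. \emph{I expect this last verification --- that no longer augmenting path survives --- to be the main obstacle,} because it needs the precise incidence description of $\cG_F$ versus $\cG_2$ from~(\ref{liuhrthiojtoij})--(\ref{leruighlrithliu}) together with the bookkeeping that the equations removed with the overdetermined block at Step~2 are exactly $\Phi$ plus the true conflicts.

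For Statement~\ref{ephojliuhlk}, the surjective map $\psi:\cM^\bfa\to\cM$ and the induced vertex maps $\psi_f,\psi_x$ are, by construction, the $\tfrac1M$-contraction on the sublattice of multiple-of-$M$ levels, and they are bijections between the multiple-of-$M$ equation- and variable-vertices of $\cG_{F^\bfa}$ and all vertices of $\cG_F$. By Lemma~\ref{erliufheiuh} these contractions carry edges to edges, and by the construction of $\cM$, $\cM^\bfa$ they carry matching edges to matching edges and unmatched vertices to unmatched vertices; hence $M$-stretching is a bijection between $\cM$-alternating paths of $\cG_F$ and the $\cM^\bfa$-alternating paths of $\cG_{F^\bfa}$ that stay on multiple-of-$M$ levels, preserving whether an endpoint is an unmatched equation or an unmatched variable. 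It then remains to argue that the extra, non-multiple-of-$M$ part of $\cG_{F^\bfa}$ --- whose equation- and variable-vertices come matched in chains by the augmenting paths of statement~\ref{idgljhoirjrg} of Lemma~\ref{elriguhjtoiudfh} --- provides no alternating shortcut moving a multiple-of-$M$ vertex from one Dulmage--Mendelsohn block to another; granting this, every $\cM^\bfa$-alternating path witnessing membership of a multiple-of-$M$ vertex in $\block^\bfa_{\overapprox}$, $\block^\bfa_{\underapprox}$, or $\block^\bfa_{\squared}$ contracts to an $\cM$-alternating path witnessing the corresponding membership in $\cG_F$, and conversely. Since $\DM$ is independent of the chosen maximal matching (statement~\ref{erpoigfehpguio} of Lemma~\ref{erlihpfiu}), this gives $g\in\block^\bfa_{\overapprox/\squared/\underapprox}$ iff $\psi_f(g)\in\block_{\overapprox/\squared/\underapprox}$, and $y\in\block^\bfa_{\overapprox/\squared/\underapprox}$ iff $\psi_x(y)\in\block_{\overapprox/\squared/\underapprox}$, which is the claim; the explicit formulas for $\cM$ and $\cM^\bfa$ in Statement~\ref{leroyijhlruish} are then just the record of the construction above.
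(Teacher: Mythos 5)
Your construction matches the paper's proof in spirit: both build $\cM$ and $\cM^\bfa$ by adjoining the length-one augmenting paths of Lemma~\ref{elriguhjtoiudfh} to $\cN_2$ and $\cN_2^\bfa$ (your reading, which includes the $\cN_2$ edges at level $m=0$, is the one that actually yields maximal-cardinality matchings --- the lemma's displayed formulas, with $0<m$ and the $M$-compressed range for $\cM^\bfa$, contain typos), and both then transport DM blocks along the $M$-stretch map $\psi$. Where the paper dispatches Statement~\ref{leroyijhlruish} with ``direct consequence of Lemma~\ref{elriguhjtoiudfh} and~(\ref{erlgtioio})'' and Statement~\ref{ephojliuhlk} with a one-line appeal to Lemma~\ref{erliufheiuh}, you fill in the verification that the augmenting edges are pairwise disjoint and disjoint from $\cN_2$, and that matched/unmatched status and edges correspond at multiple-of-$M$ levels. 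That added detail is worthwhile.

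The step you explicitly flag as unresolved --- that the non-multiple-of-$M$ vertices of $\cG_{F^\bfa}$, which enter $\cM^\bfa$ via the augmenting paths of Lemma~\ref{elriguhjtoiudfh}, provide no $\cM^\bfa$-alternating shortcut that could move a multiple-of-$M$ vertex into a different DM block than its $\psi$-image occupies --- is indeed a genuine gap, but it is a gap shared with the paper: Lemma~\ref{erliufheiuh} only controls edges between multiple-of-$M$ vertices, so the terse ``We conclude by invoking Lemma~\ref{erliufheiuh}'' glosses over exactly the same point. Closing it requires exploiting the periodic ``ladder'' structure of $\cG_{F^\bfa}$: each function $\ppostset{k}{f}$ is the $k$-fold shift of a single base pattern, the matching edges at intermediate levels preserve level modulo $M$, and so any $\cM^\bfa$-alternating walk can be projected level-by-level onto the multiple-of-$M$ sublattice without changing the class of its endpoints. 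Neither your argument nor the paper's spells this out, so you are right to mark it; you have not introduced any error the paper avoids.
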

\begin{proof}
Statement~\ref{leroyijhlruish} is a direct consequence of Lemma~\ref{elriguhjtoiudfh} and (\ref{erlgtioio}). We thus focus on Statement~\ref{ephojliuhlk}. By Statement~\ref{leroyijhlruish}, we can see that $(\ppostset{(c_f-m)}{f},\ppostset{(d_x-m)}{x})\in\cM$ if and only if 
$(\ppostset{M(c_f-m)}{f},\ppostset{M(d_x-m)}{x})\in\cM^\bfa$. Let $g=\ppostset{(c_f-m)}{f}$ be a vertex of $\cG_{F}$ associated with a function. Then, $g$ is unmatched in $\cM$ if and only if $\psi_f(g)$ is unmatched in $\cM^\bfa$. We conclude by invoking Lemma~\ref{erliufheiuh}: the mappings $\psi_x$ and $\psi_f$ preserve the overdetermined blocks. A similar reasoning holds regarding the preservation of underdetermined blocks, by starting from unmatched vertices of variable type.
\end{proof}
This finishes the proof of Theorem~\ref{erlituherltuih}.

\subsubsection*{Comparing the standardizations}
The independence of the standardization with respect to the particular semantics of the derivative is immediate within long (continuous) modes. We thus focus on mode changes, where we must target a standard discrete-time dynamical system indexed by $\bN$. Throughout this section, Assumptions~\ref{oergfiuehoiu} (Page~\pageref{oergfiuehoiu}),~\ref{pw498gthseroi} (Page~\pageref{pw498gthseroi}) and~\ref{lirtylweiru} (Page~\pageref{lirtylweiru}) are in force. 
We want to compare the respective standardizations of the returns of \rref{alg:newmain} ($\atomicact{ExecRun}$), when using the two expansions (\ref{wotgihio}) and (\ref{lerigheil}) for the derivatives.

We first adapt the generic system~(\ref{eroifughoiu}), computing the restart at a mode change:
\beq
\mbox{expand $\dot{X}$ as $\frac{1}{\vsmall}\sum_M^N\alpha_n\ppostset{n}{(\postset{X}-X)}$ in}\hspace*{-5mm}
&&0=\bfH(\dot{X},\postset{X},V,X) \enspace ,
\label{lseriughpui}  \label{ruihpeiru} 
 \\ 
\mbox{also written}\hspace*{-5mm}&&0=H(Z,X,\vsmall) \enspace .
\label{eprfui9hepiruoh} 
\eeq
In  (\ref{lseriughpui}), $X$ collects the state variables, while $V$ collects the algebraic dependent variables, decomposed into impulsive and non-impulsive variables: $V=\impuls{W}\,\cup\,\nonimpuls{W}$.
Since $N\geq{0}$, the dependent variables of system (\ref{lseriughpui}) are $Z\eqdef(\ppostset{(N+1)}{X},V)$, and we set $Y\eqdef(\ppostset{(N+1)}{X},\nonimpuls{W})$; in words, $Y$ collects the non-impulsive dependent variables.

In~(\ref{lseriughpui}), the functions $F$ and $G$ are standard; so is the forward shift $x\mapsto\postset{x}$, since it will be mapped to a corresponding forward shift in the discrete time index $\bN$. Hence, the infinitesimal time step $\vsmall$, arising in the expansion of the derivative in $F$, is the only reason for System~(\ref{lseriughpui}) to be nonstandard. 

Assumption~$\ref{pw498gthseroi}$ states that impulsive variables can be eliminated from system~(\ref{lseriughpui}), resulting in the following reduced system:
\beq
\bfK(\dot{X},\postset{X},\nonimpuls{W},X)=0 ~\Longleftrightarrow~ \exists \impuls{W}.\bfH(\dot{X},\postset{X},V,X)=0 \enspace .
\label{epsiuvhopeiu}
\eeq 
Recall that $\vsmall$ enters System~(\ref{epsiuvhopeiu})-left, i.e., the left-hand part of the equivalence given by Equation~(\ref{epsiuvhopeiu}), via the expansion used for $\dot{X}$ in~(\ref{lseriughpui}).
As part of Assumption~$\ref{pw498gthseroi}$:
\beq
\mbox{System (\ref{epsiuvhopeiu})-left remains structurally nonsingular when $\vsmall=0$.}
\label{rtioghrjoi}
\eeq
In~(\ref{epsiuvhopeiu}), $(\ppostset{(N+1)}{X},\nonimpuls{W})$ standardizes as the right-limit $(X^+,\nonimpuls{W}^+)$ whereas all the $\ppostset{n}{X}, n=P,\dots,0$ standardize as the left-limit $X^-$. Therefore, in the expansion used for $\dot{X}$ in (\ref{lseriughpui}), all terms cancel out but the last one.
 Hence, System~(\ref{epsiuvhopeiu}) standardizes as:
\begin{itemize}
	\item Case $N>0$: substitute $\vsmall\gets{0}\mbox{ in }
\bfK\left(\frac{\alpha_N}{\vsmall}(X^+-X^-),\remph{X^-},\nonimpuls{W},X\right)=0$;
	\item Case $N=0$: substitute $\vsmall\gets{0}\mbox{ in }
\bfK\left(\frac{\alpha_N}{\vsmall}(X^+-X^-),\remph{X^+},\nonimpuls{W},X\right)=0$.
\end{itemize}
The difference lies in the $X^-$ vs. $X^+$, highlighted in {\color{red}red}.
These substitutions are both legitimate, thanks to~(\ref{rtioghrjoi}). In both cases, the result does not depend on $\alpha_{N}$, since $\alpha_{N}\neq{0}$.
Hence, we get the following result:
\begin{theorem}
	\label{erofuiehrpuioh}
	Under Assumptions~$\ref{oergfiuehoiu}$,~$\ref{pw498gthseroi}$, and~$\ref{lirtylweiru}$, the standardization of the restart conditions does not depend on the expansion $(\ref{lerigheil})$ for the derivative.
\end{theorem}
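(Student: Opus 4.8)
�The plan is to assemble Theorem~\ref{erofuiehrpuioh} from the pieces already established in this section rather than redo any structural analysis from scratch. The statement has two ingredients: (i) the structural analysis itself is unchanged (up to the $M$-stretching isomorphism) when we pass from the Euler expansion~(\ref{wotgihio}) to the generalized expansion~(\ref{lerigheil}), and (ii) once the structural analysis has produced the reduced restart system, its \emph{standardization} coincides for the two expansions. Ingredient~(i) is exactly Theorem~\ref{erlituherltuih}, which tells us that the Dulmage--Mendelsohn decompositions of $F$ and $F^\bfa$ are related by the dictionary $\psi_f,\psi_x$; in particular the same equations are erased by $\atomicact{SolveConflict}$ at \rref{op:elkfuiehoeuio} of \rref{alg:newmain}, the same residuals are produced, the same set of impulsive variables is identified, and the reduced systems $\bfK$ obtained by Assumption~\ref{pw498gthseroi} are in $M$-stretching correspondence. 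So the only thing left to check is ingredient~(ii), the comparison of standardizations, and this is where the argument in the ``Comparing the standardizations'' paragraph does the work.

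First I would set up the generic restart system in the form~(\ref{lseriughpui})--(\ref{eprfui9hepiruoh}), with $\dot X$ expanded as $\frac{1}{\vsmall}\sum_{n=P}^N\alpha_n\ppostset{n}{(\postset X-X)}$, and recall that by Assumption~\ref{pw498gthseroi} the impulsive variables can be eliminated, yielding the reduced system~(\ref{epsiuvhopeiu}), which stays structurally nonsingular at $\vsmall=0$ by~(\ref{rtioghrjoi}). The key observation is purely about how the monomials standardize: at a mode change, each $\ppostset{n}{X}$ for $n=P,\dots,0$ standardizes to the left-limit $X^-$, while $\ppostset{(N+1)}{X}$ (and the non-impulsive algebraic variables) standardize to the right-limit $X^+$. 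Hence in the telescoping sum $\sum_{n=P}^N\alpha_n\ppostset{n}{(\postset X-X)}$ every term of order $\le N$ collapses to $X^--X^-=0$ and only $\alpha_N\ppostset{N}{(\postset X - X)} = \alpha_N(\ppostset{(N+1)}{X}-\ppostset{N}{X})$ survives, standardizing to $\alpha_N(X^+-X^-)$. So $\frac{1}{\vsmall}\dot X$ standardizes, modulo the factor $\alpha_N$, to the \emph{same} discrete jump $X^+-X^-$ regardless of the $\alpha$'s. Then I would apply Theorem~\ref{ow4u89ghsljkwrg} (standardizing equations): since the reduced system is structurally nonsingular at $\vsmall=0$, we may substitute $\vsmall\gets 0$ directly, obtaining $\bfK(\frac{\alpha_N}{\vsmall}(X^+-X^-),\dots)=0$ with the remaining $\postset X$-argument frozen at $X^-$ (case $N>0$) or $X^+$ (case $N=0$); in either case the resulting standard system does not depend on $\alpha_N$ because $\alpha_N\neq 0$, so it can be absorbed into the defining equation. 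This gives the same standardized restart conditions for~(\ref{wotgihio}) as for~(\ref{lerigheil}), which is precisely the claimed theorem.

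The main obstacle I anticipate is not any single calculation but rather making the bookkeeping of ingredient~(i) airtight: I must be careful that the $M$-stretching dictionary of Theorem~\ref{erlituherltuih} really does match up the \emph{post-DM} reduced systems, including the consistency equations kept in $\Phi$, so that the $\bfK$ supplied by Assumption~\ref{pw498gthseroi} for the two expansions are genuine translates of one another rather than merely ``structurally similar''. Concretely I would need to note that the set of impulsive variables is invariant under the dictionary (this follows because impulse analysis, Definition~\ref{kfrygjkasdtf}, depends only on which equations are kept and on the occurrence of $\vsmall$, and the dictionary preserves both), and that elimination of impulsive variables commutes with the dictionary. Once that is spelled out, the standardization comparison is the short telescoping argument above. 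I would also remark, as in the clutch walk-through of \rref{sec:otherschemes}, that the computation~(\ref{eiopgfuehpiu})--(\ref{epgtuiohpeui}) is the concrete instance of this cancellation, and that it is exactly why the final reset formula there did not depend on $\alpha_N$; this serves as a sanity check rather than part of the proof.
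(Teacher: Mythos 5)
Your proposal follows essentially the same route as the paper's own argument in the ``Comparing the standardizations'' paragraph: reduce via Assumption~\ref{pw498gthseroi} to the impulsive-free system $\bfK$, observe that in the weighted expansion $\frac{1}{\vsmall}\sum_{n=P}^{N}\alpha_n\ppostset{n}{(\postset{X}-X)}$ all differences with $n<N$ standardize to $X^- - X^- = 0$ while the $n=N$ term standardizes to $\alpha_N(X^+-X^-)$, then invoke Theorem~\ref{ow4u89ghsljkwrg} together with $(\ref{rtioghrjoi})$ to justify setting $\vsmall\gets 0$, and finally note that $\alpha_N\neq 0$ is absorbed. Two small remarks. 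First, your phrase ``every term of order $\le N$ collapses'' should read ``$<N$'' (or ``$n<N$''), since the $n=N$ term is precisely the surviving one; this is just a typo in phrasing, the intended argument is clear. Second, your concern in the last paragraph — that the $M$-stretching dictionary of Theorem~\ref{erlituherltuih} must be shown to commute with the Dulmage--Mendelsohn decomposition, with the identification of the impulsive set, and with the elimination that produces $\bfK$, so that the two reduced systems are genuine translates — is a legitimate point that the paper glosses over rather than verifies explicitly; the paper's published proof simply adapts the generic system $(\ref{eroifughoiu})$ to the $\bfa$-expansion without re-deriving that the output of $\atomicact{ExecRun}$ is the same (up to $M$-stretching) beyond what Theorem~\ref{erlituherltuih} provides. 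Making that bookkeeping explicit, as you propose, would strengthen the argument, but the substance of your proof is the same as the paper's.
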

\begin{ccomment}\rm
	\label{elrgfiorohjyfgj} So far, Theorem~\ref{erofuiehrpuioh} applies to mDAE systems possessing only long modes. We could extend this theorem in several directions. First, we could include in our modeling language left- or right-limits: $x^-$ or $x^+$. Then, our reasoning would apply with no change provided that $x^-$ and $x^+$ are encoded as $\preset{x}$ or $\postset{x}$ if expansion (\ref{wotgihio}) is used, and $\ppreset{(M+1)}{x}$ or $\ppostset{(M+1)}{x}$ if expansion (\ref{lerigheil}) is used. Removing Assumption~$\ref{lirtylweiru}$ could be handled similarly, by redefining transient modes as modes of nonstandard duration $M\vsmall$ if expansion (\ref{lerigheil}) is used, instead of $\vsmall$ if expansion (\ref{wotgihio}) is used.
\end{ccomment}
%END\input{MappingDerToImplicit}
%%%%%%%%%%%%%%%%%%%%%%%%%%%%%%%%%%%%%%%%%%%%%%%%%%%%%%%%%%%%%%%%%%%%%%%%%%%%%%%%%%%%%%%%%%%%%%%%%

\subsection{{A correctness result}}
\label{sec:mainthm}
Our approach, consisting of the structural analysis in the nonstandard domain (\rref{alg:newmain} and its refined version \rref{alg:ergfipuhfopiu}) followed by standardization, either fails, in which case it returns proper diagnosis, or succeeds and produces executable simulation code. The natural question is: 
\begin{question} \rm
	\label{lergtuioh} Does this approach compute the solution of the given \mDAE\ system?
\end{question}
Once again, answering Question\,\ref{lergtuioh} in its full generality is out of reach as, to the best of our knowledge, there is no general mathematical definition of the solution of an \mDAE\ system. Pathologies may exist in the complement of the long `continuous' modes: possibly unbounded cascades of events, chattering or sliding modes, Zeno behaviors for events of mode changes or transient modes, and more. Even when the set of events of mode changes is `gentle' (a finite or diverging sequence of instants), the characterization of the impulses may not be available. Still, as mentioned in the Introduction, notions of solutions of \mDAE\ systems exist for some restricted classes.

In this section, we provide an answer to Question\,\ref{lergtuioh} for a restricted subclass of \mDAE\ systems, by reproducing the results of~\cite{DBLP:series/lncs/BenvenisteCEGOP19} and giving their detailed proofs. For the description of the class of systems considered, we use the notion of function of bounded variation: a function $f:\bR\rightarrow\bR$ has \emph{bounded variation} if it is the primitive of a Lebesgue integrable function~\cite{DunfordSchwartz}. As a consequence, if $f$ has bounded variation, then
\beq
\lim_{h\searrow{0}}\;\int_t^{t+h}\dot{f}(s)ds &=&{0} \enspace .
\label{liguhlsigtuh}
\eeq
\begin{definition}[semilinear systems]
	\label{leriuthoui} 
	Call \emph{semilinear} an \mDAE\ system $\system$ such that, for 
	each mode $\mu$, the active DAE system takes the restricted form
	\beq\left\{\bea{l}
	0=A(X_s)\dot{X}+B_\mu(X) \\ 0=C_\mu(X)
	\eea\right.
	\label{leriuljytdfutr}
	\eeq
	where:
\begin{enumerate}
\item \label{lhglruigth} $X_s$ collects the smooth elements of $X$, i.e., those being continuous and of bounded variation around each instant of mode change. Other elements of $X$ might be discontinuous. 
\item \label{epioguhip} Both matrix $A(.)$ and the mode-dependent vectors $B_\mu(.)$ and $C_\mu(.)$ are continuous functions of their arguments, and remain bounded in a neighborhood of each instant of mode change.
	\item \label{eoriuheopiu} The matrix
	\[
	\Jacobian=\left[\bea{c}
	A(X_s) \\ \Jacobian_{\!X}{C}_\mu(X)
	\eea\right]
	\]
	is regular, where $\Jacobian_{\!X}{C}_\mu(X)$ denotes the Jacobian matrix of $C_\mu$ with respect to $X$, evaluated at $X$.
\end{enumerate}
\end{definition}
\begin{ccomment}\rm
	\label{elrtuiogehluio} 
The special form (\ref{leriuljytdfutr}) and Condition~\ref{eoriuheopiu} for the Jacobian matrix express that mode dynamics have been processed using the \sigmamethod: differentiating the algebraic constraint once yields a regular linear system determining $\dot{X}$. Then, Conditions~\ref{lhglruigth} and~\ref{epioguhip} together make the impulse analysis easier.
\end{ccomment}
\begin{ccomment}\rm
	\label{epruhypiu} 
In~\cite{DBLP:series/lncs/BenvenisteCEGOP19}, Section 4.3, it is shown that multi-body systems with contacts yield semilinear \mDAE\ systems.
\end{ccomment}
For semilinear \mDAE\ systems, the solution at an instant $t_*$ of mode change can be determined as follows. After proper reordering of the entries of $X$, we decompose vector $X$ into its smooth and nonsmooth parts and decompose matrix $A$ accordingly:
\[
X=\left[\bea{l}
X_s \\ X_n
\eea\right] ~\mbox{ and }~ A=\bigl[\,A_s ~~ A_n\,\bigr] \enspace .
\]
Since $X_s$ is smooth at any instant, by Condition~\ref{lhglruigth}, matrix $A\bigl(X_s(t_*)\bigr)$ is known prior to computing the mode change. {In contrast, $\dot{X_n}(u)du$ is a Dirac measure at mode change.}

To evaluate the restart values for all elements of $X$, we integrate the system dynamics over the interval $[t_*-\varepsilon,t_*+\varepsilon]$ for $\varepsilon>0$ sufficiently small, taking into account the dynamics before and after the mode change. In the following calculations, $\mu^-$ and $\mu^+$ denote the mode prior and after the mode change, and $B_+$ and $C_+$ stand for $B_{\mu^+}$ and $C_{\mu^+}$, respectively.
\small
\beqq
&&\int_{[t_*-\varepsilon,t_*)}\bigl(A(X_s(u))\dot{X}(u){+}B_-(X(u))\bigr)du
%\\ &+&
+\int_{[t_*,t_*+\varepsilon]}\bigl(A(X_s(u))\dot{X}(u){+}B_{+}(X(u))\bigr)du
\\
=&&\int_{[t_*-\varepsilon,t_*)}\bigl(A_s(X_s(u))\dot{X_s}(u){+}B_-(X(u))\bigr)du
+\int_{[t_*,t_*+\varepsilon]}\bigl(A_s(X_s(u))\dot{X_s}(u){+}B_+(X(u))\bigr)du
\\
&\!\!\!\!\!\!+\!\!\!\!\!\!\!\!\!&
\int_{[t_*-\varepsilon,t_*+\varepsilon]}A_n(X_s(u))\dot{X_n}(u)du
\\ [1mm]
\approx&& 0 ~+~ 0~+~ A_n(X_s(t_*))\bigl(X_n(t_*{+}\varepsilon)-X_n(t_*{-}\varepsilon)\bigr)
\\
\approx&&A(X_s(t_*))\bigl(X(t_*{+}\varepsilon)-X(t_*{-}\varepsilon)\bigr)
\eeqq
\normalsize
The evaluation to zero of the first two integrals follows from (\ref{liguhlsigtuh}). For the third integral, we use the fact that $\dot{X_n}(u)du$ is approximately a Dirac measure at $t_*$ in the indicated interval. 
This reasoning leads to the following result determining the restart conditions at $t_*$:
\begin{theorem}[\cite{DBLP:series/lncs/BenvenisteCEGOP19}]
	\label{serliguserhligu} The restart conditions $X^+$ are determined from the left-limits $X^-$ by using the following system of equations:
\beq
\left\{\bea{l}
0=A(X^-_s)(X^+-X^-) \\ 0=C_+(X^+)
\eea\right.
\label{elrifuhliu}
\eeq
which is locally regular thanks to Condition~$\ref{eoriuheopiu}$ of Definition~$\ref{leriuthoui}$.
\end{theorem}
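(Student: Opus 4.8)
The plan is to reproduce the integration argument sketched just before the theorem statement and turn it into a clean proof of the two equations in~(\ref{elrifuhliu}), together with the claimed local regularity. The idea is that~(\ref{elrifuhliu}) is nothing but the standard part of the nonstandard restart system obtained by applying \rref{alg:ergfipuhfopiu} to the semilinear system~(\ref{leriuljytdfutr}), after eliminating the (single round of) impulsive variables produced by index reduction. So the first step is to run the structural analysis on~(\ref{leriuljytdfutr}): differentiating the algebraic constraint $0=C_\mu(X)$ once (equivalently, by Lemma~\ref{ow47ghleuir}, shifting it once) yields the latent equation $0=\Jacobian_{\!X}C_\mu(X)\dot{X}$, and by Condition~\ref{eoriuheopiu} of Definition~\ref{leriuthoui} the resulting system $\{0=A(X_s)\dot X+B_\mu(X),\ 0=C_\mu(X),\ 0=\Jacobian_{\!X}C_\mu(X)\dot X\}$ is structurally nonsingular with leading variables $\dot X$. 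This is exactly the situation of \rref{sec:perughoriugh} (long modes); within a mode it standardizes as the DAE itself plus its latent equation.

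The second, and substantive, step is the mode change $\mu^-\ra\mu^+$ at an instant $t_*\in\bT$. Here I would follow \rref{sec:weroguihergu}: join the previous-instant system (in mode $\mu^-$, with its latent equations) and the current-instant system (in mode $\mu^+$, with its latent equations $(\postset{\straight})$ coming from shifting $C_+$), identify the facts $\Phi$, run $\atomicact{SolveConflict}$, and thereby obtain the nonstandard restart system in the generic form~(\ref{eroifughoiu}). Concretely, expanding $\dot X=\frac{\postset X-X}{\vsmall}$ in $0=A(X_s)\dot X+B_+(X)$ gives $\postset X = X - \vsmall A(X_s)^{\dagger}(\dots)$ on the matched components, and the key point — this is where I would invoke the impulse analysis of \rref{sec:impulseanalysis}, in particular Rules (R3),(R5),(R6),(R8) and the prior information~(\ref{ltrgiubriu}) — is that the nonsmooth components $X_n$ are impulsive of order exactly $1$ in their increments (i.e.\ $\postset{X_n}-X_n$ is finite nonzero but $\dot{X_n}$ is infinite), while $X_s$ has continuous, bounded-variation increments so $\vsmall\dot{X_s}$ is infinitesimal. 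Then premultiplying $0=A(X_s)\dot X+B_+(X)$ by $\vsmall$ and standardizing: the $B_+$ term contributes $O(\vsmall)$ (it is bounded by Condition~\ref{epioguhip}), $\vsmall A_s(X_s)\dot{X_s}$ contributes $0$, and $\vsmall A_n(X_s)\dot{X_n}=A_n(X^-_s)(X_n^+-X_n^-)+o(1)$; adding the (trivially standardizing) fact $A_s(X^-_s)(X_s^+-X_s^-)=0$ that follows from smoothness of $X_s$ assembles into $0=A(X^-_s)(X^+-X^-)$. Meanwhile the shifted constraint $C_+(\postset X)=0$ standardizes directly to $0=C_+(X^+)$ (since $C_+$ is continuous and $\postset X$ standardizes to $X^+$), and the old consistency equation $0=C_-(X)$ is precisely the one erased by $\atomicact{SolveConflict}$ as conflicting. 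This yields~(\ref{elrifuhliu}).

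For the regularity claim, I would argue that the Jacobian of system~(\ref{elrifuhliu}) with respect to the unknown $X^+$, evaluated at $X^+=X^-$ when a nearby left-limit is consistent, is exactly $\left[\begin{smallmatrix}A(X^-_s)\\ \Jacobian_{\!X}C_+(X^-)\end{smallmatrix}\right]$, which is $\Jacobian$ of Condition~\ref{eoriuheopiu} of Definition~\ref{leriuthoui} and hence invertible; then apply the Implicit Function Theorem, exactly as in Lemma~\ref{erkltuiherliu} and Theorem~\ref{ow4u89ghsljkwrg}, to get a locally unique solution. The main obstacle, and the step deserving the most care, is the impulse-order bookkeeping: one must show rigorously that only the $X_n$-increments are non-infinitesimal and that the product $\vsmall A_n(X_s)\dot{X_n}$ has a well-defined finite standard part equal to $A_n(X^-_s)(X_n^+-X_n^-)$ — i.e.\ that $\dot{X_n}\,du$ behaves like a Dirac measure concentrated at $t_*$. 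This requires combining~(\ref{liguhlsigtuh}) (bounded variation of $X_s$), Assumption~\ref{oergfiuehoiu} and Assumption~\ref{lwrieuogwh} (finiteness of states before and after), the boundedness hypotheses of Definition~\ref{leriuthoui}, and Theorem~\ref{ow4u89ghsljkwrg} to legitimize ``set $\vsmall=0$'' in the reduced (impulsive-variable-eliminated) system. Everything else is routine once this is in place.
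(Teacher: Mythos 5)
Your proof is correct and follows the same underlying route as the paper, namely the nonstandard derivation that the paper gives immediately \emph{after} the theorem statement (the integration argument \emph{before} the statement is offered as motivation, the theorem itself being imported from \cite{DBLP:series/lncs/BenvenisteCEGOP19}). You correctly run the \sigmamethod\ to produce the latent constraint $0=C_+(\postset X)$, unfold at the two successive nonstandard instants, discard the conflicting consistency equation $0=C_-(X)$ at the current instant, and then standardize to obtain~(\ref{elrifuhliu}); the regularity claim via Condition~\ref{eoriuheopiu} and Theorem~\ref{ow4u89ghsljkwrg} is also handled as the paper does.

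Where your proposal is heavier than needed is the impulse-order bookkeeping. In the semilinear form the equation $0=A(X_s)\dot X+B_\mu(X)$, once $\dot X$ is expanded and the whole row is premultiplied by $\vsmall$, reads $0=A(X_s)(\postset X-X)+\vsmall B_\mu(X)$. Its dependent variables are $\postset X$ only, and $\postset X - X$ is already finite by Assumption~\ref{lwrieuogwh}; there are no algebraic dependent variables to split into $\impuls{W}\cup\nonimpuls{W}$, so none of the apparatus of Section~\ref{sec:impulseanalysis} (Rules (R3)--(R8), the prior information~(\ref{ltrgiubriu}), the decomposition into $A_s\dot X_s + A_n\dot X_n$) is actually required. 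The paper's argument simply sets $\vsmall\gets 0$ in $(\eqq^+_1,\eqq^+_3)$ of~(\ref{erpoguifehpiouh}), observes structural nonsingularity via Condition~\ref{eoriuheopiu}, and invokes Theorem~\ref{ow4u89ghsljkwrg}. Your more detailed route is a faithful nonstandard rendering of the heuristic integral computation, so what it buys is additional physical intuition about which components jump; but for the purpose of the proof it carries extra steps that the semilinear structure lets you skip. Also, the phrase ``impulsive of order exactly $1$ in their increments'' is off as stated: $\magorder{\postset{X_n}-X_n}=0$ (finite, nonzero), and it is $\dot{X_n}$ that has impulse order $1$; your parenthetical already contains the correct version.
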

Whether we can solve system~(\ref{elrifuhliu}) depends on the particular system at hand.
It remains to compare scheme~(\ref{elrifuhliu}) with our approach. The nonstandard semantics of~(\ref{leriuljytdfutr}) is:
\beqq\left\{\bea{l}
	0=A(X_s)({\postset{X}-X})+{\vsmall}\times{B}_\mu(X) \\ 0=C_\mu(X)
	\eea\right.
	\eeqq
	and applying the \sigmamethod\ to it yields the augmented system
	\beq\left\{\bea{l}
	0=A(X_s)({\postset{X}-X})+{\vsmall}\times{B}_\mu(X) \\ 0=C_\mu(X)\\ 0=C_\mu(\postset{X})
	\eea\right. \enspace .
	\label{aeruioghpeiug}
	\eeq
The reasoning performed at Section~\ref{seoiguhweroui} applies to semilinear \mDAE\ systems. Following (\ref{ekerughlruigth}), we unfold (\ref{aeruioghpeiug}) at the two successive instants $\preset{t_*}$ and $t_*$:
\beq\bea{rl}
\preset{t_*}\!\!\!\!&:\left\{\bea{l}
	0=A(\preset{X_s})({{X}-{\preset{X}}})+{\vsmall}\times{B}_-(\preset{X}) \\ 0=C_-(\preset{X})\\ 0=C_-({X})
	\eea\right.
	\\ [6mm]
	{t_*}\!\!\!\!&:\left\{\bea{l}
	0=A(X_s)({\postset{X}-X})+{\vsmall}\times{B}_+(X) \\ 0=C_+(X)\\ 0=C_+(\postset{X})
	\eea\right.
	\eea
	\label{elguiohwepogui}
	\eeq
	Eliminating $\preset{X}$ in (\ref{elguiohwepogui}) yields
	\beqq\left\{\bea{lc}
	0=C_-({X}) & (\eqq^-) \\ [1mm]
	0=A(X_s)({\postset{X}-X})+{\vsmall}\times{B}_+(X)  & (\eqq^+_1) \\  [1mm] 0=C_+(X)   & (\eqq^+_2) \\ [1mm] 0=C_+(\postset{X})   & (\eqq^+_3) 
	\eea\right.
	\eeqq
	Equations $(\eqq^-)$ and $(\eqq^+_2)$ are in conflict: we discard the latter. This finally yields
	\beq\left\{\bea{lc}
	0=C_-({X}) & (\eqq^-) \\ [1mm]
	0=A(X_s)({\postset{X}-X})+{\vsmall}\times{B}_+(X)  & (\eqq^+_1) \\ [1mm] 0=C_+(\postset{X})   & (\eqq^+_3) 
	\eea\right.
	\label{erpoguifehpiouh}
	\eeq
	In System (\ref{erpoguifehpiouh}), $(\eqq^-)$ is a context, whereas $(\eqq^+_1)$ and $(\eqq^+_3)$ determine $\postset{X}$ as a function of $X$. Setting $\vsmall\gets{0}$ in $(\eqq^+_1,\eqq^+_3)$ yields a structurally nonsingular system by Condition~\ref{eoriuheopiu} of Definition~\ref{leriuthoui}. Hence, by Theorem~\ref{ow4u89ghsljkwrg}, setting $\vsmall\gets{0}$ in $(\eqq^+_1,\eqq^+_3)$ performs the standardization: the scheme of Theorem~\ref{serliguserhligu} is recovered.
\begin{theorem}[correctness result]
	\label{religftuehligu} For semilinear \mDAE\ systems, our approach computes the correct solution.
\end{theorem}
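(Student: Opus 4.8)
The plan is to prove the theorem by matching, step by step, the code produced by our suite (\rref{alg:newmain} followed by standardization) against the explicit scheme of Theorem~\ref{serliguserhligu}, which by~\cite{DBLP:series/lncs/BenvenisteCEGOP19} computes the correct solution of a semilinear system. Two regimes must be handled: the interior of modes, and the isolated instants of mode change. For a semilinear system every mode is long, since by Condition~\ref{eoriuheopiu} of Definition~\ref{leriuthoui} one forward shift of the algebraic constraint $0=C_\mu(X)$ already yields, together with $0=A(X_s)(\postset{X}-X)+\vsmall B_\mu(X)$, a system whose Jacobian with respect to the leading variables $\postset{X}$ is the regular matrix $\Jacobian$; hence the execution scheme of \rref{sec:csemantics} via \rref{alg:newmain} governs every reachable instant, and within each mode our structural analysis reduces to the \sigmamethod\ applied to \eqref{leriuljytdfutr}. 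By Theorem~\ref{wlpguiohnl}, the resulting nonstandard trajectory standardizes to the unique continuous-time solution of the corresponding index-reduced DAE, which, by Conditions~\ref{lhglruigth}--\ref{epioguhip}, keeps $X_s$ smooth of bounded variation. This part is routine once Theorem~\ref{wlpguiohnl} is invoked.

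The core is the mode change $\mode^-\to\mode^+$ at an instant $t_*$, and here I would carry out the two-instant unfolding already sketched before the theorem statement. Writing the nonstandard semantics \eqref{aeruioghpeiug} of the active systems of both modes, unfolding over the previous and current instants as in \eqref{ekerughlruigth}, and eliminating $\preset{X}$, one is left with $(\eqq^-):\,0=C_-(X)$ --- a \emph{fact} inherited through the context $\Delta$, which together with the previous instant's dynamics already fixes the current state $X$ (status $\ttt$) --- together with the new mode's leading equation $(\eqq^+_1):\,0=A(X_s)(\postset{X}-X)+\vsmall B_+(X)$, its consistency equation $(\eqq^+_2):\,0=C_+(X)$, and the latent equation $(\eqq^+_3):\,0=C_+(\postset{X})$ produced by index reduction. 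Applying $\atomicact{SolveConflict}$ (\rref{alg:elsgouihuip}) to $\{(\eqq^+_1),(\eqq^+_2),(\eqq^+_3)\}$ with dependent variable $\postset{X}$: the block $(\eqq^+_2)$ involves no dependent variable, so the Dulmage--Mendelsohn decomposition puts it in $\block_\overapprox$, and by Principle~\ref{oeuryfgy} it is erased; the residual $\{(\eqq^+_1),(\eqq^+_3)\}$ has, with respect to $\postset{X}$, the regular Jacobian $\Jacobian$ of Condition~\ref{eoriuheopiu}, so $\block_\underapprox=\emptyset$ and this residual is structurally nonsingular. Hence $\atomicact{ExecRun}$ succeeds on every reachable mode change and returns system \eqref{erpoguifehpiouh}.

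For standardization, note that no dependent variable of $\{(\eqq^+_1),(\eqq^+_3)\}$ is impulsive --- the nonsmooth state components of $\postset{X}$ jump but remain finite, and there is no separate algebraic leading variable --- so Assumption~\ref{pw498gthseroi} holds trivially with $\impuls{W}=\emptyset$; one checks this by running the impulse analysis of \rref{sec:impulseanalysis} on \eqref{erpoguifehpiouh}, where $\magorder{\postset{X}}\le 0$ follows from the regularity of $\Jacobian$. Standardization is then simply the substitution $\vsmall\gets 0$, legitimate by Corollary~\ref{erktuieghoriugfter} (equivalently Theorem~\ref{ow4u89ghsljkwrg}) precisely because the $\vsmall=0$ system is structurally nonsingular, again by Condition~\ref{eoriuheopiu}. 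The outcome is $0=A(X_s^-)(X^+-X^-)$ and $0=C_+(X^+)$, that is, system \eqref{elrifuhliu} of Theorem~\ref{serliguserhligu}. Combining this with the correct within-mode dynamics yields a trajectory that is a solution of the semilinear \mDAE\ system in the sense of~\cite{DBLP:series/lncs/BenvenisteCEGOP19}, which proves the theorem.

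The main obstacle I expect is the constructive-semantics bookkeeping of the second step: justifying, for an \emph{arbitrary} semilinear system, that the context at $t_*$ supplies exactly the fact $0=C_-(X)$ (and thus fixes $X$) and that $\atomicact{SolveConflict}$'s Dulmage--Mendelsohn decomposition removes precisely the block $0=C_+(X)$ and nothing more. Constraints syntactically shared between $\mode^-$ and $\mode^+$ need care --- they land in the set of facts $\Phi$ rather than in $\block_\overapprox$, but after passing to $(G_\Sigma\cup\consistency{G}_\Sigma)\setminus\Phi$ the same conflict structure and the same residual $\{(\eqq^+_1),(\eqq^+_3)\}$ reappear, so the argument is unchanged. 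A secondary point to settle is that semilinearity genuinely precludes transient modes and unbounded cascades, so that \rref{alg:newmain} --- not its transient-mode refinement \rref{alg:ergfipuhfopiu} --- applies at every instant, and that the per-mode solutions patch into a global trajectory matching the solution concept of Definition~\ref{leriuthoui}.
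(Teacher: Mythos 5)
Your proof follows the same route as the paper's own derivation preceding the theorem: unfold the nonstandard semantics (\ref{aeruioghpeiug}) over the two instants at the mode change, let $\atomicact{SolveConflict}$ discard $0=C_+(X)$ as the overdetermined block, observe that $\postset{X}$ is non-impulsive so $\impuls{W}=\emptyset$ and Assumption~\ref{pw498gthseroi} is trivially met, and standardize by $\vsmall\gets 0$ via Theorem~\ref{ow4u89ghsljkwrg} (structural nonsingularity at $\vsmall=0$ being exactly Condition~\ref{eoriuheopiu}), recovering~(\ref{elrifuhliu}). This is the paper's argument with only minor elaborations; within modes the appeal to Theorem~\ref{wlpguiohnl} is also the same.

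One step in your opening paragraph is a non-sequitur: you assert that every mode of a semilinear system is long because one forward shift of $C_\mu$ together with $(\eqq^+_1)$ already gives the regular Jacobian $\Jacobian$, i.e.\ the per-mode system has index one. But long versus transient is a \emph{syntactic} type on guards (\rref{defn:egfuiher} and \rref{defn:mDAE}), not a consequence of the differentiation index; nothing in Definition~\ref{leriuthoui} prevents a \texttt{when}-guarded, hence transient, mode from having the form~(\ref{leriuljytdfutr}). The theorem and the paper's sketch both tacitly restrict to semilinear models whose modes are all long --- that is what enables the two-instant unfolding and the comparison with Theorem~\ref{serliguserhligu}. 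You rightly flag this as a ``secondary point to settle'' at the end; just do not present it as a consequence of Condition~\ref{eoriuheopiu}.
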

\begin{ccomment}\rm 
	 Our approach computes solutions for any \mDAE\ system whose structural analysis (\rref{alg:newmain} or \rref{alg:ergfipuhfopiu}) succeeds. This is way beyond the class of semilinear \mDAE\ systems. It would be desirable to compare the so obtained scheme with schemes known for more dedicated physics.\eproof
\end{ccomment}
\begin{ccomment}\rm
	\label{epriguhpui}  
Hilding Elmqvist and Martin Otter have developed the  \href{https://modiasim.github.io/ModiaMath.jl/stable/man/Overview.html}{ModiaMath} tool for semilinear \mDAE\ systems by effectively implementing scheme~(\ref{elrifuhliu}).\eproof
\end{ccomment}
	
\subsection{{Numerical scheme for restarts}}
\label{sec:restartschemes}
In this section, we focus on the system of equations defining restarts, in its form (\ref{ltrghilu}), which we recall  for convenience: 
\beq
H(Z,X,\vsmall)=0 \enspace . \label{elrigtueho}
\eeq
Our aim is to justify our claim that, if we remain `blind' with respect to impulsive variables, the solution of this system is well approximated by the standard system 
\beq
H(Z,X,\delta)=0 \enspace ,
\label{loijhlighiuygf}
\eeq
 where the infinitesimal $\vsmall$ has been replaced by a small standard parameter $\delta>0$. This is essentially the proposed numerical scheme.
Throughout this section, Assumptions~$\ref{oergfiuehoiu}$ and~$\ref{pw498gthseroi}$ of Section~\ref{weroguihergu} are in force. 

We briefly recall the corresponding notations for convenience: in  (\ref{elrigtueho}), $X$ collects the state variables and $Z\eqdef(\postset{X},V)$ collects the dependent variables. The algebraic variables $V$ can be split into impulsive and non-impulsive variables: $V=\impuls{W}\,\cup\,\nonimpuls{W}$, and we set $Y\eqdef(\postset{X},W)$. Assumption~$\ref{pw498gthseroi}$ states that impulsive variables can be eliminated from System~(\ref{loijhlighiuygf}), resulting in the reduced system 
\beq
K(Y,X,\delta){=}0 \enspace ,
\label{lguihuio2}
\eeq 
that remains structurally nonsingular when $\delta=0$.
\begin{theorem}
	\label{eroitutrdh} 
The following statements hold:
\begin{enumerate}
\item \label{eprituehrp} Let $Y_*$ be a solution of $(\ref{lguihuio2})$. There exists a selection map $\delta\mapsto(Y(\delta),\compl{W}(\delta))$, from $\bR_{>0}$ to the set of solutions of the system $(\ref{loijhlighiuygf})$, such that $\lim_{\delta\searrow{0}}Y(\delta){=}Y_*$.
\item \label{lreuiwgfo} Conversely, let $(Y(\delta),\compl{W}(\delta))$ be a  solution of  $(\ref{loijhlighiuygf})$ such that $Y_*\eqdef\lim_{\delta\searrow{0}}Y(\delta)$ exists; then, $Y_*$ is a solution of $(\ref{lguihuio2})$. 
\end{enumerate}
\end{theorem}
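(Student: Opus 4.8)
The plan is to exploit the structural nonsingularity of the reduced system $(\ref{lguihuio2})$ at $\delta=0$ to apply the Implicit Function Theorem, and to use Assumption~\ref{pw498gthseroi} to transfer statements between the full system $(\ref{loijhlighiuygf})$ and its reduction $(\ref{lguihuio2})$. For Statement~\ref{eprituehrp}, since $K(Y,X,0){=}0$ is structurally nonsingular and (by Assumption~\ref{oergfiuehoiu}) $X$ is standard and finite, at a solution $Y_*$ of $K(Y,X,0){=}0$ the Jacobian $\Jacobian_{\!Y}K$ is generically invertible; I would work at such a regular solution. By the Implicit Function Theorem applied to $K$ as a function of $(Y,\delta)$ (with $X$ fixed) near $(Y_*,0)$, there is a $\cC^1$ map $\delta\mapsto Y(\delta)$ defined on a neighbourhood of $0$ with $K(Y(\delta),X,\delta){=}0$ and $Y(0){=}Y_*$, hence $\lim_{\delta\searrow 0}Y(\delta){=}Y_*$. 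It then remains to promote this to a solution of the full system $(\ref{loijhlighiuygf})$: by the equivalence in Assumption~\ref{pw498gthseroi}, $K(Y(\delta),X,\delta){=}0$ for $\delta{>}0$ is equivalent to $\exists\compl{W}:H(Z,X,\delta){=}0$ with $Y$ the non-impulsive part of $Z$, so for each $\delta{>}0$ we may pick (using the axiom of choice if needed) a witness $\compl{W}(\delta)$, yielding the desired selection map $\delta\mapsto(Y(\delta),\compl{W}(\delta))$ into the solution set of $(\ref{loijhlighiuygf})$.

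For Statement~\ref{lreuiwgfo}, I would argue in the opposite direction. Suppose $(Y(\delta),\compl{W}(\delta))$ solves $(\ref{loijhlighiuygf})$ for $\delta{>}0$ and $Y_*{=}\lim_{\delta\searrow 0}Y(\delta)$ exists. By Assumption~\ref{pw498gthseroi}, each such solution satisfies $K(Y(\delta),X,\delta){=}0$ for $\delta{>}0$. Since $K$ is (by Assumption~\ref{oergfiuehoiu}) a standard, hence continuous, function of all its arguments including $\delta$, passing to the limit $\delta\searrow 0$ gives $K(Y_*,X,0){=}0$, i.e.\ $Y_*$ solves $(\ref{lguihuio2})$. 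This direction is essentially elementary once the reduction equivalence and continuity of $K$ are in hand.

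One subtlety I would address carefully is that structural nonsingularity of $K(Y,X,0){=}0$ gives invertibility of the Jacobian only \emph{generically} (in the sense of Lemma~\ref{oerigtuerio}), not unconditionally; so Statement~\ref{eprituehrp} should be read as holding at regular solutions $Y_*$, and I would state this explicitly or invoke the genericity clause as is done elsewhere in the paper for the non-local use of structural analysis (Section~\ref{wlerifuhwlpi}). A second point to handle is measurability/definability of the selection map: strictly speaking the statement only asserts existence of \emph{some} selection map with the stated limit, so no regularity of $\delta\mapsto\compl{W}(\delta)$ is required and a pointwise choice suffices; if one wants a continuous $\compl{W}(\delta)$ in the linear-elimination case one can obtain it by the same Implicit Function Theorem argument applied to the full system after rescaling impulsive variables, but this is not needed for the theorem as stated.

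\textbf{Main obstacle.} The delicate step is not the limiting arguments but the clean bookkeeping of the equivalence in Assumption~\ref{pw498gthseroi} between the full system $(\ref{loijhlighiuygf})$ in variables $Z{=}(\postset{X},V)$ and the reduced system $(\ref{lguihuio2})$ in variables $Y{=}(\postset{X},\nonimpuls{W})$: one must be sure that the limit taken on $Y$-coordinates is compatible with the existential quantification over $\impuls{W}$, i.e.\ that ``$Y$ determined up to the impulsive part'' is exactly the content of both statements. Making this precise — in particular that the solution set of $(\ref{loijhlighiuygf})$ projects onto the solution set of $(\ref{lguihuio2})$ and that this projection is what the selection map inverts — is where the real care is required; the rest follows from the Implicit Function Theorem and continuity.
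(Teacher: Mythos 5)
Your proof is correct and follows essentially the same route as the paper: the Implicit Function Theorem applied to the reduced system $K$ near $(Y_*,X,0)$ yields the selection map for Statement~\ref{eprituehrp}, and continuity of $K$ combined with the reduction equivalence of Assumption~\ref{pw498gthseroi} yields Statement~\ref{lreuiwgfo}. If anything you are slightly more careful than the paper: you flag that invertibility of $\Jacobian_{\!Y}K$ at $(Y_*,X,0)$ holds only generically (the paper applies the IFT without comment), and you correctly note that Statement~\ref{lreuiwgfo} does invoke the equivalence in Assumption~\ref{pw498gthseroi} to pass from a solution of $(\ref{loijhlighiuygf})$ to a solution of $K=0$, whereas the paper's proof somewhat loosely claims that this statement ``does not use Assumption~\ref{pw498gthseroi}''.
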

\begin{proof} We successively prove the two statements.

\myparagraph{Statement~\ref{eprituehrp}}
	By the Implicit Function Theorem applied to the standard function $K$, there exists a function $(\hat{X},\delta)\mapsto{Y}(\hat{X},\delta)$ of class $\cC^1$ such that $Y(X,0){=}Y_*$ and $K(Y(\hat{X},\delta),\hat{X},\delta){=}0$ holds for $(\hat{X},\delta)$ ranging over a neighborhood of $(X,0)$. Then, by Assumption~\ref{pw498gthseroi}, 
	$
	\exists \impuls{W}.H(Z,X,\delta){=}0
	~\Longleftrightarrow~
	K(Y,X,\delta){=}0
	$
	holds, which concludes the proof of this statement.

\myparagraph{Statement~\ref{lreuiwgfo}}
This is trivial and does not use Assumption~$\ref{pw498gthseroi}$. Since  $K$ is smooth, the two conditions (1) $K(Y(\delta),X,\delta){=}0$ for every $\delta>0$, and (2) $Y(\delta)$ converges to $Y_*$, imply $K(Y_*,X,0){=}0$.
\end{proof}
\begin{corollary}
	\label{erligtuherliut}
	 Let Assumption~$\ref{pw498gthseroi}$ be in force and let the systems 
	 $(\ref{loijhlighiuygf})$ and $(\ref{lguihuio2})$ possess a unique solution for $Z$ and $Y$, respectively. Then,  $H(Z(\delta),X,\delta){=}0$ implies $K(\lim_{\delta\searrow{0}}Y(\delta),X,0){=}0$.
\end{corollary}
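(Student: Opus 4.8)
The plan is to derive this corollary directly from Theorem~\ref{eroitutrdh}, whose two statements together already yield everything once the uniqueness hypotheses are brought to bear. First I would unpack the assumption $H(Z(\delta),X,\delta)=0$: for each $\delta>0$, $Z(\delta)=(\postset{X}(\delta),V(\delta))$ is the unique solution of (\ref{loijhlighiuygf}), and its non-impulsive part $Y(\delta)\eqdef(\postset{X}(\delta),\nonimpuls{W}(\delta))$ satisfies $K(Y(\delta),X,\delta)=0$ by the equivalence $\exists\impuls{W}.\,H(Z,X,\delta)=0\iff K(Y,X,\delta)=0$ supplied by Assumption~\ref{pw498gthseroi}.

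Next I would observe that $Y(\delta)$ is in fact the unique solution of $K(\cdot,X,\delta)=0$ for each fixed $\delta>0$: by the same equivalence, any $Y$ solving this system extends to a full solution of (\ref{loijhlighiuygf}), which by the uniqueness hypothesis on $Z$ must coincide with $Z(\delta)$, so its $Y$-part is $Y(\delta)$. This pins the map $\delta\mapsto Y(\delta)$ down completely. Then I would invoke Statement~\ref{eprituehrp} of Theorem~\ref{eroitutrdh}, applied to the unique solution $Y_*$ of the standardized system (\ref{lguihuio2}) at $\delta=0$: it furnishes a selection map $\delta\mapsto(\widetilde Y(\delta),\widetilde{\compl W}(\delta))$ of solutions of (\ref{loijhlighiuygf}), defined on a right-neighborhood of $0$, with $\widetilde Y(\delta)\to Y_*$. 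By the uniqueness of the solution of (\ref{loijhlighiuygf}), this selection coincides with $\delta\mapsto Z(\delta)$ on that neighborhood, hence $\widetilde Y(\delta)=Y(\delta)$ there, so $\lim_{\delta\searrow 0}Y(\delta)=Y_*$ exists. Finally, Statement~\ref{lreuiwgfo} of the theorem (or simply the continuity of $K$) applied to $Y(\delta)$ gives $K(Y_*,X,0)=0$, which is the claim.

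The only delicate point — where the proof's care should go — is establishing that $\lim_{\delta\searrow 0}Y(\delta)$ exists at all: a priori, as $\delta\searrow 0$ the unique solution branch could oscillate between several limit points, none of which need solve (\ref{lguihuio2}). The two uniqueness hypotheses are exactly what rule this out, as sketched above: uniqueness of $Z(\delta)$ forces $Y(\delta)$ onto the implicit-function branch produced around $Y_*$ in Statement~\ref{eprituehrp} (since that branch itself consists of solutions of (\ref{loijhlighiuygf})), and uniqueness of $Y_*$ then identifies the limit. I would double-check the matching of neighborhoods here: the implicit-function branch of Statement~\ref{eprituehrp} is defined only for $\delta$ near $0$, so the identification of $Z(\delta)$ with that branch is valid only once $\delta$ is small enough — which is all that is needed to compute the $\delta\searrow 0$ limit. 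Beyond this, the argument is routine bookkeeping on the equivalence of Assumption~\ref{pw498gthseroi}, and no new estimate is required.
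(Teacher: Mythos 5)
Your proof is correct, and it is the derivation the paper evidently intends: the corollary is stated without its own proof immediately after Theorem~\ref{eroitutrdh}, and the only nontrivial content beyond Statement~\ref{lreuiwgfo} is precisely what you supply — namely that $\lim_{\delta\searrow 0}Y(\delta)$ exists. Your mechanism for this (use Statement~\ref{eprituehrp} to manufacture an implicit-function branch $(\widetilde{Y}(\delta),\widetilde{\compl{W}}(\delta))$ of solutions of (\ref{loijhlighiuygf}) converging to $Y_*$, then invoke uniqueness of $Z(\delta)$ to force $Y(\delta)=\widetilde{Y}(\delta)$ for small $\delta$) is exactly right, and correctly takes care of boundedness of the branch near $\delta=0$ as a byproduct.

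Two minor remarks. First, your Step 2 (that $Y(\delta)$ is the unique solution of $K(\cdot,X,\delta)=0$ for each $\delta>0$) is true but not actually used: Step 3 already identifies $Y(\delta)$ with $\widetilde{Y}(\delta)$ via uniqueness of $Z(\delta)$, so it can be dropped without loss. Second, the motivating worry that the accumulation points of $Y(\delta)$ ``need not solve (\ref{lguihuio2})'' is slightly off: since $K$ is continuous, any accumulation point obtained along a subsequence $\delta_n\searrow 0$ does satisfy $K(\cdot,X,0)=0$. The genuine danger you must rule out is rather that the branch could have several accumulation points, or escape to infinity; your appeal to uniqueness of $Z(\delta)$ and to the convergent IFT branch from Statement~\ref{eprituehrp} handles both. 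Neither remark affects the validity of the argument.
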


For its practical application, this scheme requires the prior knowledge of the decomposition $V=\impuls{W}\,\cup\,\nonimpuls{W}$ of algebraic variables into impulsive and non-impulsive ones: we solve System (\ref{loijhlighiuygf}) for $Z$ and simply discard the values of $\impuls{W}$ from this solution. The drawback of this scheme is that system (\ref{loijhlighiuygf}) will be likely ill-conditioned.

Thus, in practice, we would recommend a rescaling of the impulsive variables based on the impulsion order: for every impulsive variable $w$ or order $\mu\eqdef\magorder{w}$, we perform the substitution $w\gets\nu\times\delta^{-\mu}$, where $\nu$ is the new variable replacing $w$---such a rescaling works only for variables of finite impulsion order, see \rref{sec:impulseanalysis}.

\subsection{Experimental results: mode changes of a nonsemilinear system}
\label{sec:nonsemilin}
To exercise a numerical method computing state-jumps at mode changes,
when some variables are impulsive, let us consider the following
multimode DAE system:
\begin{equation}
\left\{
\bea{rllcc}
&&\omega'_1=a_1\omega_1 + b_1\tau_1^3 &&(\eqq_1) \\
&&\omega'_2=a_2\omega_2 + b_2\tau_2 &&(\eqq_2) \\
\when\;\guard&\doo&\omega_1-\omega_2=0 &&(\eqq_{3}) \\
&\aand&\tau_1+\tau_2=0 &&(\eqq_{4}) \\
\when\;\prog{not}\;\guard&\doo&\tau_1=0 &&(\eqq_{5}) \\
&\aand&\tau_2=0 &&(\eqq_{6}) \\
\eea
\right.
\label{sys:mdae:nonsemilin}
\end{equation}
This system is identical to the clutch
model~(\ref{sys:coupledshafts}), with the only difference that torque
$\tau_1$ appears non-linearly in equation $(\eqq_1)$. Therefore, this
system is not semilinear, and the direct approach based on
linear-algebraic techniques used in \rref{sec:mainthm} does not
apply. Nevertheless, we shall demonstrate on this example that our
general approach applies. We focus on the mode change
$\guard:\fff\ra\ttt$. First, we perform impulse analysis following
\rref{sec:impulseanalysis}. Based on this impulse analysis, we apply a
change of variable that bypasses the need for eliminating
impulsive variables. We then standardize the resulting algebraic
system of equations and we prove that it possesses a unique solution.

Using the structural analysis method detailed in
Section~\ref{sec:orguhuoi}, the state-jump is solution of the
following system of algebraic equations, where $\postset{\omega_1}$,
$\postset{\omega_2}$, $\tau_1$ and $\tau_2$ are the unknowns:
\beq \left\{ \bea{lcc}
  0 = \frac{\postset{\omega_1}-\omega_1}{\vsmall} - a_1\omega_1 - b_1\tau_1^3 &&(\eqq_1^\vsmall) \\
  0 = \frac{\postset{\omega_2}-\omega_2}{\vsmall} - a_2\omega_2 - b_2\tau_2 &&(\eqq_2^\vsmall) \\
  0 = \postset{\omega_1}-\postset{\omega_2} &&(\postset{\eqq_3}) \\
  0 = \tau_1+\tau_2 &&(\eqq_{4}) \eea \right.
\label{sys:sj:nonsemilin}
\eeq 
Applying the method of \rref{sec:impulseanalysis}, the impulse
orders of the dependent variables
$\postset{\omega_1}-\omega_1,\postset{\omega_2}-\omega_2,\tau_1,\tau_2$
are solution of the following system of equalities and inequalities:
\begin{equation}
\left\{
\bea{lcc}
1+\magorder{\postset{\omega_1}-\omega_1}\leq\max\{\magorder{a_1\omega_1},\magorder{b_1}+3\magorder{\tau_1}\} \\
\magorder{a_1\omega_1}\leq\max\{1+\magorder{\postset{\omega_1}-\omega_1},\magorder{b_1}+3\magorder{\tau_1}\} \\
\magorder{b_1}+3\magorder{\tau_1}\leq\max\{1+\magorder{\postset{\omega_1}-\omega_1},\magorder{a_1\omega_1}\} \\
1+\magorder{\postset{\omega_2}-\omega_2}\leq\max\{\magorder{a_2\omega_2},\magorder{b_2}+\magorder{\tau_2}\}  \\
\magorder{a_2\omega_2}\leq\max\{1+\magorder{\postset{\omega_2}-\omega_2},\magorder{b_2}+\magorder{\tau_2}\} \\
\magorder{b_2}+\magorder{\tau_2}\leq\max\{1+\magorder{\postset{\omega_2}-\omega_2},\magorder{a_2\omega_2}\} \\
\magorder{\postset{\omega_1} - \omega_1}=\magorder{\postset{\omega_2} - \omega_2}=0 \\
\magorder{\tau_1}=\magorder{\tau_2} 
\eea
\right.
\label{sys:ia:nonsemilin}
\end{equation}
System (\ref{sys:ia:nonsemilin}) admits a unique solution:
$\magorder{\postset{\omega_1}-\omega_1} =
\magorder{\postset{\omega_2}-\omega_2} = 0$ and
$\magorder{\tau_1} = \magorder{\tau_2} = \frac{1}{3}$.
This impulse analysis tells us that if a solution to
System~(\ref{sys:sj:nonsemilin}) exists, then unknowns $\tau_1$ and
$\tau_2$ are of order
$O(\vsmall^{-\frac{1}{3}})$. The
following substitution helps solving System (\ref{sys:sj:nonsemilin}):
$\tau_i = \vsmall^{-\frac{1}{3}}\hat{\tau}_i$. With this substitution, System (\ref{sys:sj:nonsemilin}) becomes:
\beq \left\{ \bea{lcc}
  0=\postset{\omega_1}-\omega_1 - \bigl(\vsmall a_1\omega_1+b_1\hat{\tau}_1^3\bigr) && (\eqq_1^\vsmall) \\
  0=\postset{\omega_2}-\omega_2 - \bigl( \vsmall a_2\omega_2+\vsmall^{\frac{2}{3}} b_2\hat{\tau}_2\bigr) && (\eqq_2^\vsmall) \\
  0=\postset{\omega_1}-\postset{\omega_2}  && (\postset{\eqq_3}) \\
  0=\hat{\tau}_1+\hat{\tau}_2  && (\eqq_{4}) \eea \right.
\label{sys:sj:nonsemilin:subs}
\eeq 
Let us replace each unknown by its unique decomposition into a
standard part and an infinitesimal part. Regarding velocities, we
have: $\postset{\omega_i} = \postset{\bar{\omega}_i} + \phi_i$, where
$\phi_i \approx 0$, for $i=1,2$. Next, since the rescaled variables
$\hat{\tau}_i$ are non-impulsive, the same substitution is legitimate
by Lemma~\ref{owirguhpr}: $\hat{\tau}_i = \bar{\tau}_i + \psi_i$, with
$\psi_i \approx 0$, for $i=1,2$. System (\ref{sys:sj:nonsemilin:subs})
then becomes:
\beq \left\{ \bea{lcc}
  \postset{\bar{\omega}_1} + \phi_1 - \omega_1 = \vsmall a_1\omega_1 + b_1\left(\bar{\tau}^3_1+3\bar{\tau}^2_1\psi_1+3\bar{\tau}_1\psi^2_1+\psi^3_1\right) && (\eqq_1^\vsmall) \\
  \postset{\bar{\omega}_2} + \phi_2 - \omega_2 = \vsmall a_2\omega_2 + \vsmall^{\frac{2}{3}} b_2 \left(\bar{\tau}_2+\psi_2\right) && (\eqq_2^\vsmall) \\
  \postset{\bar{\omega}_1} + \phi_1 - \postset{\bar{\omega}_2} - \phi_2 = 0 && (\postset{\eqq_3}) \\
  \bar{\tau}_1 + \psi_1 + \bar{\tau}_2 + \psi_2 = 0 && (\eqq_{4}) \eea \right.
\label{sys:sj:nonsemilin:subs1}
\eeq 
Zeroing the infinitesimals in these equations yields the following
nonsingular system of equations, which, by
Theorem~\ref{ow4u89ghsljkwrg}, is then the standardization of System
(\ref{sys:sj:nonsemilin:subs1}):
\beq \left\{ \bea{lcc}
  \postset{\bar{\omega}_1} - \omega_1 =  b_1\bar{\tau}^3_1 && (\bar{\eqq}_1^\vsmall) \\
  \postset{\bar{\omega}_2} - \omega_2 = 0 && (\bar{\eqq}_2^\vsmall) \\
  \postset{\bar{\omega}_1} - \postset{\bar{\omega}_2}  = 0 && (\postset{\bar{\eqq}_3}) \\
  \bar{\tau}_1 + \bar{\tau}_2 = 0 && (\bar{\eqq}_{4}) \eea \right.
\label{sys:sj:nonsemilin:subs2}
\eeq 
This standard system of equations admits a unique solution: 
\beq
\postset{\bar{\omega}_1} = \postset{\bar{\omega}_2} = \omega_2 
~~\mbox{ and }~~
\bar{\tau}_1 = -\bar{\tau}_2 = \sqrt[3]{\frac{\omega_2 - \omega_1}{b_1}}\,.
\label{sys:sj:nonsemilin:sol}
\eeq

We now show that solving System (\ref{sys:sj:nonsemilin}) with a small
standard positive parameter instead of the nonstandard $\vsmall$
yields a numerical scheme converging to the solution
(\ref{sys:sj:nonsemilin:sol}), if we disregard impulsive variables.
Algorithm\,\ref{alg:nonsemilin:algo} describes our numerical method,
whereas Figure\,\ref{fig:nonsemilin:neta} displays the results.

\begin{algorithm}[ht]
\caption{Numerical method for approximating the state-jump of non-semilinear system (\ref{sys:mdae:nonsemilin}) for mode change $\guard:\fff\ra\ttt$; $F_h(u)$ denotes the value of the right-hand side of System (\ref{sys:sj:nonsemilin}) for a $4$-tuple $u$ of values for the unknowns and $\vsmall$ replaced by $h$.}
\label{alg:nonsemilin:algo}
\begin{algorithmic}[1]
   \Require Difference equations $F_h(u) : \bR {\times} \bR^N {\rightarrow} \bR^N$, $u_0{\in} \bR^N$ initial guess, \mbox{$\pi : \bR^N {\rightarrow} \bR^M$} projection on the set of non-impulsive variables, $h_0{>}0$ initial time step, $\epsilon {>}0$ tolerance bound, $0{<}\theta{<}1$ scaling factor ; 
   \Return $u$ estimated state-jump
   \State $u \gets u_0$; $x \gets \pi(u_0)$; $h \gets h_0$
   \Repeat
     \State $y \gets x$
     \State $u \gets \mbox{NSolve}(\mbox{system}=F_h, \mbox{guess}=u, \mbox{tolerance}=h)$
     \State $x \gets \pi(u)$; $h \gets \theta h$
     \Until{$\left|x-y\right|_{\infty} \leq \epsilon$}
   \end{algorithmic}
\end{algorithm}

\begin{figure}[ht]
  \centering
    \includegraphics[width=0.45\textwidth]{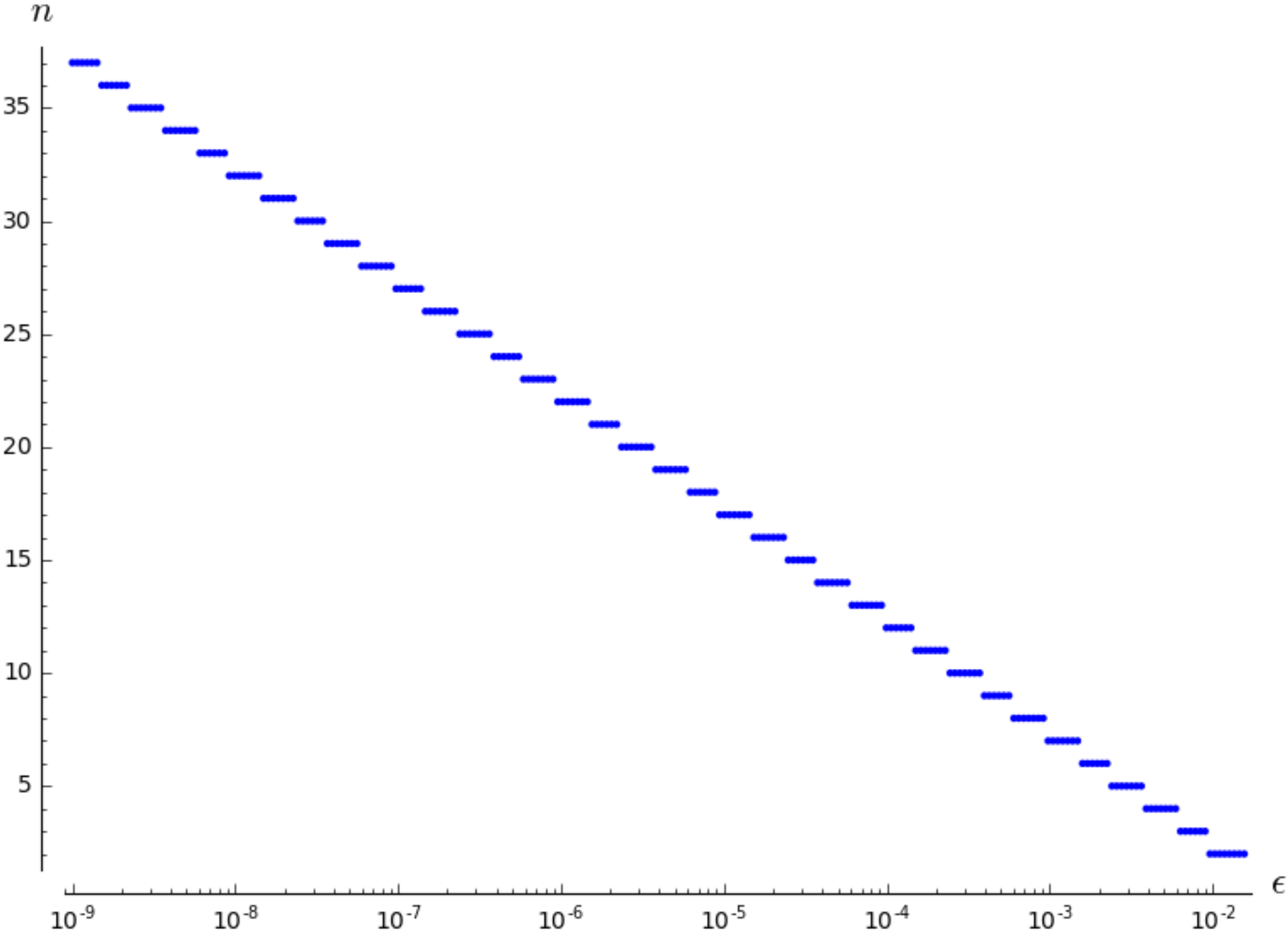}~\hfill~
    \includegraphics[width=0.45\textwidth]{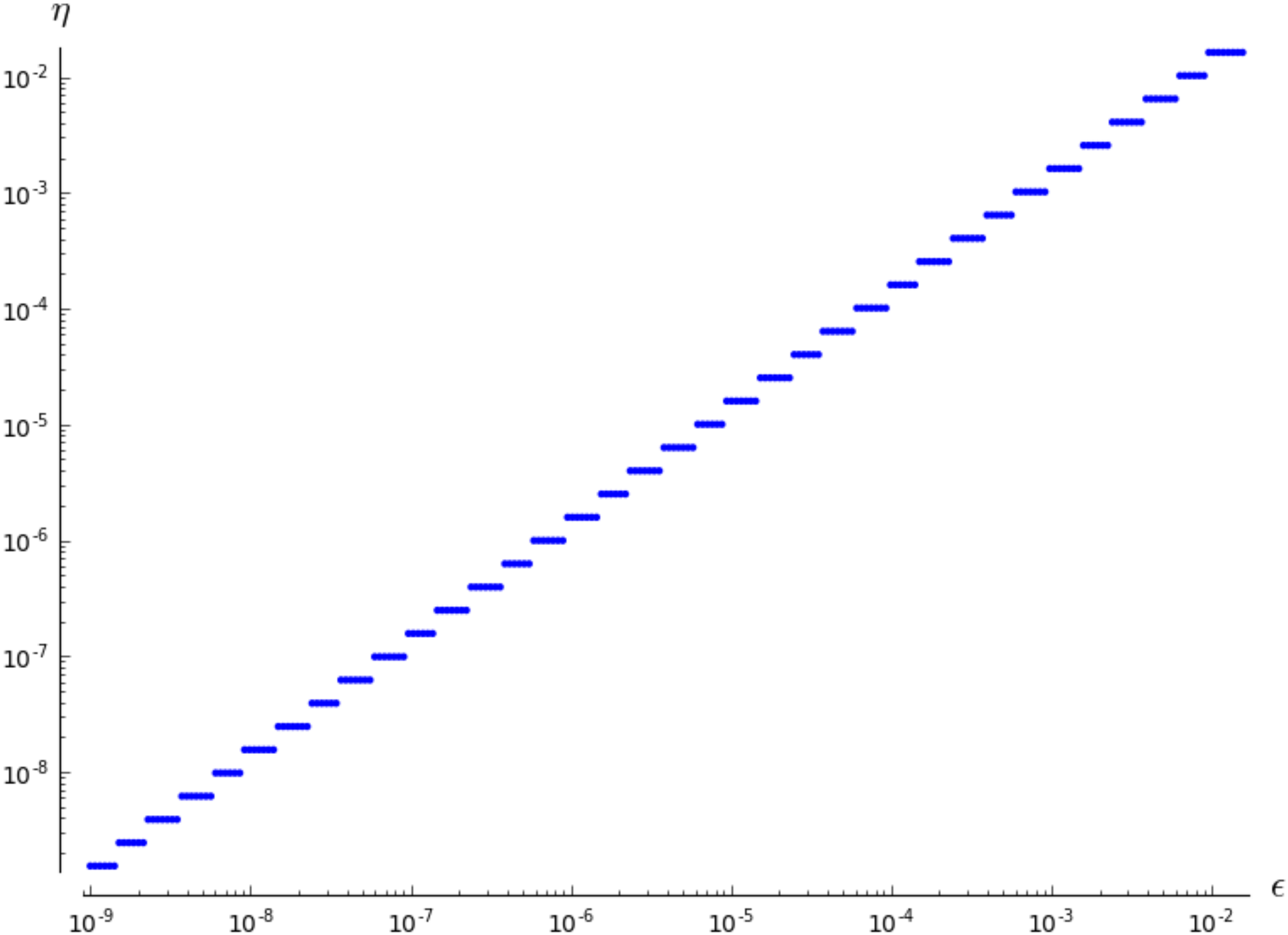} \\
    \hfill~(a)~\hfill~(b)~\hfill
  \caption{Numerical method for computing the state-jump of non-semilinear system (\ref{sys:mdae:nonsemilin}): (a) number of iterations as a function of $\epsilon$; (b) error $\eta$ as a function of $\epsilon$.}\label{fig:nonsemilin:neta}
  \end{figure}

Experimental results on this method have been obtained using Sage. The
nonlinear solver used is the least squares solver of the Scipy's
\texttt{optimize} library. Figure~\ref{fig:nonsemilin:neta} gives, on
the left, the number of iterations required, and on the right, the
absolute error $\eta = |x-\bar{x}|_{\infty}$ between the computed solution
and the exact solution $\bar{x} = (\omega_2,\omega_2)$. The scaling
factor is $\theta=0.5$, the initial time step $h_0 = 10^{-2}$ and
$u_0 = (\omega_1, \omega_2, 0, 0)$. It can be observed that the method
converges in not more than $37$ iterations, down to
$\epsilon = 10^{-9}$. It should also be remarked that, experimentally, the
absolute error is bounded : $\eta < 2\epsilon$.

%END\input{nonsemilin}
%%%%%%%%%%%%%%%%%%%%%%%%%%%%%%%%%%%%%%%%%%%%%%%%%%%%%%%%%%%%%%%%%%%%%%%%%%%%%%%%%%%%%%%%%

%END\input{MainResults}
%%%%%%%%%%%%%%%%%%%%%%%%%%%%%%%%%%%%%%%%%%%%%%%%%%%%%%%%%%%%%%%%%%%%%%%%%%%%%%%%%%%%%%%%%
%\input{RLDC2}
\section{{Toward a tool supporting our approach}}
\label{sec:RLDC2}
The general approach we presented sets formidable challenges for a tool development. We identify three important requirements for such a tool. 

\begin{itemize}
\item First, it should implement the theory: structural analysis must be adjusted to every different mode and mode change. Both aspects are included in what we call a `multimode structural analysis'.
\item Second, the described approach aims at the compilation of \mDAE\ systems. As such, `on-the-fly' techniques for the structural analysis of such systems, as advocated in~\cite{Broman:EECS-2012-173} and~\cite{10.1007/978-3-319-47169-3_15}, cannot be taken into consideration. As soon as a model is structurally unsound in any given mode or at any given mode change, it should be rejected at compile time. For accepted models, the output of the tool should enable the generation of efficient simulation code.
\item Third, enumerating all modes and mode changes to perform multimode structural analysis is bound to collapse. An \mDAE\ compiler should be able to handle models in a clever way to avoid such enumeration.
\end{itemize}
We believe that we achieved an important first step towards alleviating these challenges with our tool IsamDAE under development.
IsamDAE implements the structural analysis of all modes of an \mDAE\ without any enumeration of the modes, instead relying on Binary Decision Diagrams (BDD, see for instance~\cite{Bryant1986}) for implicit representations of the mode-dependent structure of an \mDAE.
This is described more thoroughly in~\cite{Caillaud2020a,caillaud:hal-02476541}.

In order to illustrate how a multimode model is currently handled by IsamDAE, this Section focuses on the so-called RLDC2 model,
a simple
electronic circuit shown in Figure~\ref{fig:rldc2-sch}. This example, provided to us by Sven-Erik Mattsson, is actually a difficult one to handle for the existing Modelica tools because of its mode-dependent index and structure, even though it exhibits no impulsive behavior.
Indeed, we show once more how this model is mishandled by two
leading Modelica tools, namely, Dymola and OpenModelica, and hint at
reasons for this fact, thereby justifying our work once again.
We then detail its handling by the IsamDAE tool, in order to provide an insight about why this tool could be expanded into a complete multimode structural analysis suite.
\begin{figure}[!h]
\begin{center}
\includegraphics[width=0.80\linewidth]{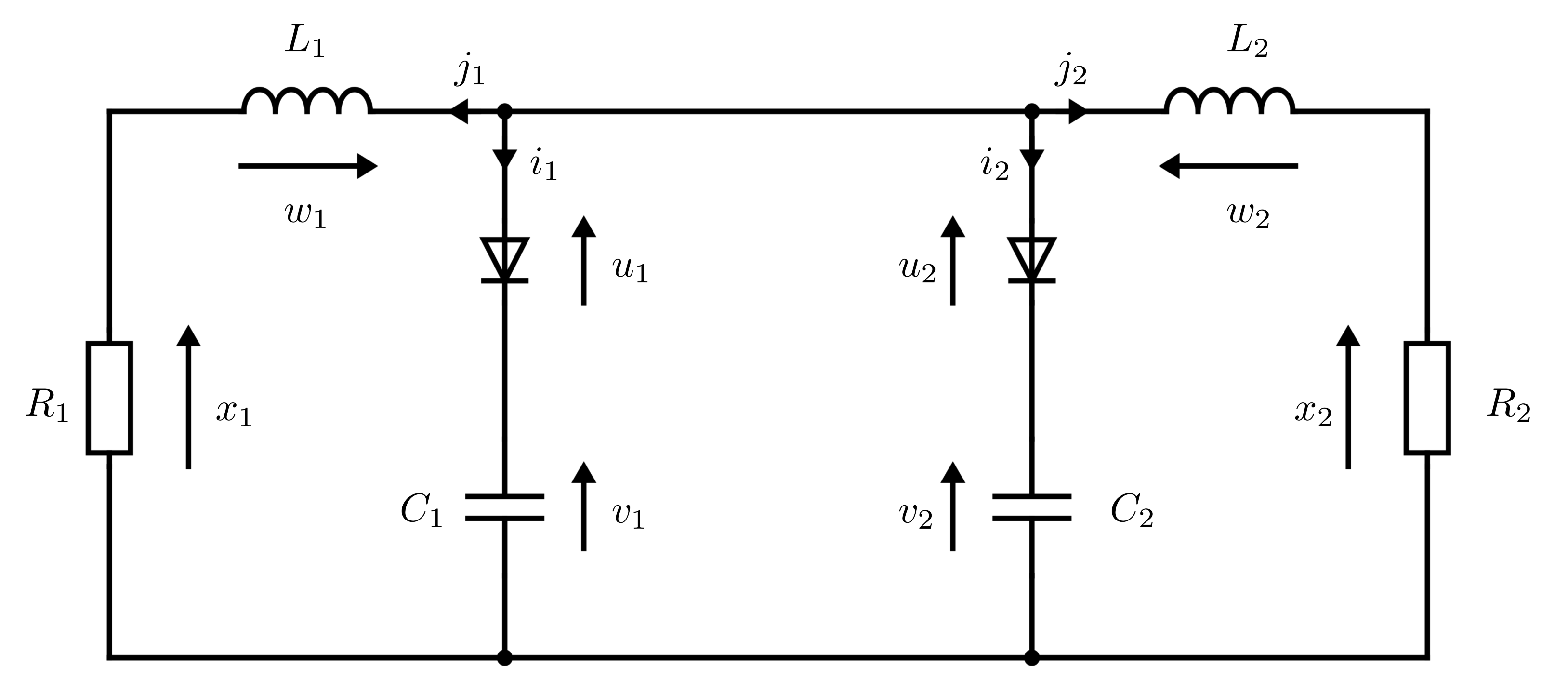}
\end{center}
\caption{Schematics of the RLDC2 circuit.}
\label{fig:rldc2-sch}
\end{figure}

\subsection{The RLDC2 model}
\label{sub:rldc2-model}

The circuit consists in two RLC circuits interconnected in parallel
and in which two diodes have been inserted. The diodes are considered
to be ideal, meaning that they are not ruled by the customary Shockley
law $i = I \left(e^{u/U} - 1\right)$, but rather by the complementarity
condition $0 \leq i ~ \bot ~ -u \geq 0$, meaning that, at any moment,
both $i$ and $-u$ are nonnegative and $i u = 0$ (Section~\ref{leriugfehliu} already mentions complementarity conditions). Figure~\ref{fig:idealdiode} illustrates these two diode laws.

\begin{figure}[!h]
  \begin{center}
    \includegraphics[width=0.5\linewidth]{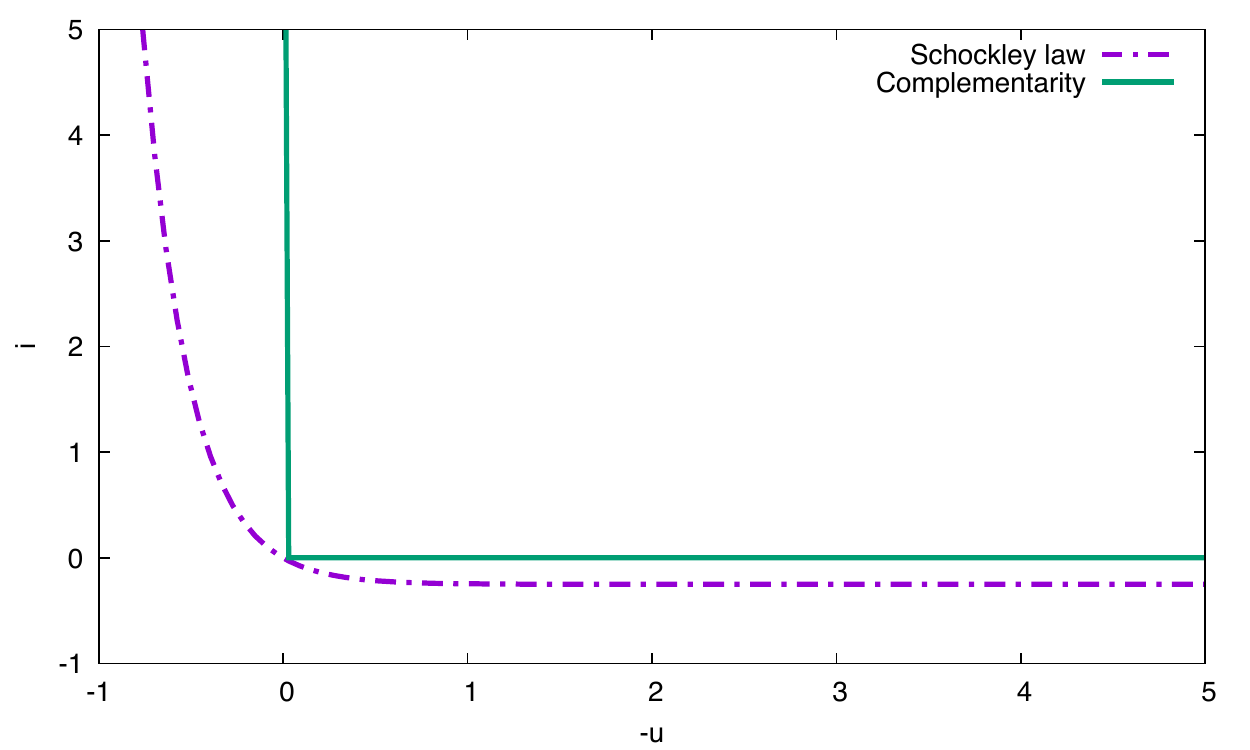}
  \end{center}
  \caption{Shockley law (dashed, {\color{magenta}magenta}) vs. ideal complementarity condition (solid, {\color{green}green}).}
  \label{fig:idealdiode}
\end{figure}

As in Section~\ref{leriugfehliu}, we redefine the graph
of this complementarity condition as a parametric curve,
represented by the following three equations:
\[
  \begin{array}{rclc}
    s &=& \text{if } \guard \text{ then } i \text{ else } -u & (S) \\
    0 &=& \text{if } \guard \text{ then } u \text{ else } i & (Z) \\
    \guard &=& (s \geq 0) & (G)
    \end{array}
  \]
  To avoid a logico-numerical fixpoint, we 
	consider instead the system
  obtained by replacing $s$ with its left-limit $s^-$ in equation $(G)$:
  \[
  \begin{array}{rcll}
    s &=& \text{if } \guard \text{ then } i \text{ else } -u & (S) \\
    0 &=& \text{if } \guard \text{ then } u \text{ else } i & (Z) \\
    \guard &=& (s^- \geq 0) & (G^-)
    \end{array}
  \]
  Under the assumption that $u$ and $i$ are continuous functions of
  time (which turns out to be a valid assumption for the RLDC2
  circuit), $s(t) = s^-(t)$ holds at every instant $t$. This implies
  that systems $(S), (Z), (G)$ and $(S), (Z), (G^-)$ are
  equivalent. 
  Using this encoding of the complementarity condition, the RLDC2
  circuit is modeled as the mDAE shown in Figure~\ref{fig:rldc2}, with
  two Boolean variables $\guard_1$ and $\guard_2$ and four multimode equations
  $(S_1)$, $(Z_1)$, $(S_2)$ and $(Z_2)$,
	incident to a set of
  variables that depends on the mode of the system 
	(i.e., the values of
  Boolean variables $\guard_1$ and $\guard_2$). Note that this model belongs to the class introduced in \rref{defn:mDAEmodified}.
\begin{figure}[!ht]
  \begin{equation}
    \begin{array}{rclc}
      0 &=& i_1 + i_2 + j_1 + j_2 & (K_1) \\
      x_1 + w_1 &=& u_1 + v_1 & (K_2) \\
      u_1 + v_1 &=& u_2 + v_2 & (K_3) \\
      u_2 + v_2 &=& x_2 + w_2 & (K_4) \\
      w_1 &=& L_1 \cdot j'_1 & (L_1) \\
      w_2 &=& L_2 \cdot j'_2 & (L_2) \\
      i_1 &=& C_1 \cdot v'_1 & (C_1) \\
      i_2 &=& C_2 \cdot v'_2 & (C_2) \\
      x_1 &=& R_1 \cdot j_1 & (R_1) \\
      x_2 &=& R_2 \cdot j_2 & (R_2) \\
      s_1 &=& \text{if } \guard_1 \text{ then } i_1 \text{ else } -u_1 & (S_1) \\
      s_2 &=& \text{if } \guard_2 \text{ then } i_2 \text{ else } -u_2 & (S_2) \\
      0 &=& \text{if } \guard_1 \text{ then } u_1 \text{ else } i_1 & (Z_1) \\
      0 &=& \text{if } \guard_2 \text{ then } u_2 \text{ else } i_2 & (Z_2) \\
      \guard_1 &=& (s_1^- \geq 0) & (\guard_1^-) \\
      \guard_2 &=& (s_2^- \geq 0) & (\guard_2^-)
    \end{array}
    \label{eq:rldc2-system}
  \end{equation}
  \caption{The RLDC2 model (\emph{n.b.}: variable $y'$ denotes the time derivative of $y$).}\label{fig:rldc2}
\end{figure}

\subsection{Handling the RLDC2 model with Modelica tools}
\label{sub:rldc2-modelica}

The Modelica model of the RLDC2 circuit is given in
Figure~\ref{fig:rldc2-modelica}; it is a direct translation of the
model above. The authors had the opportunity to test it on two
implementations of the Modelica language: OpenModelica
v1.12\footnote{\url{https://openmodelica.org}} and Dymola
2019.\footnote{\url{https://www.3ds.com/products-services/catia/products/dymola/}}
\begin{figure}[!p]
  \begin{center}
    \includegraphics[width=0.7\linewidth]{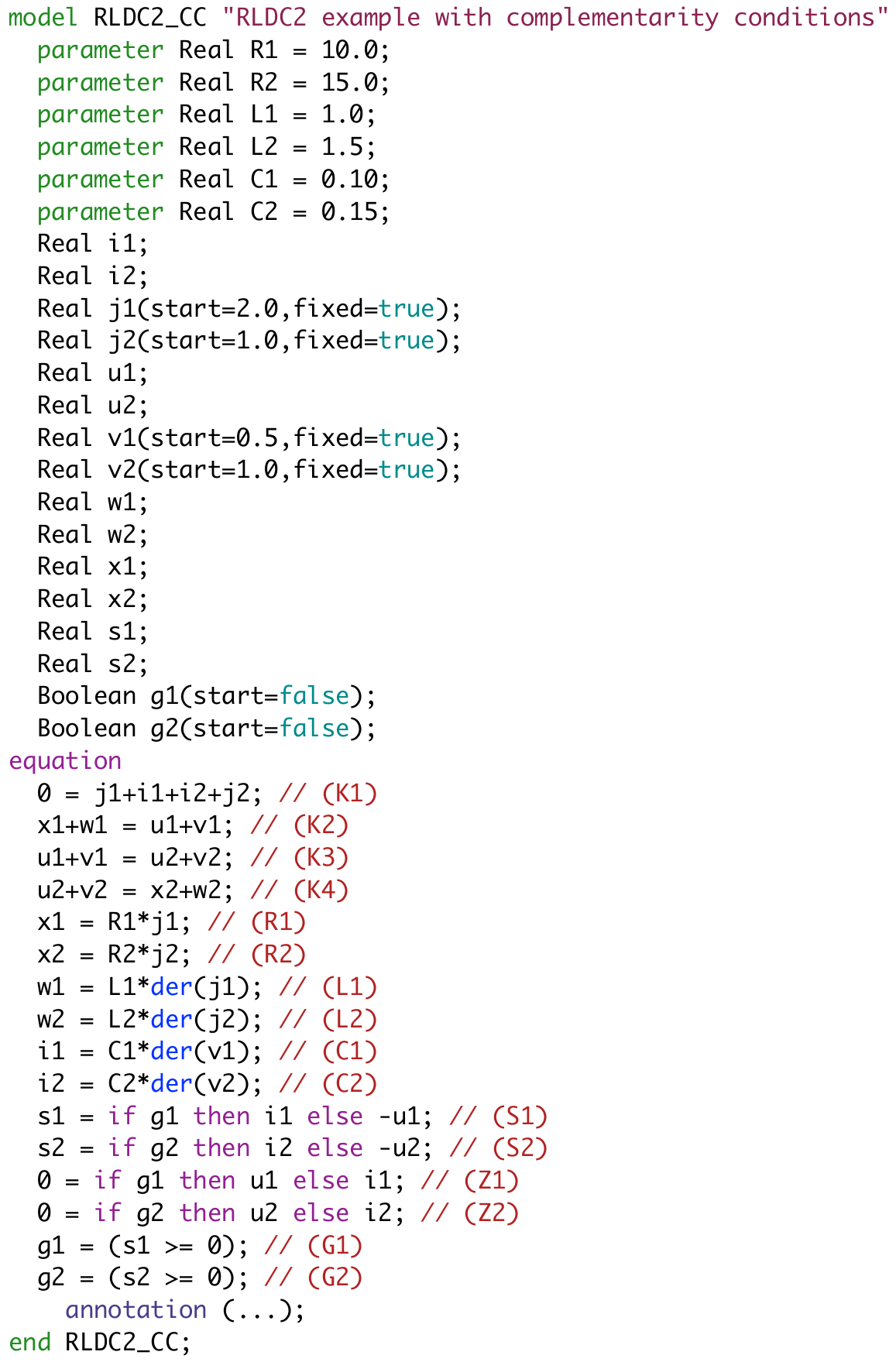}
    \end{center}
  \caption{Modelica model of the RLDC2 circuit.}\label{fig:rldc2-modelica}
  \end{figure}
    
  Identical results are obtained with both tools: the model is deemed
  nonsingular at compile time, yet a runtime error immediately occurs
  (at time $t=0$) during simulation, as shown in
  Figure~\ref{fig:dymola-omc}.
    
    \begin{figure}[ht]
      \begin{tabular}{cc}
    (a) &
          \includegraphics[width=0.6\linewidth]{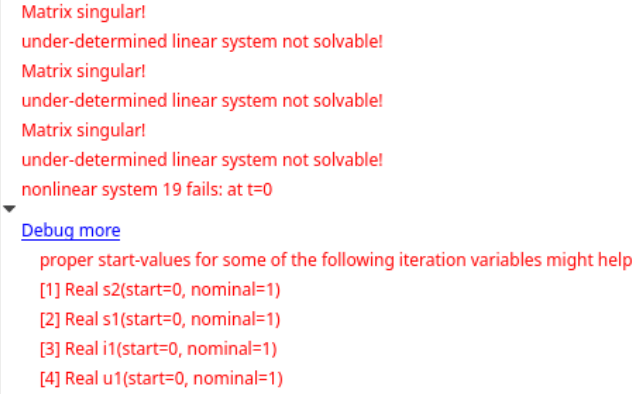} 
					\\
        \vspace{1em} & \\
    (b) &
        \includegraphics[width=0.8\linewidth]{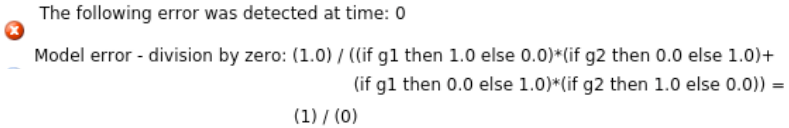}
        \end{tabular}
    \caption{Error messages produced during simulation of the RLDC2 model with (a) OpenModelica and (b) Dymola, both at the initial time of the simulation.}\label{fig:dymola-omc}
    \end{figure}
    
    The issue is that the structural analysis implemented in these
    tools treats the model as a single-mode DAE, disregarding the
    mode-dependent variability of the incidence relations. The
    consequence is that, despite the model being deemed structurally
    nonsingular by these tools, blocks of equations that are
    structurally singular in some modes are produced. This immediately
    leads to runtime errors: both tools attempt a pivoting of the
    Jacobian matrix by an element that is equal to zero in some modes.

\subsection{Structural analysis of the RLDC2 model}
\label{ssec:cbdg}
\begin{figure}[ht]
\centering
\includegraphics[width=0.45\linewidth]{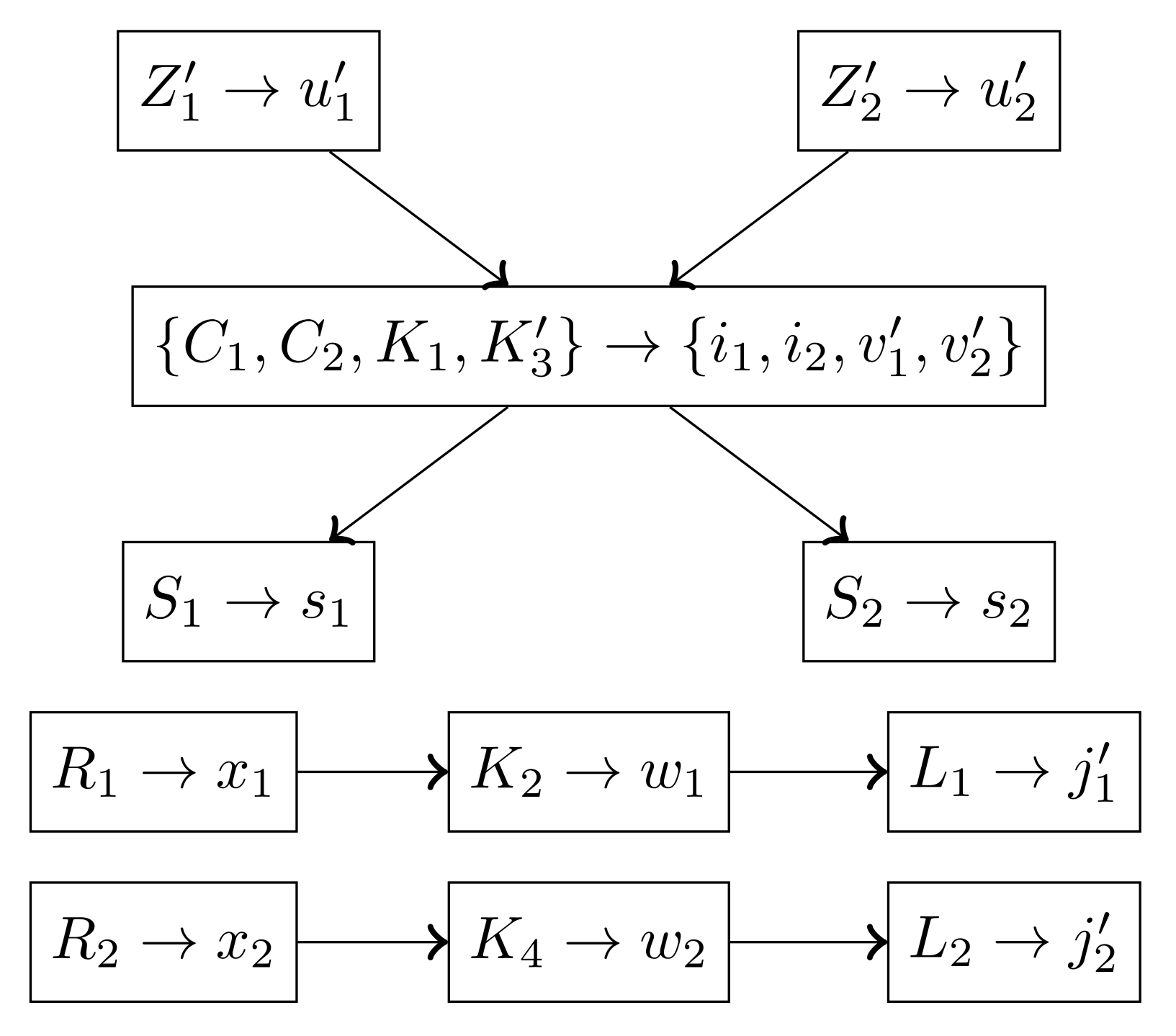}
\caption{Block dependency graph for the RLDC2 example with both diodes
  passing (i.e., $\guard_1 = \guard_2 = \ttrue$).}
\label{fig:circuit-dag}
\centering
\includegraphics[width=0.55\linewidth]{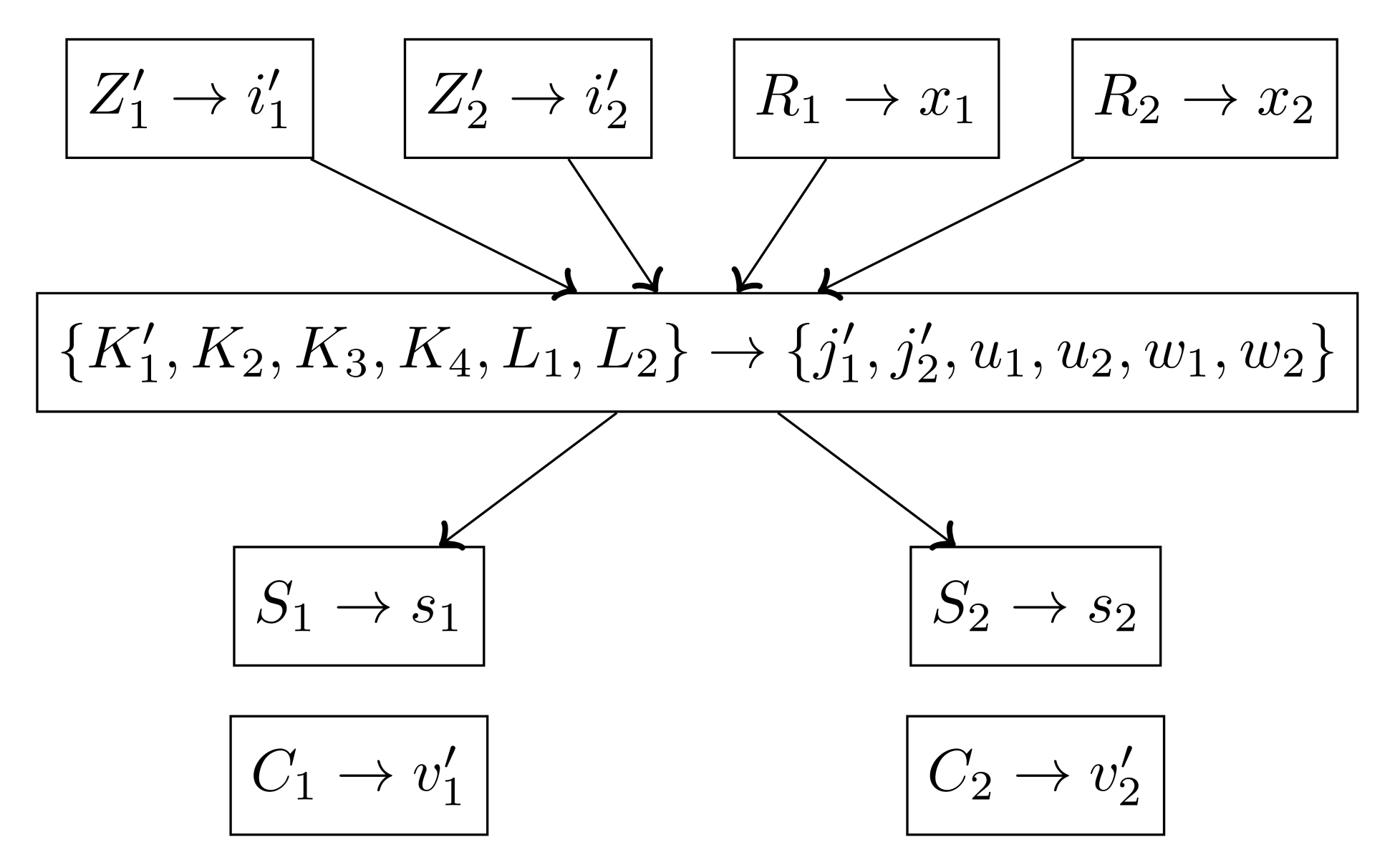}
        \caption{Block dependency graph for the RLDC2 example with both diodes blocking (i.e., $\guard_1 = \guard_2 = \ffalse$).}
        \label{fig:circuit-dag-open}
\end{figure}
The \mDAE\ system of \rref{fig:rldc2} has two modes of nonzero index, when diodes are either both passing or both open. Having performed index reduction on each of them, we show the corresponding block dependency graphs in Figures~\ref{fig:circuit-dag} and~\ref{fig:circuit-dag-open}, respectively---see Lemma~\ref{jytdfeoguip} and \rref{fig:BTF}. These two graphs significantly differ, highlighting the fact that a mode-dependent structural analysis is necessary.

  \begin{figure*}[!htp]
    \begin{center}
    \includegraphics[width=0.9\linewidth]{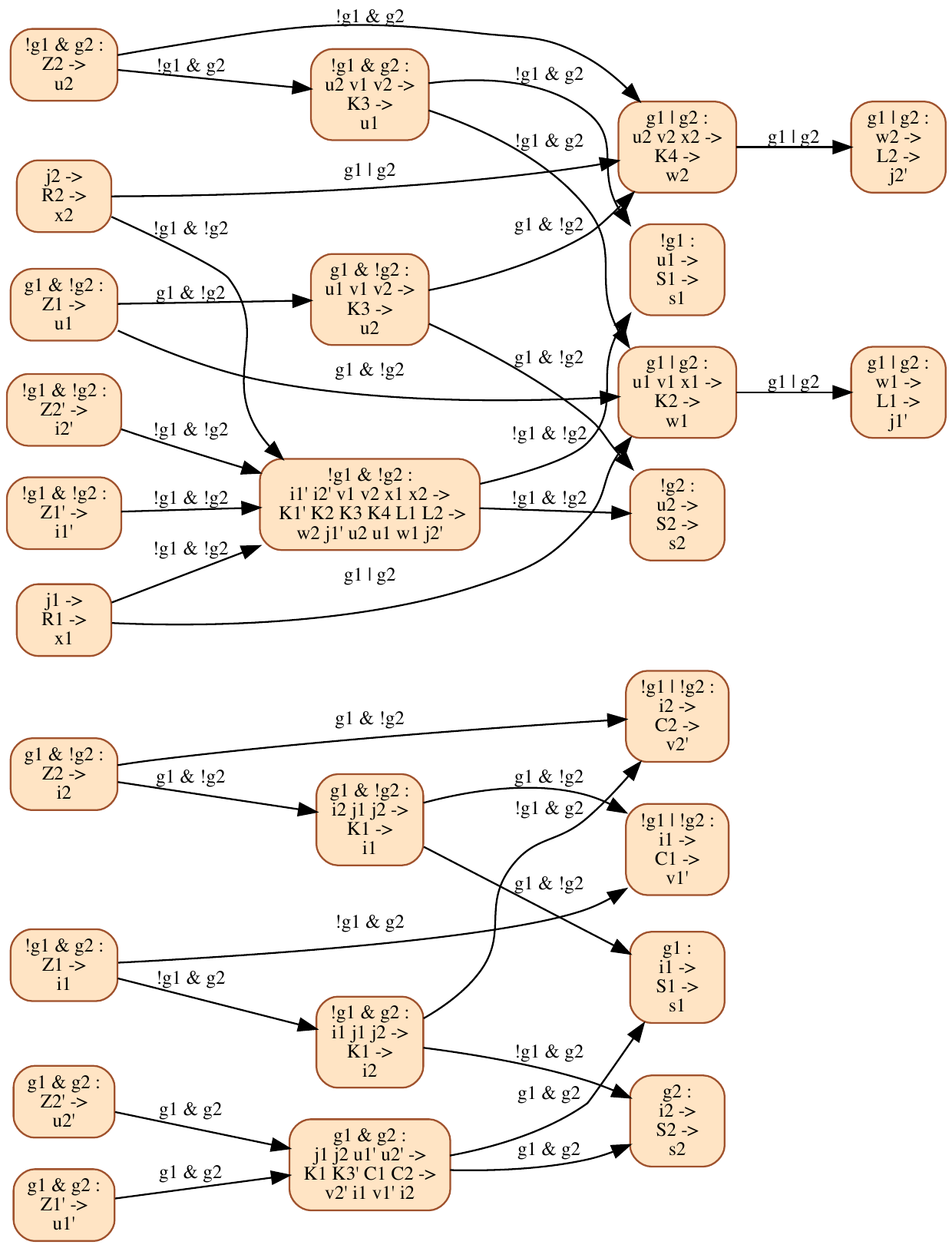}
    \end{center}
    \caption{Block dependency graph of the RLDC2 model, generated
    by IsamDAE. Vertices are labeled
    $p : R \rightarrow B \rightarrow W$, where: $p$
    is a propositional formula defining in which modes the block is
    evaluated; $R$ is the set of variables to read; $B$ is the set of
    equations of the block; $W$ is the set of variables to write. Edges
    are labeled by a propositional formula, defining in which modes the
    dependency applies---notation ``!g'' means ``not g''.}\label{fig:rldc2:cdg}
  \end{figure*}
  
Figure\,\ref{fig:rldc2:cdg} shows a graph generated
by the structural analysis performed by our software IsamDAE~\cite{Caillaud2020a,caillaud:hal-02476541}, for the \mDAE\ system of \rref{fig:rldc2}. This graph does not consist of one graph per mode, but, rather, of a set of edges labeled by a propositional formula characterizing the set of modes in which the considered branch is involved. This data structure prevents the combinatorial explosion arising in \mDAE\ systems composed of a large number of \mDAE\ subsystems, when modes are enumerated.

  The block triangular form of the system can then be found at runtime for any given
  mode, for instance by evaluating the block dependency graph in this
  mode and performing a topological sort on the resulting graph.
  For the RLDC2 example, when evaluated in the mode in which both diodes are passing (resp. blocking), the graph of Figure~\ref{fig:rldc2:cdg} yields the dependency graph of Figure~\ref{fig:circuit-dag} (resp. Figure~\ref{fig:circuit-dag-open}); the same holds in the remaining modes.
	All equation
  blocks can be turned into efficient simulation code at compile time,
  so that the computational overhead due to a mode switching at
  runtime is minimized.

\subsection{Perspectives}

As shown on the RLDC2 example above, our tool IsamDAE addresses the issue of performing the structural analysis of all modes of an \mDAE\ while avoiding any explicit enumeration of these modes.
Although the example detailed above is, for clarity purposes, a small one, IsamDAE is provably able to handle models with a very large number of modes, as detailed in~\cite{Caillaud2020a,caillaud:hal-02476541}.

As such, the tool already provides the data structures and key algorithms required for addressing the second and third challenges invoked at the beginning of this section.
This encourages us to implement, in the same tool, the structural analysis of mode changes, in the same `all-at-once' fashion as for the modes themselves.
This is indeed the main perspective of our work, as we believe it will be a crucial step towards the design of a mathematically sound compiler for multimode DAE systems.
%END\input{RLDC2}
%%%%%%%%%%%%%%%%%%%%%%%%%%%%%%%%%%%%%%%%%%%%%%%%%%%%%%%%%%%%%%%%%%%%%%%%%%%%%%%%%%%%%%%%%
%\input{conclusion}
\section{Conclusion}
We have proposed sound foundations for the compilation and code
generation for physical systems modeling languages relying on DAE for
their continuous dynamics; major instances are Modelica and
Simscape. As its main contribution, our theory explicitly considers
events of mode change and properly handles any multimode system
having at most finite cascades of mode changes---sliding modes are not
supported by our method.

A key step was the extension of {structural analysis} from
single-mode to multimode systems. In this extension, the handling of
mode changes was the main difficulty. We illustrated our approach on
small examples that are not properly supported by
existing tools (with the exception of
\href{https://modiasim.github.io/ModiaMath.jl/stable/man/Overview.html}{ModiaMath}). These
examples exhibit mode-dependent index, which is not well supported by
existing techniques.

For the clutch example, the most natural model consists in
specifying the dynamics for the two modes ``released'' and
``engaged''; then, the compiler automatically generates the restart conditions at mode
changes.
The ``Westinghouse air brake'' example motivated the consideration of assertions---not studied here but included in our tool under development. 
The Cup-and-Ball example is interesting in that, when the rope
gets straight, inelastic or elastic impact can be hypothesized. This
example pointed the issue of long vs. transient modes (modes having
zero duration). Depending on whether the straight rope mode was
assumed long (inelastic impact) or transient (elastic impact), the
original model was accepted or rejected. For the latter,
underspecification was detected and this information was returned to
the user, who could refine his/her model by specifying an impact law. The
so enriched model was accepted and code properly generated at mode
change.
 
Generally, our method allows us to reject models on the basis of
under/over-specification. Logico-numerical fixpoint equations
(cases in which numerical variables determine the value of a Boolean condition that, in turn, guards the equation allowing to evaluate the same
numerical variables) are prohibited by setting restrictions on the
language syntax.
 
The structure of the code we generate is reminiscent of the so-called
dummy derivatives method~\cite{MattssonSoderlin1993}: for both the
continuous dynamics and the handling of transient modes or mode change
events, we call an algebraic equation system solver for computing
derivatives or next state values, as a function of the current state,
while meeting algebraic constraints.

A first requirement on our theory was that it should support the
analysis of any model related to any kind of physics---we aim at being
physics agnostic. This support should hold regardless of the
particular properties of the model, such as conditions
ensuring existence and uniqueness of solutions. The motivation for
this was that the user is not expected to check such conditions before
submitting her/his model.

A second requirement was the ability to handle
the conflicts that may result, at mode changes, between the dynamics
of the previous mode and the consistency conditions implied by the new
mode.

We do not see how these two requirements could be addressed without
relying on nonstandard analysis. Mapping real time to a discrete
time with infinitesimal step size allowed us to cast both (DAE-based) continuous
dynamics, and the restart conditions at mode changes, to a
uniform framework. This allowed us to properly solve the conflicts mentioned above, between the dynamics of the previous mode and the
consistency conditions implied by the new mode.
To make all of this doable, we developed a toolbox of new results on
nonstandard analysis, mainly related to the needs of structural
analysis. We proved that the result of our compilation remains
independent from the particular scheme we used to map derivatives to
the nonstandard domain.

From our experience in using nonstandard analysis, we can state that
this is an extremely convenient tool to establish the mathematical
foundations for the compilation of continuous-time systems modeling
tools, for both the Simulink and Modelica classes.

A desirable objective of a work like ours is to prove that every
execution scheme produced by the approach actually generates
solutions of the given source \mDAE\ system. As a reference, such
soundness analyses were performed for all synchronous
languages~\cite{Berry96,BenvenisteCG00,BenvenisteCEHGS03}. However, a particular difficulty holds: no mathematical definition of
solutions of an mDAE system exists in full generality, that would serve
as a reference. So far, we were only able to prove such a correctness
result for the subclass of ``semi-linear multimode systems'', that
may involve impulsive behaviors at mode changes.

In this work, we paid no specific attention to the efficiency of our
structural analysis algorithm, and did not discuss how to generate scalable simulation code.  We also only considered flat guards, whereas nested
guards (naturally occurring, for instance, in the case of nested ``if-then-else'' constructs) 
 are very useful in a practical
formalism: flattening a model with nested guards is not desirable for
getting efficient code. Work is in progress to address these issues by
relying on the technique of Binary Decision Diagrams (BDD), and first
results are reported in~\cite{Caillaud2020a}, where our IsamDAE tool under development is described.

\subsubsection*{Acknowledgements}
The authors wish to thank a number of people for this work. First, Hilding Elmqvist and Martin Otter introduced them to the subject in the early 2010's; our collaboration then established for a couple of years, leading to both a first version of our approach and the handling of semi-linear mDAE systems; since then, Hilding and Martin have focused on their grand child of Modelica called Modia. We had extensive fruitful discussions with John Pryce on both structural analysis of DAEs, including the {\sigmamethod\}, and physical modeling in general; this helped us bring out the need for an explicit distinction between long and transient modes. Khalil Ghorbal participated to the first version of this approach and further interacted with the authors for this revised approach. Finally, Vincent Acary introduced us to nonsmooth systems techniques and solvers and we had frequent exchanges regarding the notions of solution for mDAE systems, revealing to us how strange these can be.

%END\input{conclusion}
%%%%%%%%%%%%%%%%%%%%%%%%%%%%%%%%%%%%%%%%%%%%%%%%%%%%%%%%%%%%%%%%%%%%%%%%%%%%%%%%%%%%%%%%%
\clearpage
\appendix
\section{Appendix: constructive execution of an instant in systems having only long modes}
\label{eriughepuih}
In this appendix, we briefly develop the execution of a run, for \mDAE\ systems according to \rref{defn:mDAEattempt}, i.e., without delaying the effect of guards by one nonstandard instant. Since the effect of guards may not be postponed, the current mode is not known from evaluating the previous instant. We still refuse to solve mixed logico-numerical fixpoint equations. Hence we must instead identify the current mode progressively and constructively, according to an iterative scheme of the following form:
\begin{enumerate}
	\item Enter the new nonstandard instant; the status of a subset of the \ScottVars\ is known from previous instant;
	\item \label{urhgilohsdoug} Using the subset of evaluated guards, select the corresponding set of enabled equations and submit them to index reduction, conflict solving, and then evaluate them;
	\item \label{eouirfghpui9} This yields more enabled guards; evaluate them and return to Step~\ref{urhgilohsdoug}.
\end{enumerate}
This iteration is performed until, either all \ScottVars\ are evaluated to $\ttt,\fff$, or $\irrelevant$, or no additional guard gets enabled at Step~\ref{eouirfghpui9}---a failure. Since knowing if the current mode is long or transient modifies the way index reduction is performed, and since knowing this requires having evaluated all the guards, we are unable to process general \mDAE\ systems, and we must restrict ourselves to systems having only long modes.

To execute a nonstandard instant through a sequence of microsteps, the algorithm $\atomicact{ExecRunProgress}$ (\rref{alg:prognewmain}) is iterated until a failure is reported or a $\atomicact{Tick}$ atomic action occurs. In case of success, the pair $(\status,\Delta)$ produced by $\atomicact{Tick}$ produces the initial status and the context for starting the next instant.
\begin{algorithm}[ht] 
\caption{$\atomicact{ExecRunProgress}$; compare with \rref{alg:newmain}
}\label{alg:prognewmain}
\begin{algorithmic}[1]
   \Require $(\status,\Delta)$; 
   \Return  $(\mathit{Fail},\mbox{updated}\;(\status,\Delta))$
   \If{$\mathit{Success}(\status)$} \label{op:prognewtestsuccess}
      \State $(\status,\Delta) \gets \atomicact{Tick}(\status)$ \label{op:prognewtick} 
      \State $\Phi\gets\{\eqq\in\Delta\mid\status(\guard(e))=\ttt\}$ \label{op:prognewinitfacts} 
   \Else
      \State $G \gets [\enbld{\eqq}{\status} \cap \ndef{\status}]$   \label{op:prognewenabledeqs}
      \State $(\status,b,G_\Sigma,\consistency{G}_\Sigma)\gets\atomicact{IndexReduc}(G,\status)$; increase $\status$
      \label{op:progporihujrpoi}
      \If{$b=\fff$} \Return\emph{Fail}$(\status)$ \label{op:progeogfuiehoriufhoui}
      \Else \label{op:progleifuerhfuiohoui}
      \State $(\status,b,H)\gets\atomicact{SolveConflict}((G_\Sigma\cup\consistency{G}_\Sigma)\setminus\Phi,\status)$; increase $\status$
      \label{op:progelkfuiehoeuio}
      \If{b=\fff} \label{op:proglpruightrlu}
      \Return\emph{Fail}$(\status)$ 
      \Else\ $(\status,\Phi)\gets\atomicact{Eval}(\Delta,\Phi,\status,H)$; increase $\status$
       \label{op:progweorpifuhfpeiu} 
      \EndIf
      \EndIf
      \EndIf
\end{algorithmic}
\end{algorithm}
$\atomicact{ExecRunProgress}$ requires a finite coherent status $\status$ and a finite context $\Delta$; it returns updated $\status$ and $\Delta$, or documented failure information. Its details are commented next.
\begin{description}
\item[{\rref{op:prognewtestsuccess}}] Function
$\mathit{Success}(\status)$ checks if status $\status$ is
{\ssuccessful} according to \rref{defn:success}.

\item[{\rref{op:prognewtick}}] If $\status$ is {\ssuccessful}, then $\atomicact{Tick}$ (\rref{alg:rltoghiuiop}) is performed and the instant is completed.

\item[{\rref{op:prognewinitfacts}}] Not all the equations belonging to the context are active in the current instant. A system $\Phi$ collecting all the \emph{facts} is introduced and is initialized to the subset of the context $\Delta$ consisting of all its enabled equations, taking into account the guards that are already known to be true at initialization of the run.

\item[{\rref{op:prognewenabledeqs}}]
The system $G$ is set to the enabled guarded
equations not evaluated yet, in status $\sigma$. 

\item[{\rref{op:progporihujrpoi}}] We submit $G$ to the index reduction ($\atomicact{IndexReduc}$ atomic action, \rref{alg:lkciyughsliu}). This returns,
\begin{itemize}
	\item  either $b=\fff$ (\rref{op:progeogfuiehoriufhoui}), in which case  $\atomicact{ExecRun}$ returns $\mathit{Fail}(\status)$ and stops, or 
	\item (\rref{op:progleifuerhfuiohoui}) a pair consisting of $G_\Sigma$, a structurally nonsingular system determining some leading variables, and the system $\consistency{G}_\Sigma$ collecting consistency equations. In this case, the algorithm can further progress.
\end{itemize}

\item[\rref{op:progelkfuiehoeuio}] With the additional help of the set $\Phi$ of facts, the possible conflict is solved using the $\atomicact{SolveConflict}$ atomic action (\rref{alg:elsgouihuip}).

\item[\rref{op:proglpruightrlu}] $\atomicact{SolveConflict}$ fails if no structurally regular system survives as a result of applying \rref{op:progelkfuiehoeuio}. In this case, $\atomicact{ExecRun}$ returns $\mathit{Fail}(\status)$ and stops.

\item[{\rref{op:progweorpifuhfpeiu}}] Otherwise, $\atomicact{SolveConflict}$ returns a structurally regular system $H$, which is solved using $\atomicact{Eval}$ (\ref{toihjiurh}), which updates the status $\status$ by giving a value to more \ScottVars. As a result, the set $\Phi$ of facts is updated.
\end{description}
%
%END\input{StructuralmDAEprogressive}
%%%%%%%%%%%%%%%%%%%%%%%%%%%%%%%%%%%%%%%%%%%%%%%%%%%%%%%%%%%%%%%%%%%%%%%%%%%%%%%%%%%%%%%%%
\clearpage
\bibliographystyle{abbrv}
\bibliography{hybrid,modelica,synchrone,lucy}

\end{document}